\def\mP{\mathbb{P}}
\def\mR{\mathbb{R}}
\def\mE{\mathbb{E}}
\def\mF{\mathbb{F}}
\def\mG{\mathbb{G}}
\def\mH{\mathbb{H}}
\def\mN{\mathbb{N}}
\def\mT{\mathbb{T}}
\def\mX{\mathbb{X}}
\def\cA{\mathcal{A}}
\def\cF{\mathcal{F}}
\def\cG{\mathcal{G}}
\def\cH{\mathcal{H}}
\def\cJ{\mathcal{J}}
\def\cB{\mathcal{B}}
\def\cP{\mathcal{P}}
\def\cO{\mathcal{O}}
\def\a{\alpha}
\def\b{\beta}
\def\G{\Gamma}
\def\w{\omega}
\def\W{\Omega}
\def\s{\sigma}
\def\r{\rho}
\def\l{\lambda}
\def\L{\Lambda}
\def\m{\mu}
\def\n{\nu}
\def\p{\pi}
\def\P{\Pi}
\def\t{\tau}
\def\d{\delta}
\def\D{\Delta}
\def\e{\varepsilon}
\def\z{\zeta}
\def\|{\lVert}
\def\lb{\llbracket}
\def\sup{\text{sup }}
\def\min{\text{min }}
\def\loc{\text{loc }}
\def\inf{\text{inf }}
\def\1{\mathbbm{1}}
\def\sA{\mathscr{A}}
\def\sX{\mathscr{X}}
\def\sR{\mathscr{R}}
\def\sZ{\mathscr{Z}}
\def\sL{\mathscr{L}}
\newcommand{\ov}[1]{\overline{#1}}
\newcommand{\wt}[1]{\widetilde{#1}}
\newcommand{\wh}[1]{\widehat{#1}}
\theoremstyle{plain}
\newtheorem{thm}{\protect\theoremname}
\theoremstyle{plain}
\newtheorem{cor}[thm]{\protect\corollaryname}
\theoremstyle{plain}
\newtheorem{lem}[thm]{\protect\lemmaname}
\theoremstyle{definition}
\newtheorem{defn}[thm]{\protect\definitionname}
\theoremstyle{remark}
\newtheorem{rem}[thm]{\protect\remarkname}
\theoremstyle{definition}
\newtheorem{example}[thm]{\protect\examplename}
\providecommand{\corollaryname}{Corollary}
\providecommand{\definitionname}{Definition}
\providecommand{\examplename}{Example}
\providecommand{\lemmaname}{Lemma}
\providecommand{\remarkname}{Remark}
\providecommand{\theoremname}{Theorem}
\begin{document}

\title{Enlargement of Filtrations: An Exposition \\ of Core Ideas with Financial Examples}
\author{\textcolor{black}{Karen Grigorian}\thanks{\textcolor{black}{Operations Research and Industrial Engineering,
Cornell University, Ithaca, N.Y. 14853. email: kgg44@cornell.edu.}}\textcolor{black}{{} and Robert A. Jarrow}\thanks{\textcolor{black}{Samuel Curtis Johnson Graduate School of Management,
Cornell University, Ithaca, N.Y. 14853. email: raj15@cornell.edu.}}}
\maketitle

\begin{abstract}
In this paper we provide an exhaustive survey of the current state of the mathematics of filtration enlargement and an interpretation of the key results of the literature from the viewpoint of mathematical finance. The emphasis is \textcolor{black}{on providing a well-structured compendium of known mathematical results that can be used by researchers in mathematical finance. We mainly state the results and discuss their role and significance}, with references provided for the omitted proofs. The discussion of mathematical results is accompanied by numerous examples from mathematical finance.
\end{abstract}

\section{Introduction to Enlargements}

\subsection{A Philosophical Aside}

The fundamental role of filtrations as a mathematical model of information flow is well-established in modern finance. The filtration imposes "an arrow of time" on the model. In the words of P.Medvegyev, "...A fundamental property of time is its 'irreversibility'. This property of time is expressed with the introduction of the filtration." Furthermore, quadratic variation is one of the core notions in stochastic analysis, and its role is highlighted in the interplay between enlargement of filtrations and Doob's decomposition. The vague ideas of "signal" and "noise" are given a precise meaning in the context of Doob's decomposition and enlargement. As a model of information, an expanded filtration (e.g. access to inside information) may potentially create an arbitrage opportunity. Lastly, a wide array of results and notions are critically sensitive to the underlying filtration and the role of the filtration as a core part of the stochastic basis  should not be neglected in any discussion of a given financial model as well as stochastic processes in general. These and many other ideas are discussed more rigorously in the following sections. 

\par
An outline of this paper is as follows. Section 1 serves as a brief general introduction to the mathematical theory of filtration enlargements and its connections to mathematical finance. Section 2 discusses the relationship between filtration enlargement and different notions of arbitrage. Section 3 analyzes the relationship between filtration enlargement and change of measure techniques.  Section 4 studies the preservation/loss of the martingale representation property under enlargement. Section 5 provides a detailed account of filtration enlargement in the setting of Levy processes, processes with independent increments and point processes. Section 6 concludes. 

\subsection{Basic Notions}

We begin by introducing a number of definitions. \par

\begin{defn}
Filtration $\mathbb{G} = (\mathcal{G}_t, t \geq 0)$ is called \textbf{an enlargement} (expansion) of filtration $\mathbb{F} = (\mathcal{F}_t, t \geq 0)$ if $\mathcal{F}_t \subset \mathcal{G}_t, \quad \forall t \geq 0$. 
\end{defn}

\par
In general, it is not true that an $\mathbb{F}$-martingale is a $\mathbb{G}$-martingale, or even a $\mathbb{G}$-semimartingale, hence two hypotheses arise quite naturally from these considerations.We are interested in conditions that ensure that every $\mathbb{F}$-martingale is a $\mathbb{G}$-martingale (known in the literature as  hypothesis $\mathcal{H}$, or \textbf{immersion}, denoted as $\mathbb{F} \hookrightarrow \mathbb{G}$) or that every $\mathbb{F}$-martingale is a $\mathbb{G}$-semimartingale (known as the $\mathcal{H}^{\prime}$ hypothesis). \par
There are two main kinds of enlargement of filtration. We speak of \textbf{initial enlargement}, when the larger filtration is taken to be $\mathbb{G} \coloneqq \mathbb{F} \lor \sigma(\zeta)$ where $\zeta$ is some random variable (or, more generally, a random element). \textbf{Progressive enlargement} is defined to be the smallest $\mathbb{G}$ that turns a given random time $\tau$ (any nonnegative random variable) into a stopping time. \par

\subsection{Initial Enlargement}

A standard result on immersion under initial enlalrgement is the following (proof can be found, e.g., in Aksamit, Jeanblanc, 2017, p.11). \textcolor{black}{Here and throughout $\cF\perp\!\!\!\perp_{\cH}
\cG$ stands for conditional independence of $\s$-algebras $\cF$ and $\cG$ given $\cH.$ When no subscript is present, then the independence is unconditonal.}\par

\begin{thm}
 Let $\mathbb{F}$ be a filtration and $\zeta$ be a RV. Then $\mathbb{F} \hookrightarrow \mathbb{F} \lor \sigma(\zeta)$ if and only if $\sigma(\zeta)\perp\!\!\!\perp_{\mathcal{F}_0}
\mathcal{F}_{\infty}$. In particular, $\mathbb{F} \hookrightarrow \mathbb{F} \lor \sigma(\zeta)$ if $\sigma(\zeta)\perp\!\!\!\perp
\mathcal{F}_{\infty}$. 
\end{thm}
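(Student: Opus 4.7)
The plan is to reduce both directions to manipulations of conditional expectations by means of the standard equivalence, which I will call (A): $\cA \perp\!\!\!\perp_{\cH} \cB$ if and only if $\mE[Y\mid\cH\lor\cB]=\mE[Y\mid\cH]$ for every bounded $\cA$-measurable $Y$. Write $\cG_t=\cF_t\lor\s(\z)$, so in particular $\cG_0=\cF_0\lor\s(\z)$. I will use the fact that immersion $\mF\hookrightarrow\mG$ is equivalent to the statement that $\mE[X\mid\cG_s]=\mE[X\mid\cF_s]$ for every $s\le t$ and every bounded $\cF_t$-measurable $X$, since the latter is exactly ``every bounded $\mF$-martingale is a $\mG$-martingale'' (and bounded is enough by localization).

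For the sufficiency direction ($\Leftarrow$), I would assume $\s(\z)\perp\!\!\!\perp_{\cF_0}\cF_\infty$ and first upgrade this to $\s(\z)\perp\!\!\!\perp_{\cF_s}\cF_\infty$ for every $s\ge 0$. By (A), the hypothesis says $\mE[Y\mid\cF_\infty]=\mE[Y\mid\cF_0]$ for every bounded $\s(\z)$-measurable $Y$; since $\cF_0\subset\cF_s\subset\cF_\infty$, applying $\mE[\,\cdot\mid\cF_s]$ to both sides gives $\mE[Y\mid\cF_s]=\mE[Y\mid\cF_0]=\mE[Y\mid\cF_\infty]$, and (A) applied in the reverse direction produces the sharpened conditional independence. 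Then for any $s\le t$ and any bounded $\cF_t$-measurable $X$, (A) with $\cA=\cF_\infty$, $\cH=\cF_s$, $\cB=\s(\z)$ yields $\mE[X\mid\cG_s]=\mE[X\mid\cF_s]$, which is exactly the immersion criterion.

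For the necessity direction ($\Rightarrow$), assume $\mF\hookrightarrow\mG$. For a bounded $\cF_\infty$-measurable $X$, the uniformly integrable $\mF$-martingale $M_t:=\mE[X\mid\cF_t]$ is, by immersion, also a $\mG$-martingale, so $\mE[M_t\mid\cG_0]=M_0=\mE[X\mid\cF_0]$ for every $t$. Letting $t\to\infty$ and using $M_t\to X$ in $L^1$ gives $\mE[X\mid\cG_0]=\mE[X\mid\cF_0]$; since $\cG_0=\cF_0\lor\s(\z)$, (A) (with $\cA=\cF_\infty$, $\cH=\cF_0$, $\cB=\s(\z)$) delivers $\cF_\infty\perp\!\!\!\perp_{\cF_0}\s(\z)$. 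The ``in particular'' clause follows because unconditional independence $\s(\z)\perp\cF_\infty$ also forces $\s(\z)\perp\cF_0$ (as $\cF_0\subset\cF_\infty$), and a direct factorization of $\mE[XY\mathbf{1}_C]$ with $X\in\cF_\infty$, $Y\in\s(\z)$, $C\in\cF_0$ shows that these two independences together upgrade to conditional independence given $\cF_0$.

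The main technical obstacle I anticipate is precisely the shift step in the sufficiency direction: conditional independence is not monotone in the conditioning $\s$-algebra in general, and here one crucially uses that $\cF_s$ sits \emph{between} $\cF_0$ and $\cF_\infty$, not merely that it contains $\cF_0$. Everything else is routine bookkeeping with the tower property and uniform integrability.
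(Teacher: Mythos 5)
Your proof is correct and follows essentially the same route as the standard argument the paper defers to (Aksamit--Jeanblanc): reduce immersion to the identity $\mathbb{E}[X\mid\mathcal{F}_s\lor\sigma(\zeta)]=\mathbb{E}[X\mid\mathcal{F}_s]$ for bounded $\mathcal{F}_\infty$-measurable $X$, and translate back and forth via the conditional-independence characterization. The one step that genuinely needs care --- upgrading conditional independence given $\mathcal{F}_0$ to conditional independence given $\mathcal{F}_s$, which works precisely because $\mathcal{F}_0\subset\mathcal{F}_s\subset\mathcal{F}_\infty$ --- is the step you flag and execute correctly.
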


\par
One of the most famous and important results in the whole area is the well-known Jacod's criterion for $\mathcal{H}^{\prime}$ (sometimes denoted as $\mathcal{J})$. We state the result only, proof can be found in, e.g., (Protter, 2005, p.371). \par
\begin{thm} Let $\zeta$ be a random element in a standard Borel space ($E, \mathcal{E}$), and let $Q_t(\omega, dx)$ be the regular conditional distribution of $\zeta$ given $\mathcal{F}_t$, $\forall t \geq 0$. Suppose that there exists a positive $\sigma$-finite measure $\eta$ on $(E, \mathcal{E})$ s.t. $Q_t(\omega, dx) \ll \eta(dx)$ a.s., $\forall t \geq 0$. Then $\mathcal{H}^{\prime}$ holds. 
\end{thm}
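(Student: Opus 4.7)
The overall strategy is to identify the drift that an $\mF$-martingale picks up when viewed in $\mG = \mF \vee \s(\z)$ with a Girsanov-type correction driven by the parametrized family of positive $\mF$-martingales $q^x := q_\cdot(\cdot, x)$, $x \in E$, where $q_t(\omega, x) := dQ_t(\omega, \cdot)/d\eta(x)$ is the conditional density of $\z$. The desired $\mG$-semimartingale decomposition will be obtained by carrying out the classical Girsanov computation at each $x$ and then substituting the random parameter $x = \z$.

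First, pointwise Radon--Nikodym produces $q_t(\omega, x)$, and the standard Borel hypothesis on $(E, \mathcal{E})$ is used --- via a measurable selection argument --- to guarantee a jointly measurable version of $(t, \omega, x) \mapsto q_t(\omega, x)$ which is c\`adl\`ag in $t$ for $\eta$-a.e.\ $x$. Testing against $f(\z)\1_A$ with $A \in \cF_s$ and $f$ bounded measurable, the identity
\begin{equation*}
\mE[f(\z)\1_A] = \mE\left[\1_A \int_E f(x)\, q_t(\omega, x)\, \eta(dx)\right]
\end{equation*}
together with the tower rule at $s < t$ and Fubini in $\eta$ yields $\mE[q_t(\cdot, x)\1_A] = \mE[q_s(\cdot, x)\1_A]$ for $\eta$-a.e.\ $x$; hence $q^x$ is an $\mF$-martingale for $\eta$-a.e.\ $x$. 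The normalization $\int q_t(\omega, x)\,\eta(dx) = 1$ then forces $q_t^\z > 0$ almost surely for each $t$.

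Second, a short Fubini argument on generating events $A \cap \{\z \in B\}$, $A \in \cF_s$, $B \in \mathcal{E}$, gives the explicit $\mG$-conditional expectation formula
\begin{equation*}
\mE[M_t \mid \cG_s] = \frac{\mE[M_t\, q_t(\cdot, x) \mid \cF_s]}{q_s(\cdot, x)}\bigg|_{x = \z},
\end{equation*}
valid for any bounded $\mF$-martingale $M$. This formula realizes the $\mG$-conditioning as a pointwise-in-$x$ change of measure with density proportional to $q^x$. Invoking the classical Girsanov decomposition for the positive $\mF$-martingale $q^x$ then isolates the $\mF$-drift of $M$ under the ``tilted'' measure as $\int_0^\cdot (q_{s-}^x)^{-1}\, d\langle M, q^x \rangle_s^{\mF}$. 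Substituting $x = \z$ produces the $\mG$-semimartingale decomposition
\begin{equation*}
M_t = \wt M_t + \int_0^t \frac{1}{q_{s-}(\cdot, x)}\, d\langle M, q(\cdot, x)\rangle^{\mF}_s \bigg|_{x = \z},
\end{equation*}
with $\wt M$ a $\mG$-local martingale. A standard localization extends the conclusion from bounded $\mF$-martingales to arbitrary $\mF$-local martingales, and from there to arbitrary $\mF$-semimartingales by linearity.

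The main obstacle, in my view, is not the algebra but the interplay between measurability and null-set control. Producing a version of $q$ that is simultaneously jointly measurable, c\`adl\`ag in $t$, and an $\mF$-martingale in $t$ for $\eta$-a.e.\ $x$ requires nontrivial section/selection technology and is where the standard Borel hypothesis on $(E,\mathcal{E})$ is used in an essential way. Equally delicate is the substitution $x = \z$: the $\eta$-null exceptional set in $x$ must not be charged by the law of $\z$, which is precisely what is built into the hypothesis $Q_t \ll \eta$ (the unconditional law of $\z$ is then $\ll \eta$ as well). Once these measure-theoretic issues are handled, the Girsanov-type computation itself is classical.
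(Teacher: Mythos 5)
The paper does not prove this theorem: it states it and defers to Protter (2005, p.~371). Your plan is the standard Jacod density argument, and it is exactly the route the paper itself relies on downstream --- your jointly measurable, c\`adl\`ag, $\eta$-a.e.-martingale version of $q$ is the paper's Lemma on the density process $p_t(x)$, and your final display is the paper's explicit $\mathbb{F}^{\sigma(\zeta)}$-decomposition theorem, of which $\mathcal{H}'$ is stated there as an immediate consequence. So the approach matches the intended one.

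Two steps in your sketch need sharper justification. First, the positivity $q_t(\cdot,\zeta)>0$ does \emph{not} follow from the normalization $\int_E q_t(\omega,x)\,\eta(dx)=1$ (a probability density may perfectly well vanish on a set of positive $\eta$-measure); the correct argument conditions on $\mathcal{F}_t$ and uses the density itself,
\begin{align*}
\mathbb{P}\bigl(q_t(\cdot,\zeta)=0\bigr)=\mathbb{E}\Bigl[\int_E \mathbbm{1}_{\{q_t(\cdot,x)=0\}}\,q_t(\cdot,x)\,\eta(dx)\Bigr]=0 .
\end{align*}
Moreover, for the drift integral $\int_0^t q_{s-}(\cdot,\zeta)^{-1}\,d\langle M,q^{x}\rangle_s|_{x=\zeta}$ to be defined you need a.s.\ positivity of $q_{s-}(\cdot,\zeta)$ for \emph{all} $s\le t$ simultaneously, not just for each fixed $t$; this is precisely part (iii) of the paper's Lemma (the martingale $p(x)$ and its left limits stay strictly positive before the hitting time $\zeta^x$ of zero, and that hitting time evaluated at $x=\zeta$ is a.s.\ infinite). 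Second, for a general $\mathbb{F}$-local martingale $M$ the predictable bracket $\langle M,q^{x}\rangle^{\mathbb{F}}$ need not exist a priori, since $q^x$ is merely a positive martingale; Jacod's proof works instead with $[M,q^{x}]$ and a dominating predictable increasing process to build a jointly measurable version of the drift, and this --- rather than the ``classical Girsanov computation at each $x$'' --- is where the substantive technical work lies. Neither issue changes the architecture of your argument, which is the right one.
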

\par
An immediate yet important corollary is the following. \par
\begin{cor} Let $\zeta$ be a random element in ($E, \mathcal{E}$) s.t. $\zeta$ is independent of $\mathbb{F}$, and let $\mathbb{G} \coloneqq \mathbb{F} \lor \sigma(\zeta)$. Then Jacod's criterion is satisfied and hence   $\mathcal{H}^{\prime}$ holds. \par
\begin{proof} By independence, the regular conditional distributions $Q_t(\omega, dx)$ of $\zeta$ are equal to the law $\eta(dx)$ of $\zeta.$
\end{proof}
\end{cor}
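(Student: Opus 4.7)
The plan is to verify Jacod's criterion directly, by exhibiting an explicit dominating measure $\eta$ for the regular conditional distributions of $\zeta$. The key observation is that when $\zeta$ is independent of $\mathbb{F}$, conditioning on any $\mathcal{F}_t$ does nothing to the distribution of $\zeta$, so the unconditional law of $\zeta$ itself should serve as $\eta$.

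First, I would interpret the hypothesis ``$\zeta$ is independent of $\mathbb{F}$'' as $\sigma(\zeta) \perp\!\!\!\perp \mathcal{F}_t$ for every $t \geq 0$, and I would set $\eta(\cdot) := \mathbb{P}(\zeta \in \cdot)$ to be the law of $\zeta$ on $(E, \mathcal{E})$. Being a probability measure, $\eta$ is in particular $\sigma$-finite, so it is an admissible candidate for the dominating measure required by Jacod's criterion. Next, I would identify $Q_t(\omega, dx)$: by definition of the regular conditional distribution, $\int_A Q_t(\omega, dx) = \mathbb{P}(\zeta \in A \mid \mathcal{F}_t)(\omega)$ for every $A \in \mathcal{E}$, and by independence the right-hand side equals $\mathbb{P}(\zeta \in A) = \eta(A)$ almost surely. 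Since the standard Borel hypothesis on $(E, \mathcal{E})$ guarantees the existence of a regular version, one may choose $Q_t(\omega, \cdot) = \eta(\cdot)$ for $\mathbb{P}$-a.e. $\omega$. In particular $Q_t(\omega, dx) \ll \eta(dx)$ a.s., with Radon--Nikodym derivative identically equal to one.

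Finally, applying the preceding theorem (Jacod's criterion) to this choice of $\eta$ immediately yields that $\mathcal{H}^{\prime}$ holds for $\mathbb{G} = \mathbb{F} \lor \sigma(\zeta)$. There is no real obstacle here: the only mildly delicate point is the measurable selection of a good version of the regular conditional distribution, but the standard Borel assumption built into Jacod's theorem handles this automatically. The whole content of the corollary is that independence collapses the conditional law of $\zeta$ given $\mathcal{F}_t$ into the unconditional law, which trivially dominates itself.
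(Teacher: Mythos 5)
Your proposal is correct and follows exactly the same route as the paper's own one-line proof: by independence the regular conditional distributions $Q_t(\omega,dx)$ coincide with the law $\eta(dx)$ of $\zeta$, which trivially dominates itself, so Jacod's criterion applies. You have simply spelled out the $\sigma$-finiteness of $\eta$ and the a.s.\ identification of $Q_t$ in more detail.
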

\par
While Jacod's condition is arguably one of the most general results in the area, it may fail even in relatively simple setups. \par

\begin{example}(Corcuera, Valdivia, 2012) Let $L$ be the $n$-th jump of a Poisson process $(N_t)_{t \in (0,T)}$ with intensity $\lambda$ and $\mathbb{F}$ its natural filtration. Then
\begin{align*}
    \mathbb{P}(L > x \vert \mathcal{F}_t) = \mathbbm{1}_{\{N_x < n, N_t \geq n\}} + \mathbbm{1}_{\{N_t < n\}} \int_{(x-t)_+}^{\infty}\frac{\lambda e^{-\lambda u}(\lambda u)^{n - N_t - 1}}{(n - N_t -1)!} du,
\end{align*}
and the conditional probability cannot be dominated by a non-random measure.
\end{example}
\par
\begin{example} (Ouwehand, 2007; Amedinger et al, 1998) We consider a standard example where at time 0 the value of the Brownian motion at some terminal date $T$ (e.g. at $T=1$) is known.  Let $B$ be a Brownian motion, $\mathbb{F}$ its natural filtration and we enlarge it with the $\sigma$-field generated by $B_1$, i.e. $\mathbb{F}^{\sigma(B_1)} \coloneqq \cap_{s > t} (\mathcal{F}_s \lor \sigma(B_1))$. The process $B$ is no longer an $\mathbb{F}^{\sigma(B_1)}$-martingale. To see this, consider, for example, the process $(\mathbb{E}[B_1 | \mathcal{F}^{\sigma(B_1)}_t], t \geq 0)$ and note that it is identically equal to $B_1)$. $B$ is, however, an $\mathbb{F}^{\sigma(B_1)}$-semimartingale, and the process $\beta$ defined as
\begin{align*}
     \beta_t \coloneqq B_t - \int_0^{t \land 1} \frac{B_1 - B_s}{1-s}ds
\end{align*}
is an $\mathbb{F}^{\sigma(B_1)}$-Brownian motion. Jacod's criterion fails to hold on $[0, \infty)$. It does, however, hold on $[0, T)$. $B_T$ is clearly Gaussian conditional on $\mathcal{F}_t$ for $t < T$, hence if $Q_t(\omega, dx)$ is a regular version of $\mathbb{P}(B_T \in dx | \mathcal{F}_t)$, then  $Q_t(\omega, dx)$ is absolutely continuous with respect to Lebesgue measure. Note, however, for $t \geq T$, $Q_t(\omega, dx)$ is the point mass $\delta_{B_T(\omega)}(dx)$ and it is impossible to find a single measure $\eta$ s.t. $\delta_c(dx) \ll \eta(dx)$ for all $c \in \mathbb{R} $. 
\end{example}

\par
\begin{example} (Amedinger, 1998) Consider a random variable $G$ with values in a countable set $U$ such that $\mathbb{P}(G = l) > 0$ for all $l \in U$. Hence the sets in the $\sigma$-algebra generated by $G$ are of the form $A = \cup_{l \in J}\{G = l\}$ for some subset $J \subseteq U$. Thus we have 
\begin{align*}
    \mathbb{P}(G \in A \vert \mathcal{F}_t) = \sum_{l \in J}\mathbb{P}(G = l \vert \mathcal{F}_t) = \sum_{l \in J} p^l_t\mathbb{P}(G=l) = \int_A p^l_T\mathbb{P}(G \in dl) = \int_A p^l_T\eta(dl) \ \forall t \in [0,T], 
\end{align*}
where $p^l_t = \frac{\mathbb{P}(G = l \vert \mathcal{F}_t)}{\mathbb{P}(G=l)}$, and the conditional probability $\mathbb{P}(G \in \cdot \vert \mathcal{F}_t)$ is absolutely continuous with respect to the law of $G$ (i.e. $\eta(dl)=\mathbb{P}(G \in dl)$) \ $\forall t \in [0,T]$. Note, however, that the conditional laws of $G$ given $\mathcal{F}_t$ are equivalent to the law of $G$ for $t < T$ only if $\mathbb{P}(G = l \vert \mathcal{F}_t) > 0$ $\mathbb{P}$-a.s. for all $l \in U$. The equivalence also fails to hold if $G$ is $\mathcal{F}_T$-measurable, since in this case $\mathbb{P}(G = l \vert \mathcal{F}_T) = \mathbbm{1}_{\{G = l\}}$ can be zero with positive probability unless $G$ is a constant. \par
An instructive special case is when $G \coloneqq \mathbbm{1}_{\{B_T \in [a,b]\}}$, i.e. it signals whether the terminal value $B_T$ of a standard Brownian motion lies in some interval $[a,b]$. Then 
\begin{align}
    p^1_t = \frac{\mathbb{P}(G = 1 \vert \mathcal{F}_t)}{\mathbb{P}(G=1)},\quad  p^0_t = \frac{1 - \mathbb{P}(G = 1 \vert \mathcal{F}_t)}{1 - \mathbb{P}(G=1)},
\end{align}
and it can be easily calculated that the conditional law is 
\begin{align*}
    & \mathbb{P}(G = 1 \vert \mathcal{F}_t) = \frac{1}{\sqrt{2\pi(T-t)}}\int_a^b\text{exp}(- \frac{(u - B_t)^2}{2(T-t)})du, \quad t \in [0,T), \\
    & \mathbb{P}(G = 1) = \mathbb{P}(G = 1 \vert \mathcal{F}_0) = \Phi(b/\sqrt{T}) - \Phi(a/\sqrt{T}),
\end{align*}
where $\Phi$ is the standard normal distribution function. Thus, the conditional probability $\mathbb{P}(G \in \cdot \vert \mathcal{F}_t)$ is absolutely continuous with respect to the law of $G$ for all $t \in [0, T)$ \textcolor{black}{and Jacod's condition holds on $[0,T)$}.
\end{example}

\par
\begin{example} (Amedinger) Let $G = B_T + \epsilon$, where $B_T$ is the terminal value of a standard Brownian motion $B$ and $\epsilon$ is an $\mathcal{N}(0,1)$-random variable independent of $\mathcal{F}_T$ that can be interpreted as distortion of the available information $B_T$ by some "noise". Then $\forall t \in [0,T]$
\begin{align*}
    \mathbb{P}(G \in dx \vert \mathcal{F}_t) & =\mathbb{P}(B_T - B_t + B_t + \epsilon \in dx \vert \mathcal{F}_t) \\
    & = \mathbb{P}(B_T - B_t + \epsilon \in dx - y \vert \mathcal{F}_t) \vert_{y = B_t} \\
    & = \frac{1}{\sqrt{2\pi(T - t +1)}} \text{exp}(-\frac{(x - B_t)^2}{2(T - t +1)})dx \\
    & = p_t(x)\mathbb{P}(B_T + \epsilon \in dx),
\end{align*}
where $p_t(x) = \sqrt{\frac{T + 1}{T - t + 1}}\text{exp}(-\frac{(x - B_t)^2}{2(T - t +1)} + \frac{x^2}{2(T+1)})$. An application of Ito's formula to $\frac{(x - B_t)^2}{(T - t +1)}$ yields
\begin{align*}
    p_t(x) = \mathcal{E}\left(\int \frac{(x - B_s)^2}{(T - s +1)}dB_s\right)_t,
\end{align*}
which is an $\mathbb{F}$-martingale as can be shown by checking Novikov's condition. Thus, the conditional probability $\mathbb{P}(G \in \cdot \vert \mathcal{F}_t)$ is absolutely continuous with respect to the law of $G$ for all $t \in [0, T]$, including the endpoint of the interval.\par
This example can be generalized in a natural way if we consider instead a mixture of the terminal value and the noise term, i.e. $G = \lambda B_T + (1 - \lambda) \epsilon$. Calculations similar to the above yield that $\mathbb{P}(G \in dx \vert \mathcal{F}_t) = p_t(x)\mathbb{P}(G \in dx)$, where $p_t(x) = \sqrt{\frac{\lambda^2T + (1 - \lambda)^2}{\lambda^2(T-t) + (1 - \lambda)^2}}\text{exp}(-\frac{(x - \lambda B_t)^2}{2\lambda^2(T-t) + (1 - \lambda)^2} + \frac{x^2}{2(\lambda^2T + (1 - \lambda)^2)})$, $x \in \mathbb{R}$. Thus, the conditional probability $\mathbb{P}(G \in \cdot \vert \mathcal{F}_t)$ is the normal probability distribution with mean $\mu_t = \lambda B_t$ and variance $\sigma^2_t = \lambda^2(T-t) + (1 - \lambda)^2$, \textcolor{black}{and Jacod's condition holds on $[0.T]$}.
\end{example}

\par
Of key importance is the following lemma which allows us to chose a nice version of the density process (see, e.g., Jacod, 1985, for a proof). \textcolor{black}{Below and throughout, $\mF^{\s(\zeta)}$ is the right-continuous augmentation of $\mF\lor\s(\zeta)$, i.e. the initial enlargement of $\mF$ by a random variable $\zeta$ taking values in $\ov{\mR}\coloneqq [-\infty,\infty]$.}  \par
\begin{lem} Under $\mathcal{J}$, there exists a non-negative $\mathcal{O}(\mathbb{F}) \otimes \mathcal{B}(\overline{\mathbb{R}})$-measurable function $\Omega \times \mathbb{R}_+ \times \overline{\mathbb{R}} \ni (\omega, t, x) \mapsto p_t (\omega, x)$ cadlag in $t$ such that  \par
(i) for every $ t \geq 0$, we have $Q_t(\omega, dx) = p_t(\omega, x) \eta(dx)$, \par
(ii) for each $x \in \overline{\mathbb{R}}$, the process $(p_t(x))_{t \geq 0}$ is an $\mathbb{F}$-martingale, \par
(iii) $p(x) > 0$ and $p_-(x) > 0$ on $\llbracket0, \zeta^x \llbracket$ and $p(x) = 0$ on $\llbracket \zeta^x, \infty \llbracket$, where $\zeta^x \coloneqq \inf\{t \vert p_-(x)=0\}$. 
\end{lem}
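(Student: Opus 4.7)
The plan is to build the density in stages, following Jacod's classical argument.

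\textbf{Stages 1--2 (Raw density and martingale property).} Jacod's criterion gives $Q_t(\w,dx)\ll\eta(dx)$ a.s., so the Radon--Nikodym theorem provides for each $t$ an $\cF_t\otimes\cB(\ov\mR)$-measurable density $\wt p_t(\w,x)$, and a standard measurable-selection argument using the Polish structure of $(E,\mathcal{E})$ yields a version jointly measurable in $(\w,t,x)$. To see each slice is a martingale, note that for $s\le t$, $A\in\cF_s$ and $B\in\mathcal{E}$,
\[
\mE[\1_A\, Q_t(\cdot,B)]=\mE[\1_A\1_B(\z)]=\mE[\1_A\, Q_s(\cdot,B)].
\]
Expressing both sides via the densities and applying Fubini on $\mP\otimes\eta$ yields $\mE[\wt p_t(\cdot,x)\mid\cF_s]=\wt p_s(\cdot,x)$ $\mP$-a.s., for $\eta$-a.e.\ $x$.

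\textbf{Stage 3 (Simultaneous c\`adl\`ag version).} Pick a countable dense set $D\subset\ov\mR$; Doob's regularisation lemma, applied to the countable family $\{\wt p_\cdot(x):x\in D\}$, gives c\`adl\`ag martingale modifications $p_t(\w,x)$ off a single $\mP$-null set. Extend to arbitrary $x\in\ov\mR$ by a pointwise limit along an approximating sequence in $D$; the extension remains c\`adl\`ag in $t$ and inherits $\cO(\mF)\otimes\cB(\ov\mR)$-measurability via the standard monotone-class argument on optional processes. A Fubini check confirms that $p_t(\w,x)\eta(dx)=Q_t(\w,dx)$ persists, yielding (i), and that the martingale property now holds for every $x$, giving (ii).

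\textbf{Stage 4 (Property (iii)).} A non-negative c\`adl\`ag martingale is absorbed at $0$: if $\t$ is a stopping time with $p_\t(x)=0$, optional stopping together with non-negativity forces $p_s(x)=0$ for all $s\ge\t$. Setting $\z^x\coloneqq\inf\{t:p_{t-}(x)=0\}$ and combining absorption with c\`adl\`ag paths yields $p(x)=0$ on $\lb\z^x,\infty\lb$. Strict positivity of $p(x)$ and $p_-(x)$ on $\lb 0,\z^x\lb$ follows from the definition of $\z^x$: were $p_t(x)=0$ at some $t<\z^x$, absorption would give $p_{s-}(x)=0$ for every $s>t$, hence $\z^x\le t$, a contradiction.

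The principal difficulty is \textbf{Stage 3}: aggregating the pointwise-in-$x$ modifications into a single process so that (i) and (ii) hold for \emph{every} $x\in\ov\mR$ rather than merely $\eta$-almost every $x$, while preserving joint $\cO(\mF)\otimes\cB(\ov\mR)$-measurability. The Polish structure of $\ov\mR$ and Doob's regularisation are the essential ingredients; the martingale property and the identity (i) for exceptional $x$ must be recovered by a density-type limiting argument, which is the technical heart of Jacod's (1985) proof.
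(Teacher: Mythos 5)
The paper gives no proof of this lemma; it defers entirely to Jacod (1985), so your sketch has to be measured against the classical argument you are reconstructing. Stages 1, 2 and 4 are sound: the raw Radon--Nikodym densities, the martingale identity obtained by testing against $A\in\cF_s$ and applying Fubini on $\mP\otimes\eta$, and the absorption-at-zero argument for non-negative c\`adl\`ag martingales are all exactly as in the standard proof (for Stage 4 note only that you also need $p_{\zeta^x}(x)=0$ when the \emph{left limit} vanishes without the process itself having vanished earlier, which is part of the Dellacherie--Meyer absorption theorem rather than plain optional stopping).

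The genuine gap is in Stage 3, and it is precisely at the point you identify as the technical heart. You propose to regularise $\wt p_\cdot(x)$ for $x$ in a countable dense set $D\subset\ov\mR$ and then "extend to arbitrary $x$ by a pointwise limit along an approximating sequence in $D$." There is no reason for $x\mapsto \wt p_t(\w,x)$ to be continuous, or even approximable along $D$: a Radon--Nikodym density carries no regularity in the parameter, and for a fixed $\w$ the values at distinct $x$'s are essentially unrelated. The countable-dense-set device must be applied to the \emph{time} variable, not to $x$: starting from a version $\wt p_t(\w,x)$ jointly measurable in $(\w,x)$ for each rational $t$, one first shrinks to a single $\eta$-null exceptional set of $x$'s (by restricting to rational $s\le t$ and a countable generating algebra of $\cF_s$-sets) and then defines $p_t(\w,x)\coloneqq\lim_{s\downarrow t,\,s\in\mQ}\wt p_s(\w,x)$ where the limit exists and $0$ otherwise. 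This is a countable limiting operation, so joint $\cO(\mF)\otimes\cB(\ov\mR)$-measurability is preserved automatically, and Doob's regularisation theorem applied for each fixed $x$ gives the c\`adl\`ag martingale modification; for the exceptional $\eta$-null set of $x$'s one modifies the slice outright (this does not disturb (i), which only sees $\eta$-a.e.\ $x$). As written, your aggregation step would not produce a well-defined process for $x\notin D$, so the claims (i)--(ii) "for every $x$" are not actually reached.
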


\par
While Jacod's criterion is a general result that ensures the validity of $\mathcal{H}^{\prime}$, a more specific question is what the semimartingale decomposition of the process is in the enlarged filtration, resolved by the following theorem. Note that the conditional density process $p(x)$ ensured by the lemma is crucial to obtain this decomposition. \par

\begin{thm} Suppose that the random element $\zeta$ satisfies $\mathcal{J}$. If $X$ is an $\mathbb{F}$-local martingale, the process $\widetilde{X}$ defined as 
\begin{align*}
    \widetilde{X}_t \coloneqq X_t - \int_0^t \frac{1}{p_{s-}(\zeta)}d\langle X, p(u)\rangle ^{\mathbb{F}}_s\vert_{u = \zeta}, \quad t \leq T
\end{align*}
is an $\mathbb{F}^{\sigma(\zeta)}$-local martingale. In particular, $\mathcal{H}^{\prime}$ holds. 
\end{thm}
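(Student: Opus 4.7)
The plan is to verify the $\mathbb{F}^{\sigma(\zeta)}$-local martingale property of $\widetilde{X}$ directly, using the density process $p$ supplied by the preceding lemma. After a routine localization I may assume that $X$ is a bounded $\mathbb{F}$-martingale and that $1/p_{-}(\zeta)$ is locally bounded, so that the stochastic integral defining $\widetilde{X}$ is well-posed. Then by a monotone class argument the martingale identity reduces to checking
\[
\mathbb{E}\bigl[F_s\, h(\zeta)\bigl(\widetilde{X}_t - \widetilde{X}_s\bigr)\bigr] = 0
\]
for all $s \leq t$, all bounded $F_s \in \mathcal{F}_s$, and all bounded Borel $h \colon \overline{\mathbb{R}} \to \mathbb{R}$, since random variables of the form $F_s h(\zeta)$ generate $\mathcal{F}_s^{\sigma(\zeta)}$.

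The crucial input is the disintegration formula: for any bounded jointly measurable $\phi \colon \Omega \times \overline{\mathbb{R}} \to \mathbb{R}$ with $\phi(\cdot, x)$ being $\mathcal{F}_t$-measurable for each $x$,
\[
\mathbb{E}[h(\zeta)\,\phi(\cdot, \zeta)] = \int h(x)\,\mathbb{E}[\phi(\cdot, x)\, p_t(x)]\,\eta(dx),
\]
which follows from part (i) of the lemma. Applying this with $\phi(\cdot, x) = F_s X_t$ and with $\phi(\cdot, x) = F_s\bigl(A_t(x) - A_s(x)\bigr)$, where $A_t(x) \coloneqq \int_0^t p_{u-}(x)^{-1}\, d\langle X, p(x)\rangle_u^{\mathbb{F}}$ (joint measurability in $(\omega, t, x)$ being guaranteed by the lemma), reduces the target identity to proving the pointwise-in-$x$ equality
\[
\mathbb{E}\bigl[F_s\bigl(X_t p_t(x) - X_s p_s(x)\bigr)\bigr] = \mathbb{E}\bigl[F_s p_t(x)\bigl(A_t(x) - A_s(x)\bigr)\bigr].
\]
For the left side, integration by parts shows that $Xp(x) - \langle X, p(x)\rangle^{\mathbb{F}}$ is an $\mathbb{F}$-local martingale (both factors being $\mathbb{F}$-local martingales), so it equals $\mathbb{E}[F_s(\langle X, p(x)\rangle_t^{\mathbb{F}} - \langle X, p(x)\rangle_s^{\mathbb{F}})]$. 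For the right side, I would use the defining relation $p_{u-}(x)\, dA_u(x) = d\langle X, p(x)\rangle_u^{\mathbb{F}}$ combined with the martingale property of $p(x)$ to rewrite $\mathbb{E}[F_s p_t(x)(A_t(x) - A_s(x))]$ as $\mathbb{E}\bigl[F_s \int_s^t p_{u-}(x)\, dA_u(x)\bigr]$, which is the same expression.

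The hard part is this last reduction: integration by parts on $p(x)A(x)$ produces an extra bracket term $[p(x), A(x)]$, and one must argue that its expectation vanishes. This follows from the predictability of $A(x)$ (being an integral against the predictable finite-variation process $\langle X, p(x)\rangle^{\mathbb{F}}$): for predictable $V$ and martingale $N$, at each jump time one has $\mathbb{E}[\Delta N_s \Delta V_s \mid \mathcal{F}_{s-}] = \Delta V_s\, \mathbb{E}[\Delta N_s \mid \mathcal{F}_{s-}] = 0$, so $\mathbb{E}[[p(x), A(x)]_t] = 0$ after localization. A Fubini-type interchange between $\mathbb{E}$ over $\omega$ and integration over $\eta(dx)$---legitimate by the joint measurability clauses (ii)--(iii) of the lemma---completes the argument. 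The ``in particular'' clause is immediate, since the identity $X = \widetilde{X} + A(\zeta)$ exhibits $X$ as an $\mathbb{F}^{\sigma(\zeta)}$-semimartingale (a local martingale plus an $\mathbb{F}^{\sigma(\zeta)}$-adapted finite-variation process), which is precisely $\mathcal{H}^{\prime}$.
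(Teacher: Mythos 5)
Your proposal is correct and is essentially the standard argument: the paper itself gives no proof, only a citation to Aksamit--Jeanblanc (2017, p.~88), and the proof there proceeds exactly as you outline --- disintegrating the martingale identity over $\eta(dx)$ via the density $p_t(x)$, using that $Xp(x)-\langle X,p(x)\rangle^{\mathbb F}$ is an $\mathbb F$-local martingale, and killing the bracket $[p(x),A(x)]$ by Yoeurp's lemma for the predictable finite-variation process $A(x)$. The only point worth flagging is that ``$1/p_{-}(\zeta)$ is locally bounded'' hides the genuinely nontrivial fact $\mathbb{P}(\zeta^{\zeta}<\infty)=0$ (the density evaluated along $\zeta$ a.s.\ never vanishes), and the localization must be done with the $\mathbb F$-stopping times $\zeta^x_n$ at fixed $x$ before substituting $x=\zeta$, rather than with $\mathbb{G}$-stopping times; both are standard but should not be called routine without proof.
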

\begin{proof}
See, e.g., in (Aksamit, Jeanblanc, 2017, p. 88). 
\end{proof}

\subsection{Progressive Enlargement}

Consider the filtration $\mathbb{A} \coloneqq (\mathcal{A}_t, t \geq 0)$ generated by the so-called \textit{default indicator process} $A \coloneqq \mathbbm{1}_{\llbracket \tau, \infty \llbracket}$, where $\tau$ is a \textit{random time}, i.e. a nonnegative random variable. Progressive enlargement of the filtration is defined as $\mathbb{G} \coloneqq \mathbb{F} \lor \mathbb{A}$. Of fundamental importance is the \textbf{Azema supermartingale} associated with this process, defined to be $Z = {}^o(1 - A) = (\mathbb{P}(\tau > t \vert \mathcal{F}_t))_{t \geq 0}$, \textcolor{black}{i.e. the optional projection of $1-A$}. Sometimes a related process is studied $\widetilde{Z} = {}^o(1 - A_-) = (\mathbb{P}(\tau \geq t \vert \mathcal{F}_t))_{t \geq 0}$, which only has right and left limits. These processes can be shown to be of class ($D$). It can also be shown that $Z = m - A^o = n - A^p$ for some $\mathbb{F}$-martingales $m$ and $n$ (where $A^o$ and $A^p$ are the dual optional and predictable projections in $\mathbb{F}$ respectively). The following are fundamental results on the semimartingale decomposition of an $\mathbb{F}$-local martingale stopped at $\tau$ (see Aksamit, Jeanblanc, 2017, pp.102-103 for proofs). \textcolor{black}{Using this notation, we state the following results.} \par

\begin{thm} Every cadlag $\mathbb{F}$-local martingale $X$  stopped at $\tau$ is a special $\mathbb{G}$-semimartingale with the canonical decomposition 
\begin{align*}
    X_t^{\tau} = \widehat{X}_t + \int_0^{t \land \tau}\frac{d \langle X,m \rangle ^{\mathbb{F}}_s}{Z_{s-}} = \widehat{X}_t + \int_0^{t \land \tau} \frac{d \langle X,n \rangle ^{\mathbb{F}}_s + dJ_s}{Z_{s-}},
\end{align*}
where $\widehat{X}$ is a $\mathbb{G}$-local martingale and $J$ is the $\mathbb{F}$-dual predictable projection of the process $\mathbbm{1}_{\llbracket \tau, \infty \llbracket} \Delta X_{\tau}$. 
\end{thm}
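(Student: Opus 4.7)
The plan is to verify directly that
\[
\widehat{X}_t := X^\tau_t - \int_0^{t\wedge\tau} \frac{d\langle X, m\rangle^\mathbb{F}_s}{Z_{s-}}
\]
is a $\mathbb{G}$-local martingale. By a standard localization argument I may assume $X$ is a bounded $\mathbb{F}$-martingale, so all integrals below are well-defined, and the proposed drift is of finite variation on $[0, \tau]$ because $Z_-$ is bounded away from $0$ there.

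Since $\mathcal{G}_s$ is generated by $\mathcal{F}_s$ together with $\{\mathbbm{1}_{\{\tau \leq u\}}\}_{u \leq s}$, any bounded $\mathcal{G}_s$-measurable random variable splits as $F\mathbbm{1}_{\{\tau > s\}} + H\mathbbm{1}_{\{\tau \leq s\}}$, with $F \in \mathcal{F}_s$ bounded and $H$ a bounded $\mathcal{F}_s \otimes \mathcal{B}([0,s])$-measurable functional of $\tau$. On $\{\tau \leq s\}$ both $X^\tau$ and the drift integral have already attained their terminal values by time $s$, so the increment $\widehat{X}_t - \widehat{X}_s$ vanishes there and the $H$-piece gives no contribution. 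Consequently, the $\mathbb{G}$-martingale identity $\mathbb{E}[\widehat{X}_t - \widehat{X}_s \mid \mathcal{G}_s] = 0$ reduces to the $\mathbb{F}$-level identity
\[
\mathbb{E}\bigl[F \mathbbm{1}_{\{\tau > s\}}(X^\tau_t - X^\tau_s)\bigr] = \mathbb{E}\left[F \mathbbm{1}_{\{\tau > s\}} \int_s^{t\wedge\tau} \frac{d\langle X, m\rangle^\mathbb{F}_u}{Z_{u-}}\right]
\]
for every bounded $F \in \mathcal{F}_s$. An equivalent packaging is the classical Dellacherie formula $\mathbbm{1}_{\{\tau > s\}}\mathbb{E}[Y \mid \mathcal{G}_s] = \mathbbm{1}_{\{\tau > s\}}\mathbb{E}[Y\mathbbm{1}_{\{\tau > s\}} \mid \mathcal{F}_s] / Z_s$.

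The heart of the argument is then this $\mathbb{F}$-level identity. I would split the increment as
\[
X^\tau_t - X^\tau_s = X_t\mathbbm{1}_{\{\tau > t\}} + X_\tau\mathbbm{1}_{\{s < \tau \leq t\}} - X_s\mathbbm{1}_{\{\tau > s\}}
\]
and handle each piece using two defining identities: $\mathbb{E}[X_u\mathbbm{1}_{\{\tau > u\}} \mid \mathcal{F}_u] = X_uZ_u$ for the outer terms, and $\mathbb{E}[\int_s^t X_u \, dA_u] = \mathbb{E}[\int_s^t X_u \, dA^o_u]$ for the middle term. Integration by parts applied to $XZ = X(m - A^o)$, together with the $\mathbb{F}$-martingale property of $m$, then produces precisely the integrand $1/Z_{u-}$ against $d\langle X, m\rangle^\mathbb{F}_u$, matching the right-hand side. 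The equivalence of the two stated decompositions follows from $m - n = A^o - A^p$, which gives $\langle X, m\rangle^\mathbb{F} - \langle X, n\rangle^\mathbb{F} = \langle X, A^o - A^p\rangle^\mathbb{F}$; one then identifies the latter bracket with $J$, the $\mathbb{F}$-dual predictable projection of $\mathbbm{1}_{\llbracket\tau,\infty\llbracket}\Delta X_\tau$.

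The main obstacle is the careful bookkeeping of dual optional versus dual predictable projections and the passage between expectations involving $dA$, $dA^o$, and $dA^p$ in the integration-by-parts computation. A secondary subtlety is that $Z_-$, not $Z$, must appear in the denominator: this is essential because $Z$ may vanish at $\tau$, while $Z_-$ remains strictly positive on $\llbracket 0, \tau \llbracket$, a fact that rests on the general theory of supermartingales of class $(D)$ associated with a random time and is what ensures the finite-variation part is a genuinely well-defined $\mathbb{G}$-predictable process.
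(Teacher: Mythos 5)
Your proposal is correct in outline and follows the same route as the proof the paper delegates to its reference (Aksamit--Jeanblanc, pp.~102--103; the paper itself states the theorem without proof): reduce the $\mathbb{G}$-martingale property via the key lemma $\1_{\{\tau>s\}}\mathbb{E}[\,\cdot\mid\mathcal{G}_s]=\1_{\{\tau>s\}}\,\mathbb{E}[\,\cdot\,\1_{\{\tau>s\}}\mid\mathcal{F}_s]/Z_s$ to an $\mathbb{F}$-level identity, evaluate the $\{s<\tau\le t\}$ contribution through the dual optional projection $A^o$, and integrate by parts against $Z=m-A^o$; likewise the passage between the two displayed decompositions via $m-n=A^o-A^p$, the identity $\bigl(\sum_{s\le\cdot}\Delta X_s\,\Delta A^o_s\bigr)^{p,\mathbb{F}}=J$, and the vanishing of the compensator of $\sum_{s\le\cdot}\Delta X_s\,\Delta A^p_s$ is the standard argument. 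The one step that would fail as literally written is the opening localization: a cadlag local martingale with unbounded jumps cannot be localized to a \emph{bounded} martingale, so you should instead reduce to $X\in\mathcal{H}^1_{\mathrm{loc}}$ (or first split off the large jumps as a finite-variation term and treat it separately); with that amendment, and granting the facts $Z_{s-}>0$ on $\llbracket 0,\tau\rrbracket$ and $Z_s>0$ on $\llbracket 0,\tau\llbracket$ that you correctly isolate as the source of the pathwise bound $\inf_{s\le t\wedge\tau}Z_{s-}>0$, the argument goes through.
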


\par
Another decomposition is given by the following theorem. \par

\begin{thm} Every cadlag $\mathbb{F}$-local martingale $X$  stopped at $\tau$ is a $\mathbb{G}$-semimartingale with decomposition 
\begin{align*}
     X_t^{\tau} = \bar{X}_t + \int_0^{t \land \tau}\frac{d [ X,m ]_s}{\widetilde{Z}_s} - (\mathbbm{1}_{\llbracket \widetilde{R}, \infty \llbracket} \Delta X_{\widetilde{R}})^{p,\mathbb{F}}_{t \land \tau},
\end{align*}
where $\bar{X}$ is a $\mathbb{G}$-local martingale and $\widetilde{R} \coloneqq R_{\{\widetilde{Z}_R=0 < Z_{R-}\}}$, where $R \coloneqq $ inf $\{t \vert Z_t = 0\}$. 
\end{thm}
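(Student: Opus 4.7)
The plan is to deduce this optional-bracket decomposition from the predictable canonical decomposition of the preceding theorem, by trading the integrator $d\langle X,m\rangle^{\mathbb{F}}/Z_{-}$ for $d[X,m]/\widetilde{Z}$ and absorbing the discrepancy into a $\mathbb{G}$-local martingale plus an explicit jump correction at $\widetilde{R}$. The algebraic backbone is the identity $\widetilde{Z}_s = Z_{s-} + \Delta m_s$, which follows from $Z=m-A^o$ and $\Delta A^o_s = \widetilde{Z}_s - Z_s$, combined with the standard decomposition $[X,m]=\langle X,m\rangle^{\mathbb{F}}+N$ for some $\mathbb{F}$-local martingale $N$.

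First, I would establish the result on the stochastic interval $\llbracket 0,\widetilde{R}\wedge\tau\llbracket$, where $\widetilde{Z}>0$ and $1/\widetilde{Z}$ is locally bounded. Using $1/\widetilde{Z}-1/Z_{-} = -\Delta m/(Z_{-}\widetilde{Z})$ and the splitting $[X,m]=\langle X,m\rangle^{\mathbb{F}}+N$, one obtains
\[
\int_0^{\cdot}\frac{d[X,m]_s}{\widetilde{Z}_s}-\int_0^{\cdot}\frac{d\langle X,m\rangle^{\mathbb{F}}_s}{Z_{s-}}=\int_0^{\cdot}\frac{dN_s}{\widetilde{Z}_s}-\int_0^{\cdot}\frac{\Delta m_s\,d\langle X,m\rangle^{\mathbb{F}}_s}{Z_{s-}\widetilde{Z}_s}.
\]
Applying the previous theorem to the $\mathbb{F}$-local martingale $N$ identifies $\int_0^{\cdot\wedge\tau}dN_s/\widetilde{Z}_s$ as a $\mathbb{G}$-local martingale modulo a $\mathbb{G}$-predictable drift, and a short bookkeeping calculation with the definitions of the optional and predictable brackets shows that this residual drift cancels the second summand on the right, so the overall difference is a genuine $\mathbb{G}$-local martingale on $\llbracket 0,\widetilde{R}\wedge\tau\llbracket$.

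The main obstacle is the behaviour at the singular time $\widetilde{R}$. Since $\widetilde{Z}_{\widetilde{R}}=0$ while $Z_{\widetilde{R}-}>0$, the integrand $1/\widetilde{Z}$ is not locally bounded up to $\tau$, so $\int d[X,m]/\widetilde{Z}$ must be interpreted by truncation just before $\widetilde{R}$; simultaneously, the jump $\Delta X_{\widetilde{R}}$ fails to be absorbed into the $\mathbb{G}$-martingale part, because $\tau>\widetilde{R}$ almost surely on $\{\widetilde{R}<\infty\}$ (as $\widetilde{Z}_{\widetilde{R}}=\mathbb{P}(\tau\ge\widetilde{R}\mid\mathcal{F}_{\widetilde{R}})=0$), so this jump really does enter $X^{\tau}$. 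The remedy is to subtract the $\mathbb{F}$-dual predictable projection $(\mathbbm{1}_{\llbracket\widetilde{R},\infty\llbracket}\Delta X_{\widetilde{R}})^{p,\mathbb{F}}$: because $\widetilde{R}$ is $\mathbb{F}$-accessible (it is the graph of $R$ restricted to the $\mathcal{F}_{R-}$-measurable set $\{\widetilde{Z}_R=0<Z_{R-}\}$), this dual predictable projection is well-defined, and its subtraction converts the unabsorbed jump into a $\mathbb{G}$-local martingale by standard compensation arguments. Setting $\bar{X}$ equal to $X^{\tau}$ minus the two finite-variation terms appearing in the statement then yields the announced decomposition.
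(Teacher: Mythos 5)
Your proposal has two genuine problems, and I note at the outset that the paper itself offers no proof of this statement (it defers to Aksamit--Jeanblanc, pp.~102--103, where the result is obtained by computing the $\mathbb{G}$-compensator directly via the key lemma), so the proposal must stand on its own. The first problem is a sign error that invalidates your explanation of the correction term: since $\widetilde{Z}_{\widetilde{R}}=\mathbb{P}(\tau\geq \widetilde{R}\mid\mathcal{F}_{\widetilde{R}})=0$, optional sampling gives $\mathbb{P}(\tau\geq\widetilde{R},\,\widetilde{R}<\infty)=0$, i.e.\ $\tau<\widetilde{R}$ a.s.\ on $\{\widetilde{R}<\infty\}$ --- the opposite of what you assert. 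Hence the jump $\Delta X_{\widetilde{R}}$ never enters $X^{\tau}$ at all; the term $(\mathbbm{1}_{\llbracket\widetilde{R},\infty\llbracket}\Delta X_{\widetilde{R}})^{p,\mathbb{F}}_{t\wedge\tau}$ appears not because an unabsorbed jump must be compensated, but because the $\mathbb{F}$-dual predictable projection of that (post-$\tau$) jump is spread over $\llbracket 0,\widetilde{R}\rrbracket$ and therefore has a nontrivial restriction to $\llbracket 0,\tau\rrbracket$; it emerges from inside the compensation identity for optional integrands, not as an after-the-fact patch. Relatedly, $\{\widetilde{Z}_R=0<Z_{R-}\}$ is $\mathcal{F}_R$-measurable rather than $\mathcal{F}_{R-}$-measurable, and $\widetilde{R}$ need not be accessible: $\widetilde{Z}_R=0<Z_{R-}$ forces $\Delta m_R=-Z_{R-}\neq 0$, which can occur at a totally inaccessible time. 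The dual predictable projection exists simply because the single-jump process has locally integrable variation, not because of accessibility.

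The second problem is that the central step is unjustified. The process $1/\widetilde{Z}=1/(Z_-+\Delta m)$ is $\mathbb{F}$-optional but not predictable, so $\int_0^{\cdot\wedge\tau}\widetilde{Z}_s^{-1}\,dN_s$ is a Stieltjes integral of a non-predictable integrand against the finite-variation local martingale $N$; it is not a predictable stochastic integral with respect to the $\mathbb{G}$-semimartingale $N^{\tau}$, and therefore ``applying the previous theorem to $N$'' does not identify its $\mathbb{G}$-drift. Computing the $\mathbb{G}$-compensator of an $\mathbb{F}$-optional finite-variation process stopped at $\tau$ requires the key-lemma identities, e.g.
\begin{align*}
\mathbb{E}\left[\int_0^{\tau}H_s\,dV_s\right]=\mathbb{E}\left[\int_0^{\infty}H_s\widetilde{Z}_s\,dV_s\right]
\end{align*}
for $\mathbb{F}$-optional $H$, and it is exactly in this computation that $\widetilde{Z}$ and the $\widetilde{R}$-correction arise; this is the entire analytic content of the theorem. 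Your ``short bookkeeping calculation'' is where all the work lives, and as sketched it would not generate the correction term. The algebraic scaffolding you set up ($\widetilde{Z}=Z_-+\Delta m$, the splitting $[X,m]=\langle X,m\rangle^{\mathbb{F}}+N$, and the difference formula for the two candidate drifts) is correct, and a derivation of the optional formula from the predictable one along these lines can be completed, but only by supplying the optional-projection computation that the proposal elides.
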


\par
There are several different kinds of random times analyzed in the literature. We briefly introduce them in the following. \par

\begin{defn}A random time $\tau$ is called \textbf{an honest time} if and only if, for every $t > 0$, there exists an $\mathcal{F}_{t-}$-measureable random variable $\tau_t$ such that $\tau = \tau_t$ on $\{\tau < t\}$. 
\end{defn}

\par
\begin{defn} A random time $\tau$ is called a \textbf{$\mathcal{J}$-time} if it satisfies Jacod's absolute continuity criterion.
\end{defn}

\par
\begin{defn} A random time $\tau$ is called an $\mathbb{F}$-\textbf{thin time }if its graph $\llbracket \tau \rrbracket$ is contained in a thin set, i.e. there exists an exhausting sequence of $\mathbb{F}$-stopping times $(T_n)_{n=1}^{\infty}$ with disjoint graphs s.t.  $\llbracket \tau \rrbracket \subset \cup_n \llbracket T_n \rrbracket$. \par
This means $\tau = \infty \mathbbm{1}_{C_0} + \sum_nT_n\mathbbm{1}_{C_n},$ where $C_0 \coloneqq \{ \tau = \infty\}$ and $C_n \coloneqq \{ \tau = T_n < \infty\}$ for $n \geq 1$. 
\end{defn}

\par
\begin{defn} A random time $\tau$ is called an $\mathbb{F}$-\textbf{pseudo-stopping time} if, for any bounded $\mathbb{F}$-martingale $Y$, $\mathbb{E}[Y_{\tau}] = \mathbb {E}[Y_0]$. 
\end{defn}

\par
For progressive enlargements with these random times we have the respective semimartingale decompositions. 

\par
\begin{thm} Let $\tau$ be an $\mathbb{F}$-honest time. Then every cadlag $\mathbb{F}$-local martingale $X$ is a special $\mathbb{G}$-semimartingale with the canonical decomposition
\begin{align*}
    X_t = \widetilde{X}_t + \int_0^{t \land \tau}\frac{d \langle X,m \rangle ^{\mathbb{F}}_s}{Z_{s-}}  - \int_{\tau}^{t \lor \tau}\frac{d \langle X,m \rangle ^{\mathbb{F}}_s}{1 -Z_{s-}},
\end{align*}
where $\widetilde{X}_t$ is a $\mathbb{G}$-local martingale. 
\end{thm}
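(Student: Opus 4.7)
My plan is to split $X_t = X_{t\wedge\t} + (X_t - X_{t\wedge\t})$ and analyze the two summands separately, bundling the resulting $\mG$-local martingale pieces into a single process $\wt{X}$. Applying the preceding theorem on the canonical decomposition of an $\mF$-local martingale stopped at $\t$ to the first summand gives
\begin{align*}
X^{\t}_t = \wh{X}_t + \int_0^{t\wedge\t}\frac{d\langle X,m\rangle^{\mF}_s}{Z_{s-}},
\end{align*}
with $\wh{X}$ a $\mG$-local martingale, which already contributes the first drift term of the claimed formula.

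The substantive content therefore lies in the after-$\t$ residual $Y_t := X_t - X^{\t}_t$, which vanishes on $\{t \leq \t\}$ and for which one must identify a $\mG$-predictable finite variation part equal to $-\int_{\t}^{t\vee\t} d\langle X,m\rangle^{\mF}_s/(1 - Z_{s-})$. Here I would exploit the defining characterization of honest times as ends of $\mF$-optional sets: on $\{\t < \infty\}$, $\t$ coincides with the supremum of an $\mF$-optional set. Two consequences follow: (i) $\t$ is a $\mG$-stopping time, so $Y$ is $\mG$-adapted; and (ii) every $\mG$-predictable process agrees on $\rrbracket\t,\infty\llbracket$ with an $\mF$-predictable process (the structure theorem for $\mG$-predictables beyond an honest time). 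Using (ii), I would test $Y$ against bounded $\mG$-predictable integrands $H$ and reduce $\mE\bigl[\int_0^\infty H_s\,dY_s\bigr]$ to an $\mF$-level expectation via the honest-time conditioning identity
\begin{align*}
\mE[F\mid\cG_t]\,\1_{\{\t\leq t\}} = \frac{\1_{\{\t\leq t\}}}{1 - Z_t}\,\mE\bigl[F\,\1_{\{\t\leq t\}}\mid\cF_t\bigr],
\end{align*}
valid for $\cF_\infty$-measurable $F$. Invoking $Z = m - A^o$ and the integration-by-parts identity relating the $\mF$-quadratic covariation with $m$ to the sought drift, the predictable finite-variation part of $Y$ emerges as $-\int_{\t}^{t\vee\t} d\langle X,m\rangle^{\mF}_s/(1 - Z_{s-})$, completing the decomposition with $\wt{X}$ the sum of the two $\mG$-local martingale parts.

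The main obstacle is this post-$\t$ step. Beyond the bookkeeping involved in reducing $\mG$-predictable integrands to $\mF$-level quantities, a delicate technical point is the apparent singularity of $1/(1 - Z_{s-})$ on $\{Z_{s-}=1\}$. For an honest $\t$, however, $1 - Z_{s-}$ is a.s. strictly positive on $\rrbracket\t,\infty\llbracket$, so the integrand is well-defined on the integration region. A second subtlety is that the reduction of $\mG$-predictable processes to $\mF$-predictable ones on $\rrbracket\t,\infty\llbracket$ genuinely requires the honest hypothesis—this is precisely what fails for a generic random time and what forces the earlier theorem to yield only the decomposition of $X^{\t}$.
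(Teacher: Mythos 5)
Your outline is sound and follows the standard route (Jeulin's argument, reproduced in Aksamit--Jeanblanc); the paper itself states this theorem without proof, deferring to that reference, so there is no in-paper argument to compare against beyond the stopped-martingale theorem you correctly reuse for the pre-$\tau$ piece. The two structural facts you invoke for honest times are exactly the right ones: every $\mathbb{G}$-predictable process coincides on $\rrbracket\tau,\infty\llbracket$ with an $\mathbb{F}$-predictable process, and $\{Z_{-}=1\}\subset\llbracket 0,\tau\rrbracket$ up to evanescence, which disposes of the singularity of $1/(1-Z_{-})$ on the integration region. The only place where the sketch is genuinely thin is the identification of the post-$\tau$ compensator, which you describe but do not execute. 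The cleanest way to close it is not the conditioning identity you quote but a direct test: take $H_u=\mathbbm{1}_G\mathbbm{1}_{(s,t]}(u)$ with $G\in\mathcal{G}_s$, write $H\mathbbm{1}_{\rrbracket\tau,\infty\llbracket}=K\mathbbm{1}_{\rrbracket\tau,\infty\llbracket}$ with $K$ bounded $\mathbb{F}$-predictable, and compute
\begin{align*}
\mathbb{E}\left[\int_0^{\infty} K_u\mathbbm{1}_{\{u>\tau\}}\,dX_u\right]
= \mathbb{E}\left[(K\bullet X)_{\infty}\right]-\mathbb{E}\left[(K\bullet X)_{\tau}\right]
= -\,\mathbb{E}\left[\langle K\bullet X,m\rangle_{\infty}\right],
\end{align*}
using $\mathbb{E}[(K\bullet X)_{\tau}]=\mathbb{E}[\int_0^{\infty}(K\bullet X)_u\,dA^o_u]$ and integration by parts with $Z=m-A^o$; on the other side, ${}^p(\mathbbm{1}_{\rrbracket\tau,\infty\llbracket})=1-Z_{-}$ cancels the denominator in $\mathbb{E}[\int K_u\mathbbm{1}_{\{u>\tau\}}\,d\langle X,m\rangle_u/(1-Z_{u-})]$, giving equality of the two expectations and hence the martingale property of $\widetilde{X}$ after $\tau$. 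Two caveats worth recording: the argument requires a localization under which $[X,m]$ has locally integrable variation so that $\langle X,m\rangle^{\mathbb{F}}$ exists (the statement silently assumes this), and the reduction of $\mathbb{G}$-predictable integrands to $\mathbb{F}$-predictable ones is, as you note, precisely where honesty enters and cannot be dispensed with.
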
 

\par
\begin{thm} Let $\tau$ be a $\mathcal{J}$- time. Then, any cadlag $\mathbb{F}$-local martingale $X$ is a special $\mathbb{G}$-semimartingale with the canonical decomposition 
\begin{align*}
     X_t = \widehat{X}_t + \int _0^{t \land \tau}\frac{d \langle X,m \rangle ^{\mathbb{F}}_s}{Z_{s-}}  + \int^t_{t \land \tau}\frac{1}{p_{s-}(\tau)}d\langle X, p(x)\rangle ^{\mathbb{F}}_s\vert_{x = \tau},
\end{align*}
where $\widehat{X}_t$ is a $\mathbb{G}$-local martingale. 
\end{thm}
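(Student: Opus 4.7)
The plan is to split $X = X^\tau + (X - X^\tau)$ at the random time $\tau$ and treat the two pieces with the two earlier decomposition theorems: the progressive-enlargement decomposition for an $\mathbb{F}$-local martingale stopped at $\tau$ for the first piece, and Jacod's initial-enlargement decomposition (applied with $\zeta = \tau$) for the second. The stopped piece is expected to contribute the first drift integral in the statement, while the post-$\tau$ piece will be fed through Jacod once one verifies that, past $\tau$, the progressively enlarged filtration $\mathbb{G}$ coincides, up to the usual completions, with the initial enlargement $\mathbb{F}^{\sigma(\tau)}$.

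Concretely, the progressive-enlargement theorem provides a $\mathbb{G}$-local martingale $\widehat{X}^{(1)}$ with
\[
X^\tau_t = \widehat{X}^{(1)}_t + \int_0^{t\wedge\tau}\frac{d\langle X,m\rangle^\mathbb{F}_s}{Z_{s-}}.
\]
For the remainder $X_t - X_{t\wedge\tau}$, supported on $\llbracket\tau,\infty\llbracket$, one first establishes that $\mathcal{G}_t = \mathcal{F}_t \vee \sigma(\tau)$ up to null sets on $\{\tau \leq t\}$: once $\{\tau \leq t\}$ occurs, $\tau$ is $\mathcal{G}_\tau$-measurable, hence $\sigma(\tau) \subset \mathcal{G}_t$, while conversely the $\sigma$-algebras generated by $A = \mathbbm{1}_{\llbracket\tau,\infty\llbracket}$ carry no post-$\tau$ information beyond $\sigma(\tau)$. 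Since $\tau$ is a $\mathcal{J}$-time, the density process $p_t(x)$ from the density lemma exists, and Jacod's decomposition in $\mathbb{F}^{\sigma(\tau)}$ shows that
\[
X_t - \int_0^t \frac{1}{p_{s-}(\tau)}\, d\langle X, p(x)\rangle^\mathbb{F}_s\Big|_{x=\tau}
\]
is an $\mathbb{F}^{\sigma(\tau)}$-local martingale. Restricting to $\llbracket\tau,\infty\llbracket$ and subtracting $X^\tau$ leaves as drift precisely $\int_{t\wedge\tau}^t \frac{1}{p_{s-}(\tau)}\, d\langle X, p(x)\rangle^\mathbb{F}_s|_{x=\tau}$, the second integral in the statement.

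The main obstacle, as I see it, is the gluing step: showing that the two local martingales combine into a single $\mathbb{G}$-local martingale $\widehat{X}$ on $[0,\infty)$. This reduces to (i) transferring the local-martingale property from $\mathbb{F}^{\sigma(\tau)}$ to $\mathbb{G}$ on $\llbracket\tau,\infty\llbracket$, which rests on the coincidence of the two filtrations after $\tau$ noted above (one conditions on $\mathcal{G}_s = \mathcal{F}_s \vee \sigma(\tau)$ for $s \geq \tau$ and invokes the $\mathbb{F}^{\sigma(\tau)}$-martingale property directly); and (ii) producing a common localizing sequence of $\mathbb{G}$-stopping times, obtainable by intersecting the individual localizers with $\mathbb{G}$-stopping times of the form $(\tau+n)\wedge T_n$. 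No spurious jump appears at the splice, since the post-$\tau$ drift integrates over $(t\wedge\tau,t]$ and vanishes at $t=\tau$, so the two pieces match continuously there.
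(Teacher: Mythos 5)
Your proposal is sound and follows exactly the route of the proof the paper defers to (Aksamit--Jeanblanc, Ch.~5): split $X$ at $\tau$, apply the general stopped-martingale decomposition on $\llbracket 0,\tau\rrbracket$, and use the coincidence of the traces of $\mathcal{G}_t$ and $\mathcal{F}_t\vee\sigma(\tau)$ on $\{\tau\le t\}$ to import Jacod's initial-enlargement drift on $\rrbracket\tau,\infty\llbracket$, with $\widehat{X}$ obtained as the sum of the two resulting $\mathbb{G}$-local martingales. The paper itself states the theorem without proof, so there is nothing further to compare; your outline, including the localization via stopping times of the form $S_n\vee\tau$, matches the standard argument.
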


\par
\begin{thm} Let $\tau$ be a thin time. Then $\mathbb{F} \subset \mathbb{G} \subset \mathbb{F}^{\mathcal{C}}$ and $\mathcal{H}^{\prime}$ is satisfied for $\mathbb{G}$, where $\mathbb{F}^{\mathcal{C}}$ is the initial enlargement of $\mathbb{F}$ with the atomic $\sigma$-field $\mathcal{C} \coloneqq \sigma(C_n, n \geq 0)$. Moreover, Then, any $\mathbb{F}$-local martingale $X$ has the following $\mathbb{G}$-semimartingale canonical decomposition 
\begin{align*}
    X_t = \widehat{X}_t + \int _0^{t \land \tau}\frac{d \langle X,m \rangle ^{\mathbb{F}}_s}{Z_{s-}} + \sum_n\mathbbm{1}_{C_n} \int_0^t\mathbbm{1}_{\{s > T_n\}} \frac{d \langle X,z^n \rangle ^{\mathbb{F}}_s}{z^n_{s-}},
\end{align*}
where $\widehat{X}_t$ is a $\mathbb{G}$-local martingale, $z^n_t \coloneqq \mathbb{P}(C_n\vert \mathcal{F}_t)$. 
\end{thm}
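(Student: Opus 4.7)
The plan is to reduce the theorem to two already-established results: the Jacod-type decomposition for initial enlargements and the progressive enlargement decomposition of an $\mathbb{F}$-local martingale stopped at $\tau$. The key structural observation is that, although $\mathbb{G}$ itself does not in general fit Jacod's framework, the thinness hypothesis allows one to sandwich it inside a very tractable initial enlargement, namely $\mathbb{F}^{\mathcal{C}}$.

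First I would establish the chain $\mathbb{F}\subset\mathbb{G}\subset\mathbb{F}^{\mathcal{C}}$. The left inclusion is built into the definition of progressive enlargement. For the right inclusion, the decomposition $\tau = \infty\mathbbm{1}_{C_0} + \sum_{n\geq 1} T_n\mathbbm{1}_{C_n}$, together with the fact that each $T_n$ is $\mathbb{F}$-stopping and each $C_n\in\mathcal{C}$, shows that $\tau$ is already an $\mathbb{F}^{\mathcal{C}}$-stopping time; since $\mathbb{G}$ is the smallest right-continuous enlargement of $\mathbb{F}$ making $\tau$ a stopping time, the inclusion follows. Next I would verify Jacod's criterion for $\mathbb{F}^{\mathcal{C}}$: replacing $\mathcal{C}$ with the $\mathbb{N}$-valued random variable $\kappa = \sum_n n\mathbbm{1}_{C_n}$, its regular conditional law given $\mathcal{F}_t$ is $Q_t(\omega,\cdot) = \sum_n z^n_t(\omega)\delta_n$, which is obviously absolutely continuous with respect to the law $\eta = \sum_n z^n_0\delta_n$ of $\kappa$, with density $p_t(\omega,n) = z^n_t(\omega)/z^n_0$. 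The decomposition theorem under $\mathcal{J}$ stated above then produces a semimartingale decomposition of any $\mathbb{F}$-local martingale $X$ in $\mathbb{F}^{\mathcal{C}}$, and Stricker's theorem transfers the semimartingale property to the $\mathbb{G}$-adapted process $X$ in the smaller filtration $\mathbb{G}$, establishing $\mathcal{H}^{\prime}$.

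To identify the explicit drift, I would split the integration interval into $[0,t\wedge\tau]$ and $(t\wedge\tau,t]$. On the first piece, the first theorem in this section on progressive enlargements yields the drift $\int_0^{t\wedge\tau} d\langle X,m\rangle^{\mathbb{F}}_s/Z_{s-}$. On the second piece, I would exploit the fact that on $C_n$, past $T_n = \tau$, the atom $C_n$ is observable, so $\mathbb{G}$ and $\mathbb{F}^{\mathcal{C}}$ agree there and the $\mathbb{G}$-drift coincides with the Jacod drift. Using the density computed above, the Jacod drift $\int_0^t d\langle X,p(\kappa)\rangle^{\mathbb{F}}_s / p_{s-}(\kappa)$, restricted to the set $C_n\cap\{s>T_n\}$, reduces precisely to $\mathbbm{1}_{C_n}\int_0^t\mathbbm{1}_{\{s>T_n\}} d\langle X,z^n\rangle^{\mathbb{F}}_s / z^n_{s-}$, since the normalizing constant $z^n_0$ cancels between numerator and denominator. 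Summing over $n$ produces the second term in the stated formula.

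The main obstacle is the careful patching of the two pieces: one must verify that the sum of candidate drifts is genuinely $\mathbb{G}$-predictable (not merely $\mathbb{F}^{\mathcal{C}}$-predictable) and that the residual $\widehat{X}$ is a $\mathbb{G}$-local martingale rather than only an $\mathbb{F}^{\mathcal{C}}$-local martingale. The cleanest way is to appeal to uniqueness of the canonical decomposition of a $\mathbb{G}$-special semimartingale: once $X$ is known to be a $\mathbb{G}$-special semimartingale (by the sandwich argument and Stricker), its $\mathbb{G}$-predictable finite variation part is unique, so it suffices to check that the candidate drift displayed above is $\mathbb{G}$-predictable. This in turn follows because on $C_n$ the set $\{s>T_n\}$ coincides with $\{s>\tau\}$, which is a standard $\mathbb{G}$-predictable set, and because the integrands $1/Z_{s-}$ and $1/z^n_{s-}$ are left-continuous $\mathbb{F}$-adapted, hence $\mathbb{G}$-predictable.
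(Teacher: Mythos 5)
Your overall architecture is the right one, and it is essentially the route taken in the literature the paper cites (Aksamit--Jeanblanc, Ch.~5): the paper itself states this theorem without proof, so there is no in-paper argument to compare against. Your sandwich $\mathbb{F}\subset\mathbb{G}\subset\mathbb{F}^{\mathcal{C}}$ is correct (writing $\{\tau\le t\}=\bigcup_n(C_n\cap\{T_n\le t\})$ with $C_n\in\mathcal{F}^{\mathcal{C}}_0$), the observation that the atomic $\sigma$-field automatically satisfies Jacod's criterion with density $p_t(n)=z^n_t/\mathbb{P}(C_n)$ is correct, Stricker's theorem does transfer the semimartingale property down to $\mathbb{G}$ and gives $\mathcal{H}'$, and the identification of the two candidate drift terms (progressive formula on $\llbracket 0,\tau\rrbracket$, Jacod formula with the constant $\mathbb{P}(C_n)$ cancelling on $\rrbracket\tau,\infty\llbracket$) is the correct bookkeeping, including the remark that $\mathbbm{1}_{C_n}\mathbbm{1}_{\{s>T_n\}}=\mathbbm{1}_{C_n}\mathbbm{1}_{\{s>\tau\}}$ makes the drift $\mathbb{G}$-predictable.

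The genuine gap is in your final step. Uniqueness of the canonical decomposition of a special semimartingale says: \emph{if} $X=M+A=M'+A'$ with $M,M'$ local martingales and $A,A'$ predictable of finite variation, \emph{then} $A=A'$. It does not say that any $\mathbb{G}$-predictable finite-variation process $B$ with $X-B$ adapted must be the drift; for that you must first exhibit $X-B$ as a $\mathbb{G}$-local martingale, which is precisely the content of the theorem. So ``it suffices to check that the candidate drift is $\mathbb{G}$-predictable'' is circular: you have a decomposition of $X$ in $\mathbb{F}^{\mathcal{C}}$ (Jacod) and a decomposition of $X^{\tau}$ in $\mathbb{G}$ (progressive formula), but no decomposition of the post-$\tau$ increment $X_{t\vee\tau}-X_{\tau}$ in $\mathbb{G}$ yet, and uniqueness cannot manufacture one. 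What is actually needed is the verification that the $\mathbb{F}^{\mathcal{C}}$-local martingale $\widehat{X}^{\mathcal{C}}$ restarted at $\tau$ remains a $\mathbb{G}$-local martingale. This rests on the trace identity $\mathcal{G}_t\cap\bigl(C_n\cap\{T_n< t\}\bigr)=\mathcal{F}^{\mathcal{C}}_t\cap\bigl(C_n\cap\{T_n< t\}\bigr)$ (which you correctly state informally as ``the atom becomes observable after $T_n$'') together with a restarting lemma of the type the paper records for honest times (if $\mathbbm{1}_{\{\tau\le u\}}(Y_{t\vee u}-Y_u)$ is a local martingale for every $u$, then so is $Y_{t\vee\tau}-Y_{\tau}$), applied atom by atom and combined with a localization controlling the countable sum over $n$. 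Without carrying out this conditional-expectation verification, the proof of the displayed formula is incomplete, even though the candidate drift you wrote down is the correct one.
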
 

\par
The discussions of random times and their associated Azema supermatingales make frequent use of the following two assumptions. \par
\textbf{Assumption (A)} We say that a random time $\tau$ satisfies assumption (A) if $\tau$ \textbf{a}voids all $\mathbb{F}$-stopping times, i.e. $\mathbb{P}(\tau = \sigma) = 0$ for all $\mathbb{F}$-stopping times $\sigma$.
\par
\textbf{Assumption (C)} We say that assumption (C) is satisfied if all $\mathbb{F}$-martingales are \textbf{c}ontinuous. \par
When both assumptions are satisfied we say that (CA) holds. 

\par
\begin{defn} A process $X$ is said to be of class (D) if the family $\{X_{\tau}\mathbbm{1}_{\{\tau < \infty\}} : \tau \ \text{is a stopping time} \}$ is uniformly integrable. 
\end{defn} 

The following well-known \textbf{multiplicative decomposition} of a cadlag positive supermartingale of class (D) is often useful, although much less than the additive decompositions. \par
\begin{thm}  (Ito-Watanabe) Let $X$ be a positive cadlag supermartingale of class (D). Then there exists a local martingale $N$ and predictable decreasing process $D$ such that $X = ND$, where $D_0 = X_0$. 
\end{thm}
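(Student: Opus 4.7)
The plan is to derive the multiplicative factorization from the additive Doob--Meyer decomposition by means of a stochastic exponential built out of the predictable compensator of $X$. Since $X$ is a class (D) supermartingale, Doob--Meyer yields the unique decomposition
\[
X_t = X_0 + M_t - A_t,
\]
with $M$ a uniformly integrable martingale, $M_0=0$, and $A$ a predictable increasing process with $A_0=0$. A preliminary fact I would record first is that positivity of $X$ forces $\Delta A \le X_-$: at a predictable time $T$ with $\Delta A_T>0$, predictability of $A$ gives $\Delta A_T = X_{T-} - \mE[X_T \mid \cF_{T-}] \le X_{T-}$ since $X_T \ge 0$.

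Assuming momentarily that $X_->0$ everywhere, I would set
\[
D_t \coloneqq X_0 \,\mathcal{E}\!\left(-\int_0^{\cdot}\frac{dA_s}{X_{s-}}\right)_t = X_0 \exp\!\left(-\int_0^t \frac{dA_s^c}{X_{s-}}\right)\prod_{0<s\le t}\!\left(1 - \frac{\Delta A_s}{X_{s-}}\right),
\]
and $N_t \coloneqq X_t/D_t$. Predictability of $A$ and of $X_-^{-1}$ makes $D$ predictable; the continuous factor is nonincreasing and each jump factor lies in $[0,1]$ by the bound above, so $D$ is a predictable decreasing process with $D_0=X_0$. To check that $N$ is a local martingale I would apply integration by parts to $X=ND$ together with the identity $dD = -D_- X_-^{-1}\,dA$. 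Since $D$ is predictable of finite variation, the bracket $[N,D]$ charges only the predictable jumps of $D$, and a jump-by-jump computation using $\Delta X = \Delta M - \Delta A$ collapses everything to $dN = D_-^{-1}\,dM$, yielding $N_t = 1 + \int_0^t D_{s-}^{-1}\,dM_s$, which is a local martingale.

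The main obstacle is the possibility that $X$ (and hence $D$) reaches zero, making $N = X/D$ ill-defined. The standard remedy is to localize by the stopping times $R_n \coloneqq \inf\{t : X_t \le 1/n\}$: on each stochastic interval $\lb 0, R_n\lb$ the construction above is valid with $D_-$ bounded away from zero, and one passes to the limit on $\lb 0, R\lb$ where $R \coloneqq \inf\{t : X_t = 0\}$. Since $X$ is a nonnegative supermartingale it is absorbed at $0$ from $R$ onwards, so extending $N$ to be constant equal to $N_{R-}$ on $\lb R, \infty\lb$ and letting $D$ take the value $0$ past $R$ preserves both the identity $X = ND$ and the local martingale property of $N$. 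A further subtlety worth flagging is that nonnegativity of the stochastic exponential is not automatic in general, but here it is precisely guaranteed by the inequality $\Delta A \le X_-$ established at the outset.
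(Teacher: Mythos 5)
The paper states this theorem without proof (it is quoted as a known result), so there is no in-text argument to compare against; I assess your proof on its own terms. The core construction is the standard one and is sound: the bound $\Delta A_T \le X_{T-}$ at predictable times, the choice $D = X_0\,\mathcal{E}(-\int_0^{\cdot} X_{s-}^{-1}\,dA_s)$, and the integration-by-parts verification are exactly how this theorem is usually proved. One computational slip: since $[N,D]_t = \int_0^t \Delta D_s\,dN_s$ for predictable finite-variation $D$, integration by parts reads $dX = N_-\,dD + D\,dN$ (not $D_-\,dN$), and substituting $N_-\,dD = -dA$ gives $dN = D^{-1}\,dM$ rather than $D_{-}^{-1}\,dM$; the two integrands differ at the predictable jump times of $A$ (a one-jump example with $X_{0}=2$, $\Delta A_1 = 1$ confirms $\Delta N_1 = \Delta M_1/D_1$). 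This does not affect the conclusion, as $D^{-1}$ is still predictable.

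The genuine soft spot is your final paragraph on the zero set of $X$. If $X$ jumps to $0$ at a totally inaccessible time $R$ (i.e. $\Delta M_R = -X_{R-}$ with $\Delta A_R = 0$), then $D_R = D_{R-} > 0$: redefining $D$ to vanish on $\llbracket R,\infty\llbracket$ creates a jump of $D$ at a totally inaccessible time and destroys its predictability, while freezing $N$ at $N_{R-}>0$ contradicts $X_R = N_R D_R = 0$. The correct bookkeeping in that case is the opposite of what you wrote: leave $D$ alone and let $N$ jump to $0$ at $R$, which is precisely what the formula $N = 1 + D^{-1}\bullet M$ produces. Moreover, the claim that the extension past $R$ "preserves the local martingale property" is asserted rather than proved: your stopping times $R_n$ announce $R$, not $\infty$, so knowing each $N^{R_n}$ is a martingale does not by itself make the extended $N$ a local martingale on $\llbracket 0,\infty\llbracket$; an extra argument (convergence of $N_{R_n}$ in $L^1$, or a Fatou/supermartingale closing argument at $R$) is required. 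That said, if "positive" in the statement is read as strictly positive, then absorption at zero forces $X_- > 0$ as well, so $R=\infty$ a.s., your localizing sequence does exhaust $\mathbb{R}_+$, and the first two paragraphs of your argument already constitute a complete proof; the delicate nonnegative case is the one relevant to the Az\'ema supermartingales to which the paper later applies this decomposition.
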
 

\par
\begin{thm} (Nikeghbali, Yor, 2006) Let $L$ be an honest time that satisfies (CA). Then there exists a continuous and nonnegative local martingale $(N_t)_{t \geq 0}$ with $N_0 = 1$ and $\lim_{t \to \infty} N_t = 0$, such that 
\begin{align*}
    Z_t \coloneqq \mathbb{P}(L > t \vert \mathcal{F}_t) = \frac{N_t}{S_t},
\end{align*}
where $S_t = \sup_{s\leq t}N_s$ (sometimes also denoted as $\overline{N}_t)$. 
\end{thm}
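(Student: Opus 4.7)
The plan is to combine the Doob-Meyer decomposition of $Z$ with the Ito-Watanabe multiplicative decomposition (the preceding theorem), and then exploit the characterization of honest times under (CA) to identify the predictable decreasing factor with $1/S$.

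First, under (C) every $\mathbb{F}$-martingale is continuous, so the Azema supermartingale $Z = n - A^p$ (in the notation of the subsection) is a positive continuous supermartingale of class (D), and under (A) the dual predictable projection $A^p$ is continuous. The key structural input supplied by $L$ being honest together with (CA) is that the increasing process $A^p$ grows only on the set $\{Z = 1\}$; this is a classical fact (Azema-Jeulin-Yor) that rests on the representation $L = \sup\{t : Z_t = 1\}$ valid in this setting, and I would invoke it as known.

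Second, I apply the Ito-Watanabe theorem to $Z$ to produce a factorization $Z = ND$ with $N$ a positive local martingale and $D$ a predictable decreasing process normalized so that $N_0 = D_0 = Z_0 = 1$ (the normalization $Z_0 = 1$ follows from (A) applied to the stopping time $0$, i.e.\ $\mathbb{P}(L = 0) = 0$). Under (C) the factor $N$ is continuous, and on $\{Z > 0\}$ the explicit formulas $N_t = \mathcal{E}\bigl(\int_0^\cdot dn_s/Z_s\bigr)_t$ and $D_t = \exp\bigl(-\int_0^t dA^p_s/Z_s\bigr)$ apply; the containment $\{dA^p > 0\} \subset \{Z = 1\}$ from the first step then translates into $\{dD < 0\} \subset \{Z = 1\}$, i.e.\ $D$ decreases only where $Z$ hits its upper bound.

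Third, I identify $D$ with $1/S$. At any point of $\{Z = 1\}$ the factorization $Z = ND$ gives $N = 1/D$, while the global bound $Z \leq 1$ forces $N \leq 1/D$ pathwise. Thus $1/D$ is a continuous nondecreasing process with $1/D_0 = 1 = N_0$, it dominates $N$ everywhere, and its points of strict increase are contained in the contact set $\{N = 1/D\}$; by the standard characterization of the running-supremum process as the minimal continuous nondecreasing majorant with these properties, one concludes $1/D_t = \sup_{s \leq t} N_s = S_t$, and substituting into $Z = ND$ yields $Z = N/S$. For the remaining claims, $N_0 = 1$ by construction, and to obtain $\lim_{t\to\infty} N_t = 0$ I note that $L < \infty$ a.s.\ gives $Z_\infty = 0$, while $N$ as a continuous nonnegative local martingale is a supermartingale and hence converges a.s.\ to a finite limit $N_\infty$, so $S_\infty < \infty$ a.s.; the identity $N_\infty / S_\infty = Z_\infty = 0$ combined with $S_\infty < \infty$ then forces $N_\infty = 0$.

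The main obstacle is the Skorokhod-type identification $1/D = S(N)$, where the pointwise bound $N \leq 1/D$ and the support condition on $-dD$ must be welded together using continuity of both sides to conclude equality rather than mere inequality. A secondary subtlety is the honest-time localization $\{dA^p > 0\} \subset \{Z = 1\}$, whose proof requires the avoidance assumption (A) to give continuity of $A^p$ and the honest-time definition to place its support on the last-passage set, together with a consistent extension of the Ito-Watanabe factorization past the (possibly nontrivial) zero set $\{Z = 0\}$, which can be achieved by freezing $N$ at $\inf\{t : Z_t = 0\}$ while letting $D$ continue to decrease to $0$.
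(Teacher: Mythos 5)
The paper does not prove this theorem; it only cites Nikeghbali--Yor (2006), and the explicit formulas $N=\mathcal{E}\bigl(\int_0^{\cdot}dM_s/Z_s\bigr)$, $S=e^{A}$ are merely recorded in the Corollary that follows. Your argument is essentially the original Nikeghbali--Yor proof and it is correct: the whole weight rests on the classical support lemma that, for an honest time under (CA), the measure $dA_t$ (with $Z=M-A$ the Doob--Meyer decomposition) is carried by $\{Z=1\}$, which you rightly invoke as known (Azéma/Jeulin); everything after that is bookkeeping. Two remarks. First, your route through Ito--Watanabe plus the Skorokhod-type uniqueness of the running supremum is valid (the verification that a continuous nondecreasing majorant of $N$ starting at $N_0$ whose increase is supported on the contact set must equal $\sup_{s\leq\cdot}N_s$ goes through exactly as you indicate), but it can be shortcut: once $dA$ is known to sit on $\{Z=1\}$, set $N_t:=Z_te^{A_t}$ directly; Ito gives $dN_t=e^{A_t}dM_t+(Z_t-1)e^{A_t}dA_t=e^{A_t}dM_t$, so $N$ is a continuous nonnegative local martingale globally, with no need to treat the zero set of $Z$ separately or to freeze anything, since $\mathbb{E}[A_\infty]=\mathbb{P}(L<\infty)$ makes $e^{A}$ finite and $N\equiv 0$ after $R=\inf\{t:Z_t=0\}$ automatically. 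Second, the conclusion $\lim_{t\to\infty}N_t=0$ requires $L<\infty$ a.s.\ (so that $Z_\infty=0$); this hypothesis is implicit in the statement as given, and your deduction from $N_\infty/S_\infty=Z_\infty=0$ with $S_\infty<\infty$ is fine, but it is worth stating the finiteness assumption explicitly.
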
 

\par
\begin{cor} \textcolor{black}{In the notation of the previous theorem}, the supermartingale  $Z_t = \mathbb{P}(L > t \vert \mathcal{F}_t)$ admits the following \textcolor{black}{multiplicative and additive} decompositions:
\begin{align*}
       Z_t &= \frac{N_t}{S_t}, \quad \quad\textcolor{black}{(Ito-Watanabe),}\\
       Z_t &= M_t - A_t, \quad \quad \textcolor{black}{(Doob-Meyer)}.
\end{align*}
where the corresponding terms are related as follows:
\begin{align*}
    N_t &= \text{exp}\left(\int_0^t\frac{dM_s}{Z_s} - \frac{1}{2} \int_0^t\frac{d\langle M \rangle_s}{Z^2_s}\right), \\
    S_t &=\text{exp}(A_t)  
\end{align*}
and
\begin{align*}
    M_t &= 1 + \int_0^t\frac{dN_s}{S_s} = \mathbb{E}(\text{log} S_{\infty} \vert \mathcal{F}_t), \\
    A_t &= \text{log} \ S_t.
\end{align*}
\end{cor}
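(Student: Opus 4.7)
The plan is to derive the Doob--Meyer decomposition together with all four relations stated in the corollary from the multiplicative representation $Z_t = N_t/S_t$ of the preceding theorem, using It\^o's formula and the standard support property of the running supremum. Throughout, $S$ is continuous (as the running maximum of the continuous process $N$), of finite variation, with $S_0 = N_0 = 1$, and $dS_s$ is carried by the random set $\{s : N_s = S_s\}$.

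First I would apply the product rule to $Z_t = N_t \cdot (1/S_t)$: since $S$ is continuous and of finite variation, the bracket $\langle N, 1/S\rangle$ vanishes and
\begin{align*}
Z_t = 1 + \int_0^t \frac{dN_s}{S_s} - \int_0^t \frac{N_s}{S_s^{2}}\, dS_s.
\end{align*}
The first integral defines the continuous local martingale $M_t := 1 + \int_0^t dN_s/S_s$ (so $M_0 = 1$). For the second, the support property lets me replace $N_s/S_s^{2}$ by $1/S_s$ against $dS_s$, and the continuity of $S$ with $S_0 = 1$ then gives $\int_0^t (N_s/S_s^{2})\, dS_s = \int_0^t dS_s/S_s = \log S_t$. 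Setting $A_t := \log S_t$, which is continuous (hence predictable), nondecreasing, and null at $0$, simultaneously furnishes the Doob--Meyer decomposition $Z_t = M_t - A_t$ (whose uniqueness is guaranteed by the class $(D)$ property of $Z$ recorded earlier) and the relation $S_t = \exp(A_t)$.

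Next, to recover $N$ from $M$ and $Z$, I would apply It\^o's formula to $\log N_t$, which is well-defined on the lifetime of $N$ since $N > 0$ there. This yields $d\log N_t = dN_t/N_t - \tfrac{1}{2}\, d\langle N\rangle_t/N_t^{2}$. Substituting $dN_t = S_t\, dM_t$ and therefore $d\langle N\rangle_t = S_t^{2}\, d\langle M\rangle_t$, together with $N_t/S_t = Z_t$, converts the coefficients to $1/Z_t$ and $1/Z_t^{2}$, giving the claimed stochastic-exponential formula for $N_t$.

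Finally, for the identification $M_t = \mE[\log S_\infty \mid \mathcal{F}_t]$, I would let $t \to \infty$ in $Z_t = M_t - A_t$: since $N_t \to 0$ and $S$ is nondecreasing, $Z_\infty = 0$, so $M_\infty = A_\infty = \log S_\infty$. The class $(D)$ property of $Z$ guarantees integrability of $A_\infty$ and in turn makes $M$ a uniformly integrable martingale, whence $M_t = \mE[M_\infty \mid \mathcal{F}_t] = \mE[\log S_\infty \mid \mathcal{F}_t]$. The only nontrivial ingredient in the whole argument is the support identity $\mathrm{supp}(dS) \subset \{N = S\}$ for the running maximum of a continuous process; the rest is routine It\^o calculus, so this support property is the single step requiring careful justification.
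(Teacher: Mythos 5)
Your argument is correct and is essentially the standard derivation of this corollary (the one given in Nikeghbali--Yor, 2006, via integration by parts on $N/S$, the support property $\mathrm{supp}(dS)\subset\{N=S\}$, and It\^o's formula applied to $\log N$); the paper itself states the corollary without proof. The one step you rightly single out for care --- that $dS$ charges only $\{N=S\}$, so that $\int_0^t N_s S_s^{-2}\,dS_s=\log S_t$ --- together with the uniqueness of the Doob--Meyer decomposition for the bounded (hence class $(D)$) supermartingale $Z$, is exactly what makes the identification of $M$ and $A$ legitimate, and your closing argument $M_\infty=A_\infty=\log S_\infty$ with $M$ uniformly integrable completes the proof as intended.
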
 

\par
We now proceed to some examples with explicit calculations of the Azema supermartingale associated with a given random time and path decompositions of the processes in an enlarged filtration. But first, consider an important theorem that establishes that last passage times are in fact honest times. \par

\begin{thm}  Let $X$ be an $\mathbb{F}$-adapted process and $\tau \coloneqq \text{sup} \{t : X_t \leq a \}$. Then $\tau$ is an honest time. 
\end{thm}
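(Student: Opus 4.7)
The plan is to verify the definition of honest time directly by producing an explicit $\mathcal{F}_{t-}$-measurable candidate. Fix $t>0$ and set
\begin{equation*}
\tau_t \;:=\; \sup\{s < t : X_s \leq a\},
\end{equation*}
with the convention $\sup\emptyset := 0$. The first step is to check that $\tau = \tau_t$ on the event $\{\tau < t\}$. This is an immediate definition-chase: on this event every $s$ in the level set $\{u \geq 0 : X_u \leq a\}$ satisfies $s \leq \tau < t$, so restricting the supremum to $s < t$ does not change its value, and the two suprema coincide.

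The second step is to verify $\mathcal{F}_{t-}$-measurability of $\tau_t$. Assuming the standard setting in which $X$ is cadlag (or, more generally, progressively measurable with one-sided limits), the supremum over the interval $[0,t)$ can be computed over a countable dense subset,
\begin{equation*}
\tau_t \;=\; \sup\{s \in \mathbb{Q} \cap [0,t) : X_s \leq a\}.
\end{equation*}
For each such rational $s$, the event $\{X_s \leq a\}$ lies in $\mathcal{F}_s \subseteq \mathcal{F}_{t-}$, and a countable supremum of $\mathcal{F}_{t-}$-measurable random variables is $\mathcal{F}_{t-}$-measurable. In a purely progressive framework the same conclusion follows by applying the debut theorem to the optional set $[0,t) \cap \{X \leq a\}$.

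The main (minor) obstacle is precisely the measurability question in the second step: one needs some path regularity on $X$ to reduce the uncountable supremum over $[0,t)$ to a countable, manifestly $\mathcal{F}_{t-}$-measurable one. Beyond that, the statement is essentially a tautology of the honest-time definition: a last passage time before $t$ is, by construction, a functional of the trajectory on $[0,t)$ and is therefore already determined at time $t-$ on the event that the passage actually occurred before $t$.
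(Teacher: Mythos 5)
Your argument is essentially identical to the paper's proof, which sets $\tau_u := \sup\{t \leq u : X_t \leq a\}$, asserts its $\mathcal{F}_u$-measurability, and observes that $\tau = \tau_u$ on $\{\tau < u\}$; if anything you are more careful, both in truncating strictly below $t$ so that the candidate is genuinely $\mathcal{F}_{t-}$-measurable (as the paper's own definition of honest time requires) and in addressing the measurability issue the paper passes over in silence. One caveat: the reduction of the supremum to rational times can fail when the level set $\{X \leq a\}$ is visited only at isolated instants (e.g.\ a continuous path touching the level $a$ at a single time), so the debut/section-theorem route you mention as a fallback is the argument to rely on rather than the countable-dense-subset one.
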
 
\begin{proof} Consider $\tau_u \coloneqq \text{sup} \{t \leq u : X_t \leq a \}$ and note that it is $\mathcal{F}_u$-measureable. Trivially, $\tau = \tau_u$ on $\tau < u$, hence the claim. 
\end{proof}

\par
\begin{example} (Nikeghbali, 2006) Let $B$ be a standard Brownian motion, $S_t \coloneqq \text{sup}_{s \leq t}B_s$ and $g \coloneqq \text{sup}\{t : B_t = S_t\}.$ Then from the results above it follows that 
\begin{align*}
  \mathbb{P}(g > t \vert \mathcal{F}_t) = \frac{B_t}{S_t}.  
\end{align*} 
\end{example}

\par
\begin{example} (Nikeghbali, 2006) Let $N_t \coloneqq \text{exp}(2\nu B_t - 2\nu^2t)$ for $\nu>0$, \textcolor{black}{where $B$ is a standard Brownian motion}. We have
\begin{align*}
    S_t &= \text{exp}\left(\text{sup}_{s \leq t}2\nu (B_s - \nu s) \right), \\
    g &= \text{sup}\left\{t :(B_t - \nu t) = \text{sup}_{s \geq 0} (B_s - \nu s)\right\},
\end{align*}
therefore
\begin{align*}
    \mathbb{P}( g > t \vert \mathcal{F}_t) = \text{exp}\left( 2\nu (B_s - \nu s) - \text{sup}_{s \geq 0} (B_s - \nu s)\right).
\end{align*} 
\end{example}

\par
\begin{example} (Nikeghbali, 2006) Consider a Bessel process \textcolor{black}{$R$} of dimension $2(1 - \mu)$ starting from $0$, where $\mu \in (0,1)$. Define $g_{\mu} \coloneqq \text{sup}\{t \leq 1 : R_t = 0\}$. Then the associated Azema supermartingale \textcolor{black}{$Z^{g_{\m}}$} is (we refer to the original paper for a derivation)
\begin{align*}
    Z_t^{g_{\mu}} = \mathbb{P}(g_{\mu} > t \vert \mathcal{F}_t) = \frac{1}{2^{\mu -1}}\Gamma(\mu) \int^{\infty}_{\frac{R_t}{\sqrt{1-t}}} \text{exp}(- \frac{y^2}{2})y^{2\mu - 1}dy,
\end{align*}
and the dual predictable projection \textcolor{black}{$A^{g_{\m}}$} of $\mathbbm{1}_{\{g_{\mu} \leq t\}}$ is 
\begin{align*}
    A_t^{g_{\mu}} = \frac{1}{2^{\mu}\Gamma(1 + \mu)}\int_0^{t\land1}\frac{dL_u}{(1-u)^{\mu}},
\end{align*}
where $L_y \coloneqq \text{sup} \{t : R_t = y \}$. \par
We now apply these results to the case of a standard Brownian motion. Note that when $\mu = 1/2$, $R_t$ can be viewed as $\lvert B_t \rvert$. Then for $g \coloneqq \text{sup}\{t \leq 1 : B_t = 0\}$ we have 
\begin{align*}
     Z_t^g = \mathbb{P}(g > t \vert \mathcal{F}_t) = \sqrt{\frac{2}{\pi}} \int^{\infty}_{\frac{B_t}{\sqrt{1-t}}} \text{exp}(- \frac{y^2}{2})dy,
\end{align*}
and
\begin{align*}
      A_t^g = \sqrt{\frac{2}{\pi}}\int_0^{t\land1}\frac{dL_u}{\sqrt{1-u}}.
\end{align*} 
\end{example}

\par
For the last example, we need an auxillary result provided by the following lemma. \par

\begin{lem} (Nikeghbali, Yor, 2006) \textcolor{black}{Given the Azema supermartingale $Z$ associated with a random time $g$ or $g^{\m}$} \textcolor{black}{as in the examples above}, define the nonincreasing process $(r_t)$ by 
\begin{align*}
    r_t \coloneqq \inf_{u\leq t} Z_u = \text{inf}_{u \leq t} \frac{N_u}{S_u},
\end{align*}
\textcolor{black}{where $N$ and $S$ are as defined in the Ito-Watanabe decomposition above}.
Then 
\begin{align*}
    \rho \coloneqq \text{sup} \left\{ t < g : \frac{N_t}{S_t} = \text{inf}_{u \leq t} \frac{N_u}{S_u} \right\}
\end{align*}
is a pseudo-stopping time and $r_{\rho}$ is uniformly distributed on $(0,1)$.
\end{lem}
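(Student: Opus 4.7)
The plan is to identify the Azéma supermartingale $Z^{\rho}_t := \mathbb{P}(\rho > t \mid \mathcal{F}_t)$ of $\rho$ with the running infimum process $r_t$ itself; once that identification is made, both conclusions follow from standard machinery. The main obstacle will be the explicit computation of $Z^\rho$.

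For Step 1, writing $Z_t = N_t/S_t$ and using continuity of $N$ (hence of $Z$ and $r$), one has up to a $\mathbb{P}$-null set
\[
\{\rho > t\} \;=\; \bigl\{\exists\, s \in (t, g):\; Z_s = r_s\bigr\} \;=\; \bigl\{\inf_{s \in [t, g]} Z_s \leq r_t\bigr\}.
\]
The right-hand event says that $N$ descends, relative to its running supremum, to the level $r_t S_s$ at some $s \in [t, g]$, i.e.\ before $N$'s final coincidence with its running maximum. I would condition on $\mathcal{F}_t$ and invoke Doob's maximal identity
\[
\mathbb{P}(S_\infty > x \mid \mathcal{F}_u) \;=\; \min(N_u/x,\, 1), \qquad x > 0,
\]
applied to the continuous nonnegative local martingale $N$ with $N_0 = 1$ and $N_\infty = 0$, combined with an excursion-theoretic bookkeeping of the returns of $Z$ to $1$ (essentially the computation carried out by Nikeghbali and Yor in their treatment of last-passage times of $N/S$). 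This yields
\[
\mathbb{P}(\rho > t \mid \mathcal{F}_t) \;=\; r_t.
\]

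For Step 2, since $r$ is continuous, nonincreasing, and $\mathbb{F}$-adapted, the Doob--Meyer decomposition of $Z^\rho = r$ is trivial: $Z^\rho = M^\rho - A^\rho$ with $M^\rho \equiv Z^\rho_0 = 1$ and predictable $A^\rho_t = 1 - r_t$. By the Nikeghbali--Yor characterisation (a random time is a pseudo-stopping time iff the local martingale part of its Azéma supermartingale is constant), $\rho$ is a pseudo-stopping time. For Step 3, $A^\rho = 1 - r$ is the dual predictable projection of $\mathbbm{1}_{\llbracket \rho, \infty \llbracket}$; it is continuous, increasing from $A^\rho_0 = 0$ to $A^\rho_\infty = 1$, the latter because $\lim_{t\to\infty} Z_t = 0$ (since $\lim N_t = 0$ while $S_t \geq N_0 = 1$, so $r_\infty = 0$). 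Applying the projection identity $\mathbb{E}[K_\rho \mathbbm{1}_{\{\rho < \infty\}}] = \mathbb{E}\bigl[\int_0^\infty K_s\, dA^\rho_s\bigr]$ with the predictable integrand $K_s = F(A^\rho_s)$ for any bounded Borel $F$, and using the change of variable $u = A^\rho_s$ (valid by continuity of $A^\rho$), one obtains
\[
\mathbb{E}[F(A^\rho_\rho)] \;=\; \mathbb{E}\!\left[\int_0^{1} F(u)\, du\right] \;=\; \int_0^1 F(u)\, du,
\]
so $A^\rho_\rho = 1 - r_\rho$ is uniform on $(0,1)$, and hence so is $r_\rho$.

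The hard part will be Step 1. The identification $Z^\rho = r$ is intuitively natural because both sides encode when $Z$ has attained a new infimum before $N$'s last visit to its supremum, but carrying out the rigorous conditional-probability computation requires careful excursion-theoretic bookkeeping for $Z = N/S$ together with Doob's maximal identity; once $Z^\rho = r$ is in hand, Steps 2 and 3 are largely mechanical.
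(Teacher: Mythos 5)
The paper itself gives no proof of this lemma (it is quoted from Nikeghbali--Yor, 2006), so your proposal can only be judged on its own terms. Its architecture is the right one --- identify the Az\'ema supermartingale of $\rho$ with $r$, then read off the pseudo-stopping-time property and the uniform law from the resulting trivial Doob--Meyer decomposition --- and Steps 2 and 3 are essentially complete and correct. The genuine gap is exactly where you flag it: Step 1 is not an argument but an appeal to ``excursion-theoretic bookkeeping'' in the cited source. As it stands, the central identity $\mathbb{P}(\rho>t\mid\mathcal{F}_t)=r_t$ is assumed, not proved. The good news is that no excursion theory is needed; the clean way to close the gap is via a first-hitting time together with the optional projection defining $Z$. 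Set $T:=\inf\{s\ge t: Z_s\le r_t\}$. Your set identity gives, up to null sets, $\{\rho>t\}=\{T<g\}$, and by continuity $Z_T=r_t$ on $\{T<\infty\}$. Since $Z$ is the optional projection of $\1_{\llbracket 0,g\llbracket}$, one has $\mathbb{P}(g>T\mid\mathcal{F}_T)=Z_T=r_t$ on $\{T<\infty\}$ for the \emph{stopping time} $T$; moreover $T<\infty$ a.s.\ on $\{r_t>0\}$ because $S\ge 1$ and $N_\infty=0$ force $Z_\infty=0$. The tower property then yields
\begin{align*}
\mathbb{P}(\rho>t\mid\mathcal{F}_t)=\mathbb{E}\bigl[\1_{\{T<\infty\}}\,\mathbb{P}(g>T\mid\mathcal{F}_T)\,\big\vert\,\mathcal{F}_t\bigr]=r_t,
\end{align*}
which is the one place Doob's maximal identity actually enters (through $Z=N/S$).

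Two smaller points. First, the Nikeghbali--Yor criterion for pseudo-stopping times is stated in terms of the martingale part of the decomposition $Z^{\rho}=\mu-A^{o}$ with $A^{o}$ the dual \emph{optional} projection (equivalently $\mu_\infty=1$ a.s.), whereas your Step 2 computes the Doob--Meyer (dual predictable) decomposition. In the setting of the lemma --- the Brownian and Bessel examples, where assumption (C) holds and $N$ is continuous --- the two projections coincide and your argument is fine, but you should say so. Second, your Step 3 is correct as written: $F(A^{\rho}_s)$ is predictable because $A^{\rho}=1-r$ is continuous and adapted, the duality formula applies, and the time change $u=A^{\rho}_s$ is legitimate for a continuous increasing process running from $0$ to $A^{\rho}_\infty=1-r_\infty=1$.
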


\par
\begin{example} In the setup above, let $N_t = B_t$. \textcolor{black}{Consider the initial enlargements $\mF^g$ and $\mF^{\s(S_{T_0})}$ of $\mF$ by $g$ and $\s(S_{T_0})$ respectively.} Then \\
(i) $\frac{B_{\rho}}{S_{\rho}}$ is uniformly distributed on $(0,1)$ \\
(ii) $(B_t)$ is an $\mF^g-$ and $\mF^{\s(S_{T_0})}$-semimartingale with the canonical decomposition 
\begin{align*}
    B_t = \widetilde{B}_t + \int_0^{t \land g}\frac{ds}{B_s} - \int_g^{t \land T_0}\frac{ds}{S_{T_0} - B_s},
\end{align*}
where $\widetilde{B}_t$ is an $\mF^{\sigma(S_{T_0})}$-Brownian motion stopped at the first hitting time of 0, i.e. $T_0$. 
\end{example}

\par
\begin{example} (Cox's construction). Consider a random variable $\Theta$ independent of $\mathcal{F}_{\infty}$ such that $\mathbb{P}(\Theta > t) = \text{exp}(-t)$ for all $t \geq 0$. Let $(\Lambda_t)$ be an $\mathbb{F}$-adapted continuous increasing process with $\Lambda_{\infty} = \infty$. Define $\sigma \coloneqq \text{inf}\{t \geq 0 : \Lambda_t \geq \Theta\}$, and note that it is a pseudo-stopping time since 
\begin{align*}
    Z_t \coloneqq \mathbb{P}(\sigma > t \vert \mathcal{F}_t) = \mathbb{P}(\Lambda_t < \Theta \vert \mathcal{F}_t) = \text{exp}(-\Lambda_t).
\end{align*}
As an important special case, consider $\mathbb{F}$ to be the filtration generated by a Poisson process $N$, and $\tau = \text{inf}\{t \geq 0 : \psi(N_t) \geq \Theta\}$, where $\psi$ is some non-decreasing function $\psi : \mathbb{R}_+ \to \mathbb{R}$ such that $\psi(0) = 0$ and $\lim_{x \to \infty}\psi(x) = \infty$. Then the associated Azema supermartingale is $Z_t = e^{-\psi(N_t)}$ for all $t \in \mR_+$. Note that it is decreasing but not predictable. The Azema supermartingale admits the decomposition $Z = \mu - A^p$, where
\begin{align*}
    \mu_t = 1 + \int_0^t\left(e^{-\psi(N_{s-}+1)} - e^{-\psi(N_{s-})}\right)dM_s, \\
    A^p_t = \int_0^t\lambda\left((e^{-\psi(N_s+1)} - e^{-\psi(N_s + 1)}\right)ds,
    \end{align*}
and the process 
\begin{align*}
    M_t^{(\tau)} \coloneqq A_t - \int_0^{t \land \tau}e^{\psi(N_s)}dA^p_s = A_t - \int_0^{t \land \tau}\lambda\left((e^{-\psi(N_s+1)} - e^{-\psi(N_s + 1)}\right)ds
\end{align*}
is a $\mathbb{G}$-martingale, which implies $\tau$ is a totally inaccessible $\mathbb{G}$-stopping time. Indeed, $Z_t = Z_0 + \sum_{0<s\leq t}e^{-\psi(N_s)} - e^{-\psi(N_{s-})} = 1 + \int_0^t\left(e^{-\psi(N_{s-}+1)} - e^{-\psi(N_{s-})}\right)dM_s$, and $M_t = N_t - \lambda t$. The proof of the second statement can be found in (Aksamit et al 2018, p.1243).
\end{example}

\subsection{Connections to Mathematical Finance}

The germ of the main ideas that relate the theory of enlargement to finance can be found, inter alia, in an initial enlargement setting where the natural filtration of a stochastic process is enlarged by the $\sigma$-algebra generated by the terminal value, i.e. if an insider has at time 0 some information about an event which will occur in the future, then that information can be used to make a profit. \par
More generally, the cornerstone of modern mathematical finance is the notion of (absence of) \textbf{arbitrage}. Properly defined, it can be shown to be equivalent to the existence of an \textbf{equivalent probability measure}, under which the price process of an asset becomes a (local) martingale. The high level idea is that enlarging a filtration can potentially create arbitrage opportunities. Specifically, \textbf{the First Fundamental Theorem of Asset Pricing} claims that for a price process $S$ that is a general semimartingale the existence of an equivalent probability measure under which $S$ is a $\sigma$-martingale is equivalent to no arbitrage (defined as \textbf{no free lunch with vanishing risks, NFLVR} - see the next section). This leads us to the idea that if the enlargement of the underlying filtration destroys the semimartingale property of $S$, then there are very likely to be arbitrage opportunities for those with access to this enlarged information set. The semimartingale property does not, however, guarantee no arbitrage. For example, it is preserved by the addition of $B_T$ to the natural filtration of a standard Brownian motion. However, in the Black-Scholes model, knowing $B_T$ is equivalent to knowing the terminal value of the stock price $S_T$, which obviously implies arbitrage (Ouwehand, 2007). \par
Before we proceed to actual applications in finance, a word of caution is in order. While the theory of enlargement of filtrations is a powerful tool in the study of information-related aspects of finance, it is by no means general and does not exhaust the topic. As an example, (Ernst et al, 2017) consider the "enlarged" filtration $(\mathcal{F}'_t) \coloneqq (\mathcal{F}_{t+a})$ for some fixed $a > 0$, taken to represent the "foresight" of the insider.  In this new filtration, the standard Brownian motion $B$ is not even a semimartingale. (Ernst et al, 2017) conclude that "none of the results from enlargement of filtrations
will help us here $-$ the problem addressed is concrete, challenging, and not amenable to general theory".\par
The discussion above, specifically the semimartingale decompositions given for the processes in the enlarged filtration, points to a number of interesting connections between enlargement of filtrations and several core ideas in stochastic analysis. From a very high level vantage point, semimartingales can informally be understood as a generalization of the "noise-signal" decomposition of a random phenomenon evolving in time. This is also the idea present in Doob's decomoposition of a supermartingale. Note that enlargement results in the "noise" process (the local martingale term in our analogy) acquiring a trend (the finite variation term, or the "signal") or becoming so "wild" that it even ceases to be a semimartingale. The appearance of the drift term under enlargement bears a resemblance to the Girsanov transform and change of measure techniques in general. This resemblance is not superficial and indeed the two ideas are closely intertwined. Moreover, initial enlargement of filtrations is naturally linked to the well established theory of bridges. Indeed, even in the classic example the standard Brownian motion turns into a Brownian bridge if the natural filtration is expanded to include the terminal value of the process (i.e. the "arrival" point for the bridge process). This connection is also not merely a useful analogy, but rather a robust relationship. As M.Yor pointed out (Mansuy, Yor, p.36), almost all initial enlargement formulas obtained by appending $\sigma$-algebras generated by different random variables "may be interpreted in terms of semimartingale decompositions of Markovian bridges". Given these considerations, we provide the necessary definitions and theorems that will be useful in the following discussion. More details can be found in, e.g., (Protter, 2005). \par

\begin{thm} (Doob-Meyer Decomposition). Let $Z$ be a cadlag supermartingale. Then $Z$ has a decomposition $Z = Z_0 + M - A$, where $M$ is a local martingale and $A$ is a predictable increaing process, and $M_0 = A_0 = 0$. Such a decomposition is unique. 
\end{thm}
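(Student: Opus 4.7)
The plan is to establish existence by a discretization argument and uniqueness by a rigidity property of predictable local martingales. First, I would reduce to the case where $Z$ is of class (D) by localization: setting $T_n \coloneqq \inf\{t : |Z_t| \geq n\} \land n$, each stopped process $Z^{T_n}$ is of class (D), and decompositions for the $Z^{T_n}$ can be patched consistently (using uniqueness, proved independently below) to yield a decomposition for $Z$ with $A$ only locally integrable in general.

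Under class (D), I would construct $A$ from discrete Doob decompositions along dyadic partitions. On the grid $\{k 2^{-n} : k \in \mN\}$ one has
\begin{align*}
    Z_{k 2^{-n}} = Z_0 + M^n_{k 2^{-n}} - A^n_{k 2^{-n}}, \qquad A^n_{k 2^{-n}} \coloneqq \sum_{j=1}^{k} \mE\bigl[Z_{(j-1)2^{-n}} - Z_{j 2^{-n}} \,\big|\, \cF_{(j-1)2^{-n}}\bigr],
\end{align*}
with each $A^n$ predictable with respect to the discrete filtration. The class (D) hypothesis yields uniform integrability of $\{A^n_\infty\}_n$ via Dunford-Pettis, so a subsequence converges weakly in $L^1$ to some $A_\infty$. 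A Mazur-type argument then promotes this to strong convergence of suitable convex combinations, producing a predictable cadlag increasing process $A$ whose terminal value is $A_\infty$. Defining $M \coloneqq Z - Z_0 + A$, the martingale property of $M$ survives in the limit because the defining conditional-expectation identity passes through convex combinations.

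For uniqueness, suppose $Z_0 + M - A = Z_0 + M' - A'$. Then $N \coloneqq M - M' = A - A'$ is simultaneously a local martingale and a predictable cadlag process of finite variation with $N_0 = 0$. I would appeal to the classical lemma that such a process is identically zero: predictability gives $\Delta N_T = \mE[\Delta N_T \mid \cF_{T-}]$ at every predictable stopping time $T$, and the local martingale property forces this conditional expectation to vanish, so $N$ is continuous; a continuous local martingale of finite variation starting at zero is identically zero, yielding $M = M'$ and $A = A'$.

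The main obstacle is the passage to the limit in the discretization step. Uniform integrability of $\{A^n_\infty\}$ is exactly the content of the class (D) hypothesis, but extracting a \emph{predictable} limit (not merely a weak $L^1$ limit of terminal values) and identifying it as a cadlag increasing process requires careful handling via Mazur/Koml\'os-type arguments, together with a diagonal extraction across a dense set of times. Once this compactness-plus-selection is in place, the reduction to class (D) and the uniqueness argument are comparatively mechanical.
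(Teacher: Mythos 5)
The paper states this classical theorem without proof, deferring to Protter (2005), and your sketch is essentially the standard Rao--Meyer argument found there (dyadic discrete Doob decompositions, uniform integrability of the discrete compensators under class (D), weak $L^1$ compactness plus a Mazur/Koml\'os selection, and the rigidity of predictable finite-variation local martingales for uniqueness), so there is no competing argument in the paper to compare against. The outline is correct; the one point you understate is the claim that uniform integrability of $\{A^n_\infty\}$ is ``exactly the content'' of class (D) --- extracting it requires the separate stopping-time estimate (stop at the first dyadic time where $A^n$ is about to exceed $\lambda$ and control $\mE\bigl[A^n_\infty \mathbbm{1}_{\{A^n_\infty > 2\lambda\}}\bigr]$ through the uniformly integrable family $\{Z_\tau\}$), and this, together with the predictability of the limiting process, is where the real work of the proof lies.
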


\par
\begin{thm}(Girsanov). Let $\mathbb{P} \sim \mathbb{Q}$. Let $X$ be a semimartingale under $\mathbb{P}$ with decomposition $X = M + A$. Then $X$ is also a semimartingale under $\mathbb{Q}$ with the decomposition $X = L + C$, where  
\begin{align*}
    L_t = M_t - \int_0^t \frac{1}{Z_s}d[Z,M]_s
\end{align*}
is a $\mathbb{Q}$-local martingale, $Z_t = \mathbb{E}_{\mathbb{P}}(\frac{d\mathbb{Q}}{d\mathbb{P}} \vert \mathcal{F}_t)$, and $C = X - L$ is a $\mathbb{Q}$-finite variation process. 
\end{thm}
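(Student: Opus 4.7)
The plan is to verify that $L$ is a $\mathbb{Q}$-local martingale; once this is in hand, $C = X - L = A + \int_0^{\cdot}\frac{1}{Z_s}\,d[Z,M]_s$ is manifestly the sum of the $\mathbb{P}$-finite variation process $A$ and a pathwise Stieltjes integral against the finite variation process $[Z,M]$, and hence is itself of finite variation under $\mathbb{Q}$ (finite variation is a pathwise property and is preserved under the equivalent change of measure $\mathbb{P}\sim\mathbb{Q}$). Thus the heart of the argument is to produce the $\mathbb{Q}$-local martingale part $L$.

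The key auxiliary ingredient I would invoke is the standard Bayes/Meyer characterization: a cadlag adapted process $N$ is a $\mathbb{Q}$-local martingale if and only if $ZN$ is a $\mathbb{P}$-local martingale, where $Z_t = \mathbb{E}_{\mathbb{P}}[\tfrac{d\mathbb{Q}}{d\mathbb{P}}\mid \mathcal{F}_t]$ is the (strictly positive, by equivalence) density process. This follows quickly from the abstract Bayes formula for conditional expectations under equivalent measures combined with the $\mathbb{P}$-martingale property of $Z$, and I would sketch it in a line or two before the main computation.

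The core computation is Itô's integration by parts applied to $ZL$. Writing $K_t := \int_0^t Z_{s}^{-1}\,d[Z,M]_s$ so that $L = M - K$, the product rule gives
\[
d(ZL) \;=\; Z_{-}\,dL \;+\; L_{-}\,dZ \;+\; d[Z,L].
\]
Bilinearity yields $d[Z,L] = d[Z,M] - d[Z,K]$, while the definition of $K$ produces $Z_{-}\,dK = d[Z,M]$. Substituting, the $d[Z,M]$ contributions coming from $Z_{-}\,dL$ and from $d[Z,L]$ cancel, leaving $d(ZL) = Z_{-}\,dM + L_{-}\,dZ - d[Z,K]$. In the continuous case $[Z,K]\equiv 0$ and the two surviving terms are stochastic integrals against the $\mathbb{P}$-local martingales $M$ and $Z$, exhibiting $ZL$ as a $\mathbb{P}$-local martingale. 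The Bayes/Meyer lemma then yields that $L$ is a $\mathbb{Q}$-local martingale, completing the proof.

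The main obstacle I anticipate is the careful treatment of jumps and the interplay between $Z$ and $Z_{-}$. The displayed formula writes $Z_s$ in the integrand, whereas a clean Itô-calculus argument uses $Z_{s-}$; one must justify that this substitution does not alter the drift, either by invoking strict positivity of $Z$ together with the fact that $\{s : Z_s \neq Z_{s-}\}$ is countable so that the two choices agree up to an evanescent set for the continuous part of $[Z,M]$, or by simply rewriting the drift in its predictable form and showing that the residual pure-jump term $d[Z,K]$ is absorbed into $L_{-}\,dZ$. A secondary but routine technical point is localization: one stops simultaneously at a reducing sequence for $M$ and at exit times of $Z$ from compact subsets of $(0,\infty)$, so that $Z_{-}^{-1}$ is bounded and all stochastic integrals are honest martingales, and then passes to the limit.
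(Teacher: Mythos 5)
The paper does not prove this theorem itself; it is quoted as background with a pointer to Protter (2005), and your strategy --- the Bayes/Meyer lemma that $N$ is a $\mathbb{Q}$-local martingale iff $ZN$ is a $\mathbb{P}$-local martingale, followed by integration by parts applied to $ZL$ --- is exactly the standard proof. The reduction of the finite-variation claim to a pathwise statement and the localization remarks are fine, and your computation is complete and correct in the continuous case.

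The gap is in the jump bookkeeping, which is the whole reason the integrand is $1/Z_s$ rather than $1/Z_{s-}$. With $K_t=\int_0^t Z_s^{-1}\,d[Z,M]_s$ one has $Z_{s-}\,dK_s=\frac{Z_{s-}}{Z_s}\,d[Z,M]_s$, so your claimed identity $Z_-\,dK=d[Z,M]$ fails at every common jump of $Z$ and $M$; there $[Z,M]$ is purely atomic, so the observation that the jump set is countable does not rescue the substitution. Likewise the residual term $d[Z,K]$ is not ``absorbed into $L_-\,dZ$'': since $K$ is of finite variation, $[Z,K]_t=\sum_{s\le t}\Delta Z_s\,\Delta K_s=\int_0^t \frac{\Delta Z_s}{Z_s}\,d[Z,M]_s$, and the correct cancellation is the three-term identity
\begin{align*}
Z_{s-}\,dK_s+d[Z,K]_s=\frac{Z_{s-}+\Delta Z_s}{Z_s}\,d[Z,M]_s=d[Z,M]_s,
\end{align*}
which exactly kills the $+d[Z,M]$ contributed by $d[Z,L]$ and leaves $d(ZL)=Z_-\,dM+L_-\,dZ$, a $\mathbb{P}$-local martingale. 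So the theorem is true as stated with $1/Z_s$ and no passage to a predictable version of the drift is needed; what your argument is missing is this three-term cancellation in place of the two-term one you wrote.
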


\par
\begin{thm}(Lenglart-Girsanov). Let $X$ be a $\mathbb{P}$-local martingale with $X_0 = 0$. Let $\mathbb{Q}$ be a probability measure such that $\mathbb{Q} 
\ll \mathbb{P}$, and let $Z_t = \mathbb{E}_{\mathbb{P}}(\frac{d\mathbb{Q}}{d\mathbb{P}} \vert \mathcal{F}_t)$, $R = \text{inf}\{ t > 0 : Z_t = 0, Z_{t-} > 0\}$, $U_t = \Delta X_R\mathbbm{1}_{\{t \geq R\}}$, and $\widetilde{U}_t$ is the predictable copensator of $U_t$ such that $U - \widetilde{U}$ is a $\mathbb{P}$-local martingale. Then 
\begin{align*}
    X_t - \int_0^t \frac{1}{Z_s}d[Z,M]_s + \widetilde{U}_t
\end{align*}
is a $\mathbb{Q}$-local martingale. 
\end{thm}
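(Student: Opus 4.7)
The plan is to reduce the statement to showing that the product $ZY$ is a $\mathbb{P}$-local martingale, where $Y$ denotes the candidate process. Reading the $M$ in the integrand as a typo for $X$ (the only local martingale in sight) and $Z_s$ as $Z_{s-}$ so that the integrand is predictable, I set $Y_t := X_t - \int_0^t Z_{s-}^{-1}\, d[Z,X]_s + \widetilde{U}_t$. The bridge from $\mathbb{P}$ to $\mathbb{Q}$ is the standard characterization: for $\mathbb{Q} \ll \mathbb{P}$ with density process $Z$, a càdlàg adapted process $Y$ with $Y_0 = 0$ is a $\mathbb{Q}$-local martingale if and only if $ZY$ is a $\mathbb{P}$-local martingale. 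Localization is carried out via $T_n := \inf\{t : Z_t \le 1/n\} \wedge n$, which keeps $Z_{s-} \ge 1/n$ on $\llbracket 0, T_n \rrbracket$ and satisfies $T_n \nearrow R$ on $\{R < \infty\}$; under $\mathbb{Q}$ we have $\mathbb{Q}(R<\infty)=0$ (since $Z$ is absorbed at $0$ once it hits it), so these $T_n$'s also localize under $\mathbb{Q}$.

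Next I would apply Itô's integration by parts to $Z_t Y_t$,
\begin{align*}
    Z_t Y_t = \int_0^t Z_{s-}\, dY_s + \int_0^t Y_{s-}\, dZ_s + [Z, Y]_t,
\end{align*}
and substitute $Y = X - A + \widetilde{U}$ with $A_t := \int_0^t Z_{s-}^{-1}\, d[Z,X]_s$. The integral $\int Y_{s-}\, dZ_s$ is a $\mathbb{P}$-local martingale because $Z$ is. Inside $\int Z_{s-}\, dY_s$, the piece $\int Z_{s-}\, dX_s$ is a $\mathbb{P}$-local martingale, and $\int_0^t Z_{s-}\, dA_s = [Z,X]_t$ by the construction of $A$. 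The bracket splits as $[Z,Y] = [Z,X] - [Z,A] + [Z,\widetilde{U}]$, and the two copies of $[Z,X]$ cancel. What remains is a finite-variation combination of $[Z,A]$, $[Z,\widetilde{U}]$, and $\int_0^t Z_{s-}\, d\widetilde{U}_s$ that must be rewritten as a $\mathbb{P}$-local martingale for the argument to go through.

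The main obstacle, and the feature that distinguishes Lenglart's extension from the classical Girsanov theorem stated just above, is the behaviour at the time $R$. In the equivalent case $\mathbb{P} \sim \mathbb{Q}$, $Z$ never hits zero, $U \equiv 0$, and the residual terms collapse to zero automatically. When only $\mathbb{Q} \ll \mathbb{P}$, we have $\Delta Z_R = -Z_{R-}\ne 0$, and the jump of $X$ at $R$ contributes $Z_{R-}\Delta X_R$ to the product $ZY$; this is precisely the jump of $Z_{R-} U$. The defining property of $\widetilde{U}$ — that $U - \widetilde{U}$ is a $\mathbb{P}$-local martingale — is exactly what allows the finite-variation residual from the previous paragraph to be absorbed into a local-martingale difference, making the total a $\mathbb{P}$-local martingale. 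The technical care then lies in a careful accounting of jump contributions in $[Z,A]$ and $[Z,\widetilde{U}]$ on each stopped interval $\llbracket 0, T_n \rrbracket$, and in passing to the limit $n \to \infty$ to obtain the $\mathbb{Q}$-local martingale property of $Y$ on all of $[0,\infty)$.
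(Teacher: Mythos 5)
The paper offers no proof of this theorem: it is quoted as background with a pointer to Protter (2005), whose argument is essentially the one you outline. Your reading of the $M$ in the integrand as a typo for $X$ is correct, and your overall strategy is the standard one: invoke Lenglart's lemma that $Y$ is a $\mathbb{Q}$-local martingale whenever $ZY$ is a $\mathbb{P}$-local martingale (under mere absolute continuity only this implication holds, not the equivalence you assert, but it is the only direction you use), then compute $ZY$ by integration by parts and localize before $R$.

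There is, however, a genuine error at the outset: replacing $1/Z_s$ by $1/Z_{s-}$ is not a harmless normalization. Predictability of the integrand is not needed here, because $\int_0^t Z_s^{-1}\,d[Z,X]_s$ is a pathwise Stieltjes integral against the finite-variation process $[Z,X]$; and the choice $1/Z_s$ is precisely what makes the computation close. With $A_t=\int_0^t Z_s^{-1}\,d[Z,X]_s$ one gets
\begin{align*}
\int_0^t Z_{s-}\,dA_s+[Z,A]_t=\int_0^t\frac{Z_{s-}+\Delta Z_s}{Z_s}\,d[Z,X]_s=[Z,X]_t,
\end{align*}
which cancels the bracket term from the integration by parts exactly. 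With your $A_t=\int_0^t Z_{s-}^{-1}\,d[Z,X]_s$ the term $\int_0^t Z_{s-}\,dA_s=[Z,X]_t$ still cancels the bracket, but the residual $[Z,A]_t=\sum_{s\le t}(\Delta Z_s)^2\,\Delta X_s/Z_{s-}$ survives; it is a finite-variation process charged at every common jump time of $Z$ and $X$, not just at $R$, so it cannot be absorbed by $\widetilde{U}$, whose sole role is to compensate the single jump $\Delta X_R\mathbbm{1}_{\{t\geq R\}}$. (Take $X$ a compensated Poisson process and $Z=\mathcal{E}(\alpha X)$ with small $\alpha>0$: then $[Z,A]$ is increasing and the modified process fails to be a $\mathbb{Q}$-local martingale, so the statement you actually set out to prove is false.) Relatedly, your last paragraph asserts rather than demonstrates the absorption of the remaining terms: the correct mechanism is that $\int_0^t Z_{s-}\,d(U-\widetilde{U})_s$ is a $\mathbb{P}$-local martingale while $\int_0^t Z_{s-}\,dU_s=Z_{R-}\Delta X_R\mathbbm{1}_{\{t\geq R\}}$ supplies exactly the contribution of $[Z,X]$ at time $R$ that the integral against $1/Z_s$ cannot produce (since $Z_R=0$ there), and this bookkeeping must be carried out explicitly on $\llbracket 0,T_n\rrbracket$ before letting $n\to\infty$.
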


\par
\begin{defn} A \textbf{Markov bridge} is a process obtained by conditioning a Markov process $X$ to start in some state $x$ at time 0 and arrive at some state $z$ at time $t$. We call this process the $(x,t,z)$-bridge derived from $X$. 
\end{defn}

\bigskip

\section{Enlargement of Filtrations and Different Notions of Arbitrage}

\subsection{Preliminaries and Definitions}

In the sequel, we work on a given filtered probability space (stochastic basis) $(\Omega, \mathcal{F}, \mathbb{F}, \mathbb{P})$, where the filtration $\mathbb{F} = (\mathcal{F}_t)_{0 \leq t \leq T}$ satisfies the usual hypotheses of right-continuity and augmentation by $\mathbb{P}$-null sets, $T \in (0, \infty]$ and $\mathcal{F}_T = \mathcal{F}$. We assume the price process of $n$ risky assets to be a general positive semimartingale $S = (S_1(t), ..., S_n(t))^\top$.\par
We denote by $\mathcal{L}(S)$ the set predictable processes integrable with respect to $S$, and by $\mathcal{O}$ the set of optional processes. The set of \textit{admissible self-financing trading strategies} is defined to be \\
$\mathcal{A}(x) = \{(\alpha_0, \alpha) \in (\mathcal{O}, \mathcal{L}(S)) : X_t = \alpha_0(t) + \alpha_t \cdot S, \exists c \leq 0 , X_t = x + \int_0^t\alpha_u \cdot dS_u \geq c, \forall t \in [0, T]\}$, where $X_t = x + \int_0^t\alpha_u \cdot dS_u$ is the (normalized) value process. We will also use  $\alpha \bullet S$ to denote the stochastic integral process $(\int_0^t\alpha_u \cdot dS_u)_{0 \leq t \leq T}$.
We begin by recalling a number of different definitions that clarify the notion of arbitrage (see Jarrow, 2021, for more details). \par

\begin{defn} An admissible s.f.t.s. $(\alpha_0, \alpha) \in \mathcal{A}(x)$ is called a (simple) \textit{arbitrage opportunity} if \\
(i) $X_0 = x = 0$ \\
(ii) $X_T = \int_0^T\alpha_u \cdot dS_u \geq 0$ \quad $\mathbb{P}$-a.s. \\
(iii) $\mathbb{P}(X_T > 0) > 0.$ 
\end{defn}

\par
\begin{defn} A sequence of admissible s.f.t.s.'s $(\alpha_0, \alpha)_n \in \mathcal{A}(x)$ with associated value processes $X^n$ is said to generate \textit{unbounded profits with bounded risk} if \\
(i) $X_0^n = x > 0$ \\
(ii) $X_t^n \geq 0$ $\mathbb{P}$-a.s. $\forall t \in [0,T]$ \\
(iii) the sequence $(X^n_T)_n \geq 0 $ is unbounded in probability, i.e. \mbox{$\lim_{m \to \infty}(\sup_n \mathbb{P}(X^n_T > m)) > 0.$} 
\end{defn}

\par
\begin{defn} An $\mathcal{F}_T$-measurable random variable $\xi$ is said to generate \textit{an arbitrage opportunity of the first kind} if \\
(i) $\mathbb{P}(\xi \geq 0) = 1$ \\
(ii) $\mathbb{P}(\xi > 0) >0$ \\
(iii) $\forall x > 0$ there exists an admissible s.f.t.s. $(\alpha_0, \alpha) \in \mathcal{A}(x)$ with the value process $X^{x,\alpha} \geq 0$ such that $X^{x,\alpha}_T \geq \xi$. 
\end{defn}

\par
\begin{defn} \textit{A free lunch with vanishing risk} is a sequence of admissible s.f.t.s.'s $(\alpha_0, \alpha)_n \in \mathcal{A}(x)$ with $x \geq 0$, value processes $X^n_t$, admissibility lower bounds $c_n \leq 0$ s.t. $\exists c \leq c_n$ for all $n$, and an $\mathcal{F}_T$-measurable random variable $\xi \geq x$ s.t. $\mathbb{P}(\xi > x) > 0$ and $X^n_T \to \xi$ in probability. 
\end{defn}

\par
If no arbitrage opportunities of the above kind exist we say that NA, NUPBR, NA1 and NFLVR holds respectively. We may also use the following equivalent formulation of the definition of NFLVR (Karatzas, Kardaras, 2007): \par

\begin{defn} A sequence $(\alpha_0, \alpha)_n$ of admissible s.f.t.s. with value processes $X^n_t$ generates \textit{a free lunch with vanishing risk} if there exist an $\epsilon > 0$ and an increasing sequence $(\delta_n)_{n\in\mathbb{N}}$ s.t. $0 \leq \delta_n \nearrow 1$ and $\mathbb{P}(X_T^n > -1 + \delta_n) = 1$ and $\mathbb{P}(X_T^n > \epsilon) \geq \epsilon$, for all $ n \in \mathbb{N}$. 
\end{defn}

\par
These notions of arbitrage are closely connected. For example, it can be shown (Kardaras, 2012; Fontana, 2014) that NA1 is equivalent to NUPBR. Moreover, NFLVR is equivalent to both NA and NUPBR being satisfied (Delbaen Schachermayer, 1994; Fontana, 2014). We also recall that the most general version of the Fundamental Theorem of Asset Pricing states that for a general semimartingale $S$ NFLVR is equivalent to the existence of a probability measure $\mathbb{Q}$ on $(\Omega, \mathcal{F})$ equivalent to $\mathbb{P}$ under which $S$ is a $\sigma$-martingale. Since we assumed $S$ to be nonnegative and every nonnegative $\sigma$-martingale is a local martingale, then NFLVR is equivalent to the existence of a probability measure $\mathbb{Q}$ on $(\Omega, \mathcal{F})$ equivalent to $\mathbb{P}$ under which $S$ is a local martingale. \par
The following definitions will also be useful in the following discussion. Consider a general semimartingale financial market given by $(S, \mathbb{F}, \mathbb{P})$. \par

\begin{defn} A stricly positive $\mathbb{F}$-local martingale $(L_t)$ with $L_0 = 1$ and $L_{\infty} > 0$ a.s. is called \textit{a local martingale deflator}, if the process $(S_tL_t)$ is an $\mathbb{F}$-local martingale. 
\end{defn}

\par
\begin{rem} NA1 is equivalent to the existence of a local martingale deflator. 
\end{rem}

\par
\begin{defn} $\mathbb{Q} \coloneqq L_{\infty} \mathbb{P}$ is \textit{an equivalent local martingale measure}, if there exists a local martingale deflator $(L_t)$ which is a uniformly integrable martingale closed by $L_{\infty}.$
\end{defn}

\par
With these notions established, we now proceed to explicate the relationship between different notions of arbitrage and different types of enlargement of the underlying filtration.\par

\subsection{The relationship between EoF and NFLVR}

We begin by observing that the first FTAP establishes the connection between arbitrage opportunities encapsulated in the notion of NFLVR and the existence of an equivalent local martingale measure for $S$. Thus, for a general expansion $\mathbb{G}$ of $\mathbb{F}$ we have the following observation (Neufcourt, 2017, p.37, Corollary 2.3). \par

\begin{thm} Suppose that \textcolor{black}{a semimartingale} $S = S_0 + M + A$ satisfies NFLVR in $\mathbb{F}$, \textcolor{black}{where $M$ is a local martingale and $A$ is a process of finite variation}. Then $S$ satisfies NFLVR in $\mathbb{G}$ if and only if there exists an equivalent probability measure under which $M$ is a $\mathbb{G}$-semimartingale. 
\end{thm}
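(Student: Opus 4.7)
The plan is to reduce the equivalence, via the first Fundamental Theorem of Asset Pricing, to a statement about the existence of equivalent local martingale measures. Since $S$ is a nonnegative semimartingale, NFLVR in either filtration is equivalent to the existence of an equivalent local martingale measure (ELMM). I will also repeatedly use the basic fact, a consequence of the Girsanov theorem stated above, that the semimartingale property is invariant under equivalent change of measure; in particular, the condition ``there exists $\mQ'\sim\mP$ under which $M$ is a $\mG$-semimartingale'' is equivalent to $M$ being a $\mG$-semimartingale under $\mP$ itself, which is precisely the hypothesis $\cH'$ applied to $M$.

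For the necessity direction, I assume $S$ satisfies NFLVR in $\mG$. By the first FTAP there exists $\mQ\sim\mP$ on $\cG_T$ under which $S$ is a $\mG$-local martingale, and in particular a $\mG$-semimartingale. The process $A$ has finite variation and is therefore a $\mG$-semimartingale, so $M=S-S_0-A$ is a $\mG$-semimartingale under $\mQ$, and $\mQ$ is the required equivalent measure.

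For the sufficiency direction, suppose $M$ is a $\mG$-semimartingale under some $\mQ'\sim\mP$; by the invariance just noted, the same holds under $\mP$, whence $S=S_0+M+A$ is a $\mG$-semimartingale. Letting $\mQ_0$ denote the $\mF$-ELMM for $S$ guaranteed by NFLVR in $\mF$, the process $S$ is an $\mF$-local martingale under $\mQ_0$ and, by the $\mG$-semimartingale property just established, admits a canonical $\mG$-decomposition $S=S_0+N^{\mG}+B^{\mG}$ under $\mQ_0$ with $N^{\mG}$ a $\mG$-local martingale and $B^{\mG}$ a $\mG$-predictable finite variation process. A Girsanov-type change of measure designed to eliminate $B^{\mG}$ then yields a candidate $\mG$-ELMM, from which NFLVR in $\mG$ follows by the FTAP.

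The technical crux lies in this last step: the Girsanov density used to absorb $B^{\mG}$ must satisfy the structure condition (absolute continuity of $B^{\mG}$ with respect to the quadratic variation of $N^{\mG}$), and the associated exponential must be a strictly positive true martingale so that the new measure is genuinely equivalent to $\mP$ on $\cG_T$. In the specific enlargement settings treated earlier (Jacod's criterion, honest times, thin times, $\cJ$-times) the explicit drift formulas make this verification tractable; in the full generality of the statement, this is where the finer analysis of Neufcourt is needed.
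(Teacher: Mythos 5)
The paper gives no proof of this theorem at all---it only cites Neufcourt's thesis---so there is nothing internal to compare your argument against; it must stand on its own. Your necessity direction does: a $\mathbb{G}$-ELMM makes $S$ a $\mathbb{G}$-local martingale, $A$ is $\mathbb{F}$-adapted cadlag of finite variation and hence a $\mathbb{G}$-semimartingale, so $M=S-S_0-A$ is a $\mathbb{G}$-semimartingale under that measure.

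The sufficiency direction, however, contains a genuine gap, and your own opening observation exposes it. As you correctly note, the semimartingale property is invariant under equivalent changes of measure, so the hypothesis ``there exists $\mathbb{Q}'\sim\mathbb{P}$ under which $M$ is a $\mathbb{G}$-semimartingale'' collapses to ``$M$ satisfies $\mathcal{H}^{\prime}$ in $\mathbb{G}$.'' But $\mathcal{H}^{\prime}$ together with NFLVR in $\mathbb{F}$ does \emph{not} imply NFLVR in $\mathbb{G}$: the paper says so explicitly (``The semimartingale property does not, however, guarantee no arbitrage'') and contains counterexamples. In the Black--Scholes market initially enlarged with $B_{T^*}$, the bridge decomposition keeps $B$ (hence $M$) a $\mathbb{G}$-semimartingale on all of $[0,T^*]$ since $\int_0^{T^*}\lvert B_{T^*}-B_s\rvert(T^*-s)^{-1}ds<\infty$ a.s., yet the Karatzas--Pikovsky and Imkeller--Perkowski arguments show no ELMM exists on $[0,T^*]$; the Acciaio et al.\ Poisson example even violates NA1 while Jacod's criterion, and hence $\mathcal{H}^{\prime}$, holds. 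So the step you defer to ``the finer analysis of Neufcourt''---verifying the structure condition for $B^{\mathbb{G}}$ and the true-martingale property of the exponential density---is not a technicality that finer analysis can supply: it genuinely fails in these examples, and the implication as you have set it up is false. The fix is that the condition must be read as the existence of an equivalent measure under which $M$ is a $\mathbb{G}$-\emph{local} (or $\sigma$-) martingale, after first replacing $\mathbb{P}$ by the $\mathbb{F}$-ELMM so that $A\equiv 0$ and $S=S_0+M$; in that reading sufficiency is immediate from the first FTAP. Your write-up should either adopt that reading or flag the discrepancy in the statement, rather than attempt to bridge an unbridgeable step.
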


Consider an initial enlargement setup. The proof of the following theorem can be found in (Aksamit, Jeanblanc, 2017, p.97). \par

\begin{thm} Assume that $\zeta$ satisfies Jacod's equivalence condition $\mathcal{E}$ under $\mathbb{P}$. Let $[0,T]$ be a finite time horizon. If $S$ satisfies NFLVR in $(\mathbb{F}, \mathbb{P})$, then $S$ satisfies NFLVR in $(\mathbb{F}^{\sigma(\zeta)}, \mathbb{P}).$ 
\end{thm}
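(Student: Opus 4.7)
The plan is to combine the First Fundamental Theorem of Asset Pricing with the Amendinger decoupling construction to transport an equivalent local martingale measure from $(\mF, \mP)$ up to $(\mF^{\s(\zeta)}, \mP)$. First, since NFLVR holds in $(\mF, \mP)$, FTAP supplies a probability measure $\mQ \sim \mP$ on $\cF_T$ under which $S$ is an $\mF$-local martingale. Because $\mQ \sim \mP$, Jacod's equivalence condition $\mathcal{E}$ for $\zeta$ is preserved under $\mQ$: the $\mQ$-regular conditional laws of $\zeta$ given $\cF_t$ admit a strictly positive density $p^{\mQ}_t(\w, x)$ with respect to the $\mQ$-law $\eta^{\mQ}$ of $\zeta$. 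By the lemma on the density process stated earlier in the excerpt, this density can be chosen cadlag in $t$; under the equivalence condition one has $p^{\mQ}(\zeta) > 0$ identically on $\mR_+$.

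Next I would invoke Amendinger's decoupling measure. Define
\begin{align*}
    L_t \coloneqq \frac{p^{\mQ}_0(\zeta)}{p^{\mQ}_t(\zeta)}, \quad t \in [0,T].
\end{align*}
Under $\mathcal{E}$, one checks that $L$ is a strictly positive $\mF^{\s(\zeta)}$-martingale under $\mQ$ with $L_0 = 1$, so the prescription $d\wt{\mQ} \coloneqq L_T \, d\mQ$ defines a probability measure on $\cF_T^{\s(\zeta)}$. The classical properties of this construction, due to Amendinger, are that $\wt{\mQ}$ agrees with $\mQ$ on $\cF_T$, that the $\wt{\mQ}$-law of $\zeta$ equals $\eta^{\mQ}$, and crucially that $\zeta$ is $\wt{\mQ}$-independent of $\cF_T$. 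Since $\wt{\mQ} \sim \mQ \sim \mP$, the measure $\wt{\mQ}$ is equivalent to $\mP$ on $\cF_T^{\s(\zeta)}$.

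Two observations now conclude the argument. First, because $\wt{\mQ}$ and $\mQ$ coincide on $\cF_T$, the process $S$ is an $\mF$-local martingale under $\wt{\mQ}$ with the same localizing sequence. Second, by the $\wt{\mQ}$-independence of $\zeta$ and $\cF_T$, the immersion theorem for initial enlargements stated earlier in the excerpt yields $\mF \hookrightarrow \mF^{\s(\zeta)}$ under $\wt{\mQ}$, so $S$ remains an $\mF^{\s(\zeta)}$-local martingale under $\wt{\mQ}$ after the same localization. Thus $\wt{\mQ}$ is an equivalent local martingale measure for $S$ in the enlarged filtration, and NFLVR in $(\mF^{\s(\zeta)}, \mP)$ follows from FTAP applied on $(\mF^{\s(\zeta)}, \mP)$.

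The principal obstacle is the decoupling step: establishing that $L$ is a genuine $\mQ$-martingale on $[0,T]$ with $\mE_{\mQ}[L_T] = 1$, and that $\wt{\mQ}$ actually renders $\zeta$ independent of $\cF_T$. The finiteness of the horizon $T$ is essential here, because on $[0,\infty)$ the density $p^{\mQ}_t(\zeta)$ can degenerate and the equivalence $\wt{\mQ} \sim \mP$ on $\cF_\infty^{\s(\zeta)}$ need not persist -- this is precisely why the theorem is confined to a finite horizon. A secondary technical point is to handle the right-continuous augmentation $\mF^{\s(\zeta)}$ carefully so that the measure produced lives on the appropriate $\s$-algebra for invoking FTAP in the enlarged market.
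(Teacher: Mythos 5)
Your argument is correct and follows essentially the same route as the cited source (Aksamit--Jeanblanc) and the machinery this paper develops elsewhere: the paper itself defers the proof to the reference, but the Amendinger decoupling measure $d\wt{\mQ}=L_T\,d\mQ$ with $L=p_0(\zeta)/p(\zeta)$, its agreement with $\mQ$ on $\cF_T$, and the resulting independence/immersion are exactly the ingredients presented in Sections 3.1 and 4.2. You also correctly identify the two genuine pressure points -- the true-martingale property of $L$ on the closed interval $[0,T]$ (the F\"ollmer--Imkeller computation, which is where the equivalence condition $\mathcal{E}$ and the finite horizon are both used) and the stability of $\mathcal{E}$ under the equivalent change to $\mQ$ -- so nothing essential is missing.
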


\par
It is well-known (Jacod, 1985) that in an initial enlargement setting under $\mathcal{J}$ there exists a jointly measurable process $\alpha^x$ with $(x, t) \in \mathbb{R} \times [0,T)$ such that $\int_0^t(\alpha_u^x)^2du < \infty$, $\langle p^x, B \rangle_t = \int_0^t p^x_u \alpha_u^x du$, \textcolor{black}{$B$ is a standard $\mF$-Brownian motion}
and $\widetilde{B}_t = B_t - \int_0^t \alpha_u^x du$ is a $\mathbb{G}$-Brownian motion, where $p^x$ is the conditional density process defined above that can also be written in the stochastic exponential form as $p^x_t = \mathcal{E}\left(-\int_0^t \alpha_u^x dB_u\right)$. The following result is a standard tool for showing that a semimartingale model violates NFLVR (see Ankirchner, Imkeller 2005 for a proof). \par

\begin{thm} If $\mathbb{P}\left(\int_0^T(\alpha^x_t)^2dt = \infty \right) > 0$, then $S$ admits a FLVR in $\mathbb{G} \coloneqq \mathbb{F} \lor \sigma(\zeta)$. \par
A corollary of this result can be used to check if NFLVR holds in an initially enlarged filtration. 
\end{thm}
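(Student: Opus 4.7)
The plan is to derive FLVR from the First Fundamental Theorem of Asset Pricing by ruling out the existence of an equivalent local martingale measure for $S$ in the enlarged filtration $\mG$. First I would invoke the $\mG$-decomposition recalled just above the statement, so that the driving Brownian motion $B$ satisfies $B_t = \wt{B}_t + \int_0^t \a^\z_u\,du$ under $\mG$, and hence the $\mF$-local martingale part of $S$ picks up an additional finite-variation drift of the form $\int_0^{\cdot}\s_u\a^\z_u\,du$ in $\mG$. Since $S$ is assumed to be a nonnegative semimartingale, the FTAP as formulated in the excerpt reduces NFLVR in $\mG$ to the existence of an equivalent probability $\mQ \sim \mP$ on $\cF$ under which $S$ is a $\mG$-local martingale; it therefore suffices to rule out the existence of any such $\mQ$ on $[0,T]$.

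Suppose for contradiction that such $\mQ$ exists, and let $Z_t \coloneqq \mE_{\mP}[\,d\mQ/d\mP \mid \cG_t\,]$ be its strictly positive density process with $Z_0 = 1$. Writing $Z = \mathcal{E}(L)$ for a $\mG$-local martingale $L$ with $\D L > -1$, I would apply the Kunita-Watanabe decomposition to the continuous part of $L$ relative to $\wt{B}$, giving $L^c = \int_0^{\cdot}\phi_u\,d\wt{B}_u + K$ with $K$ $\mG$-locally orthogonal to $\wt{B}$. The Girsanov identity stated above then produces $\wt{B}_t = \wt{B}^{\mQ}_t + \int_0^t\phi_u\,du$, so that the $\mQ$-drift of $S$ in $\mG$ equals $\int_0^{\cdot}\s_u(\phi_u + \a^\z_u)\,du$, and the local martingale property of $S$ under $\mQ$ forces $\phi = -\a^\z$ on the support of $\s$, up to $\mP\otimes dt$-null sets.

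However, for the stochastic integral $\int_0^T \phi_u\,d\wt{B}_u = -\int_0^T \a^\z_u\,d\wt{B}_u$, which appears in the logarithm of $Z_T$, even to be well-defined as a $\mG$-local martingale on $[0,T]$ one needs $\a^\z \in \cL(\wt{B})$, i.e. $\int_0^T (\a^\z_u)^2\,du < \infty$ $\mP$-a.s. The hypothesis denies exactly this, producing the contradiction. Hence no equivalent local martingale measure for $S$ exists in $\mG$, and by the FTAP $S$ admits a FLVR in $\mG$.

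The main obstacle is the precise formalization of the second paragraph, namely arguing that any hypothetical ELMM's density process must have continuous martingale logarithm equal to $-\int\a^\z\,d\wt{B}$; this requires combining Kunita-Watanabe orthogonality with a careful Girsanov computation and handling degeneracies of $\s$. A more constructive alternative, closer in spirit to Ankirchner-Imkeller, would instead build an explicit FLVR sequence via the truncated strategies $\pi^n_t = \e_n\a^\z_t\s_t^{-1}\1_{[0,\t_n]}(t)$, with stopping times $\t_n \coloneqq \inf\{t : \int_0^t(\a^\z_u)^2\,du \geq n\} \wedge T$ and $\e_n \searrow 0$ tuned so that, on $\{\int_0^T(\a^\z_u)^2\,du = \infty\}$, the corresponding stochastic-exponential wealth processes remain strictly positive and uniformly admissible while their terminal values escape to infinity in probability, and converge to the initial wealth on its complement.
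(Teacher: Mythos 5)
The paper does not actually prove this theorem: it quotes it and refers to Ankirchner--Imkeller (2005), whose argument is essentially your \emph{second}, constructive route --- on the event $\{\int_0^T(\alpha^{\zeta}_t)^2dt=\infty\}$ one invests along the information drift up to the stopping times $\tau_n$, obtains stopped exponential wealth processes that are nonnegative, start from $1$ and are unbounded in probability, so NUPBR fails, and then NFLVR fails because NFLVR is equivalent to NA together with NUPBR (an equivalence the paper itself records). Your primary argument is a genuinely different and legitimate route: any candidate ELMM density $Z=\mathcal{E}(L)$ has a continuous local martingale part whose Galtchouk--Kunita--Watanabe integrand $\phi$ against $\widetilde{B}$ automatically satisfies $\int_0^T\phi_t^2\,dt\leq\langle L^c\rangle_T<\infty$ a.s. (by the Kunita--Watanabe inequality --- this is the clean way to phrase your ``well-definedness'' step), while drift cancellation for $ZS$ forces $\phi=-(\theta+\alpha^{\zeta})$ $dt\otimes\mathbb{P}$-a.e.\ on $\{\sigma\neq 0\}$, where $\theta$ is the $\mathbb{F}$-market price of risk that your drift computation omits; since $\int_0^T\theta_t^2dt<\infty$ a.s.\ under NFLVR in $\mathbb{F}$, this contradicts the hypothesis, and the FTAP converts the non-existence of an ELMM into a FLVR. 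What this buys is brevity and no explicit strategy construction; what it costs is the need for a non-degenerate $\sigma$ (otherwise the hypothesis on $\int_0^T(\alpha^{\zeta}_t)^2dt$ does not control $\int_0^T(\alpha^{\zeta}_t)^2\mathbbm{1}_{\{\sigma_t\neq0\}}dt$, which is all the drift condition pins down) and continuity of $S$, both of which hold in the Brownian setting the paper works in. One correction to your constructive sketch: scaling the strategies by $\epsilon_n\searrow 0$ cannot make the terminal wealths escape to infinity; the standard construction keeps the unnormalised stopped wealths to violate NUPBR and only afterwards rescales so that the admissibility bounds vanish while the terminal values converge to a nonnegative, nontrivial limit, or simply invokes NFLVR $\Leftrightarrow$ NA $+$ NUPBR.
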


\par
\begin{cor} If the process $\alpha^x$ satisfies $\mathbb{E}\left[\text{exp}(\frac{1}{2}\int_0^T(\alpha^x_t)^2dt)\right] < \infty$, then $S$ satisfies NFLVR in $\mathbb{G} \coloneqq \mathbb{F} \lor \sigma(\zeta)$. 
\end{cor}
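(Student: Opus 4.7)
The plan is to use Novikov's criterion to license a Girsanov change of measure in the enlarged filtration that cancels the drift introduced by the initial enlargement, and then invoke the first FTAP in $\mathbb{G}$ to conclude NFLVR.

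First, I would apply the Jacod-type decomposition recalled just before the theorem: under $\mathcal{J}$, the $(\mathbb{F},\mathbb{P})$-Brownian motion $B$ splits in $\mathbb{G}$ as $B_t = \widetilde{B}_t + \int_0^t \alpha^\zeta_u\, du$, where $\widetilde{B}$ is a $(\mathbb{G},\mathbb{P})$-Brownian motion and $\alpha^\zeta$ is the $\mathbb{G}$-predictable specialization of $\alpha^x$ at $x=\zeta$. The Novikov hypothesis then ensures that $Z_t := \mathcal{E}\bigl(\int_0^{\cdot} \alpha^\zeta_u\, d\widetilde{B}_u\bigr)_t$ is a uniformly integrable $(\mathbb{G},\mathbb{P})$-martingale on $[0,T]$, so $d\mathbb{Q}/d\mathbb{P} := Z_T$ defines a probability measure on $\mathcal{G}_T$ equivalent to $\mathbb{P}$, and by Girsanov the process $B = \widetilde{B} + \int_0^{\cdot} \alpha^\zeta_u\, du$ becomes a $(\mathbb{G},\mathbb{Q})$-Brownian motion.

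Second, I would combine this with the $\mathbb{F}$-ELMM supplied by NFLVR in $\mathbb{F}$: pick $\mathbb{P}^{\ast} \sim \mathbb{P}$ on $\mathcal{F}_T$ with $L_T := d\mathbb{P}^{\ast}/d\mathbb{P}$ making $S$ an $\mathbb{F}$-local martingale, and define $d\mathbb{Q}^{\ast}/d\mathbb{P} := L_T Z_T$ on $\mathcal{G}_T$. One then verifies that this is a bona fide density and that under $\mathbb{Q}^{\ast}$ the process $S$ is a $\mathbb{G}$-local martingale: the $L_T$-factor kills the $\mathbb{F}$-drift of $S$, while the $Z_T$-factor simultaneously kills the extra $\alpha^\zeta$-drift acquired via the enlargement. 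Applying the first FTAP in $(\mathbb{G},\mathbb{P})$ then delivers NFLVR.

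The main obstacle is the bookkeeping for the compound Radon–Nikodym change of measure, i.e., checking that $L_T Z_T$ is the density of a probability on $\mathcal{G}_T$ and that the two Girsanov steps compose so that $S$ is a true $\mathbb{G}$-local martingale under $\mathbb{Q}^{\ast}$. Novikov's condition on $\alpha^\zeta$ is precisely what rules out the loss of mass for $Z$; compatibility with $L_T$ rests on the strict positivity of the Jacod density $p^\zeta$ and on standard properties of optional projections between $\mathbb{F}$ and $\mathbb{G}$.
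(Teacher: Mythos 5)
Your proposal follows the paper's proof exactly: the paper's entire argument is the one-line observation that Novikov's condition and Girsanov guarantee the existence of an ELMM in $\mathbb{G}$, whence NFLVR follows from the first FTAP, and your two-step construction (cancelling the enlargement drift via a stochastic exponential of $\alpha^\zeta$ against $\widetilde{B}$, then composing with the $\mathbb{F}$-ELMM density $L_T$) is the standard way to flesh that out. The only slip is a sign: to turn $B = \widetilde{B} + \int_0^{\cdot}\alpha^\zeta_u\,du$ back into a $\mathbb{G}$-Brownian motion you need the density $\mathcal{E}\bigl(-\int_0^{\cdot}\alpha^\zeta_u\,d\widetilde{B}_u\bigr)$ rather than $\mathcal{E}\bigl(+\int_0^{\cdot}\alpha^\zeta_u\,d\widetilde{B}_u\bigr)$, which Novikov's condition covers just as well since it only involves $(\alpha^\zeta)^2$.
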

\begin{proof}  
 Novikov's condition and the Girsanov theorem guarantee the existence of the ELMM, hence NFLVR holds true by the first FTAP. 
 \end{proof}

\par 
Consider a financial market given by $(S, \mathbb{F}, \mathbb{P})$ where NFLVR holds. Recall the stopping time $R$ in the notation above $R = T_0^Z = \text{inf}\{ t > 0 : Z_t = 0\}$. \par

\begin{thm} (Kreher, p.22) Let $T$ be an $\mathbb{F}$-stoppping time, and $\sigma$ a random time. Assume (A) is satisfied. \textcolor{black}{Let $Z$ be the Azema supermartingale associated with $\s$}. If $\mathbb{P}(T_0^Z  \leq T) = 0$, then NFLVR holds in the progressively enlarged filtration $\mathbb{G}$ on $[0, \sigma \land T]$.
\end{thm}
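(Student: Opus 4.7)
My plan is to apply the First FTAP twice: first in $\mathbb{F}$ to extract an ELMM for $S$, and then in $\mathbb{G}$ to conclude. Since NFLVR is assumed in $(S,\mathbb{F},\mathbb{P})$, the First FTAP furnishes an ELMM $\mathbb{Q}\sim\mathbb{P}$ on $\mathcal{F}_T$ with strictly positive density $L_t=\mathbb{E}_\mathbb{P}[d\mathbb{Q}/d\mathbb{P}\mid\mathcal{F}_t]$, under which $S$ is an $\mathbb{F}$-local martingale. A Bayes-type computation gives $Z^{\mathbb{Q}}_t:=\mathbb{Q}(\sigma>t\mid\mathcal{F}_t)=L_t^{-1}\mathbb{E}_\mathbb{P}[L_T\mathbbm{1}_{\{\sigma>t\}}\mid\mathcal{F}_t]$, and strict positivity of $L$ implies $\{Z^{\mathbb{Q}}=0\}=\{Z=0\}$. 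Thus the hypothesis $\mathbb{P}(T_0^Z\leq T)=0$ transfers to $\mathbb{Q}(T_0^{Z^{\mathbb{Q}}}\leq T)=0$, so both $Z^{\mathbb{Q}}$ and $Z^{\mathbb{Q}}_-$ are strictly positive on $[0,\sigma\wedge T]$, $\mathbb{Q}$-a.s.

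Next, I apply the earlier canonical decomposition theorem for $\mathbb{F}$-local martingales stopped at a random time to $S$ under $\mathbb{Q}$, obtaining
\begin{align*}
S^{\sigma}_t=\widehat{S}_t+\int_0^{t\wedge\sigma}\frac{d\langle S,m^{\mathbb{Q}}\rangle^{\mathbb{F},\mathbb{Q}}_s}{Z^{\mathbb{Q}}_{s-}},
\end{align*}
where $\widehat{S}$ is a $(\mathbb{G},\mathbb{Q})$-local martingale and $m^{\mathbb{Q}}$ is the $(\mathbb{F},\mathbb{Q})$-martingale part of $Z^{\mathbb{Q}}$. Assumption (A) ensures that $\sigma$ avoids the jumps of $m^{\mathbb{Q}}$, so the drift is a well-behaved $\mathbb{G}$-predictable finite-variation process on $[0,\sigma\wedge T]$, and its integrand $1/Z^{\mathbb{Q}}_{s-}$ is finite by the previous paragraph. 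To remove this drift I would then perform a $\mathbb{G}$-Girsanov change of measure: define a strictly positive $\mathbb{G}$-local martingale $\Lambda=\mathcal{E}(K)$ for a suitable $\mathbb{G}$-local martingale $K$ (built from $m^{\mathbb{Q}}$ via its own progressive enlargement decomposition and scaled by $1/Z^{\mathbb{Q}}_-$) chosen so that $d\langle S,\Lambda\rangle/\Lambda_-$ exactly cancels the drift on $[0,\sigma\wedge T]$. Setting $d\mathbb{Q}'/d\mathbb{Q}:=\Lambda_{\sigma\wedge T}$ produces a measure equivalent to $\mathbb{Q}$, hence to $\mathbb{P}$, on $\mathcal{G}_{\sigma\wedge T}$ under which $S^{\sigma\wedge T}$ is a $\mathbb{G}$-local martingale; NFLVR on $[0,\sigma\wedge T]$ then follows from the First FTAP applied in $\mathbb{G}$.

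The main obstacle is showing that $\Lambda$ is a genuine strictly positive $\mathbb{G}$-martingale on $[0,\sigma\wedge T]$ rather than a strict local martingale, so that $\Lambda_{\sigma\wedge T}\mathbb{Q}$ defines an honest probability measure. This is exactly where the hypothesis $\mathbb{P}(T_0^Z\leq T)=0$ does the essential work: together with (A), it keeps $1/Z^{\mathbb{Q}}_-$ locally bounded on the stochastic interval $[0,\sigma\wedge T]$ and rules out a degenerate jump of $\Lambda$ at the endpoint, which is enough for a standard localization or Novikov-type argument to upgrade the local martingale property of $\Lambda$ to true uniform integrability on the relevant interval. Once this is settled, $\mathbb{Q}'$ is the desired ELMM in $\mathbb{G}$ and the theorem follows.
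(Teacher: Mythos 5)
Your overall architecture --- build a candidate density for a $\mathbb{G}$-ELMM out of the $\mathbb{F}$-deflator plus a correction cancelling the enlargement drift, then argue it is a true martingale --- matches the route the paper sketches around this theorem, namely the explicit deflator $\rho_{t\wedge\sigma}/N_{t\wedge\sigma}$ with $N$ the local martingale factor in the It\^o--Watanabe multiplicative decomposition $Z=ND$. But your final step contains a genuine gap. You assert that because $\mathbb{P}(T_0^Z\leq T)=0$ keeps $1/Z^{\mathbb{Q}}_-$ locally bounded on $[0,\sigma\wedge T]$, ``a standard localization or Novikov-type argument'' upgrades $\Lambda$ from a strictly positive $\mathbb{G}$-local martingale to a uniformly integrable one. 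That inference is false: local boundedness (or strict positivity) of a Girsanov kernel never by itself implies that the stochastic exponential is a true, let alone uniformly integrable, martingale. This is precisely the hard analytic content of the theorem, and it is why the paper records Kreher's separate criterion
$\left(\rho_{t\wedge\sigma}/N_{t\wedge\sigma}\right)\in\mathcal{M}_{\text{u.i.}}(\mathbb{P},\mathbb{G})\iff\mathbb{E}_{\mathbb{P}}\bigl(D_{\infty}\mathbbm{1}_{\{T_0^N\leq T\}}\bigr)=0$
as a theorem in its own right. The hypothesis $\mathbb{P}(T_0^Z\leq T)=0$ does its work not through boundedness of the integrand but through the identification of $T_0^Z$ with $T_0^N$ (the factor $D=1/\overline{N}$ never vanishes), which makes the indicator inside that expectation vanish almost surely; without this, or an equally quantitative verification, your $\mathbb{Q}'$ need not be a probability measure and the appeal to the first FTAP in $\mathbb{G}$ collapses.

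A secondary, more repairable weakness is that the corrective local martingale $K$ is never actually constructed. The clean device is the multiplicative rather than the additive decomposition of $Z$: the process $1/N_{t\wedge\sigma}$ is automatically a $\mathbb{G}$-local martingale whose Girsanov effect removes the drift $\int_0^{t\wedge\sigma}Z_{s-}^{-1}\,d\langle S,m\rangle_s$, so the candidate density is available in closed form and the only remaining issue is the uniform integrability discussed above. If you replace your unspecified $\mathcal{E}(K)$ by $\rho_{t\wedge\sigma}/N_{t\wedge\sigma}$ and then prove, rather than assert, the uniform integrability from $\mathbb{P}(T_0^Z\leq T)=0$, your argument becomes the one the paper attributes to Kreher.
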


\par
It can be shown that the process $( \frac{\rho_{t \land \sigma}}{N_{t \land \sigma}})_{t \geq 0}$ is a local martingale deflator for $(S_{t \land \sigma})$ in $\mathbb{G}$, i.e. NA1 holds in $\mathbb{G}$ on $[0,\sigma]$, where $\rho_t = \mathbb{E}_{\mathbb{P}}(\rho \vert \mathcal{F}_t)$, i.e. the Radon-Nikodym density process \textcolor{black}{in the original $\mF$-market and $N$ is the corresponding local martingale term in the Ito-Watanabe decomposition $Z=ND$, as given above. Here $D$ is the predictable decreasing process given by $D=\frac{1}{\ov{N}}, \ov{N}_t\coloneqq \sup_{s\leq t} N_s$}. We omit the statement since more general results on NA1 stability will be provided in the next subsection. We can, however, apply this result with respect to conditions that would ensure NFLVR. Specifically, the following theorem provides a sufficient and necessary criterion for $( \frac{\rho_{t \land \sigma}}{N_{t \land \sigma}})_{t \geq 0}$ to be a uniformly integrable martingale on $[0, \sigma \land T]$, where $T$ is an $\mathbb{F}$-stopping time and $\sigma$ is a random time. \par

\begin{thm} (Kreher, p.26) Let $T$ be an $\mathbb{F}$-stopping time, and $\sigma$ a random time. Then,
\begin{align*}
    \left( \frac{\rho_{t \land \sigma}}{N_{t \land \sigma}}\right)_{t \geq 0} \in \mathcal{M}_{\text{u.i.}}(\mathbb{P}, \mathbb{G}) \Leftrightarrow \mathbb{E}_{\mathbb{P}}\left(D_{\infty}\mathbbm{1}_{T_0^N \leq T}\right) = 0.
\end{align*}
\end{thm}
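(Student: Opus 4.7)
The plan is to reduce the uniformly integrable martingale property to a single numerical equality and then convert that equality into the stated $\mathbb{F}$-condition using the tools already developed in the paper. Setting $L^T_t \coloneqq \rho_{t \wedge \sigma \wedge T}/N_{t \wedge \sigma \wedge T}$, which is a nonnegative $\mathbb{G}$-local martingale starting from $1$ (the deflator recalled just before the theorem, further stopped at $T$), I invoke the standard fact that a nonnegative local martingale is a uniformly integrable martingale if and only if it preserves expectation. This reduces the claim to the equivalence
\[
\mathbb{E}_\mathbb{P}\bigl[L^T_\infty\bigr] = 1 \quad \Longleftrightarrow \quad \mathbb{E}_\mathbb{P}\bigl[D_\infty \mathbbm{1}_{\{T_0^N \leq T\}}\bigr] = 0.
\]

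I split the expectation on the left according to $\{\sigma > T\}$ versus $\{\sigma \leq T\}$. On $\{T_0^N \leq T\}$ one has $Z_T = N_T D_T = 0$, and by the defining property of $Z$ this forces $\sigma \leq T$ almost surely, so $\{\sigma > T\} \subseteq \{T_0^N > T\}$ modulo a null set. Conditioning on $\mathcal{F}_T$ and using $\mathbb{P}(\sigma > T \mid \mathcal{F}_T) = Z_T$ gives $\mathbb{E}[(\rho_T/N_T)\mathbbm{1}_{\{\sigma > T\}}] = \mathbb{E}[\rho_T D_T \mathbbm{1}_{\{T_0^N > T\}}]$. For the $\{\sigma \leq T\}$ branch, the dual optional projection identity $\mathbb{E}[f_\sigma \mathbbm{1}_{\{\sigma \leq T\}}] = \mathbb{E}[\int_0^T f_s\, dA^o_s]$ applied with $f_s = \rho_s/N_s$, combined with $Z = m - A^o = ND$ to decompose $dA^o = dm - dZ$ into its local-martingale and finite-variation pieces driven by $N$ and $D$, reduces the computation to an integral against $dD$: the stochastic integrals against $dm$ and $dN$ contribute zero in expectation (their integrands, after localization up to $\sigma \wedge T \leq T_0^N$, are bounded enough to be true martingales). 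Integration by parts on $\rho_t D_t$, using the true-martingale property of $\int D\,d\rho$ (since $D \leq 1$) and $\rho_0 D_0 = 1$, collapses the remainder to $\mathbb{E}\bigl[\int_0^T (\rho_s/N_s)\,dA^o_s\bigr] = 1 - \mathbb{E}[\rho_T D_T]$. Adding the two branches yields
\[
\mathbb{E}_\mathbb{P}\bigl[L^T_\infty\bigr] = \mathbb{E}\bigl[\rho_T D_T \mathbbm{1}_{\{T_0^N > T\}}\bigr] + 1 - \mathbb{E}\bigl[\rho_T D_T\bigr] = 1 - \mathbb{E}_\mathbb{P}\bigl[\rho_T D_T \mathbbm{1}_{\{T_0^N \leq T\}}\bigr].
\]

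To identify the right-hand side with $\mathbb{E}_\mathbb{P}[D_\infty \mathbbm{1}_{\{T_0^N \leq T\}}]$, I would observe that $\overline{N}$ is frozen after $T_0^N$ (since a nonnegative local martingale remains at zero once it hits zero), so the predictable decreasing process $D = 1/\overline{N}$ is constant on $\llbracket T_0^N, \infty \llbracket$; hence $D_T \mathbbm{1}_{\{T_0^N \leq T\}} = D_\infty \mathbbm{1}_{\{T_0^N \leq T\}}$. Since $\mathbb{P} \sim \mathbb{Q}$ forces $\rho_T > 0$ $\mathbb{P}$-almost surely, the nonnegative quantity $\mathbb{E}_\mathbb{P}[\rho_T D_\infty \mathbbm{1}_{\{T_0^N \leq T\}}]$ vanishes if and only if $\mathbb{E}_\mathbb{P}[D_\infty \mathbbm{1}_{\{T_0^N \leq T\}}] = 0$, which closes the equivalence.

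The main obstacle will be making the integration-by-parts and dual-projection step fully rigorous outside the continuous setting. Jumps of $N$ and $D$ produce covariation terms, so the clean continuous identity $dA^o = -N\,dD$ must be replaced by its cadlag analogue coming from the multiplicative Doob--Meyer structure, and each stochastic integral in the decomposition must be checked to be a true martingale rather than merely a local one. Boundedness of $D$ by $Z_0 \leq 1$ and the uniform integrability of $\rho$ supply the requisite integrability, but one must localize the integrands by a sequence of stopping times approaching $T_0^N \wedge T$ to control the blow-up of $\rho/N$ as $N$ approaches zero, and then pass to the limit via monotone/dominated convergence.
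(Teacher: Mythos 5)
The paper does not actually prove this theorem --- it is stated with a bare citation to Kreher, so there is no in-text argument to compare against. Judged on its own terms, your proof is sound and follows what is essentially the standard (and, as far as I can tell, Kreher's own) route: reduce the u.i.\ martingale property of the nonnegative $\mathbb{G}$-local martingale $L^T$ with $L^T_0=1$ to conservation of expectation at infinity, split $\mathbb{E}_{\mathbb{P}}[L^T_\infty]$ over $\{\sigma>T\}$ and $\{\sigma\le T\}$, evaluate the first piece by projecting onto $\mathcal{F}_T$ (using $\{\sigma>T\}\subseteq\{T_0^N>T\}$ up to a null set) and the second via the dual optional projection of $\mathbbm{1}_{\llbracket\sigma,\infty\llbracket}$, and use the multiplicative structure $Z=ND$, $m=1+D\bullet N$, $dA^o=-N\,dD$ to collapse everything to $1-\mathbb{E}_{\mathbb{P}}[\rho_T D_T\mathbbm{1}_{\{T_0^N\le T\}}]$; freezing of $\overline{N}$ after $T_0^N$ gives $D_T\mathbbm{1}_{\{T_0^N\le T\}}=D_\infty\mathbbm{1}_{\{T_0^N\le T\}}$, and strict positivity of $\rho_T$ closes the equivalence. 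Two remarks. First, your worry about a c\`adl\`ag analogue of $dA^o=-N\,dD$ is unnecessary here: this theorem sits inside Kreher's standing framework where (C) and (A) hold, so $N$, $D$, $\rho$ and $A^o=A^p$ are all continuous and the clean identity is available; assumption (A) is also what guarantees $\mathbb{P}(\sigma=T_0^N)=0$, hence $N_\sigma>0$ a.s., so that $\rho_\sigma/N_\sigma$ is finite in the dual-projection step. Second, the integration-by-parts step is cleaner if you bypass the true-martingale question for $\int D\,d\rho$ entirely: since $-D$ is increasing and predictable and $\rho$ is a nonnegative martingale, optional projection gives directly
\begin{align*}
\mathbb{E}_{\mathbb{P}}\Bigl[\int_0^T\rho_s\,d(-D)_s\Bigr]=\mathbb{E}_{\mathbb{P}}\bigl[\rho_T(D_0-D_T)\bigr]=1-\mathbb{E}_{\mathbb{P}}[\rho_T D_T],
\end{align*}
which disposes of the localization argument you were planning for that term.
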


\par
This theorem has two immediate implications. One is a different proof of the stability of NFLVR on $\mathbb{G}$ on $[0, \sigma \land T]$ given above, the other is the following corollary. \par

\begin{cor} If $D_{\infty} = 0$ a.s., then NFLVR holds in $\mathbb{G}$ on $[0, \sigma]$. 
\end{cor}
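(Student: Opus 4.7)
The plan is to invoke the preceding theorem directly and then apply the first FTAP. Under the hypothesis $D_\infty = 0$ $\mathbb{P}$-a.s., we trivially have $D_\infty \mathbbm{1}_{\{T_0^N \leq T\}} = 0$ $\mathbb{P}$-a.s. for any $\mathbb{F}$-stopping time $T$, so $\mathbb{E}_{\mathbb{P}}\left(D_{\infty}\mathbbm{1}_{\{T_0^N \leq T\}}\right) = 0$. By the equivalence in the previous theorem, this forces $\left(\frac{\rho_{t \land \sigma}}{N_{t \land \sigma}}\right)_{t \geq 0} \in \mathcal{M}_{\text{u.i.}}(\mathbb{P}, \mathbb{G})$, i.e.\ the candidate deflator is a uniformly integrable $\mathbb{G}$-martingale.

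Next, I would recall from the paragraph preceding the previous theorem that $\left(\frac{\rho_{t \land \sigma}}{N_{t \land \sigma}}\right)_{t \geq 0}$ has already been shown to be a local martingale deflator for $(S_{t \land \sigma})$ in $\mathbb{G}$, i.e.\ the product $S_{\cdot \land \sigma} \cdot \frac{\rho_{\cdot \land \sigma}}{N_{\cdot \land \sigma}}$ is a $\mathbb{G}$-local martingale. Combining this with the u.i.\ martingale property just established, the deflator is closeable by its terminal value and therefore defines an equivalent probability measure $\mathbb{Q}$ on $(\Omega, \mathcal{G}_\infty)$ via $\frac{d\mathbb{Q}}{d\mathbb{P}} \coloneqq \frac{\rho_\sigma}{N_\sigma}$, which, by the definition of an ELMM recalled earlier, is an equivalent local martingale measure for $(S_{t \land \sigma})$ in $\mathbb{G}$ on $[0, \sigma]$.

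Finally, applying the first FTAP to the stopped market $(S_{\cdot \land \sigma}, \mathbb{G}, \mathbb{P})$ gives NFLVR in $\mathbb{G}$ on $[0, \sigma]$. There is no real obstacle here: the argument is essentially a one-line specialization of the characterization above. The only conceptual point worth stressing is that the vanishing of $D_\infty$ is precisely what kills the obstruction term on the right-hand side of the preceding theorem, and it is this upgrade from a local martingale deflator (which would only yield NA1/NUPBR) to a true ELMM (a closeable, strictly positive u.i.\ martingale deflator) that produces the full NFLVR conclusion.
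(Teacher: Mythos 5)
Your proposal is correct and follows exactly the route the paper intends: the paper presents this corollary as an immediate consequence of the preceding equivalence (the condition $D_{\infty}=0$ a.s. annihilates $\mathbb{E}_{\mathbb{P}}(D_{\infty}\mathbbm{1}_{\{T_0^N\leq T\}})$ for every $\mathbb{F}$-stopping time $T$, upgrading the local martingale deflator $\rho_{\cdot\land\sigma}/N_{\cdot\land\sigma}$ to a closeable uniformly integrable martingale, hence an ELMM, hence NFLVR by the first FTAP). Your write-up simply makes explicit the steps the paper leaves implicit.
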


\par
We note that the condition is fulfilled for every pseudo-stopping time since $D_{\infty} = 1 - A_{\infty} = 1 - 1 = 0$. Emery provided an example that shows other stopping times can also satisfy the condition $D_{\infty} = 0$ a.s. and hence admit an ELMM in $\mathbb{G}$ on $[0, \sigma]$.  
\par
We now look at some examples of the stability of NFLVR under filtration enlargement. We begin from some well-known basic examples and proceed to more specific (and involved) ones. \par

\begin{example} (Karatzas, Pikovski, 1996; see also Aksamit, Jeanblanc, 2017) Consider a portfolio optimization problem in a Black-Scholes setting. Assume the dynamics of a risky asset to be driven by the SDE of the form $dS_t = S_t \mu dt + S_t \sigma dB_t$, where $B$ is a Brownian motion and $\mu$ and $\sigma$ are constants. Our goal is to maximize expected logarithmic utility $\mathbb{E}(\text{ln} X^{\pi, x}_T)$ for a given $T$, where the superscript stresses the dependence on the proportion of wealth invested in the risky asset $\pi$ and initial wealth $x$. It is easily verified that the dynamics of the wealth process are given by 
\begin{align*}
    \text{ln}X^{\pi,x}_t = \text{ln}x + \int_0^t (r - \frac{1}{2}\pi_s^2\sigma^2 + \theta \pi_s\sigma)ds + \sigma \int_0^t\pi_sdB_s,
\end{align*}
where $\theta \coloneqq \frac{\mu - r}{\sigma}$ (the so-called the market price for risk) and $\pi_t \coloneqq \frac{\alpha_tS_t}{X_t}$, and $\alpha_t$ is the position in the risky asset. Consider admissible strategies for which $\pi \bullet B$ is a true martingale, i.e. $\pi$'s s.t. $\mathbb{E}(\int_0^t\pi^2_sds) < \infty$. Taking expectations, and noting that the stochastic integral on the right vanishes, we get the following expression
\begin{align*}
    \mathbb{E}(\text{ln}X^{\pi,x}_t) = \text{ln}x + \int_0^t \mathbb{E}((r - \frac{1}{2}\pi_s^2\sigma^2 + \theta \pi_s\sigma))ds,
\end{align*}
which is optimized by setting $\pi^*_s = \frac{\theta}{\sigma}$, where $\pi$ belongs to the set of admissible strategies. With a little more effort it can be argued that this solution holds for all admissible strategies, including those for which $\pi \bullet B$ is only a local martingale. \par
We now consider the optimization problem for the insider who has access to $B_{T^*}$, i.e. the terminal value of the Brownian motion that generates the randomness in the model. It is a standard result that in this
enlarged filtration the process $\beta_t = B_t - \int_0^t \frac{B_1 - B_s}{1-s}ds$ is a $\mathbb{G}$-Brownian motion. Thus the dynamics of the wealth process on $(\Omega, \mathcal{F}, \mathbb{G}, \mathbb{P})$ for $T < T^*$ are given by the following:
\begin{align*}
    \text{ln}X^{\pi,x}_T = \text{ln}x + \int_0^T (r - \frac{1}{2}\pi_s^2\sigma^2 + \widetilde{\theta}_s \pi_s\sigma)ds + \sigma \int_0^T\pi_sd\beta_s,
\end{align*}
where $\widetilde{\theta}_s = \theta + \frac{B_{T^*} - B_s}{T^* - s}$
Standard calculations show that the optimal portfolio for the insider (i.e. that maximizes expected terminal log-utility for $T < T^*$) is given by $\pi^* = \frac{\widetilde{\theta}_s}{\sigma}.$ Denoting by $V^{\mathbb{F}}$ and $V^{\mathbb{G}}$ the respective optimal values of the objective functions under different filtrations, it follows that for $T < T^*$ we have $V^{\mathbb{G}}(x) = V^{\mathbb{F}}(x) + \frac{1}{2}\mathbb{E}(\int_0^T(\frac{B_{T^*} - B_s}{T^* - s})^2ds = V^{\mathbb{F}}(x) + \frac{1}{2}\text{ln}\frac{T^*}{T^* - T}.$ From this expression it is obvious that the objective function explodes to infinity as $T \uparrow T^*$. In other words, at $T = T^*$ the integral $(\int_0^{T^*}\mathbb{E}(\frac{B_{T^*} - B_s}{T^* - s})^2ds$ diverges and there is no EMM in $\mathbb{G}$. This measure, however, does exist on $[0, T]$ for all $T < T^*$, i.e. inside information creates arbitrage opportunities only at time $T^*$. This toy example is elaborated on in a more general setting in the following example. 
\end{example}

\par
\begin{example} (Imkeller, Perkowski, 2013) Consider a stochastic basis $(\Omega, \mathcal{F}, \mathbb{F}, \mathbb{P})$,  where $\mathcal{F}_{\infty}  = \bigvee_{t \geq 0}\mathcal{F}_t$, and $\mathcal{F}_0$ is trivial, i.e. $\forall A \in \mathcal{F}_0$ we have $\mathbb{P}(A) \in \{0,1\}$. Assume $S$ to be a general semimartingale and the market to be complete (i.e. every contingent claim can be hedged). Take an $\mathcal{F}_{\infty}$-measureable random variable $L$ not a.s. constant and consider the initially enlarged filtration $\mathbb{G} = (\mathcal{G}_t \coloneqq \mathcal{F}_t \lor \sigma(L), t \geq 0)$. Since $L$ is not constant, there exists $A \in \sigma(L)$ s.t. $\mathbb{P}(A) \in (0,1).$ Assume $\mathbb{Q}$ is the equivalent $\mathbb{G}$-local martingale measure for $S$. Consider the $\mathbb{F}$-martingale $N_t = \mathbb{E}_{\mathbb{Q}}(\mathbbm{1}_A \vert \mathcal{F}_t)$, $t \geq 0$. By completess, $\mathbbm{1}_A$ can be replicated, i.e. there exists an $\mathbb{F}$-predictable admissible trading strategy $\alpha$ such that $N_t = \mathbb{E}_{\mathbb{Q}}(\mathbbm{1}_A) + \int_0^t\alpha_sdS_s$. This implies that $\int_0^t\alpha_sdS_s$ is a bounded $\mathbb{G}$-local martingale under $\mathbb{Q}$, hence a true martingale. Consider an event $A \in \mathcal{G}_0$ with positive measure under $\mathbb{Q}$ and note
\begin{align*}
    0 = \mathbb{E}_{\mathbb{Q}}(\mathbbm{1}_{A^c}\mathbbm{1}_A) = \mathbb{E}_{\mathbb{Q}}\left(\mathbbm{1}_{A^c}(\mathbb{E}_{\mathbb{Q}}(\mathbbm{1}_A) + \int_0^{\infty}\alpha_sdS_s)\right) = \mathbb{Q}(A^c)\mathbb{Q}(A) > 0,
\end{align*}
a contradiction. Hence an ELMM does not exist and NFLVR is violated.
\end{example}

\par
Nevertheless, not any information about future values of the price process creates arbitrage opportunities. (D'Auria, Salmeron, 2019) provide an example where having some information about the terminal value of the risky asset (defined as some function of $B_T$, for example) does not necessarily lead to arbitrage opportunities. Since the calculations are rather tedious, we only follow the main steps in their derivation. \par

\begin{example} Consider an initial enlargement setting with $\mathbb{G} \coloneqq \mathbb{F}^{\sigma(\zeta)}$, where the additional information is provided by $\zeta = \mathbbm{1}_{\{B_T \in \cup_{k = -\infty}^{+\infty}[2k-1,2k]\}}$ that represents knowledge whether the Brownian motion will end up in a particular infinite union of intervals of size 1. The new drift term in the $\mathbb{G}$-semimartingale decomposition of the price process is given by 
\begin{align*}
    \alpha^x_t = \begin{cases}
    \frac{1}{\sqrt{T - t}}\frac{\sum_{k = -\infty}^{+\infty}\Phi'(\frac{2k- B_t}{\sqrt{T - t}}) - \Phi'(\frac{2k- 1 - B_t}{\sqrt{T - t}})} {\sum_{k = -\infty}^{+\infty}\Phi'(-\frac{2k- B_t}{\sqrt{T - t}}) - \Phi'(\frac{2k- 1 - B_t}{\sqrt{T - t}})} \quad \text{when} \quad x = 0, \\
    \frac{1}{\sqrt{T - t}}\frac{\sum_{k = -\infty}^{+\infty}\Phi'(\frac{2k - 1 - B_t}{\sqrt{T - t}}) - \Phi'(\frac{2k - B_t}{\sqrt{T - t}})} {\sum_{k = -\infty}^{+\infty}\Phi'(-\frac{2k- B_t}{\sqrt{T - t}}) - \Phi'(\frac{2k- 1 - B_t}{\sqrt{T - t}})} \quad \text{when} \quad x = 1,
    \end{cases}
\end{align*}
where $\Phi$ is the distribution of a standard Gaussian random variable. Then it can be shown that $\alpha^x$ satisfies Novikov's condition $\mathbb{E}\left[\text{exp}(\frac{1}{2}\int_0^T(\alpha^x_t)^2dt)\right] < \infty$, hence by the corollary above $S$ satisfies NFLVR in $\mathbb{G}$.
\end{example}

\par
\begin{example} (Kreher, 2016, p. 25) Consider a financial market model $(S_t, \mathbb{F}, \mathbb{P})$. Let $\sigma$ be a pseudo-stopping time bounded by 1. Since $1 - Z_1 = A_1 = 1$, then $\mathbb{P}(T_0^Z \leq 1) = 1$, where $Z_t = \mathbb{P}(\sigma > t \vert \mathcal{F}_t)$ is the Azema supermatingale and $T_0^Z$ is the first hitting time of $0$ by $Z$. Moreover, $\mathbb{E}(\rho_{\sigma}) = 1$ and $N \equiv 1$, where $\rho$ is the ELMM for the market, i.e. $\rho_t \coloneqq \mathbb{E}_{\mathbb{P}}(\rho \vert  \mathcal{F}_t)$, and $Z = ND$ is the Ito-Watanabe decomposition of the Azema supermartingale (i.e. $N$ is a continuous non-negative local martingale starting from $N_0 = 1$ and $D$ is a continuous decreasing process such that both $N$ and $D$ are constant on the set $\{ Z = 0 \}$). Therefore $(\rho_{t \land \sigma})$ is a local martingale deflator with $\mathbb{E}(\rho_{\sigma}) = 1$, and thus a uniformly integrable martingale. Therefore there is an ELMM for the enlarged market $(S_{t \land \sigma}, \mathbb{G}, \mathbb{P})$, where $\mathbb{G} = (\mathcal{G}_{t \land \sigma} : t \geq 0)$, and hence NFLVR holds on $[0, \sigma] = [0, \sigma \land 1]$.
\end{example}

\par
\begin{example} (Emery, also Kreher, p. 27). Let $W$ be an $\mathbb{F}$-Brownian motion and define $\sigma \coloneqq \text{sup}\{t \leq 1 : 2W_t = W_1\}$. Then the associated Azema supermartingale is 
\begin{align*}
    Z_t = \sqrt{\frac{2}{\pi}}\int_{\frac{\lvert W_t \rvert}{\sqrt{1 - t}}} ^{\infty}x^2e^{-x^2/2}dx = m_t - \sqrt{\frac{2}{\pi}}\int_0^t\frac{\lvert W_u \rvert}{(1-u)^{3/2}}\text{exp}\left(-\frac{W^2_u}{2(1-u)}\right)du,
\end{align*}
where $m \not\equiv 1$ \textcolor{black}{is the local martingale part in the Doob-Meyer decomposition of $Z$}. Now $\forall n \in \mathbb{N}$ define the set
\begin{align*}
    B_n = \left\{\lvert W_u \rvert > \sqrt{\frac{2}{n}} \quad \forall u \in [1 - \frac{1}{n}, 1] \right\}.
\end{align*}
Note that $\lim_{n \to \infty} \mathbb{P}(B_n) = \mathbb{P}(W_1 \not= 0) = 1$, and one the set $B_n$ for all $u \in [1 - \frac{1}{n}, 1]$ it holds that $\frac{\lvert W_u \rvert} {\sqrt{1-u}} > 2$, therefore
\begin{align*}
    \frac{1}{2}\int_{\frac{\lvert W_u \rvert}{\sqrt{1 - u}}} ^{\infty}x^2e^{-x^2/2}dx \leq \int_{\frac{\lvert W_u \rvert}{\sqrt{1 - u}}} ^{\infty}(x^2-1)e^{-x^2/2}dx = \frac{\lvert W_u \rvert}{\sqrt{1-u}}\text{exp}\left(-\frac{W^2_u}{2(1-u)}\right),
\end{align*}
and on $B_n$ we have, \textcolor{black}{for $A=1 - D$ in the notation above,}
\begin{align*}
    \int_0^1\frac{dA_t}{Z_t} &\geq \int_{1-\frac{1}{n}}^1 \frac{dA_t}{Z_t} \geq \int_{1-\frac{1}{n}}^1 \frac{dA_t}{\sqrt{\frac{2}{\pi}} \int_{\frac{\lvert W_t \rvert}{\sqrt{1 - t}}} ^{\infty}x^2e^{-x^2/2}dx} \\ &\geq \int_{1-\frac{1}{n}}^1 \frac{dA_t}{\sqrt{\frac{2}{\pi}} \frac{\lvert W_t \rvert}{\sqrt{1-t}}\text{exp}\left(-\frac{W^2_t}{2(1-t)}\right)} = \frac{1}{2}\int_{1-\frac{1}{n}}^1 \frac{dt}{1-t} = \infty.
\end{align*}
Thus it follows that on the set $B_n$ we have $D_{\infty} = D_1 = \text{exp}\left(-\int_0^1\frac{dA_t}{Z_t}\right) = 0$, and hence by the monotone convergence theorem we get
\begin{align*}
    \lim_{n \to \infty}\mathbb{E}_{\mathbb{P}}\left(D_{\infty}\mathbbm{1}_{B_n}\right) = \mathbb{E}_{\mathbb{P}}\left(D_{\infty}\right) = 0 \Leftrightarrow D_{\infty} = 0 \quad \mathbb{P}-\text{a.s.}
\end{align*}
By the corollary stated above we conclude that NFLVR holds in $\mathbb{G}$ on $[0, \sigma]$.
\end{example}

\par
\begin{example} (Imkeller, 2001) Consider the 1-dimensional canonical Wiener space $(\Omega, \mathcal{F}, \mathbb{P})$ equipped with the canonical Wiener process $W = (W_t)_{t \geq 0}$ (i.e. the coordinate process), where $\Omega = C(\mathbb{R}_+, \mathbb{R})$ is the set of continuous functions on $\mathbb{R}_+$ starting at 0, $\mathcal{F}$ is the $\sigma$-algebra of Borel sets with respect to uniform convergence on compact subsets of $\mathbb{R}_+$, and $\mathbb{P}$ is the Wiener measure. 
A financial market is defined to be a pair $(b, \sigma)$, where $b$ is a progressively measureable mean rate of return process such that $\int_0^1 \lvert b_t \rvert dt < \infty$ $\mathbb{P}$-a.s., and $\sigma$ is the progressively measureable volatility process    such that $\int_0^1 \sigma_t^2 dt < \infty$ $\mathbb{P}$-a.s. This pair determines the (stock) price process given by 
\begin{align*}
    \frac{dS_t}{S_t} = \int_0^tb_udu + \int_0^t \sigma_u dW_u
\end{align*}
A progressively measureable process $\pi$ is a portfolio process if $\int_0^1 \lvert \pi_t b_t \rvert dt < \infty$ $\mathbb{P}$-a.s. and $\int_0^1 \lvert \pi_t \sigma_t \rvert^2 dt < \infty$ $\mathbb{P}$-a.s. We also define the excess yield process $R$ and gains process $C$ by $dR_t = \frac{dS_t}{S_t}$ and $G_t = \int_0^t \pi_u dR_u$ for all $t \in [0,1]$.
Consider an honest time $L$ and a progressive enlargement of $\mathbb{F}$ defined as $\mathbb{G} \coloneqq \mathbb{F} \lor \sigma(L)$. Then the process 
\begin{align*}
    \widetilde{W}_t = W_t - \int_0^t \mathbbm{1}_{[0, L]}(s)\frac{\frac{d}{dt}\langle M^{L}, W \rangle_s}{Z^L_{s-}}ds + \int_0^t \mathbbm{1}_{(L, 1]}(s)\frac{\frac{d}{dt}\langle M^{L}, W \rangle_s}{1 -Z^L_{s-}}ds
\end{align*}
is a $\mathbb{G}$-Wiener process, \textcolor{black}{where $M^L$ is the martingale in the Doob-Meyer decomposition of the Azema supermartingale $Z^L$ associated with $L$}. Set
\begin{align*}
    \alpha_t = \int_0^t \mathbbm{1}_{[0, L]}(s)\frac{\frac{d}{dt}\langle M^{L}, W \rangle_s}{Z^L_{s-}}ds - \int_0^t \mathbbm{1}_{(L, 1]}(s)\frac{\frac{d}{dt}\langle M^{L}, W \rangle_s}{1 -Z^L_{s-}}ds.
\end{align*}
From the point of view of the insider the new financial market is given by the pair $(\tilde{b}, \tilde{\sigma})$, where $\tilde{b}_t = b_t + \sigma_t \alpha_t$ and $\tilde{\sigma}_t = \sigma_t$ for all $t \in [0,1]$ with respect to the $\mathbb{G}$-Wiener process $\widetilde{W}$. Note that $R_t = \int_0^tb_udu + \int_0^t \sigma_u dW_u = \int_0^t \tilde{b}_udu + \int_0^t \sigma_u d\widetilde{W}_u$ satisfies NFLVR if and only if there exists and equivalent measure $\mathbb{Q}$ such that 
\begin{align*}
    \frac{d\mathbb{Q}}{d\mathbb{P}} \vert_{\mathcal{G}_t} = \text{exp} \left( - \int_0^t\theta_s d\widetilde{W}_s - \frac{1}{2}\int_0^t\theta^2_sds \right), \quad t \in [0,1],
\end{align*}
where $\theta_t = \frac{\tilde{b}_t}{\sigma_t} = \frac{b_t}{\sigma_t} + \alpha_t$. Thus, $\int_0^1\theta_t^2dt = \infty$ on a set of positive $\mathbb{P}$-measure implies a free lunch for the insider. It can be shown (Prop. 1.1 in Imkeller 2002) that $\int_0^t\alpha_t^2dt = \infty$ on a set of positive $\mathbb{P}$-measure implies $\int_0^t\theta_t^2dt = \infty$ on a set of positive $\mathbb{P}$-measure, where $t \in (0,1]$. A specific example can be obtained by considering the honest time $L = \text{sup} \{ t \in [0,1] : W_t = a \}$ for some $a \in \mathbb{R}$. In this case (see Prop. 2.1 and 2.2 for explicit calculations) for $t \in [0,1]$
\begin{align*}
    Z^L_t \coloneqq \mathbb{P}( L > t \vert \mathcal{F}_t) = 1 - F_{1-t}(\lvert W_t - a \rvert)
\end{align*}
and
\begin{align*}
    \alpha_t = -\mathbbm{1}_{[0,L]}(t)\frac{p_{1-t}(\lvert W_t - a \rvert)}{1 - F_{1-t}(\lvert W_t - a \rvert)}\text{sgn}(W_t - a) - \mathbbm{1}_{(L,1]}(t)\frac{p_{1-t}(\lvert W_t - a \rvert)}{F_{1-t}(\lvert W_t - a \rvert)}\text{sgn}(W_t - a),
\end{align*}
where $p_{1-t}$ and $F_{1-t}$ are the density and the distribution function of the law of $\lvert W_t \rvert$. Then it can be shown (Prop. 2.3) that for any $t \in (0,1]$ we have $\int_0^t\alpha_t^2dt = \infty$ on a set of positive $\mathbb{P}$-measure, hence $R$ does not satisfy NFLVR.
\end{example}

\subsection{The relationship between EoF and NUPBR/NA1} 

We begin as usual by arguing from general principles. It is known that $S$ satisfies NA1 if and only if there exists a positive local martingale $Z$ s.t. $ZS$ is a local martingale. Therefore the following result follows (Neufcourt, 2017, p.38, Corollary 2.5, Theorem 2.6). \par

\begin{thm} Suupose that $S$ satisfies NA1 in $\mathbb{F}$. Then $S$ satisfies NA1 in $\mathbb{G}$ if and only if there exists a positive $\mathbb{G}$-local martingale $Z$ s.t. $ZM$ is a local martingale, i.e. $M$ satisfies NA1 in $\mathbb{G},$ \textcolor{black}{where $M$ is the local martingale in the semimartingale decomposition of $S$}. Equivalently, an $\mathbb{F}$-local martingale $M$ satisfies NA1 in $\mathbb{G}$ if and only if there exists a predictable process $\alpha$ s.t. $M - \int_0^{\cdot}\alpha_sd[M,M]_s$ is a $\mathbb{G}$-local martingale.
\end{thm}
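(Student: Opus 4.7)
The plan is to reduce both assertions to a single statement about the absolute continuity of the $\mG$-predictable drift of $M$ with respect to $[M,M]$, using throughout the Remark's characterization of NA1 via local martingale deflators. The key structural observation is that $S-M=S_0+A$ is an $\mF$-predictable finite variation process, hence a $\mG$-predictable finite variation process as well. Assuming $S$ is a $\mG$-semimartingale (necessary for NA1 in $\mG$), uniqueness of the canonical decomposition forces the $\mG$-local martingale parts of $S$ and $M$ to coincide; writing the $\mG$-decomposition of $M$ as $M=M^{\mG}+A^{\mG}$, the $\mG$-decomposition of $S$ is therefore $S=S_0+M^{\mG}+(A^{\mG}+A)$.

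I would first establish, for a generic $\mF$-local martingale $N$ that is a $\mG$-semimartingale with $\mG$-canonical decomposition $N=N^{\mG}+A^{\mG,N}$, the equivalence between existence of a positive $\mG$-local martingale $Z$ with $ZN$ a $\mG$-local martingale and existence of a $\mG$-predictable $\a$ with $N-\int\a\,d[N,N]$ a $\mG$-local martingale. For the forward direction, set $Z=\mathcal{E}(-\a\cdot N^{\mG})$; integration by parts yields
\begin{align*}
d(ZN)=Z_-\,dN^{\mG}+N_-\,dZ+Z_-\,dA^{\mG,N}-Z_-\a\,d[N^{\mG},N^{\mG}],
\end{align*}
and the two drift terms cancel because $dA^{\mG,N}=\a\,d[N,N]$ and (at least in the continuous case) $[N^{\mG},N^{\mG}]=[N,N]$. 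For the reverse direction, write any positive candidate as $Z=\mathcal{E}(K)$ for some $\mG$-local martingale $K$, decompose $K=\a'\cdot N^{\mG}+K^{\perp}$ via Kunita--Watanabe with $[K^{\perp},N^{\mG}]=0$, and observe that vanishing of the drift of $ZN$ forces $dA^{\mG,N}=-\a'\,d[N,N]$. Setting $\a=-\a'$ gives the desired process.

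Applying this equivalence with $N=M$ yields the second stated assertion immediately. For the first, apply it also to $S-S_0$: NA1 of $S$ in $\mG$ becomes existence of a $\mG$-predictable $\b$ with $d(A^{\mG}+A)=\b\,d[M,M]$ (using $[S,S]=[M,M]$ when $A$ is continuous). Meanwhile, NA1 of $S$ in $\mF$ combined with the same argument in $\mF$ provides an $\mF$-predictable $\g$ with $dA=\g\,d[M,M]$. Subtracting, the first condition is equivalent to $dA^{\mG}=(\b-\g)\,d[M,M]$, which is precisely NA1 of $M$ in $\mG$. The main technical obstacle is the treatment of jumps: for discontinuous $M$, the identity $[M^{\mG},M^{\mG}]=[M,M]$ can fail when $A^{\mG}$ is discontinuous, and positivity of $\mathcal{E}(-\a\cdot M^{\mG})$ requires $\a\Delta M^{\mG}<1$; both are handled in the full Jacod--Protter setting via predictable compensators of the jump random measure and by standard localization, leading to the same conclusion at the cost of heavier notation.
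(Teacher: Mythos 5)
The paper itself offers no proof of this statement; it is quoted from Neufcourt (2017), so your argument can only be measured against the standard one, which is indeed the route you take: reduce NA1 to the existence of a local martingale deflator, identify a candidate deflator with the stochastic exponential $\mathcal{E}(-\alpha\cdot M^{\mathbb{G}})$, and translate its existence into the structure condition $dA^{\mathbb{G}}=\alpha\,d[M,M]$. The reduction of the assertion about $S$ to the assertion about $M$, via uniqueness of the canonical $\mathbb{G}$-decomposition and the $\mathbb{F}$-structure condition for $A$, is also the right idea, and in the continuous case your integration-by-parts computation is correct.

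There are, however, two genuine gaps. First, integrability: for $\mathcal{E}(-\alpha\cdot M^{\mathbb{G}})$ to be defined you need $\alpha$ to be integrable with respect to the local martingale $M^{\mathbb{G}}$, i.e. $\left(\int_0^{\cdot}\alpha_s^2\,d[M,M]_s\right)^{1/2}$ locally integrable, and this is strictly stronger than the existence of the finite-variation integral $\int_0^{\cdot}\alpha_s\,d[M,M]_s$ appearing in the statement. Your forward construction therefore does not prove the equivalence as literally written; the correct version carries the local square-integrability of $\alpha$ with respect to $d[M,M]$ as part of the condition, and you should either add it or explain how it is recovered (it can be, from the deflator side, via the Kunita--Watanabe inequality, but not from the drift side alone). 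Second, the discontinuous case is not merely "heavier notation": when $M^{\mathbb{G}}$ jumps, $\mathcal{E}(-\alpha\cdot M^{\mathbb{G}})$ can vanish or change sign, $[M^{\mathbb{G}},M^{\mathbb{G}}]\neq[M,M]$ in general, and the Galtchouk--Kunita--Watanabe decomposition you invoke in the reverse direction is only available off the shelf for locally square-integrable martingales. The general proof does not repair these defects by localization; it replaces the exponential of $-\alpha\cdot M^{\mathbb{G}}$ by a genuinely different ($\sigma$-martingale-density) construction. A smaller point: you should justify, rather than assume, that NA1 of $S$ in $\mathbb{G}$ forces $S$ to be a special $\mathbb{G}$-semimartingale, since your whole reduction rests on the canonical $\mathbb{G}$-decomposition existing.
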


\par
Now consider again an initial enlargement setup. Recall that under $\mathcal{J}$ (hence under $\mathcal{E}$) there exists a density process $(p_t(u), t \geq 0)$ that is an $\mathbb{F}$-martingale and $\forall t \geq 0$ the measure $p_t(u)\eta(du)$ equals $\mathbb{P}(\zeta \in du \vert \mathcal{F}_t)$, \textcolor{black}{where $\eta$ is the dominating measure in Jacod's condition.} The proof of the following theorem can be found in (Aksamit, Jeanblanc, 2017, p.97). \par

\begin{thm} Assume that $\zeta$ satisfies Jacod's equivalence condition $\mathcal{E}$ under $\mathbb{P}$. If $S$ admits a local martingale deflator $Y$ (i.e. NUPBR holds in $\mathbb{F}$), then $S$ admits $Y/p(\zeta)$ as a local martingale deflator in $\mathbb{F}^{\sigma(\zeta)}.$ 
\end{thm}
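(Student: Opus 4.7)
The strategy is to invoke Jacod's transfer principle: under $\mathcal{E}$, with the cadlag density $p_t(u)$ from the preceding lemma (which under $\mathcal{E}$ is strictly positive for $\eta$-a.e.\ $u$), a $\mathbb{G}$-adapted process of the form $v_t(\zeta)$ obtained from a jointly measurable family $(t,u)\mapsto v_t(u)$ is a $\mathbb{G}$-local martingale if and only if $\bigl(v_t(u)\,p_t(u)\bigr)_{t\ge 0}$ is an $\mathbb{F}$-local martingale for $\eta$-almost every $u$ (see Jacod, 1985, or Aksamit, Jeanblanc, 2017). This converts $\mathbb{G}$-local martingale checks into $\mathbb{F}$-local martingale checks and is tailor-made for the present claim, since both $Y$ and $YS$ are $\mathbb{F}$-local martingales by the deflator hypothesis.

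With this tool in hand, I would verify the two conditions of the deflator definition as follows. For the $\mathbb{G}$-local martingale property of $Y/p(\zeta)$, take $v_t(u) \coloneqq Y_t/p_t(u)$; then $v_t(u)\,p_t(u) = Y_t$, which is an $\mathbb{F}$-local martingale by hypothesis, so the criterion is satisfied uniformly in $u$. For the condition that $S\cdot(Y/p(\zeta))$ is a $\mathbb{G}$-local martingale, take $v_t(u)\coloneqq Y_tS_t/p_t(u)$; then $v_t(u)\,p_t(u) = Y_tS_t$, which is an $\mathbb{F}$-local martingale since $Y$ is an $\mathbb{F}$-local martingale deflator for $S$. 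Strict positivity follows from $Y>0$ (deflator) and $p(\zeta)>0$ (equivalence), and the normalization $Y_0/p_0(\zeta)=1$ comes from $Y_0=1$ together with the fact that, taking $\eta$ to be the law of $\zeta$ on a trivial $\mathcal{F}_0$, $p_0(\zeta)\equiv 1$.

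The main point where care is required is not conceptual but measure-theoretic: to invoke the transfer principle one must work with a version of $p(u)$ that is $\mathcal{O}(\mathbb{F})\otimes\mathcal{B}(\overline{\mathbb{R}})$-measurable, cadlag in $t$, and strictly positive for $\eta$-a.e.\ $u$. This is precisely what the preceding lemma provides, the strengthening from $\mathcal{J}$ to $\mathcal{E}$ ruling out the vanishing time $\zeta^x$ so that $v_t(u)=Y_t/p_t(u)$ and $v_t(u)=Y_tS_t/p_t(u)$ are well-defined and cadlag in $t$. Once this regularity is in place, the proof reduces to the two one-line verifications above. An alternative, substantially heavier route would be to write out the Jacod semimartingale decompositions of $Y$ and $YS$ in $\mathbb{G}$, apply Itô's formula to $Y/p(\zeta)$ and $YS/p(\zeta)$, and check that the finite-variation parts cancel; the transfer principle is used here precisely to bypass this calculation.
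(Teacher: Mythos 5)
Your proof is correct and is essentially the argument the paper points to: the paper gives no proof of its own here, deferring to Aksamit--Jeanblanc (2017, p.~97), where the result is obtained exactly by this transfer principle applied to $v_t(u)=Y_t/p_t(u)$ and $v_t(u)=Y_tS_t/p_t(u)$. The application is particularly clean in this case because both products $v_t(u)\,p_t(u)$ are independent of $u$ (equal to $Y$ and $YS$ respectively), so a single localizing sequence works uniformly in $u$ and the usual measurable-localization caveat in the local-martingale version of the transfer lemma causes no difficulty.
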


\par
The next result concerns the preservation of NA1 under initial enlargement. Since NA1 is equivalent to NUPBR, the previous theorem applies, but we present the result regardless to establish the link explicitly (Acciaio, Fontana, Kardaras, 2015). Define for every $x \in E$ and $n \in \mathbb{N}$ a collection of stopping times on $(\Omega, \mathbb{F})$ as $\zeta^x_n \coloneqq \text{inf} \{ t \in \mathbb{R}_+ \vert p_t(x) < 1/n \}$, $\zeta^x \coloneqq \text{inf} \{t \in \mathbb{R}_+ \vert p_t(x) = 0\}$. For every $x \in E$ we define $\Lambda^x \coloneqq \{\zeta^x < \infty, p_{\zeta^x-}(x) > 0\}$. Finally, define a stopping time on $(\Omega, \mathbb{F})$ as $\eta^x \coloneqq \zeta^x_{\Lambda^x} = \zeta^x\mathbbm{1}_{\Lambda^x} + \infty\mathbbm{1}_{\Omega\backslash\Lambda^x}$  for all $x \in E$, which is the time at which the density process $p(x)$ jumps to 0. \par

\begin{thm} Assume that $\mathcal{J}$ holds, $\mathbb{L}^1(\Omega, \mathcal{F}, \mathbb{P})$ is separable, and that $\mathbb{P}(\eta^x < \infty, \Delta S_{\eta^x} \neq 0) = 0$ $\gamma$-a.e. $x \in E$. If $S$ satisfies NA1 in $\mathbb{F}$, then $S$ satisfies NA1 in $\mathbb{G}$. 
\end{thm}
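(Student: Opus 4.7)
The plan is to construct a strictly positive $\mathbb{G}$-local martingale $\widetilde Y$ such that $\widetilde Y S$ is also a $\mathbb{G}$-local martingale; by the remark earlier this is equivalent to NA1 in $\mathbb{G}$. Since $S$ satisfies NA1 in $\mathbb{F}$, there is an $\mathbb{F}$-local martingale deflator $Y>0$ for $S$, and the natural candidate in the enlarged filtration will be the ratio $Y/p(\zeta)$, suitably stopped and extended past the time at which $p(\zeta)$ is exhausted.

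First I would invoke the Jacod lemma to fix a nonnegative, $\mathcal{O}(\mathbb{F})\otimes\mathcal{B}(\overline{\mathbb{R}})$-measurable version $(t,\omega,x)\mapsto p_t(\omega,x)$ satisfying properties (i)--(iii) of that lemma; in particular $p_-(x)>0$ on $\llbracket 0,\zeta^x\llbracket$. The separability of $\mathbb{L}^1(\Omega,\mathcal{F},\mathbb{P})$ is used precisely here to secure such a jointly measurable version uniformly in $x$, so that on $(\Omega,\mathbb{G})$ the processes $p(\zeta)$, $p_-(\zeta)$ and the stopping time $\eta^\zeta$ are all unambiguously defined by substitution. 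I then set
\[
\widetilde Y_t \coloneqq \frac{Y_t}{p_t(\zeta)}\,\mathbbm{1}_{\llbracket 0,\eta^\zeta\llbracket}(t) \;+\; \frac{Y_{\eta^\zeta-}}{p_{\eta^\zeta-}(\zeta)}\,\mathbbm{1}_{\llbracket \eta^\zeta,\infty\llbracket}(t),
\]
which is strictly positive and coincides with $Y/p(\zeta)$ up to $\eta^\zeta$.

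The next step is to verify that both $\widetilde Y$ and $\widetilde Y S$ are $\mathbb{G}$-local martingales on $\llbracket 0,\eta^\zeta\llbracket$. Localising along the sequence $\zeta^\zeta_n=\inf\{t:p_t(\zeta)<1/n\}$ keeps $1/p(\zeta)$ bounded, and integration by parts applied to the product $Y\cdot(1/p(\zeta))$, together with the Jacod decomposition (Theorem 6) for the $\mathbb{F}$-local martingales $Y$ and $YS$, shows that the $\mathbb{G}$-drift acquired by $Y$ (respectively $YS$) under the enlargement is cancelled exactly by the drift produced by the reciprocal density $1/p(\zeta)$. This is the same algebraic identity that, under Jacod's equivalence condition, already produced $Y/p(\zeta)$ as a $\mathbb{G}$-deflator in the previous theorem; it goes through verbatim on the stochastic interval on which $p_-(\zeta)>0$.

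It remains to cross the time $\eta^\zeta$, and this is the main obstacle. The hypothesis $\mathbb{P}(\eta^x<\infty,\Delta S_{\eta^x}\neq 0)=0$ for $\gamma$-a.e.\ $x$ is transferred, via the jointly measurable version of $p$ from the Jacod lemma and a Fubini-type argument against the conditional law of $\zeta$, into the almost sure statement $\Delta S_{\eta^\zeta}=0$ on $\{\eta^\zeta<\infty\}$. Since $\widetilde Y$ is frozen at $\eta^\zeta-$, it has no jump at $\eta^\zeta$, and combined with $\Delta S_{\eta^\zeta}=0$ the product $\widetilde Y S$ inherits no jump there either; passing to the limit along $\zeta^\zeta_n\wedge\eta^\zeta$ then yields that $\widetilde Y$ and $\widetilde Y S$ are $\mathbb{G}$-local martingales on all of $\mathbb{R}_+$, which establishes NA1 in $\mathbb{G}$. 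The whole difficulty is concentrated in this crossing: the no-jump hypothesis on $S$ is consumed to prevent a spurious jump of $\widetilde Y S$ at $\eta^\zeta$, and the separability of $\mathbb{L}^1$ is consumed to legitimise the passage from the $\gamma$-a.e.\ hypothesis to the $\omega$-a.s.\ statement used in that step.
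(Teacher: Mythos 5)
The paper states this theorem without proof, citing Acciaio--Fontana--Kardaras (2015), so your proposal can only be measured against that argument; your overall strategy --- characterise NA1 by a strictly positive local martingale deflator $Y$ and push it into $\mathbb{G}$ via the density $p(\zeta)$ --- is indeed the right family of ideas. The problem is that the step on which you concentrate the entire difficulty, ``crossing $\eta^{\zeta}$,'' is vacuous, and this hides a genuine gap. By part (iii) of the paper's lemma on the density process, one has $p_t(\zeta)>0$ and $p_{t-}(\zeta)>0$ for all $t$, $\mathbb{P}$-a.s.\ (integrate $\mathbbm{1}_{\{p_t(x)=0\}}$ against $Q_t(\omega,dx)=p_t(\omega,x)\,\eta(dx)$ and use a section argument), hence $\zeta^{\zeta}=\infty$ and a fortiori $\eta^{\zeta}=\infty$ $\mathbb{P}$-a.s. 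Your frozen process $\widetilde Y$ therefore coincides a.s.\ with $Y/p(\zeta)$, the statement ``$\Delta S_{\eta^{\zeta}}=0$ on $\{\eta^{\zeta}<\infty\}$'' is true for trivial reasons whatever $S$ is, and your argument never uses the hypothesis $\mathbb{P}(\eta^x<\infty,\Delta S_{\eta^x}\neq 0)=0$ in a non-vacuous way. If the proof were correct as written, it would establish the theorem \emph{without} that hypothesis, which is false: the paper's own Poisson example, where $S_t=\exp(N_t-\lambda t(e-1))$ is a strictly positive $\mathbb{F}$-martingale (so $Y\equiv 1$ is an $\mathbb{F}$-deflator), Jacod's criterion holds and $L^1$ is separable, yet NA1 fails in $\mathbb{G}$, is a direct counterexample.

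The actual obstruction lives at fixed $x$ under $\mathbb{P}$, where $\{\eta^x<\infty\}$ can have positive probability even though it is null under the conditional law of the paths given $\zeta=x$. Disintegrating, for a nonnegative $\mathbb{F}$-local martingale $N$ and $A\in\mathcal{F}_s$ one finds $\mathbb{E}\left[\frac{N_t}{p_t(\zeta)}g(\zeta)\mathbbm{1}_A\right]=\int g(x)\,\mathbb{E}\left[N_t\mathbbm{1}_A\mathbbm{1}_{\{p_t(x)>0\}}\right]\eta(dx)$, and the indicator $\mathbbm{1}_{\{p_t(x)>0\}}$ decreases in $t$: mass is lost exactly when $p(x)$ jumps to zero at $\eta^x$. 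Consequently $N/p(\zeta)$ is in general only a $\mathbb{G}$-supermartingale, and since $Y>0$, your candidate $Y/p(\zeta)$ is \emph{not} a $\mathbb{G}$-local martingale whenever $\mathbb{P}(\eta^x<\infty)>0$ on a non-null set of $x$; no modification of its value after $\eta^{\zeta}$ can repair this, because the failure is invisible on the enlarged space. A correct argument must either pass to the supermartingale-deflator characterisation of NA1, or compensate the lost mass by the factor $\mathcal{E}(-D^{\zeta})^{-1}$ built from the predictable compensator $D^x$ of $\mathbbm{1}_{\llbracket\eta^x,\infty\llbracket}$ --- precisely the object appearing in part (ii) of the next theorem in the paper. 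It is there, and not in a jump of $\widetilde Y S$ at a $\mathbb{G}$-stopping time, that the hypothesis $\Delta S_{\eta^x}=0$ is consumed (it makes the product of $S$ with the compensating factor again a local martingale), and it is the joint measurability in $(x,\omega,t)$ of the family $(D^x)_{x\in E}$ --- not merely of $p$ --- that requires the separability of $L^1$. None of these three ingredients appears in your write-up.
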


\par
Two remarks are in order. Firstly, the condition $\mathbb{P}(\eta^x < \infty, \Delta S_{\eta^x} \neq 0) = 0$ cannot be dropped as various counterexamples show. Secondly, the statement of the theorem concerns a \textit{specific} semimartingale $S$ and does not address the preservation of NA1 for \textit{any} semimartingale. This is resolved in the following result, for which it is necessary to introduce preliminary notation. We follow (Acciaio, Fontana, Kardaras, 2015). Denote by $D^x$ the predictable compensator of $\mathbbm{1}_{\llbracket \eta^x, \infty\llbracket}$ on $(\Omega, \mathcal{F}, \mathbb{F}, \mathbb{P})$ for all $x \in E$. Define $(S^x)_{x \in E}$ to be $S^x \coloneqq \mathcal{E}(-D^x)^{-1}\mathbbm{1}_{\llbracket 0, \eta^x \llbracket}$, for all $x \in E$. The process $S^J$ is defined as $S^J(\omega, t) \coloneqq S_t^{J(\omega)}(\omega)$ for all $(\omega, t) \in \Omega \times \mathbb{R}_+$ and is a semimartingale on $(\Omega, \mathcal{F}, \mathbb{G}, \mathbb{P})$. $S^J$ has a particularly intuitive financial interpretation of an insider having knowledge of $J$ and at time $0$ taking a position on a single unit of the stock with index $J$ that she keeps indefinitely. \par

\begin{thm} Under $\mathcal{J}$ the following statements hold:\\
(i) If $\mathbb{P}(\eta^x < \infty) = 0$ $\gamma$-a.e. $x \in E$, then any $S$ that satisfies NA1 in $\mathbb{F}$ also satisfies NA1 in $\mathbb{G}$. \\
(ii) If $\mathbb{L}^1(\Omega, \mathcal{F}, \mathbb{P})$ is separable and $\int_E \mathbb{P}(\eta^x < \infty)\gamma(dx) > 0$, then the family $(S^x)_{x \in E}$ consists of local martingales on $(\Omega, \mathcal{F}, \mathbb{P})$, and $S^J$ is nondecreasing with $\mathbb{P}(S^J_t = S^J_0, \forall t \in \mathbb{R}_+) < 1.$ In particular, $S^x$ satisfies NA1 in $\mathbb{F}$ for any $x \in E$, but $S^J$ does not satisfy NA1 in $\mathbb{G}$.
\end{thm}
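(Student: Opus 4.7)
The plan splits along the two parts of the theorem. For part (i), my observation is that the hypothesis $\mathbb{P}(\eta^x<\infty)=0$ for $\gamma$-a.e.\ $x$ automatically forces $\mathbb{P}(\eta^x<\infty,\, \Delta S_{\eta^x}\neq 0)=0$, since the latter event is contained in $\{\eta^x<\infty\}$. Consequently the preceding theorem, which guarantees the transfer of NA1 from $\mathbb{F}$ to $\mathbb{G}$ under exactly that condition, applies verbatim, and no additional work is required. Part (ii) is more substantial, and I would organize it into three subgoals: show that each $S^x$ is an $\mathbb{F}$-local martingale; show that $S^J$ is $\mathbb{P}$-a.s.\ nondecreasing in $\mathbb{G}$; and deduce the failure of NA1 for $S^J$.

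For the first subgoal, set $U^x \coloneqq \mathbbm{1}_{\llbracket \eta^x, \infty \llbracket}$ so that $M^x \coloneqq U^x - D^x$ is by definition an $\mathbb{F}$-local martingale. Because the single-jump structure of $U^x$ forces $D^x$ to halt at $\eta^x$, one has $\int U^x_{-}\, dD^x \equiv 0$. Applying integration by parts to the product $\mathcal{E}(-D^x)^{-1}(1-U^x)$ and exploiting the predictability of $\mathcal{E}(-D^x)$ to discard the cross-variation term, one arrives at
\[
dS^x_t = -\mathcal{E}(-D^x)_{t-}^{-1}\, dM^x_t,
\]
identifying $S^x$ as an $\mathbb{F}$-local martingale; taking the trivial deflator $Y\equiv 1$ then shows that $S^x$ satisfies NA1 in $\mathbb{F}$. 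The separability of $\mathbb{L}^1(\Omega,\mathcal{F},\mathbb{P})$ supplies the jointly measurable version of $(x,\omega,t)\mapsto S^x_t(\omega)$ that one needs to interpret $S^J$ as a $\mathbb{G}$-semimartingale.

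For the remaining subgoals, I would exploit the Jacod identity $\mathbb{P}(J\in dx,\, A) = \mathbb{E}[p_\infty(x)\mathbbm{1}_A]\,\gamma(dx)$ for $A\in\mathcal{F}_\infty$. Applied with $A=\{\eta^x<\infty\}$, and using that the nonnegative martingale $p(x)$ is absorbed at $0$ on $\llbracket \eta^x,\infty\llbracket$ (so that $p_\infty(x)\mathbbm{1}_{\{\eta^x<\infty\}}=0$), this yields $\mathbb{P}(\eta^J<\infty)=0$. It follows that $\mathbbm{1}_{\llbracket 0,\eta^J\llbracket}\equiv 1$ a.s., so $S^J = \mathcal{E}(-D^J)^{-1}$, which is predictable and nondecreasing. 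The event $\{S^J\equiv S^J_0\}$ coincides with $\{D^J\equiv 0\}$, and a second use of the Jacod identity shows its probability is strictly less than $1$ precisely when $\int_E \mathbb{P}(\eta^x<\infty)\,\gamma(dx)>0$, which is our standing hypothesis. Thus $S^J$ is a strictly positive, nondecreasing, non-constant $\mathbb{G}$-adapted process starting from $1$; a buy-and-hold position in $S^J$ delivers a nonnegative terminal gain that is strictly positive with positive probability, contradicting NA1.

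The main obstacle is the local-martingality computation of the first subgoal: the jump of $S^x$ down to $0$ at $\eta^x$ has to be shown to be exactly compensated by the predictable upward drift that $\mathcal{E}(-D^x)^{-1}$ carries, and this delicate cancellation rests on the fact that $D^x$ ceases to grow at $\eta^x$. The separability of $\mathbb{L}^1$ is secondary but indispensable, since without it the pointwise family $(S^x)_{x\in E}$ need not glue into a well-defined $\mathbb{G}$-process $S^J$, and the transition from $\gamma$-a.e.\ pointwise statements about $\eta^x$ to their $\mathbb{P}$-statements about $\eta^J$ would lose its measurability backing.
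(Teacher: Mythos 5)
The paper itself does not prove this theorem; it is quoted from Acciaio, Fontana and Kardaras (2015) with only the surrounding notation reproduced, so your proposal can only be measured against the original source. Your overall plan does follow the same route as that source: part (i) by reduction to the preceding transfer theorem, part (ii) by showing each $S^x$ is the stochastic exponential of $-\left(1-\Delta D^x\right)^{-1}\bullet M^x$ with $M^x=\mathbbm{1}_{\llbracket \eta^x,\infty\llbracket}-D^x$, and then using the density identity to pass from $\gamma$-a.e.\ statements about $\eta^x$ to $\mathbb{P}$-statements about $\eta^J$. Three points, however, need repair. First, the reduction in part (i) imports the hypothesis that $\mathbb{L}^1(\Omega,\mathcal{F},\mathbb{P})$ is separable, which the preceding theorem assumes but part (i) does not; as stated you have only proved (i) under that extra assumption, whereas the original argument for (i) builds the $\mathbb{G}$-deflator directly from an $\mathbb{F}$-deflator and $1/p(J)$ without separability. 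Second, the identity $\mathbb{P}(J\in dx,\,A)=\mathbb{E}[p_\infty(x)\mathbbm{1}_A]\gamma(dx)$ for $A\in\mathcal{F}_\infty$ is false in general: $p(x)$ is a nonnegative martingale but need not be uniformly integrable (think of a time-changed version of the $B_1$-enlargement, where $p_t(x)\to 0$ for $\gamma$-a.e.\ $x$ while $J$ still has law $\gamma$). What you actually need is the stopping-time form $\mathbb{P}(J\in dx,\,A)=\mathbb{E}[\mathbbm{1}_A\,p_T(x)]\gamma(dx)$ for $A\in\mathcal{F}_T$, $A\subseteq\{T<\infty\}$, applied with $T=\eta^x$; since $p_{\eta^x}(x)=0$ on $\{\eta^x<\infty\}$ this does give $\mathbb{P}(\eta^J<\infty)=0$, so the conclusion survives, but the identity you invoke does not.

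Third, and most substantively, the non-constancy claim $\mathbb{P}(S^J_t=S^J_0,\ \forall t)<1$ is the delicate half of part (ii) and you have asserted rather than proved it. The equivalence $\{S^J\equiv S^J_0\}=\{D^J\equiv 0\}$ is fine, but to show $\mathbb{P}(D^J_\infty>0)>0$ one must compute $\mathbb{E}[D^J_\infty]$ through the predictable version of the density identity, obtaining $\int_E\mathbb{E}\bigl[\int_0^\infty p_{t-}(x)\,dD^x_t\bigr]\gamma(dx)=\int_E\mathbb{E}\bigl[p_{\eta^x-}(x)\mathbbm{1}_{\{\eta^x<\infty\}}\bigr]\gamma(dx)$, and this is where the definition of $\Lambda^x$ (namely $p_{\zeta^x-}(x)>0$ on $\Lambda^x$, so the integrand is strictly positive precisely when $\mathbb{P}(\eta^x<\infty)>0$) does the real work. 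Without this step the hypothesis $\int_E\mathbb{P}(\eta^x<\infty)\gamma(dx)>0$ is never actually used. Two smaller remarks: your differential $dS^x_t=-\mathcal{E}(-D^x)_{t-}^{-1}dM^x_t$ is missing the factor $(1-\Delta D^x_t)^{-1}$ (the correct predictable integrand is $\mathcal{E}(-D^x)_t^{-1}$ itself, and one must also note $\Delta D^x<1$ on $\llbracket 0,\eta^x\rrbracket$ for the exponential to be invertible); and the final contradiction with NA1 should say explicitly that the fixed payoff $\xi=S^J_\infty-S^J_0$ is superreplicated from every initial capital $x>0$ by the buy-and-hold strategy with nonnegative wealth, which is exactly an arbitrage of the first kind.
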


\par
We now proceed to concrete examples showing how NA1 (or, equivalently, NUPBR) may hold in certain cases, but in general fails in the initially enlarged filtration. \par

\begin{example} (Chau et al, 2016) Consier two independent Poisson processes $N^1$ and $N^2$ with intensity $\lambda = 1$. The discounted risky asset price process is taken to be $S_t = e^{N^1_t - N^2_t}$, i.e.
\begin{align*}
    dS_t = S_{t-}((e - 1)dN^1_t + (e^{-1} - 1)dN^2_t), \quad S_0 = 1, t \in [0,T].
\end{align*}
$\mathbb{F}$ is the natural filtration generated by the two Poisson processes, and $S$ satisfies NFLVR in $\mathbb{F}$ with the density $Z$ of any equivalent local martingale measure given by
\begin{align*}
    dZ_t = Z_{t-}((\alpha^1_t - 1)(dN^1_t-dt) + (\alpha^2_t - 1)(dN^2_t - dt)),
\end{align*} 
where $\alpha^1$ and $\alpha^2$ are positive integrable processes such that $\alpha^1_t = e^{-1}\alpha^2_t.$ \par 
We now consider $N_t \coloneqq N^1_t - N^2_t$ with the original filtration enlarged with the terminal value $N_T$. This is tantamount to knowing (i.e. having inside information on) $S_T$. Then it can be shown that the conditional density for all $t \in [0, T)$ takes the form 
\begin{align*}
    p_t(x) = \frac{\mathbb{P}(N_T = x \vert \mathcal{F}_t)}{\mathbb{P}(N_T = x )} = \frac{\sum_{k\geq0}e^{-(T-t)}\frac{(T-t)^k}{k!}e^{-(T-t)}\frac{(T-t)^{k + x - N_t}}{(k + x - N_t)!}\mathbbm{1}_{k + x - N_t \geq 0}}{\sum_{k\geq0}e^{-T}\frac{T^k}{k!}e^{-T}\frac{T^{x+k}}{(x+k)!}} > 0, 
\end{align*}
and for $t = T$
\begin{align*}
     p_T(x) = \frac{\mathbbm{1}_{N_T=x}}{\sum_{k\geq0}e^{-T}\frac{T^k}{k!}e^{-T}\frac{T^{x+k}}{(x+k)!}},
\end{align*}
from which we see that the density process does not jump to 0, i.e. $p_t(x)$ is strictly positive before time $T$. This implies that $S$ satisfies NUPBR (NA1) in $\mathbb{G}.$ 
\end{example}

\par
\begin{example} (Acciaio et al, 2015) Consider a Poisson process with intensity $\lambda > 0$ on $(\Omega, \mathcal{F}, \mathbb{F}, \mathbb{P})$ on the time interval $[0, T]$ for some $T \in (0, \infty).$ We enlarge the filtration by the terminal value $J \coloneqq N_T$. The hypothesis $\mathcal{J}$ holds, and the density process is known to be 
\begin{align*}
    p_t(x) = e^{\lambda t} \frac{(\lambda(T - t))^{x - N_t}}{(\lambda T)^x}\frac{x!}{(x - N_t)!}\mathbbm{1}_{\{N_t \leq x\}}, \quad \text{for all} \quad t \in [0, T), \quad p_T(x)=e^{\lambda T}\frac{x!}{(\lambda T)^x}\mathbbm{1}_{\{N_T = x\}}.
\end{align*}
Now consider the process $S_t \coloneqq$ exp($N_t - \lambda t( e - 1))$ for all $t \in [0, T]$, which can be shown to be a strictly positive $\mathbb{F}$-martingale, which implies that NA1 holds in $\mathbb{F}$. To show that it fails in $\mathbb{G}$ consider the strategy ($\mathbbm{-1}_{\rrbracket \sigma, T \rrbracket}$), where $\sigma \coloneqq \text{inf} \{t \in [0,T] \vert N_t = N_T\}$. The associated value process is thus 
\begin{align*}
    (\mathbbm{-1}_{\rrbracket \sigma, T \rrbracket} \cdot S)_t = \mathbbm{1}_{\{t > \sigma \}} \text{exp}(N_{\sigma} - \lambda \sigma(e - 1))(1 - \text{exp}(-\lambda(t-\sigma)(e-1))).
\end{align*}
Thus it follows that $\mathbbm{-1}_{\rrbracket \sigma, T \rrbracket} \bullet S$ is nondecreasing and $\mathbb{P}(\sigma < T) = 1$, hence NA1 does not hold in $\mathbb{G}$. A careful examination of the example shows that the condition  $\mathbb{P}(\eta^x < \infty, \Delta S_{\eta^x} \neq 0) = 0$ for $\gamma$-a.e. $x \in E$ is violated since the processes $p(x)$ have a positive probability to jump to 0 (at the jump times of the Poisson process $N$). 
\end{example}

\par
We now consider a progressive enlargement setup which revolves primarily around the behavior of the Azema supermartingale. Specifically, we consider a random time defined as $\tau : \Omega \mapsto \overline{\mathbb{R}}_+$ such that $\mathbb{P}(\tau = \infty) = 0$. The filtration is progressively enlarged as
\begin{align*}
    \mathcal{G}_t = \{B \in \mathcal{F} \ \vert \ B \cap \{\tau > t\} = B_t \cap \{\tau > t\} \text{ for some } B_t \in \mathcal{F}_t\} \quad \text{for all} \quad t \in \mathbb{R}_+.
\end{align*}
Thus $\mathbb{G}$ turns $\tau$ into a stopping time. 
The finiteness of $\tau$ implies that $Z_{\infty} \coloneqq \lim_{t \to \infty}Z_t = 0$ a.s. It follows that $Z$ can be represented as $Z = m - A$, where $A$ is the dual optional projection of $\mathbbm{1}_{\llbracket \tau, \infty \llbracket}$ and $m$ is some uniformly integrable martingale on $(\Omega, \mathbb{F}, \mathbb{P})$ and $m_t = \mathbb{E}(A_{\infty} \vert \mathcal{F}_t)$. It can also be shown that for any $\mathbb{F}$-stopping time $\sigma$ we have $\Delta A_{\sigma} = \mathbb{P}(\tau = \sigma \vert \mathcal{F}_{\sigma})$ on $\{\sigma < \infty \}$. Similarly to the initial enlargement case, we define a number of stopping times. Define for every $n \in \mathbb{N}$ a collection of stopping times on $(\Omega, \mathbb{F})$ as $\zeta_n \coloneqq \text{inf} \{ t \in \mathbb{R}_+ \vert Z_t < 1/n \}$, $\zeta \coloneqq \lim_{n \to \infty} \zeta_n = \text{inf} \{t \in \mathbb{R}_+ \vert Z_{t-} = 0 \: \text{or} \: Z_t = 0\} = \text{inf} \{t \in \mathbb{R}_+ \vert  Z_t = 0\} $. Now define the $\mathcal{F}_{\zeta}$-measureable event $\Lambda \coloneqq \{\zeta < \infty, Z_{\zeta-} > 0, \Delta A_{\zeta} = 0\}$ and a stopping time on $(\Omega, \mathbb{F})$ as $\eta \coloneqq \zeta_{\Lambda} = \zeta\mathbbm{1}_{\Lambda} + \infty\mathbbm{1}_{\Omega\backslash\Lambda}$. Note that $Z_{\eta} = \mathbb{P}(\tau > \eta \vert \mathcal{F}_{\eta}) = 0$ and $\mathbb{P}(\tau = \eta < \infty \vert \mathcal{F}_{\eta})= \Delta A_{\eta}\mathbbm{1}_{\{\eta < \infty\}} = \Delta A_{\eta}\mathbbm{1}_{\Lambda} = 0$, where we have used the fact that $\mathbb{P}(\tau = \infty) = 0$. The following is an important result for a particular stopped semimartingale $S^{\tau} \coloneqq (S_{\tau \land t})_{t \in \mathbb{R}_+}$. \par

\begin{thm} Suppose $S$ satisfies NA1 in $\mathbb{F}$ and $\mathbb{P}(\eta < \infty, \Delta S_{\eta} \neq 0) = 0$. Then $S^{\tau}$ satisfies NA1 in $\mathbb{G}$.
\end{thm}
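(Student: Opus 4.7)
The strategy is to construct a strictly positive $\mathbb{G}$-local martingale deflator $L$ for $S^\tau$; since NA1 is equivalent to the existence of such a deflator (as noted in the remark earlier), this establishes the conclusion.

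First, exploit NA1 in $\mathbb{F}$ to obtain a strictly positive $\mathbb{F}$-local martingale $Y$ with $Y_0 = 1$ such that $YS$ is an $\mathbb{F}$-local martingale. Both $Y^\tau$ and $(YS)^\tau$ are then $\mathbb{F}$-local martingales stopped at $\tau$, and hence admit canonical $\mathbb{G}$-semimartingale decompositions via the stopped-local-martingale decomposition theorem stated above, with explicit drifts of the form $\int_0^{\cdot \wedge \tau} Z_{s-}^{-1}\, d\langle \cdot , m\rangle^{\mathbb{F}}_s$.

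Next, build the deflator through the Ito-Watanabe multiplicative decomposition $Z = N D$, valid on $[0, \zeta)$ since $Z$ is a cadlag positive supermartingale of class (D); here $N$ is a strictly positive cadlag $\mathbb{F}$-local martingale and $D$ is predictable decreasing. The candidate deflator is
\[
L_t \;:=\; \frac{Y^\tau_t}{N^\tau_t}\, N_0, \qquad t \geq 0,
\]
which is strictly positive on $\llbracket 0, \eta \rrbracket$. Applying Ito's product rule to $Y^\tau \cdot (N^\tau)^{-1}$ and invoking the $\mathbb{G}$-decomposition theorem separately for $Y$ and for $N$, one checks that the two drift contributions cancel exactly by virtue of the identity $Z = N D$, so $L$ is a $\mathbb{G}$-local martingale. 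An identical computation applied to $L\, S^\tau = (Y S)^\tau N_0 / N^\tau$ shows that $L\, S^\tau$ is also a $\mathbb{G}$-local martingale.

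The delicate point, and the main obstacle, is the behavior at the $\mathbb{F}$-stopping time $\eta$, where $Z$ (and hence $N$) jumps from a strictly positive value to $0$ on the event $\Lambda$; there the candidate $L$ is a priori undefined, and $1/N$ threatens to blow up. The hypothesis $\mathbb{P}(\eta < \infty,\ \Delta S_\eta \neq 0) = 0$ is tailored precisely to this obstruction: it guarantees that on $\{\eta \leq \tau\}$ the process $S^\tau$ has no jump at $\eta$, so no spurious finite-variation term is introduced into $L\, S^\tau$ at this critical time. A standard localization using $\tau \wedge \zeta_n$ with $\zeta_n \uparrow \zeta$ then extends the $\mathbb{G}$-local martingale property of both $L$ and $L\, S^\tau$ to all of $[0, \infty)$, yielding the required deflator and completing the proof.
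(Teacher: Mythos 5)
The survey states this theorem without proof, citing Acciaio, Fontana and Kardaras (2015), so there is no in-paper argument to compare against; your plan has to be judged against the source. Your overall strategy --- exhibit a strictly positive $\mathbb{G}$-local martingale deflator for $S^{\tau}$ --- is the correct one and is what the original proof does. But the specific candidate $L=Y^{\tau}N_0/N^{\tau}$ built from the Ito--Watanabe factor $N$ is the Kreher-type deflator that this survey records only under assumptions (C) and (A), whereas the present theorem concerns general semimartingales with no avoidance hypothesis, and there the plan breaks in several places. First, the asserted ``exact cancellation'' of drifts is not an identity once $Z$ has jumps: the $\mathbb{G}$-drift of a stopped $\mathbb{F}$-local martingale $X$ is $\int_0^{\cdot\wedge\tau}Z_{s-}^{-1}\,d\langle X,m\rangle_s$, expressed through the \emph{additive} martingale part $m$ of $Z=m-A^{o}$, and $N$ is related to $m$ by a stochastic exponential only in the continuous case (the corollary following the Nikeghbali--Yor theorem, which you are implicitly using, assumes (CA)); for discontinuous $Z$ the process $Y^{\tau}/N^{\tau}$ is in general not a $\mathbb{G}$-local martingale. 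Second, strict positivity and finiteness fail at $\tau$ itself: $Z_{\tau}$, and hence $N_{\tau}$, vanishes with positive probability (for instance whenever $\tau$ is an $\mathbb{F}$-stopping time, where $Z=\mathbbm{1}_{\llbracket 0,\tau\llbracket}$ and the predictable factor $D$ cannot absorb the jump to zero), so $1/N^{\tau}$ explodes on a non-null set; the object that is a.s.\ positive at $\tau$ is $\widetilde{Z}_{\tau}$, not $Z_{\tau}$ or $N_{\tau}$, and one must also say how the deflator is continued after $\tau$ as a positive local martingale. Third, localizing along $\tau\wedge\zeta_n$ does not close the argument on the event where $Z$ decreases to zero continuously: there $\eta=\infty$, the hypothesis gives no information about $S$, the conclusion must still hold, and $\tau\wedge\zeta_n$ need not be eventually equal to $\tau$, so the local martingale property on all of $\llbracket 0,\tau\rrbracket$ is not obtained.

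You have also misread where the hypothesis $\mathbb{P}(\eta<\infty,\,\Delta S_{\eta}\neq 0)=0$ acts. The event $\{\eta\leq\tau\}\cap\{\eta<\infty\}$ that you invoke is a.s.\ empty: $Z_{\eta}=0$ forces $\tau\leq\eta$, and $\Delta A_{\eta}=0$ on $\Lambda$ forces $\mathbb{P}(\tau=\eta<\infty)=0$, so $\tau<\eta$ a.s.\ on $\{\eta<\infty\}$ and $S^{\tau}$ never reaches the time $\eta$ at all. The hypothesis is therefore not about suppressing a jump of $S^{\tau}$; it constrains the $\mathbb{F}$-dynamics of $S$ \emph{strictly after} the investment horizon $\tau$. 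It excludes precisely those assets whose deflated $\mathbb{F}$-martingale property is sustained by a compensating jump at $\eta$ that the $\mathbb{G}$-investor, stopped at $\tau<\eta$, never experiences --- this is the mechanism of the counterexample $S=\mathcal{E}(-D)^{-1}\mathbbm{1}_{\llbracket 0,\eta\llbracket}$ in part (ii) of the theorem that follows in the text, where $S^{\tau}$ is nondecreasing and nonconstant. A correct proof must invoke the hypothesis at exactly this point, when showing that the deflated process picks up no finite-variation part from the vanishing of $Z$ at $\eta$; as written, your plan gestures at the right obstruction but attaches it to the wrong event and does not supply the computation that would use the hypothesis.
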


\par
The following is the general result for all semimartingales. \par

\begin{thm} The following statements are true: \\
    (i) Suppose $\mathbb{P}(\eta < \infty) = 0$. Then for any $S$ that satisfies NA1 in $\mathbb{F}$, it follows that $S^{\tau}$ satisfies NA1 in $\mathbb{G}$. \\
    (ii) Suppose $\mathbb{P}(\eta < \infty) > 0$. Then $S \coloneqq \mathcal{E}(-D)^{-1}\mathbbm{1}_{\llbracket0, \eta \llbracket}$ is an $\mathbb{F}$-local martingale, and $S^{\tau}$ is nondecreasing with $\mathbb{P}(S_{\tau} > 1) > 0$, where $D$ is the predictable compensator of $\mathbbm{1}_{\llbracket \eta, \infty \llbracket}$ on $(\Omega, \mathbb{F}, \mathbb{P}).$ 
\end{thm}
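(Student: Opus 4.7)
The proof splits along the two items, the first being immediate and the second requiring an explicit construction.

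For (i), the hypothesis $\mP(\eta<\infty)=0$ renders the condition $\mP(\eta<\infty,\D S_\eta\neq 0)=0$ of the preceding theorem vacuous for \emph{every} semimartingale $S$. That theorem therefore applies verbatim and immediately yields NA1 stability for $S^\tau$ in $\mG$ for any $S$ satisfying NA1 in $\mF$. No further work is required.

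For (ii), I would verify in sequence the local martingale, monotonicity, and positivity claims for the candidate $S\coloneqq \mathcal{E}(-D)^{-1}\1_{\lb 0,\eta\lb}$. Setting $V\coloneqq \mathcal{E}(-D)^{-1}$ and $N\coloneqq \1_{\lb\eta,\infty\lb}-D$ (so that $N$ is an $\mF$-local martingale by the very definition of $D$), the process $V$ is predictable of finite variation and satisfies $dV_t=V_{t-}\,dD_t/(1-\D D_t)$. Writing $\1_{\lb 0,\eta\lb}=1-N-D$ and applying It\^o's product rule produces a cancellation of the finite-variation terms, leaving $S$ as a stochastic integral of a locally bounded predictable integrand against $N$, hence an $\mF$-local martingale. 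For the monotonicity of $S^\tau$, note that $Z_\eta=0$ forces $\tau\leq\eta$ a.s., and that the defining relation $\D A_\eta=\mP(\tau=\eta\mid \cF_\eta)=0$ on $\Lambda$ upgrades this to $\tau<\eta$ a.s.\ on $\Lambda$ (with $\tau<\eta=\infty$ trivially off $\Lambda$); hence $S^\tau=V_{\cdot\wedge\tau}$ is nondecreasing because $V$ is.

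The crux is showing $\mP(S_\tau>1)>0$, equivalently $\mP(D_\tau>0)>0$. My plan is to compute $\mE[D_\tau]$ via the predictable projection: writing $D_\tau=\int_0^\infty\1_{\{s\leq\tau\}}\,dD_s$ and exploiting the $\mF$-predictability of $D$, one obtains $\mE[D_\tau]=\mE\bigl[\int_0^\infty{}^p\!(\1_{\{\tau\geq s\}})\,dD_s\bigr]$, where the predictable projection ${}^p\!(\1_{\{\tau\geq s\}})$ is essentially the predictable version $\wt{Z}^p$ of the Az\'ema supermartingale and is strictly positive on $\lb 0,\eta\rb$ by construction of $\eta=\zeta$ on $\Lambda$. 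On the other hand, $\mE[D_\infty]=\mP(\eta<\infty)=\mP(\Lambda)>0$ ensures that $dD$ carries strictly positive mass on $\Lambda$. Combining the two observations forces the integrand to be strictly positive on a set of positive measure, whence $\mE[D_\tau]>0$ and thus $\mP(D_\tau>0)>0$. The main obstacle lies precisely in this step: one must track the interplay between the $\mF$-predictable process $D$, the non-$\mF$-stopping time $\tau$, and the strict positivity of $Z$ on $\lb 0,\eta\lb$, and in particular ensure that the support of $dD$ does not lie in the $dt$-null set where the predictable projection vanishes, which is what the $Z_{\eta-}>0$ clause in the definition of $\Lambda$ is designed to guarantee.
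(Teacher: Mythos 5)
The paper states this theorem without proof (it is quoted from Acciaio, Fontana and Kardaras, 2015), so there is no internal argument to compare against; your proposal has to stand on its own. Part (i) is correct and is exactly the intended one-line reduction: $\mathbb{P}(\eta<\infty)=0$ makes the hypothesis of the preceding theorem vacuous. In part (ii), the local-martingale claim via the multiplicative decomposition of $1-\mathbbm{1}_{\llbracket\eta,\infty\llbracket}$ (integration by parts against $N=\mathbbm{1}_{\llbracket\eta,\infty\llbracket}-D$) and the monotonicity claim via $\tau<\eta$ a.s.\ (from $Z_\eta=0$ and $\Delta A_\eta=0$ on $\Lambda$) are both sound. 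The identity $\mathbb{E}[D_\tau]=\mathbb{E}\bigl[\int_0^\infty {}^p(\mathbbm{1}_{\llbracket 0,\tau\rrbracket})_s\,dD_s\bigr]=\mathbb{E}\bigl[\int_0^\infty Z_{s-}\,dD_s\bigr]$ is also the right starting point for the positivity claim.

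The genuine gap is the sentence ``$\mathbb{E}[D_\infty]=\mathbb{P}(\Lambda)>0$ ensures that $dD$ carries strictly positive mass on $\Lambda$.'' It does not: $D$ is built from predictable information and a priori nothing prevents all of the mass of $D_\infty$ from sitting on $\Lambda^c=\{\eta=\infty\}$, where $\llbracket 0,\eta\rrbracket$ is the whole time axis and $Z_-$ does vanish eventually, so your two observations do not yet combine. What you can conclude from $\mathbb{E}\bigl[\int Z_{s-}\,dD_s\bigr]=0$ is that $Z_-=0$ holds $dD\otimes\mathbb{P}$-a.e.; since $Z_{s-}>0$ on $\llbracket 0,\zeta\rrbracket$ on $\{Z_{\zeta-}>0\}\supseteq\Lambda$ and $dD$ is carried by $\llbracket 0,\eta\rrbracket$ with $\eta=\zeta$ there, this forces $D_\infty=D_\eta=0$ a.s.\ on $\{\eta<\infty\}$. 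You then need one more step to turn this into a contradiction, e.g.\ the compensator duality applied to the predictable integrand $D$ itself:
\begin{align*}
\mathbb{E}\bigl[D_\eta\mathbbm{1}_{\{\eta<\infty\}}\bigr]=\mathbb{E}\Bigl[\int_0^\infty D_s\,dH_s\Bigr]=\mathbb{E}\Bigl[\int_0^\infty D_s\,dD_s\Bigr]\geq \tfrac{1}{2}\,\mathbb{E}\bigl[D_\infty^2\bigr],
\end{align*}
so $D_\eta=0$ on $\{\eta<\infty\}$ forces $D_\infty=0$ a.s.\ and hence $\mathbb{P}(\eta<\infty)=\mathbb{E}[D_\infty]=0$, contradicting the hypothesis. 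With this insertion your outline closes; to your credit, you flagged precisely this step as the main obstacle.
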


\par
We now proceed to concrete examples showing how NA1 (or, equivalently, NUPBR) may hold or fail in the progressively enlarged filtration. \par

\begin{example} (Acciaio et al, 2015, p.1769) Consider a random variable $\zeta\sim $Exp(1) on $(\Omega, \mathcal{F}, \mathbb{P})$. Define $\mathbb{F} = (\mathcal{F}_t)_{t \in \mathbb{R}_+}$ to be the smallest filtration that satisfies the usual hypotheses and turns $\zeta$ into a stopping time. Consider the progressive enlargement $\mathbb{G} \coloneqq \mathbb{F} \lor \sigma(\tau)$, where $\tau \coloneqq \zeta / 2$. \textcolor{black}{As usual, let $Z$ be the Azema supermartingale associated with $\t$ and $A$ the dual optional projection of $\1_{\llbracket\t,\infty\llbracket}$}. Note $Z_t = \text{exp}(-t)\mathbbm{1}_{\{t < \zeta\}}$ for all $t \in \mathbb{R}_+$. Moreover, $\Delta A_{\sigma} = \mathbb{P}(\tau = \sigma \vert \mathcal{F}_{\sigma}) = \mathbb{P}(\zeta = 2\sigma \vert \mathcal{F}_{\sigma}) = 0$ for all bounded $\mathbb{F}$-stopping times, hence $\Delta A =0$. Observe that $\zeta  = \text{inf} \{t \in \mathbb{R}_+ \vert Z_{t-} = 0 \: \text{or} \: Z_t = 0\}$ and $Z_{\zeta -} = \text{exp}(-\zeta) > 0$, and, by the construction of $\eta$, we have $\eta = \zeta$. Note that the $\mathbb{F}$-predictable compensator of $\mathbbm{1}_{\llbracket \eta, \infty \llbracket}$ is $D \coloneqq (\eta \land t)_{t \in \mathbb{R}_+}$, and $\eta = \zeta$ is totally inaccessible on $(\Omega, \mathbb{F}, \mathbb{P}).$ Since $\mathbb{P}(\eta < \infty) = 1$, we can apply part (ii) of the above theorem and procure $S \coloneqq \mathcal{E}(-D)^{-1}\mathbbm{1}_{\llbracket0, \eta \llbracket} = \text{exp}(D)\mathbbm{1}_{\llbracket0, \eta \llbracket} = (\text{exp}(t)\mathbbm{1}_{\{t<\zeta\}})_{t \in \mathbb{R}_+}$. Since $S$ is a quiasi-left-continuous nonnegative martingale, it satisfies NA1 in $\mathbb{F}$. However, since $S $ is strictly increasing up to $\tau$, it follows that $S^{\tau}$ does not satisfy NA1 in $\mathbb{G}$. 
\end{example}

\par
\begin{example} (Acciaio et al, 2015, p. 1769) \textcolor{black}{In the notation of the previous example}, consider a random variable $\zeta : \Omega \to \mathbb{N}$, where $p_k \coloneqq \mathbb{P}(\zeta = k) \in (0,1)$ for all $k \in \mathbb{N}$, and $\sum_{k = 1}^{\infty}p_k = 1$. Take $\mathbb{F} = (\mathcal{F}_t)_{t \in \mathbb{R}_+}$ to be the smallest filtration that satisfies the usual hypotheses and turns $\zeta$ into a stopping time. Note that $\zeta$ is accessible since it is $\mathbb{N}$-valued. Consider the progressive enlargement $\mathbb{G} \coloneqq \mathbb{F} \lor \sigma(\tau)$, where $\tau \coloneqq \zeta -1$. Note that $Z_t = 0$ on $\{\zeta \leq t\}$, and 
\begin{align*}
    Z_t = \mathbb{P}(\tau > t \vert \mathcal{F}_t) = \mathbb{P}( \zeta > t + 1 \vert \mathcal {F}_t) = \mathbb{P}( \zeta > \lceil t + 1 \rceil \vert \mathcal {F}_t) = \frac{q_{\lceil t+1 \rceil}}{q_{\lceil t \rceil}} \ \text{on} \ \{t < \zeta\},
\end{align*}
where $q_k \coloneqq \sum_{n = k +1}^{\infty}p_n$ for all $ k \in \{0, 1, 2, ...\}.$ Observe that $\zeta  = \text{inf} \{t \in \mathbb{R}_+ \vert Z_{t-} = 0 \: \text{or} \: Z_t = 0\}$ and $Z_{\zeta -} = \frac{q_{\lceil \zeta \rceil}}{q_{\lceil \zeta - 1 \rceil}} > 0$, $\Delta A_{\zeta} = \mathbb{P}(\tau = \zeta \vert \mathcal{F}_{\zeta}) = 0$. Thus, $\eta = \zeta$, and $\eta$ is accessible on $(\Omega, \mathbb{F}, \mathbb{P})$.
\end{example}

\bigskip

\section{Enlargement of Filtrations and a Change of Measure} 

\subsection{Enlargement of Filtrations under a Change of Measure}

The discussion above leads to the question whether $\mathcal{H}$ or $\mathcal{H}^{\prime}$ continue to hold under a change of the underlying probability measure. We begin from the most intuitively clear-cut case of immersion. The following general result was proved in (Jeulin, Yor, 1978). See also (Coculescu et al, 2012, p. 9) or (Aksamit, Jeanblanc, 2017, p. 55) for a proof. \par

\begin{thm} Assume $\mathbb{F} \hookrightarrow \mathbb{G}$ under $\mathbb{P}$. Let $\mathbb{Q}$ be equivalent to $\mathbb{P}$ on $\mathcal{G}_{\infty}$. For $t \in [0,\infty)$ define $\frac{d\mathbb{Q}}{d\mathbb{P}} \vert_{\mathcal{G}_t} = L_t$ and $\frac{d\mathbb{Q}}{d\mathbb{P}} \vert_{\mathcal{F}_t} = l_t = \mathbb{E}(L_t \vert \mathcal{F}_t).$ \\
(i) if $X$ is an $\mathbb{F}$-local martingale under $\mathbb{Q}$, then the process 
\begin{align*}
    X_t + \int_0^t\frac{L_{s-}}{L_s} \left( \frac{1}{l_{s-}}d[X,l]_s - \frac{1}{L_s}d[X.L]_s\right) = X_t + \int_0^t \frac{1}{Y_{s-}}d[X,Y]_s,
\end{align*}
where $Y = l / L$ is a $\mathbb{G}$-local martingale under $\mathbb{Q}.$ \\
(ii) $\mathbb{F} \hookrightarrow \mathbb{G}$ under $\mathbb{Q}$ if and only if for any $\zeta \in L^1(\mathcal{F}_{\infty})$ we have
\begin{center}
       $ \frac{\mathbb{E}_{\mathbb{P}}(\zeta L_{\infty} \vert \mathcal{G}_t)}{L_t} = \frac{\mathbb{E}_{\mathbb{P}}(\zeta l_{\infty} \vert \mathcal{F}_t)}{l_t}. $
\end{center}
(iii) Assume $L$ is $\mathbb{F}$-adapted, then  $\mathbb{F} \hookrightarrow \mathbb{G}$ under $\mathbb{Q}$. Moreover, if $\mathbb{G}$ is the progressive enlargement with a random time $\tau$, then the associated Azema supermartingales under $\mathbb{P}$ and $\mathbb{Q}$ are equal. 
\end{thm}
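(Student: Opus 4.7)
The plan is to combine the Bayes formula with the immersion hypothesis under $\mP$. For part (i), since $X$ is an $\mF$-$\mQ$-local martingale and $l$ is the $\mF$-density process of $\mQ$ with respect to $\mP$, the product $Xl$ is an $\mF$-$\mP$-local martingale. Because $\mF\hookrightarrow\mG$ under $\mP$, this transfers both $Xl$ and $l$ to $\mG$-$\mP$-local martingales. An integration-by-parts rearrangement of $X=(Xl)/l$ in $\mG$ yields
\begin{align*}
X_t = \int_0^t \frac{1}{l_{s-}}\,d(Xl)_s - \int_0^t \frac{X_{s-}}{l_{s-}}\,dl_s - \int_0^t \frac{1}{l_{s-}}\,d[X,l]_s,
\end{align*}
whose first two terms are $\mG$-$\mP$-local martingale integrals and whose last term records the $\mG$-$\mP$-drift of $X$. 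To return from $\mP$ to $\mQ$ on $\mG$ I would apply Girsanov with density $L$: every $\mG$-$\mP$-local martingale $M$ becomes a $\mG$-$\mQ$-local martingale after subtracting $\int \frac{1}{L_{s-}}\,d[M,L]_s$. Combining the two steps and using $l=YL$ to expand $d[X,l]_s = Y_{s-}\,d[X,L]_s + L_{s-}\,d[X,Y]_s$ up to a pure-jump correction, the identities $Y_{s-}/l_{s-}=1/L_{s-}$ and $L_{s-}/l_{s-}=1/Y_{s-}$ should collapse all $[X,L]$ contributions and leave exactly $X_t + \int_0^t \frac{1}{Y_{s-}}\,d[X,Y]_s$. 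The main obstacle here is the jump bookkeeping, namely ensuring that the triple-jump terms $\Delta X_s\Delta Y_s\Delta L_s$ produced by the two Girsanov applications cancel rather than accumulate.

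For part (ii), immersion under $\mQ$ is equivalent to $\mE_{\mQ}[\zeta\mid\cG_t] = \mE_{\mQ}[\zeta\mid\cF_t]$ for every $\zeta\in L^1(\cF_\infty)$. Applying the Bayes formula rewrites the left-hand side as $\mE_{\mP}[\zeta L_\infty\mid\cG_t]/L_t$ and the right-hand side as $\mE_{\mP}[\zeta l_\infty\mid\cF_t]/l_t$, which yields the stated characterization in both directions with no further work.

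For part (iii), the assumption that $L$ is $\mF$-adapted forces $l_t = \mE_{\mP}[L_t\mid\cF_t] = L_t$, hence $Y\equiv 1$ and $[X,Y]\equiv 0$. The formula in (i) then says that every $\mF$-$\mQ$-local martingale $X$ is itself a $\mG$-$\mQ$-local martingale, which is precisely immersion under $\mQ$. For the Azema supermartingale claim, I would compute
\begin{align*}
Z^{\mQ}_t = \mQ(\tau > t\mid\cF_t) = \frac{\mE_{\mP}[\1_{\{\tau>t\}} L_\infty\mid\cF_t]}{l_t}
\end{align*}
by Bayes, use the tower property with $\{\tau>t\}\in\cG_t$ together with $\mE_{\mP}[L_\infty\mid\cG_t]=L_t$, and then factor the $\cF_t$-measurable $L_t=l_t$ out of the numerator to conclude $Z^{\mQ}_t = \mP(\tau>t\mid\cF_t) = Z^{\mP}_t$.
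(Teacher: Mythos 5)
The paper does not actually prove this theorem; it only cites Jeulin--Yor (1978) and Aksamit--Jeanblanc (2017, p.~55), so your proposal can only be judged on its own merits. Parts (ii) and (iii) are complete and correct: the characterization of immersion via $\mathbb{E}_{\mQ}[\zeta\mid\cG_t]=\mathbb{E}_{\mQ}[\zeta\mid\cF_t]$ for $\zeta\in L^1(\cF_\infty)$ plus two applications of Bayes is exactly what is needed for (ii), and in (iii) the observation $l=L$, $Y\equiv 1$, $[X,Y]\equiv 0$ together with the tower-property computation of $Z^{\mQ}$ is the standard argument.

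For part (i), your strategy (transfer $Xl$ and $l$ to $(\mP,\mG)$ by immersion, decompose $X=(Xl)/l$ under $\mP$, then apply Girsanov with density $L$ to return to $\mQ$) is sound in outline, but the step you yourself flag --- that the $[X,L]$ contributions and the triple-jump terms ``should collapse'' to $\int \frac{1}{Y_{s-}}d[X,Y]_s$ --- is precisely where the work lies, and you have not carried it out. Two finite-variation drifts obtained by different valid routes need only agree up to a purely discontinuous local martingale of finite variation, so the cancellation is not automatic and must be checked jump by jump. There is a cleaner route that avoids this entirely and is the one used in the cited sources: by Bayes, $XY$ and $Y$ are $(\mQ,\mG)$-local martingales, because $XY\cdot L=Xl$ and $Y\cdot L=l$ are $(\mP,\mG)$-local martingales by immersion under $\mP$. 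A single integration by parts applied to $X=(XY)/Y$ under $\mQ$,
\begin{align*}
l_{s-}\,\text{being replaced by }Y_{s-}:\qquad dX_s=\frac{1}{Y_{s-}}\,d(XY)_s-\frac{X_{s-}}{Y_{s-}}\,dY_s-\frac{1}{Y_{s-}}\,d[X,Y]_s,
\end{align*}
then exhibits $X_t+\int_0^t\frac{1}{Y_{s-}}d[X,Y]_s$ directly as a $(\mQ,\mG)$-local martingale, with no Girsanov application and no triple-jump terms to reconcile. I recommend restructuring part (i) along these lines, or else completing the jump bookkeeping explicitly.
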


\par
We now consider a change of probability measure under  initial enlargement. 

\par
\begin{thm} Assume that $\zeta$ satisfies Jacod's equivalence condition and \textcolor{black}{$p$ is the usual associated density process}. Then the process $L \coloneqq \frac{p_0(\zeta)}{p(\zeta)}$ is a positive $\mathbb{F}^{\sigma(\zeta)}$-martingale under $\mathbb{P}$, with $L_0 = 1$ Then there exists a probability measure $\mathbb{P}^*$ on $\mathcal{F}_{\infty}^{\sigma(\zeta)}$ such that $\mathbb{P}^* \sim \mathbb{P}$ on $\mathcal{F}_t^{\sigma(\zeta)}$ for any $t \geq 0$ and 
\begin{align*}
    d\mathbb{P}^* \vert_{\mathcal{F}_t^{\sigma(\zeta)}} = L_t d\mathbb{P} \vert_{\mathcal{F}_t^{\sigma(\zeta)}}, \quad \mathbb{P}^* \vert_{\mathcal{F}_t} = \mathbb{P} \vert_{\mathcal{F}_t}, \quad \mathbb{P}^* \vert_{\sigma(\zeta)} = \mathbb{P} \vert_{\sigma(\zeta)}.
\end{align*}
Moreover, under $\mathbb{P}^*$, the $\sigma$-algebras $\sigma(\zeta)$ and $\mathcal{F}_{\infty}$ are conditionally independent given $\mathcal{F}_0$. 
\end{thm}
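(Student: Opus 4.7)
The plan is to proceed in four stages, all driven by the defining identity $\mE[\varphi(\zeta)\mid\cF_t]=\int \varphi(x)\,p_t(\omega,x)\,\eta(dx)$ from Jacod's lemma, together with the observation that under the equivalence form of $\mathcal{J}$ the density is strictly positive along the ``diagonal'': $\mP(p_t(\zeta)=0)=\mE\bigl[\int \1_{\{p_t(x)=0\}}\,p_t(x)\,\eta(dx)\bigr]=0$, so that $L_t\in(0,\infty)$ a.s.\ with $L_0=p_0(\zeta)/p_0(\zeta)=1$.

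\emph{Martingale property of $L$.} By monotone class, $\cF_t^{\s(\zeta)}$ is generated by products $G_t f(\zeta)$ with $G_t\in\cF_t$ and $f$ bounded Borel. Integrating out $\zeta$ against its conditional density gives
\begin{equation*}
\mE[L_t f(\zeta) G_t]=\mE\Bigl[G_t\!\int\! f(x)\,\tfrac{p_0(x)}{p_t(x)}\,p_t(x)\,\eta(dx)\Bigr]=\mE\Bigl[G_t\!\int\! f(x)\,p_0(x)\,\eta(dx)\Bigr],
\end{equation*}
which is manifestly independent of $t$. Hence $\mE[L_t f(\zeta) G_s]=\mE[L_s f(\zeta) G_s]$ for every $G_s\in\cF_s\subset\cF_t$, which is the $\mF^{\s(\zeta)}$-martingale property.

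\emph{Construction of $\mP^*$ and the two restrictions.} Setting $d\mP^*_t\coloneqq L_t\,d\mP$ on $\cF_t^{\s(\zeta)}$ produces a family of probabilities that is consistent under restriction (by the martingale property) and equivalent to $\mP$ on each $\cF_t^{\s(\zeta)}$ (by positivity); on a standard Borel base space it extends to a single $\mP^*$ on $\cF_\infty^{\s(\zeta)}$ via a Kolmogorov-type extension. The identity $\mP^*|_{\s(\zeta)}=\mP|_{\s(\zeta)}$ is immediate from $L_0=1$, while the same computation as above with $f\equiv 1$, together with the fact that $p_0(\omega,x)\,\eta(dx)=Q_0(\omega,dx)$ is a probability measure in $x$ for a.e.\ $\omega$, yields $\mE^*[\1_{G_t}]=\mE[\1_{G_t}]$, hence $\mP^*|_{\cF_t}=\mP|_{\cF_t}$.

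\emph{Conditional independence under $\mP^*$.} By monotone class it suffices to verify $\mE^*[h f(\zeta)\mid\cF_0]=\mE^*[h\mid\cF_0]\,\mE^*[f(\zeta)\mid\cF_0]$ for bounded $h\in\cF_t$, bounded Borel $f$, and arbitrary $t\geq 0$. Testing against $G_0\in\cF_0$ and repeating the master computation yields
\begin{equation*}
\mE^*[G_0 h f(\zeta)]=\mE[L_t G_0 h f(\zeta)]=\mE\Bigl[G_0 h\!\int\! f(x)\,p_0(x)\,\eta(dx)\Bigr]=\mE\bigl[G_0 h\,\mE[f(\zeta)\mid\cF_0]\bigr].
\end{equation*}
Since $\mP^*$ coincides with $\mP$ on $\cF_t$, and since $\mE[f(\zeta)\mid\cF_0]$ is also a version of $\mE^*[f(\zeta)\mid\cF_0]$ (because $\mP^*=\mP$ on $\cF_0$), pulling out this $\cF_0$-measurable factor and applying the tower rule under $\mP^*$ rewrites the last expression as $\mE^*\bigl[G_0\,\mE^*[h\mid\cF_0]\,\mE^*[f(\zeta)\mid\cF_0]\bigr]$, which by arbitrariness of $G_0$ gives the claim $\s(\zeta)\perp\!\!\!\perp_{\cF_0}\cF_\infty$ under $\mP^*$. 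The only genuinely subtle step in the whole proof is the Kolmogorov-type extension used to assemble the consistent family $(\mP^*_t)$ into a single $\mP^*$; every other step is a one-line computation with the density process.
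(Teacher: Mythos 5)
The paper states this theorem without proof (it is the classical decoupling result of F\"ollmer--Imkeller/Grorud--Pontier/Amendinger), but the computation you give is exactly the one the paper itself carries out in Section 4 for Amendinger's martingale-preserving measure $\mathbb{Q}^{\mathbb{G}}=\frac{Z^{\mathbb{F}}_T}{p^G_T}\,d\mathbb{P}$: your ``master identity'' $\mathbb{E}[L_t f(\zeta)G_t]=\mathbb{E}\bigl[G_t\int f(x)p_0(x)\eta(dx)\bigr]$ is the same key equality used there, specialized to $Z^{\mathbb{F}}\equiv 1$ and generalized to nontrivial $\mathcal{F}_0$, so your route coincides with the paper's treatment of the analogous statement and all the individual steps are correct. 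The one point you rightly flag but do not resolve is the passage from the consistent family $(L_t\,d\mathbb{P})_{t\ge 0}$ to a single measure on $\mathcal{F}^{\sigma(\zeta)}_{\infty}$: on a finite horizon (the setting of the paper's Section 4 proof) this is immediate since $L_T$ closes the martingale, whereas on $[0,\infty)$ it genuinely requires either uniform integrability of $L$ or a F\"ollmer/Parthasarathy-type extension theorem under topological assumptions on $(\Omega,\mathcal{F})$, and a complete proof should state which of these hypotheses is being invoked.
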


\par
This last statement has an obvious implication that is the point of our interest in the result. \par

\begin{cor} $\mathbb{F} \hookrightarrow \mathbb{\mathbb{F}^{\sigma(\zeta)}}$ under $\mathbb{P}^*$. 
\end{cor}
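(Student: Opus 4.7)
The plan is to invoke Theorem 1 (the conditional-independence characterization of immersion under initial enlargement) with $\mathbb{P}$ replaced by $\mathbb{P}^*$. That theorem states that $\mathbb{F} \hookrightarrow \mathbb{F} \lor \sigma(\zeta)$ holds under a given probability measure if and only if $\sigma(\zeta) \perp\!\!\!\perp_{\mathcal{F}_0} \mathcal{F}_\infty$ under that same measure. The preceding theorem asserts precisely this conditional independence under the constructed measure $\mathbb{P}^*$, so the hypothesis of Theorem 1 is in hand.

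The main technical point to verify is that Theorem 1 is genuinely available in the new measure-theoretic setting: since Theorem 1 is proved for an arbitrary probability space satisfying the usual hypotheses, and since $\mathbb{P}^* \sim \mathbb{P}$ on $\mathcal{F}^{\sigma(\zeta)}_t$ for every $t$, the usual hypotheses transfer from $\mathbb{P}$ to $\mathbb{P}^*$ without issue, so the theorem applies directly on $(\Omega, \mathcal{F}^{\sigma(\zeta)}_\infty, \mathbb{F}, \mathbb{P}^*)$. Applying it yields that every $\mathbb{F}$-martingale under $\mathbb{P}^*$ is an $(\mathbb{F} \lor \sigma(\zeta))$-martingale under $\mathbb{P}^*$.

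The only remaining subtlety is that $\mathbb{F}^{\sigma(\zeta)}$ is the right-continuous augmentation of $\mathbb{F} \lor \sigma(\zeta)$, not the raw filtration itself. This is harmless: immersion passes to the right-continuous augmentation because for a cadlag uniformly integrable $(\mathbb{F} \lor \sigma(\zeta))$-martingale $M$, the conditional expectations $\mathbb{E}^*[M_s \mid \bigcap_{u>t}(\mathcal{F}_u \lor \sigma(\zeta))]$ coincide with $\mathbb{E}^*[M_s \mid \mathcal{F}_t \lor \sigma(\zeta)]$ for $s > t$, which is the standard fact that cadlag martingales remain martingales in the right-continuous enlargement. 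I do not anticipate any real obstacle here, since all the hard work—the construction of $\mathbb{P}^*$ and the verification of the conditional independence under it—is carried out in the preceding theorem; the corollary is then essentially a formal restatement of Theorem 1 in the new measure.
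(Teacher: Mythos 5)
Your proposal is correct and follows exactly the route the paper intends: the preceding theorem supplies the conditional independence $\sigma(\zeta)\perp\!\!\!\perp_{\mathcal{F}_0}\mathcal{F}_\infty$ under $\mathbb{P}^*$, and Theorem 1 (applied under $\mathbb{P}^*$) converts this into immersion. The paper treats the corollary as immediate and gives no written proof, so your additional care about transferring the usual hypotheses and passing to the right-continuous augmentation is a harmless refinement of the same argument rather than a different approach.
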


\par
We now consider a change of probability measure under  progressive enlargement. The proof of the following stability result can be found in, e.g., (Aksamit, Jeanblanc, 2017, p.62). Assume $\mathbb{F} \hookrightarrow \mathbb{G}$ under $\mathbb{P}$ and define an equivalent probability measure $\mathbb{Q}$ by $\frac{d\mathbb{Q}}{d\mathbb{P}} = \mathcal{E}(\varphi \bullet Y)_{\infty}\mathcal{E}(\gamma \bullet M)_{\infty}$, where $Y$ is an $\mathbb{F}$-martingale under $\mathbb{P}$, $\varphi $ is an $\mathbb{F}$-predictable process such that $\mathcal{E}(\varphi \bullet Y)$ is a uniformly integrable $\mathbb{G}$-martingale, and $\gamma $ is an $\mathbb{G}$-predictable process such that $\mathcal{E}(\gamma \bullet M)$ is a uniformly integrable $\mathbb{G}$-martingale. Finally, assume that $L = \frac{d\mathbb{Q}}{d\mathbb{P}} = \mathcal{E}(\varphi \bullet Y)_{\infty}\mathcal{E}(\gamma \bullet M)_{\infty}$ is a positive uniformly integrable $\mathbb{G}$-martingale. \par

\begin{thm} Assume $\mathbb{F} \hookrightarrow \mathbb{G}$ under $\mathbb{P}$. Let the equivalent probability measure $\mathbb{Q}$ be defined by  $\frac{d\mathbb{Q}}{d\mathbb{P}} = \mathcal{E}(\varphi \bullet Y)_{\infty}\mathcal{E}(\gamma \bullet M)_{\infty}$. Then, under assumption (C) or (A), $\mathbb{F} \hookrightarrow \mathbb{G}$ under $\mathbb{Q}$.
\end{thm}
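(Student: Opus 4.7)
The plan is to factor the density $L=\mathcal{E}(\varphi\bullet Y)_{\infty}\mathcal{E}(\gamma\bullet M)_{\infty}$ as $L=l'\,Z$ with $l':=\mathcal{E}(\varphi\bullet Y)$ and $Z:=\mathcal{E}(\gamma\bullet M)$, and to pass from $\mathbb{P}$ to $\mathbb{Q}$ in two successive steps. Since $\varphi$ is $\mathbb{F}$-predictable and $Y$ is an $\mathbb{F}$-martingale, $l'$ is $\mathbb{F}$-adapted and, by immersion under $\mathbb{P}$, also a $\mathbb{G}$-martingale. I introduce the intermediate measure $\mathbb{Q}_1$ via $d\mathbb{Q}_1/d\mathbb{P}=l'_{\infty}$; part (iii) of the first theorem of this subsection then immediately yields $\mathbb{F}\hookrightarrow\mathbb{G}$ under $\mathbb{Q}_1$, because the change-of-measure density is $\mathbb{F}$-adapted.

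The main obstacle—and the place where (A) or (C) enters—is the orthogonality claim that $[\xi,M]=0$ for every $\mathbb{F}$-martingale $\xi$, whence $[\xi,Z]=\int Z_{-}\gamma\,d[\xi,M]=0$. Under (C), $\xi$ is continuous, while $M=\mathbbm{1}_{\llbracket\tau,\infty\llbracket}-A^{p,\mathbb{G}}$ has no continuous martingale part and shares no common jumps with $\xi$, so the covariation vanishes. Under (A), $\tau$ is totally inaccessible, so $A^{p,\mathbb{G}}$ is continuous and $[\xi,M]$ collapses to $\Delta\xi_{\tau}\mathbbm{1}_{\llbracket\tau,\infty\llbracket}$, which is a.s. zero because the jumps of $\xi$ lie on $\mathbb{F}$-stopping times that $\tau$ avoids. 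Applied with $\xi=l'$, this gives $[Z,l']=0$, so by Girsanov $Z$ is also a $\mathbb{G}$-martingale under $\mathbb{Q}_1$.

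With the orthogonality in hand, I first verify that the $\mathbb{F}$-density of $\mathbb{Q}$ relative to $\mathbb{Q}_1$ is identically one: for any $\eta\in L^{\infty}(\mathcal{F}_{\infty})$, setting $\xi_t:=\mathbb{E}_{\mathbb{Q}_1}[\eta\,|\,\mathcal{F}_t]$, integration by parts combined with $[\xi,Z]=0$ shows $\xi Z$ is a $\mathbb{Q}_1$-$\mathbb{G}$-local martingale, so that taking $\mathcal{F}_t$-conditional expectations yields $\mathbb{E}_{\mathbb{Q}_1}[\xi_t(\mathbb{E}_{\mathbb{Q}_1}[Z_t\,|\,\mathcal{F}_t]-1)]=0$ for arbitrary bounded $\mathcal{F}_t$-measurable $\xi_t$, forcing $\mathbb{E}_{\mathbb{Q}_1}[Z_t\,|\,\mathcal{F}_t]=1$. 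Consequently $\mathbb{Q}$ and $\mathbb{Q}_1$ agree on $\mathcal{F}_{\infty}$, so every $\mathbb{Q}$-$\mathbb{F}$-martingale $\xi$ is a $\mathbb{Q}_1$-$\mathbb{F}$-martingale, hence a $\mathbb{Q}_1$-$\mathbb{G}$-martingale by the first step. A final Girsanov application with density $Z$ together with the orthogonality $[\xi,Z]=0$ makes the drift correction vanish, so $\xi$ remains a $\mathbb{G}$-martingale under $\mathbb{Q}$, establishing $\mathbb{F}\hookrightarrow\mathbb{G}$ under $\mathbb{Q}$.
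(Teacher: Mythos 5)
The paper does not actually prove this statement; it defers to (Aksamit, Jeanblanc, 2017, p.\ 62), so there is no in-text argument to compare against line by line. Your proposal is, however, a correct and essentially self-contained rendition of the standard argument. The two load-bearing ingredients are exactly the right ones: (a) the orthogonality $[\xi,M]=0$ for every c\`adl\`ag $\mathbb{F}$-martingale $\xi$, which you justify correctly in both regimes --- under (C) because $M$ is of finite variation (so $M^c=0$) and $\xi$ has no jumps, and under (A) because the compensator of $A$ is then continuous and $\Delta\xi_\tau=0$ a.s.\ since the jump times of $\xi$ are exhausted by $\mathbb{F}$-stopping times that $\tau$ avoids; and (b) the consequence that $\mathbb{E}_{\mathbb{Q}_1}[\mathcal{E}(\gamma\bullet M)_t\,\vert\,\mathcal{F}_t]=1$, i.e.\ the $\mathbb{F}$-density of $\mathbb{Q}$ is just $\mathcal{E}(\varphi\bullet Y)$. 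The reference organizes this by verifying criterion (ii) of the Jeulin--Yor theorem directly (computing $l_t=\mathbb{E}_{\mathbb{P}}(L_\infty\vert\mathcal{F}_t)$ and matching the two conditional-expectation quotients), whereas you factor the change of measure into two successive steps --- first the $\mathbb{F}$-adapted factor, handled by part (iii) of that theorem, then the $M$-driven factor, handled by Girsanov with vanishing bracket. The two routes are mathematically equivalent; yours has the pedagogical advantage of isolating where (C)/(A) is actually used. Two small points worth tightening: in the identification $\mathbb{E}_{\mathbb{Q}_1}[Z_t\vert\mathcal{F}_t]=1$ you should argue that the $\mathbb{Q}_1$-$\mathbb{G}$-local martingale $\xi Z$ is a true martingale (it is dominated by a constant multiple of the u.i.\ martingale $Z$, so a Vitali/uniform-integrability argument closes this), and in the final step Girsanov only delivers a $\mathbb{G}$-\emph{local} martingale, so one should reduce to bounded $\mathbb{F}$-martingales (for which local martingale implies martingale) to conclude immersion.
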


\par
More can be said if we assume that both assumptions (C) and (A) are satisfied. We follow (Kreher, 2016). In the following, $\sigma$ will denote a random time, the enlarged filtration is taken to be defined by $\mathcal{G}_t \coloneqq \cap_{s>t} \left( \mathcal{F}_s \lor \sigma(\mathbbm{1}_{\{\sigma > r\}}; r \leq s) \right)$, $\rho$ is a non-negative $\mathcal{F}_{\infty}$-measurable random variable with $\mathbb{E}(\rho)=1$, $d\mathbb{Q} = \rho d\mathbb{P}$, $\rho_t \coloneqq \mathbb{E}_{\mathbb{P}}(\rho \vert \mathcal{F}_t)$, $\widetilde{\rho}_t \coloneqq \mathbb{E}_{\mathbb{P}}(\rho \vert \mathcal{G}_t)$, $h_t \coloneqq \mathbb{E}_{\mathbb{P}}(\rho \mathbbm{1}_{\{\sigma > t\}} \vert \mathcal{F}_t)$ for $t \geq 0$. Thus $h_t$ is an $\mathbb{F}$-supermartingale under $\mathbb{P}$, and, by the well-known Bayes formula, $h_t = \rho_t \cdot \mathbb{Q}(\sigma > t \vert \mathcal{F}_t) \eqqcolon \rho_t Z_t^{\mathbb{Q}}$, hence $h_t \to 0$ a.s. as $t \to \infty$. For a strictly positive $h$ we also define its stochastic logarithm $\mu$ via $h = \mathcal{E}(\mu)$. $\mu$ is also an $\mathbb{F}$-supermartingale under $\mathbb{P}$ with the Doob-Meyer decomposition $\mu = \mu^L - \mu^F$, where $\mu^L \in \mathcal{M}_{loc}(\mathbb{P}, \mathbb{F})$, and $\mu^F$ is an increasing finite variation process. $h, \mu, \mu^L$ and $\mu^F$ are all continuous. The following is a well-known result obtained in (Mortimer, Williams, 1991). \par

\begin{thm} Assume $h$ is strictly positive. Let $U= (U_t)_{t \geq0}$ be an $\mathbb{F}$-local martingale under $\mathbb{P}$. Then the process $\left(\mathbbm{1}_{\{\sigma > t\}}V_t\text{exp}(\mu^F_t)\right)_{t \geq 0}$ is a $\mathbb{G}$-local martingale under $\mathbb{Q}$, where $V \coloneqq U - \langle U, \mu \rangle$. Moreover, the process $\left( \mu^F_{t \land \sigma} \right)_{t \geq 0}$ is the $(\mathbb{Q}, \mathbb{G})$-dual predictable projection of $\left( \mathbbm{1}_{\{\sigma \leq t\}} \right)_{t \geq 0}$. 
\end{thm}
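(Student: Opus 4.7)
The plan is to reduce the $(\mathbb{Q},\mathbb{G})$-statement to a $(\mathbb{P},\mathbb{F})$-statement via two successive changes of perspective, and then verify the $(\mathbb{P},\mathbb{F})$-statement by an integration-by-parts calculation that relies on the continuity afforded by assumption (C). Concretely, first I would apply the standard progressive-enlargement conditional-expectation formula
\begin{align*}
\mathbb{E}_{\mathbb{Q}}[\mathbbm{1}_{\{\sigma>t\}}K_t\mid\mathcal{G}_s] \;=\; \mathbbm{1}_{\{\sigma>s\}}\,\frac{\mathbb{E}_{\mathbb{Q}}[K_t Z_t^{\mathbb{Q}}\mid \mathcal{F}_s]}{Z_s^{\mathbb{Q}}}\quad\text{on }\{\sigma>s\},
\end{align*}
where $Z_t^{\mathbb{Q}} = \mathbb{Q}(\sigma>t\mid\mathcal{F}_t)$, to reduce the $\mathbb{G}$-local-martingale property of $\mathbbm{1}_{\{\sigma>t\}}V_t e^{\mu^F_t}$ to the $(\mathbb{Q},\mathbb{F})$-local-martingale property of $K_t Z_t^{\mathbb{Q}}$ with $K_t := V_t e^{\mu^F_t}$.

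Next, I would use Bayes' rule together with the identity $h_t = \rho_t Z_t^{\mathbb{Q}}$ to rewrite this as the statement that $V_t e^{\mu^F_t} h_t$ is a $(\mathbb{P},\mathbb{F})$-local martingale. The key simplification is supplied by assumption (C): all $\mathbb{F}$-martingales are continuous, so $\mu^L$ is continuous and $\mu^F$ is the continuous finite-variation part of $\mu$; hence the stochastic exponential collapses to $\mathcal{E}(\mu) = \exp(\mu - \tfrac12\langle\mu^L\rangle)$ and one obtains the crucial algebraic identity
\begin{align*}
h_t\, e^{\mu^F_t} \;=\; \exp\bigl(\mu^L_t - \tfrac12\langle\mu^L\rangle_t\bigr) \;=\; \mathcal{E}(\mu^L)_t.
\end{align*}
Thus it suffices to show that $V\,\mathcal{E}(\mu^L)$ is a $(\mathbb{P},\mathbb{F})$-local martingale. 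Since under (C) the process $V = U - \langle U,\mu\rangle = U - \langle U,\mu^L\rangle$ is continuous and $\langle V,\mu^L\rangle = \langle U,\mu^L\rangle$, the product rule yields $d(V\mathcal{E}(\mu^L)) = V\mathcal{E}(\mu^L)\,d\mu^L + \mathcal{E}(\mu^L)\,dU$, a sum of local-martingale differentials, which closes the first assertion.

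For the second assertion I would specialise the first to $U\equiv 1$ (whence $V\equiv 1$), yielding that $X_t := \mathbbm{1}_{\{\sigma>t\}} e^{\mu^F_t}$ is a $(\mathbb{Q},\mathbb{G})$-local martingale. Because $\mu^F$ is continuous, the only jump of $X$ is at $\sigma$, of size $-e^{\mu^F_\sigma}$, so an Ito expansion gives
\begin{align*}
dX_t \;=\; e^{\mu^F_{t\wedge\sigma}}\bigl[d\mu^F_{t\wedge\sigma} \;-\; d\mathbbm{1}_{\{\sigma\leq t\}}\bigr].
\end{align*}
Integrating the $\mathbb{G}$-predictable, locally bounded integrand $e^{-\mu^F_{\cdot\wedge\sigma}}$ against $X$ then exhibits $\mathbbm{1}_{\{\sigma\leq t\}} - \mu^F_{t\wedge\sigma}$ as a $(\mathbb{Q},\mathbb{G})$-local martingale, and since $\mu^F_{\cdot\wedge\sigma}$ is $\mathbb{G}$-predictable (continuous and $\mathbb{G}$-adapted) and increasing, it is the required dual predictable projection.

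The main obstacle I anticipate is the algebraic identity $h\, e^{\mu^F} = \mathcal{E}(\mu^L)$: it is the linchpin that makes the two-step reduction line up cleanly, and it explicitly requires both the continuity of $\mu^F$ (so its cross-variation with $\mu^L$ vanishes and the Doleans-Dade formula has no jump correction) and the precise form of the Doob--Meyer split of $\mu$. A secondary nuisance is the bookkeeping at the jump time $\sigma$ in the last step, where one must ensure that the integrand $e^{-\mu^F_{\cdot\wedge\sigma}}$ is indeed $\mathbb{G}$-predictable and that the jump of $X$ lines up exactly with the jump of $\mathbbm{1}_{\{\sigma\leq\cdot\}}$ after multiplication by this integrand.
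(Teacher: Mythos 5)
The paper states this theorem without proof, citing Mortimer--Williams (1991) and Kreher (2016), so there is no in-paper argument to compare against; your proposal is, however, essentially the standard route taken in those references, and it is correct. The three pillars all hold in the stated setting: the reduction of the $(\mathbb{Q},\mathbb{G})$-local-martingale property of $\mathbbm{1}_{\{\sigma>t\}}K_t$ to that of $K_tZ_t^{\mathbb{Q}}$ in $(\mathbb{Q},\mathbb{F})$ is the standard progressive-enlargement lemma (valid here since $h>0$ forces $Z^{\mathbb{Q}}>0$); Bayes converts this to $K_th_t$ being a $(\mathbb{P},\mathbb{F})$-local martingale; and since (C) and (A) are in force in this subsection, $\mu$, $\mu^L$, $\mu^F$ are continuous, so $\langle U,\mu\rangle=\langle U,\mu^L\rangle$, $h\,e^{\mu^F}=h_0\,\mathcal{E}(\mu^L)$, and your integration-by-parts computation $d(V\mathcal{E}(\mu^L))=V\mathcal{E}(\mu^L)\,d\mu^L+\mathcal{E}(\mu^L)\,dU$ closes the first claim. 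The derivation of the compensator by specialising to $U\equiv 1$ and integrating $e^{-\mu^F_{\cdot\wedge\sigma}}$ against the resulting local martingale is also correct.

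Two small points worth tightening. First, $h=\mathcal{E}(\mu)$ only after normalising by $h_0$ (which is $\mathcal{F}_0$-measurable and strictly positive, hence harmless for the local-martingale property, but the identity $h_te^{\mu^F_t}=\mathcal{E}(\mu^L)_t$ as written presumes $h_0=1$). Second, the equivalence in your first display is stated at the level of conditional expectations; passing to \emph{local} martingales requires the usual remark that an $\mathbb{F}$-localizing sequence for $KZ^{\mathbb{Q}}$ serves as a $\mathbb{G}$-localizing sequence for $\mathbbm{1}_{\{\sigma>\cdot\}}K$. Both are routine and do not affect the validity of the argument.
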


\par
\begin{cor} Assume $h$ is strictly positive. If $U \in \mathcal{M}_{loc}(\mathbb{P}, \mathbb{F})$, then $V_{t \land \sigma} = U_{t \land \sigma} - \langle U, \mu \rangle_{t \land \sigma} \in \mathcal{M}_{loc}(\mathbb{Q}, \mathbb{G})$. 
\end{cor}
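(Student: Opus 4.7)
\par
The plan is to derive this as a direct consequence of the preceding theorem together with It\^o's integration by parts. By that theorem, $W_t \coloneqq \mathbbm{1}_{\{\sigma > t\}} V_t \exp(\mu^F_t)$ is a $(\mathbb{Q},\mathbb{G})$-local martingale, and the process $\mu^F_{\cdot\wedge\sigma}$ is the $(\mathbb{Q},\mathbb{G})$-dual predictable projection of $I_t \coloneqq \mathbbm{1}_{\{\sigma\leq t\}}$, so that $M^I \coloneqq I - \mu^F_{\cdot\wedge\sigma}$ is also a $(\mathbb{Q},\mathbb{G})$-local martingale. The key identity I will exploit is
\[
V_{t\wedge\sigma} \;=\; \mathbbm{1}_{\{\sigma > t\}} V_t \;+\; V_\sigma\,\mathbbm{1}_{\{\sigma\leq t\}} \;=\; W_t\,\exp(-\mu^F_t) \;+\; V_\sigma\, I_t,
\]
which expresses $V_{\cdot\wedge\sigma}$ as a sum whose $(\mathbb{Q},\mathbb{G})$-canonical decomposition I can write down explicitly.

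\par
For the first summand, I apply integration by parts to the product $W_t\,G_t$ with $G_t \coloneqq \exp(-\mu^F_t)$. Since $G$ is continuous and of finite variation, $[W,G]\equiv 0$, and using $W_{s-}\,G_s = \mathbbm{1}_{\{\sigma\geq s\}}\,V_{s-}$ (by continuity of $\mu^F$), I obtain
\[
W_t\, G_t \;=\; \int_0^t G_s\, dW_s \;-\; \int_0^{t\wedge\sigma} V_{s-}\, d\mu^F_s,
\]
where the first term is a $(\mathbb{Q},\mathbb{G})$-local martingale (predictable bounded integrand $G$ against the LM $W$). For the second summand, continuity of $V = U - \langle U,\mu\rangle$---which follows under assumption (C) together with continuity of $\mu$, giving $\langle U,\mu\rangle = \langle U,\mu^L\rangle$ continuous---implies $V_\sigma = V_{\sigma-}$ and hence $V_\sigma I_t = (V_-\bullet I)_t$. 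Writing $I = M^I + \mu^F_{\cdot\wedge\sigma}$ then gives
\[
V_\sigma\, I_t \;=\; (V_-\bullet M^I)_t \;+\; \int_0^{t\wedge\sigma} V_{s-}\, d\mu^F_s,
\]
in which $(V_-\bullet M^I)$ is a $(\mathbb{Q},\mathbb{G})$-local martingale.

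\par
Summing the two contributions, the drift terms $\pm\int_0^{t\wedge\sigma} V_{s-}\,d\mu^F_s$ cancel exactly, leaving
\[
V_{t\wedge\sigma} \;=\; \int_0^t G_s\, dW_s \;+\; (V_-\bullet M^I)_t,
\]
which is a sum of two $(\mathbb{Q},\mathbb{G})$-local martingales, as desired.

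\par
The main technical point is ensuring $V$ is continuous at $\sigma$, so that the predictable integrand $V_-$ faithfully reproduces the jump $V_\sigma\,\Delta I_\sigma$; under assumption (C) this is automatic. If $U$ were allowed to jump, an additional term $\Delta V_\sigma\,\mathbbm{1}_{\{\sigma\leq t\}}$ would appear, but the total inaccessibility of $\sigma$ under $(\mathbb{Q},\mathbb{G})$---which is forced by the continuity of its compensator $\mu^F_{\cdot\wedge\sigma}$---provides the mechanism to compensate it. Routine localisation at a sequence of stopping times reducing both $W$ and $V_-\bullet M^I$ ensures the integrals above are well defined as local martingales, completing the argument.
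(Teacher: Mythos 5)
Your proof is correct and coincides with the standard derivation of this corollary from the Mortimer--Williams theorem: the paper states the corollary without proof, deferring to the cited references, and the argument there is precisely your decomposition $V_{\cdot\wedge\sigma}=WG+V_{\sigma}I$ with $G=\exp(-\mu^F)$, followed by integration by parts against the continuous finite-variation factor $G$ and cancellation of the two drift integrals $\pm\int_0^{\cdot\wedge\sigma}V_{s-}\,d\mu^F_s$, all steps being justified by assumption (C) and the stated continuity of $\mu^F$. The only minor slip is bookkeeping: the integration-by-parts identity should retain the initial term $W_0G_0=\mathbbm{1}_{\{\sigma>0\}}V_0$, which, being $\mathcal{G}_0$-measurable and constant in time, is harmless for the local martingale property.
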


\par
Note that if we set $\rho \equiv 1$, we get the familiar enlargement formula considered above, i.e. for any $M \in \mathcal{M}_{loc}(\mathbb{P}, \mathbb{F})$ we have
\begin{align*}
    M_{t \land \sigma} - \int_0^{t \land \sigma} \frac{d\langle M, Z^{\mathbb{P}} \rangle_s}{Z^{\mathbb{P}}} =  M_{t \land \sigma} - \int_0^{t \land \sigma} \frac{d\langle M, m^{\mathbb{P}} \rangle_s}{Z^{\mathbb{P}}} \in \mathcal{M}_{loc}(\mathbb{P}, \mathbb{G}),
\end{align*}
\textcolor{black}{where, as before, $Z$ is the Azema supermartingale associated wtih $\s$ and $m$ is its local martingale part in the Doob-Meyer decomposition}.
\par
The positivity of $h$ can be relaxed. The proof can be found in (Kreher, 2016, p. 4). \par

\begin{thm} Let $U= (U_t)_{t \geq0}$ be an $\mathbb{F}$-local martingale under $\mathbb{P}$. Then the processes $X = \left(\mathbbm{1}_{\{\sigma > t\}}V_t\text{exp}(\mu^F_t)\right)_{t \geq 0}$ and $\left( V_{t \land \sigma} \right)_{t \geq 0}$ are $\mathbb{G}$-local martingales under $\mathbb{Q}$, where $V_{t \land \sigma} = U_{t \land \sigma} - \langle U, \mu \rangle_{t \land \sigma}$.
\end{thm}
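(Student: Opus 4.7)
The plan is to drop the strict positivity hypothesis on $h$ from the preceding theorem via a localization argument. Under assumption (C), the density supermartingale $h_t = \mathbb{E}_{\mathbb{P}}[\rho\mathbbm{1}_{\{\sigma>t\}} \mid \mathcal{F}_t]$ is continuous and non-negative, hence stays at $0$ once it reaches $0$. Set $T := \inf\{t \geq 0 : h_t = 0\}$, so that $h > 0$ on $\llbracket 0, T \llbracket$. Furthermore, on $\{h_t = 0\}$ the conditional expectation $\mathbb{E}_{\mathbb{P}}[\rho\mathbbm{1}_{\{\sigma>t\}} \mid \mathcal{F}_t]$ vanishes, forcing the non-negative random variable $\rho\mathbbm{1}_{\{\sigma>t\}}$ to be zero $\mathbb{P}$-a.s.; thus $\{\sigma > t\} \subset \{h_t > 0\}$ up to $\mathbb{P}$-null sets and $\sigma \leq T$ holds $\mathbb{Q}$-a.s. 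In particular, $X$ vanishes identically on $\llbracket T, \infty \llbracket$ while $V^{\sigma}$ is constant there, so it is enough to establish the local martingale property on the stochastic interval $\llbracket 0, T \llbracket$.

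I would then introduce the $\mathbb{F}$-stopping times $T_n := \inf\{t \geq 0 : h_t \leq 1/n\}$, which increase to $T$ and, by continuity of $h$, satisfy $h \geq 1/n$ on $\llbracket 0, T_n \rrbracket$. On each such interval the preceding theorem applies, either by directly adapting its proof (whose arguments go through whenever $h$ is bounded below by a positive constant on the interval under consideration), or by passing to an auxiliary setup in which the density is strictly positive globally --- for instance, perturbing $\rho$ by a small $\mathcal{F}_{\infty}$-measurable positive contribution concentrated on $\{\sigma > T_n\}$ and renormalizing. In the latter case, by uniqueness of the Doob--Meyer decomposition, the processes $\mu^{(n),L}$, $\mu^{(n),F}$, and $V^{(n)} = U - \langle U, \mu^{(n)} \rangle$ in the auxiliary setup coincide with $\mu^L$, $\mu^F$, and $V$ on $\llbracket 0, T_n \rrbracket$. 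Either route shows that $X^{T_n}$ and $V^{\sigma \wedge T_n}$ are $(\mathbb{Q}, \mathbb{G})$-local martingales for every $n$.

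Finally, I would pass to the limit via a standard localization argument. Since $X \equiv 0$ and $V^{\sigma}$ is constant on $\llbracket T, \infty \llbracket$, and both processes are cadlag (hence locally bounded on compact intervals), a diagonal procedure combining $T_n$ with suitable blow-up times of $X$ and $V^{\sigma}$ produces an increasing sequence of $\mathbb{G}$-stopping times $\tau_n \uparrow \infty$ $\mathbb{Q}$-a.s. such that $X^{\tau_n}$ and $V^{\sigma \wedge \tau_n}$ are true $(\mathbb{Q}, \mathbb{G})$-martingales, yielding the claim. The main obstacle is the careful transfer of the preceding theorem to the localized setting: one must verify that the canonical decomposition processes $\mu^L$, $\mu^F$, and $V$ are identified consistently on $\llbracket 0, T_n \rrbracket$ independently of their extension past $T_n$, which relies on the continuity of $h$ under assumption (C) and the local uniqueness of the Doob--Meyer decomposition.
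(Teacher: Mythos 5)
The paper does not actually prove this theorem; it defers entirely to Kreher (2016, p.\,4), so your argument has to stand on its own. Its architecture is the right one and is the standard route: $h$ is a continuous nonnegative supermartingale, so it is absorbed at $0$ after $T:=\inf\{t : h_t=0\}$; on $\{h_t=0\}$ the nonnegative variable $\rho\mathbbm{1}_{\{\sigma>t\}}$ vanishes $\mathbb{P}$-a.s., whence $\sigma\le T$ $\mathbb{Q}$-a.s.; and one localizes along $T_n=\inf\{t : h_t\le 1/n\}$, where the strictly positive case applies and where $\mu$, $\mu^F$, $V$ are consistently defined. However, one of your two routes for the reduction is not viable: perturbing $\rho$ by a positive contribution on $\{\sigma>T_n\}$ and renormalizing cannot make $h$ strictly positive globally --- if, say, $\sigma$ is bounded by $t_0$, then $h_t=\mathbb{E}_{\mathbb{P}}[\rho\mathbbm{1}_{\{\sigma>t\}}\mid\mathcal{F}_t]$ vanishes for $t\ge t_0$ whatever $\rho$ is, so the preceding theorem never becomes applicable in the auxiliary model. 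The only workable route is the first one you mention, rerunning the Mortimer--Williams computation on $\llbracket 0,T_n\rrbracket$ where $h\ge 1/n$, and that is where the actual work sits rather than being a parenthetical alternative.

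The more serious gap is the final localization. You need $\mathbb{G}$-stopping times increasing to $\infty$ $\mathbb{Q}$-a.s., but $T_n\uparrow T$, and on the event $\{\sigma=T<\infty\}$ every stopping time in your scheme stays strictly below $T$; no ``blow-up times of $X$'' can repair this, because the obstruction is not unboundedness of paths but the fact that the jump of $X$ at $\sigma=T$ is never seen by any $T_n$ (indeed $e^{\mu^F_t}=\mathcal{E}(\mu^L)_t/h_t$ need not even have a finite limit as $t\uparrow T$ there). What closes the argument is assumption (A), which is in force throughout this subsection and which you never invoke: $T$ is an $\mathbb{F}$-stopping time (hitting time of $\{0\}$ by the continuous adapted process $h$), so $\mathbb{P}(\sigma=T<\infty)=0$, and since $\mathbb{Q}\ll\mathbb{P}$ also $\mathbb{Q}(\sigma=T<\infty)=0$. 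Hence $\sigma<T$ $\mathbb{Q}$-a.s. on $\{T<\infty\}$, both $X$ and $V^{\sigma}$ are frozen from $\sigma$ onwards, and the stopping times $\tau_n:=T_n\mathbbm{1}_{\{\sigma>T_n\}}+\infty\cdot\mathbbm{1}_{\{\sigma\le T_n\}}$ satisfy $X^{\tau_n}=X^{T_n}$, $V^{\sigma\wedge\tau_n}=V^{\sigma\wedge T_n}$ and $\tau_n\uparrow\infty$ $\mathbb{Q}$-a.s., which is exactly what the conclusion requires.
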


\par
The results above were concerned with a change of measure \textit{up to} a (general) random time $\sigma$. The difficulty in extending them beyond $\sigma$ lies in the fact that for general random times not all $\mathbb{F}$-semimartingales remain $\mathbb{G}$-semimartingales. However, for a very special subset - honest times - this is not a restriction, hence broader results can be obtained for the entire time domain. Assume from now on that $\sigma$ is an honest time. Define an $\mathbb{F}$-submartingale $k$ under $\mathbb{P}$ by $k_t = \mathbb{E}_{\mathbb{P}}(\rho \vert \mathcal{F}_t) - h_t = \mathbb{E}_{\mathbb{P}}(\rho\mathbbm{1}_{\{\sigma \leq t\}}\vert \mathcal{F}_t)$. For a fixed $u \geq 0$ denote by $\mathcal{M}^u(\mathbb{P}, \mathbb{F})$ the class of $\mathbb{F}$-martingales under $\mathbb{P}$ on $[u, \infty)$. Since $\sigma$ is an honest time, for a fixed $t \geq 0$ there exists an $\mathcal{F}_t$-measurable random variable $\sigma_t$ such that $\mathbbm{1}_{\{\sigma < t\}}\sigma = \mathbbm{1}_{\{\sigma < t\}}\sigma_t$. Two auxillary results are necessary to establish the main result. See (Kreher, 2016, pp. 10-11 for proofs). \par

\begin{lem} Fix $u \geq 0$. Let $Y$ be an $\mathbb{F}$-adapted process such that $(\mathbbm{1}_{\{\sigma_t \leq u\}}k_tY_t)_{t \geq u} \in \mathcal{M}^u_{loc}(\mathbb{P},\mathbb{F})$. Then $Y_t\mathbbm{1}_{\{\sigma \leq u\}} \in \mathcal{M}^u_{loc}(\mathbb{Q},\mathbb{G})$.
\end{lem}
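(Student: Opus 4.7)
The plan is to translate the desired $(\mathbb{Q},\mathbb{G})$-local martingale property into a $(\mathbb{P},\mathbb{F})$-computation via Bayes' formula, then exploit honesty to identify the conditional $\mathcal{G}_s$-expectation on $\{\sigma \leq u\}$ with an $\mathcal{F}_s$-expectation, so that the hypothesized $\mathbb{F}$-martingality of $kY\mathbbm{1}_{\{\sigma_{\cdot} \leq u\}}$ can be applied directly.

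First, let $(\tau_n)$ localize $(k_t Y_t \mathbbm{1}_{\{\sigma_t \leq u\}})_{t \geq u}$ into a genuine $(\mathbb{P},\mathbb{F})$-martingale on $[u,\tau_n]$; being $\mathbb{F}$-stopping times, they are also $\mathbb{G}$-stopping times. By Bayes' formula it suffices to verify, for every $u \leq s \leq t$ and every bounded $\mathcal{G}_s$-measurable $Z$,
\begin{align*}
\mathbb{E}_{\mathbb{P}}\bigl[\rho\, Y_{t \wedge \tau_n} \mathbbm{1}_{\{\sigma \leq u\}} Z\bigr] \;=\; \mathbb{E}_{\mathbb{P}}\bigl[\rho\, Y_{s \wedge \tau_n} \mathbbm{1}_{\{\sigma \leq u\}} Z\bigr].
\end{align*}
Since $\sigma$ is honest and $\{\sigma \leq u\} \subseteq \{\sigma < s\}$ for $s > u$ (with the boundary case $s = u$ handled by the right-continuity of $\mathbb{G}$), any $\mathcal{G}_s$-measurable $Z$ agrees on $\{\sigma \leq u\}$ with some bounded $\mathcal{F}_s$-measurable $Z'$, so $Z\mathbbm{1}_{\{\sigma \leq u\}} = Z'\mathbbm{1}_{\{\sigma \leq u\}}$.

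The crux is the identity, valid for $t > u$,
\begin{align*}
\mathbbm{1}_{\{\sigma \leq u\}} = \mathbbm{1}_{\{\sigma_t \leq u\}}\mathbbm{1}_{\{\sigma < t\}},
\end{align*}
obtained from the honest-time relation $\sigma = \sigma_t$ on $\{\sigma < t\}$ combined with the inclusion $\{\sigma \leq u\} \subseteq \{\sigma < t\}$. Since $\mathbbm{1}_{\{\sigma_t \leq u\}}$ is $\mathcal{F}_t$-measurable, taking $\mathcal{F}_t$-conditional expectations gives
\begin{align*}
\mathbb{E}_{\mathbb{P}}\bigl[\rho\, \mathbbm{1}_{\{\sigma \leq u\}} \bigm| \mathcal{F}_t\bigr] \;=\; \mathbbm{1}_{\{\sigma_t \leq u\}}\, \mathbb{E}_{\mathbb{P}}\bigl[\rho\, \mathbbm{1}_{\{\sigma < t\}} \bigm| \mathcal{F}_t\bigr] \;=\; k_t\, \mathbbm{1}_{\{\sigma_t \leq u\}},
\end{align*}
where the final step uses $\mathbbm{1}_{\{\sigma < t\}} = \mathbbm{1}_{\{\sigma \leq t\}}$ almost surely (via an avoidance-of-$\mathbb{F}$-stopping-times hypothesis, or by routing the jump mass at $\{\sigma = t\}$ through the left-limit $k_{t-}$).

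Combining these ingredients, tower-conditioning on $\mathcal{F}_{t \wedge \tau_n}$ and using the $\mathcal{F}_s$-measurability of $Z'$,
\begin{align*}
\mathbb{E}_{\mathbb{P}}\bigl[\rho\, Y_{t\wedge\tau_n} \mathbbm{1}_{\{\sigma \leq u\}} Z'\bigr] \;=\; \mathbb{E}_{\mathbb{P}}\bigl[Z' \cdot k_{t\wedge\tau_n} Y_{t\wedge\tau_n} \mathbbm{1}_{\{\sigma_{t\wedge\tau_n} \leq u\}}\bigr],
\end{align*}
and the assumed $(\mathbb{P},\mathbb{F})$-martingale property on $[u,\tau_n]$ makes this equal to the same expression with $t$ replaced by $s$. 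Reversing the Bayes chain at time $s$ recovers the desired equality. The main obstacle is the careful justification of the two honest-time measurability steps, namely the $\mathcal{G}_s \to \mathcal{F}_s$ reduction on $\{\sigma \leq u\}$ and the core identification $\mathbb{E}_{\mathbb{P}}[\rho\,\mathbbm{1}_{\{\sigma \leq u\}} \mid \mathcal{F}_t] = k_t \mathbbm{1}_{\{\sigma_t \leq u\}}$, both of which depend delicately on the boundary behaviour at $\{\sigma = t\}$; once these are in place, the Bayes step and the localization are routine.
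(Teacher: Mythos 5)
The paper itself gives no proof of this lemma; it defers entirely to (Kreher, 2016, pp.~10--11), so there is nothing internal to compare against. Your argument is correct and is the natural one: the lemma is engineered around the identity $\mathbb{E}_{\mathbb{P}}\bigl[\rho\,\mathbbm{1}_{\{\sigma\le u\}}\mid\mathcal{F}_t\bigr]=k_t\,\mathbbm{1}_{\{\sigma_t\le u\}}$ for $t>u$, which you derive correctly from $\{\sigma\le u\}=\{\sigma_t\le u\}\cap\{\sigma<t\}$ (honesty plus $u<t$) together with assumption (A) to identify $\{\sigma<t\}$ with $\{\sigma\le t\}$; combined with the honest-time description of $\mathcal{G}_s$ on $\{\sigma\le s\}$ (every $\mathcal{G}_s$-measurable variable agrees there with an $\mathcal{F}_s$-measurable one) and the Bayes formula, this reduces the claim exactly to the hypothesized $(\mathbb{P},\mathbb{F})$-martingale property of $\mathbbm{1}_{\{\sigma_\cdot\le u\}}k_\cdot Y_\cdot$. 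Two points deserve a little more care than you give them, though neither threatens the argument: (a) the key identity must be applied at the random time $t\wedge\tau_n$, which requires a jointly measurable version of $(t,\omega)\mapsto\sigma_t(\omega)$ and the validity of $k_T=\mathbb{E}_{\mathbb{P}}[\rho\,\mathbbm{1}_{\{\sigma\le T\}}\mid\mathcal{F}_T]$ at stopping times $T$ (available here since $k$ is taken continuous under (C) and is the optional projection of $\rho\,\mathbbm{1}_{\{\sigma\le\cdot\}}$); and (b) in the tower step $Z'$ is only $\mathcal{F}_s$-measurable, not $\mathcal{F}_{t\wedge\tau_n}$-measurable on $\{\tau_n<s\}$, so one should split on $\{\tau_n\ge s\}$ and observe that both sides of the martingale equality coincide trivially on the complement. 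With those routine repairs the proof is complete.
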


\par
\begin{lem} Let $(Y_t)$ be a real-valued continuous $\mathbb{G}$-adapted process such that $(\mathbbm{1}_{\{\sigma \leq u\}}(Y_{t \lor u} - Y_u))_{t \geq 0} \in \mathcal{M}_{loc}(\mathbb{Q},\mathbb{G})$ for all $u > 0$. Then $(Y_{t \lor \sigma} - Y_{\sigma})_{t \geq 0} \in  \mathcal{M}_{loc}(\mathbb{Q},\mathbb{G})$.
\end{lem}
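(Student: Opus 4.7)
The plan is to discretize $\sigma$ from above by $\mathbb{G}$-stopping times, apply the hypothesis on each deterministic slice, and pass to the limit using a localization based on the continuity of $Y$. Concretely, I would set $\overline{\sigma}^n \coloneqq 2^{-n}\lceil 2^n \sigma\rceil$ on $\{\sigma<\infty\}$ (and $+\infty$ otherwise); these are $\mathbb{G}$-stopping times with $\overline{\sigma}^n\downarrow\sigma$. Writing $T_{n,k}\coloneqq k2^{-n}$ and $A_{n,k}\coloneqq\{T_{n,k-1}<\sigma\leq T_{n,k}\}\in\mathcal{G}_{T_{n,k}}$, the candidate approximation is
\begin{align*}
Z^n_t \;\coloneqq\; Y_{t\vee\overline{\sigma}^n}-Y_{\overline{\sigma}^n} \;=\; \sum_{k\geq 1}\mathbbm{1}_{A_{n,k}}\bigl(Y_{t\vee T_{n,k}}-Y_{T_{n,k}}\bigr),
\end{align*}
which converges pointwise to $Z_t\coloneqq Y_{t\vee\sigma}-Y_\sigma$ by continuity of $Y$ (on $\{\sigma=\infty\}$ both sides are interpreted as zero).

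For each pair $(n,k)$ I would verify that the summand is a $(\mathbb{Q},\mathbb{G})$-local martingale. The hypothesis at $u=T_{n,k}$ gives that $M^{n,k}_t\coloneqq\mathbbm{1}_{\{\sigma\leq T_{n,k}\}}(Y_{t\vee T_{n,k}}-Y_{T_{n,k}})$ is a $(\mathbb{Q},\mathbb{G})$-local martingale vanishing on $[0,T_{n,k}]$. Since $\mathbbm{1}_{\{\sigma>T_{n,k-1}\}}$ is bounded and $\mathcal{G}_{T_{n,k-1}}$-measurable and $M^{n,k}=0$ on $[0,T_{n,k}]$, multiplying $M^{n,k}$ by this indicator (and using a reducing sequence $\tau_j$ for $M^{n,k}$ to check the martingale identity, handling the case $s\leq T_{n,k-1}$ via the tower property to pick up the zero value at $T_{n,k-1}$) yields that $\mathbbm{1}_{A_{n,k}}(Y_{t\vee T_{n,k}}-Y_{T_{n,k}})=\mathbbm{1}_{\{\sigma>T_{n,k-1}\}}M^{n,k}$ is again a $(\mathbb{Q},\mathbb{G})$-local martingale. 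Because the $k$-th summand vanishes on $[0,T_{n,k}]$, stopping $Z^n$ at the deterministic time $K2^{-n}$ reduces it to the finite sum over $k\leq K$ and hence to a local martingale; since $K2^{-n}\to\infty$ as $K\to\infty$, this shows $Z^n$ itself is a $(\mathbb{Q},\mathbb{G})$-local martingale.

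Finally, to pass to the limit I would use the $\mathbb{G}$-stopping times $S_m\coloneqq\inf\{t\geq 0:|Y_t|\geq m\}\wedge m$, which increase to infinity by continuity of $Y$. On $[0,S_m]$, both $|Z^n|$ and $|Z|$ are dominated by $2m$ uniformly in $n$, so each $(Z^n)^{S_m}$ is a bounded $(\mathbb{Q},\mathbb{G})$-local martingale and hence a true $\mathbb{Q}$-martingale. Dominated convergence for conditional expectations then gives $\mathbb{E}_{\mathbb{Q}}[Z_{t\wedge S_m}\mid\mathcal{G}_s]=Z_{s\wedge S_m}$ for $s\leq t$, so $Z^{S_m}$ is a $\mathbb{Q}$-martingale and $(Z_t)_{t\geq 0}$ is a $(\mathbb{Q},\mathbb{G})$-local martingale with reducing sequence $(S_m)$. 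The hard part will be the middle step, specifically justifying that multiplication of $M^{n,k}$ by the $\mathcal{G}_{T_{n,k-1}}$-measurable indicator preserves the local martingale property, and then tying together the countable sum via stopping at deterministic times; the final passage to the limit is routine once the $|Y|$-based reducing sequence is in place.
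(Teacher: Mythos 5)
The paper itself does not prove this lemma; it only cites (Kreher, 2016, pp.\ 10--11), so there is no in-text argument to compare against. Your proposal is essentially the standard proof of such ``gluing after $\sigma$'' statements and it is correct in all its main steps: the dyadic upper approximation $\overline{\sigma}^n\downarrow\sigma$ by $\mathbb{G}$-stopping times, the observation that multiplying the local martingale $M^{n,k}$ (which vanishes on $[0,T_{n,k}]$) by the bounded $\mathcal{G}_{T_{n,k-1}}$-measurable indicator preserves the local martingale property, the reduction of the countable sum to finite sums by stopping at the deterministic times $K2^{-n}$, and the final passage to the limit via the reducing sequence $S_m$ built from $|Y|$ together with conditional dominated convergence (note that $Z^n_{t\wedge S_m}=0$ whenever $\overline{\sigma}^n\geq t\wedge S_m$, so the uniform bound $2m$ does hold, as you implicitly use). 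The one point you should patch is the event $\{\sigma=0\}$: there $\overline{\sigma}^n=2^{-n}\lceil 2^n\sigma\rceil=0$ and $Z^n_t=Y_t-Y_0$, while your sum $\sum_{k\geq 1}\mathbbm{1}_{A_{n,k}}(\cdots)$ vanishes, so the displayed identity fails on that event and the hypothesis, which is stated only for $u>0$, is never invoked there. This is harmless if $\mathbb{Q}(\sigma=0)=0$ (as in the intended honest-time setting), and in general it is repaired by letting $u\downarrow 0$ in the hypothesis and using the same $S_m$-localized bounded-convergence argument to conclude that $\mathbbm{1}_{\{\sigma=0\}}(Y_t-Y_0)$ is also a $(\mathbb{Q},\mathbb{G})$-local martingale; with that addendum the proof is complete.
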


\par
We are now in a position to derive the general result for honest times. Although both (C) and (A) are assumed to hold, the result can be proved without them. \textcolor{black}{In the notation above, we have} \par

\begin{thm} Let $\sigma$ be an honest time and suppose that $(U_t)_{t \geq 0}$ is an $\mathbb{F}$-local martingale under $\mathbb{P}$. Then the process
\begin{align*}
    V_t \coloneqq U_t - \int_0^{\sigma \land t}\frac{d\langle U,h\rangle_s}{h_s} - \int_{\sigma}^{\sigma \lor t}\frac{d\langle U,k\rangle_s}{k_s} 
\end{align*}
is a $\mathbb{G}$-local martingale under $\mathbb{Q}$.
\end{thm}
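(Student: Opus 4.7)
The plan is to decompose
\begin{align*}
V_t = V_{t\land\sigma}+\bigl(V_{t\lor\sigma}-V_\sigma\bigr)
\end{align*}
and show that each summand is a $(\mathbb{Q},\mathbb{G})$-local martingale in its own right; the two preceding lemmas are tailor-made for the second one.

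\textbf{Pre-$\sigma$ piece.} On $\{t\leq\sigma\}$ the second integral in the definition of $V$ vanishes, so
\begin{align*}
V_{t\land\sigma}=U_{t\land\sigma}-\int_0^{t\land\sigma}\frac{d\langle U,h\rangle_s}{h_s}.
\end{align*}
Because $h=\mathcal{E}(\mu)$ and all processes are continuous under (C), we have $dh_s=h_s\,d\mu_s$ and hence $d\langle U,h\rangle_s/h_s=d\langle U,\mu\rangle_s$, giving $V_{t\land\sigma}=U_{t\land\sigma}-\langle U,\mu\rangle_{t\land\sigma}$. This is exactly the process identified as a $(\mathbb{Q},\mathbb{G})$-local martingale in the variant of the Mortimer--Williams theorem stated above that drops the strict positivity of $h$.

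\textbf{Post-$\sigma$ piece via Lemma 2.} The process $(V_{t\lor\sigma}-V_\sigma)_t$ is continuous under (C) and $\mathbb{G}$-adapted, so by the second of the preceding lemmas it suffices to prove that, for every $u>0$,
\begin{align*}
\mathbbm{1}_{\{\sigma\leq u\}}\bigl(V_{t\lor u}-V_u\bigr)\in\mathcal{M}_{loc}(\mathbb{Q},\mathbb{G}).
\end{align*}
On $\{\sigma\leq u\}$ the common $\int_0^{\sigma}d\langle U,h\rangle/h$ contribution cancels in the difference, leaving
\begin{align*}
V_{t\lor u}-V_u=(U_{t\lor u}-U_u)-\int_u^{t\lor u}\frac{d\langle U,k\rangle_s}{k_s}.
\end{align*}
Setting $Y_t\coloneqq U_t-U_u-\int_u^t d\langle U,k\rangle_s/k_s$ for $t\geq u$, the first preceding lemma reduces the problem to establishing
\begin{align*}
\bigl(\mathbbm{1}_{\{\sigma_t\leq u\}}k_tY_t\bigr)_{t\geq u}\in\mathcal{M}^u_{loc}(\mathbb{P},\mathbb{F}).
\end{align*}

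\textbf{The core computation and the main obstacle.} Integration by parts under (C) yields
\begin{align*}
d(k_tY_t)=k_t\,dY_t+Y_t\,dk_t+d\langle k,Y\rangle_t=k_t\,dU_t+Y_t\,dk_t,
\end{align*}
since $d\langle k,Y\rangle_t=d\langle k,U\rangle_t$ cancels the $-d\langle U,k\rangle_t$ sitting inside $k_t\,dY_t$. Writing the Doob--Meyer decomposition $k=k_0+k^L+k^A$ of the $\mathbb{F}$-submartingale $k$ (with $k^L\in\mathcal{M}_{loc}(\mathbb{P},\mathbb{F})$ and $k^A$ predictable increasing), the pieces $\int_u^\cdot k_s\,dU_s$ and $\int_u^\cdot Y_s\,dk^L_s$ are automatic $(\mathbb{P},\mathbb{F})$-local martingales, so the only thing left to kill is the drift $\mathbbm{1}_{\{\sigma_t\leq u\}}\int_u^t Y_s\,dk^A_s$. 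Showing that this is a $\mathbb{P}$-local martingale on the relevant event is the heart of the proof and the only place where the honest-time hypothesis (with (A)) is truly indispensable. The expected mechanism is as follows: Bayes' formula gives $h_t=\rho_tZ^{\mathbb{Q}}_t$ and hence $k_t=\rho_t(1-Z^{\mathbb{Q}}_t)$, so $k^A$ inherits its finite-variation structure from the Doob--Meyer pieces of the $\mathbb{Q}$-Azema supermartingale of the honest time $\sigma$. Invoking the Jeulin--Yor decomposition past an honest time for $Z^{\mathbb{Q}}$ rewrites $d\langle U,k\rangle/k$ precisely as the Girsanov-style correction that compensates $Y\,dk^A$ on $\{\sigma_t\leq u\}$, while (A) rules out any anomalous jump contribution of the dual projection at $\mathbb{F}$-stopping times. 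Once this drift-killing step is secured, Lemma 1 delivers the post-$\sigma$ piece, Lemma 2 globalizes it to $(V_{t\lor\sigma}-V_\sigma)_t$, and summing with the pre-$\sigma$ piece yields the announced $(\mathbb{Q},\mathbb{G})$-local martingale property of $V$.
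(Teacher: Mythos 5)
Your overall architecture --- splitting $V$ into the stopped piece $V_{t\land\sigma}$ and the post-$\sigma$ increment, handling the former with the Mortimer--Williams-type statement in the version that drops strict positivity of $h$, and reducing the latter to the two auxiliary lemmas --- is exactly the route the paper intends: the two lemmas are introduced precisely for this purpose and the details are deferred to Kreher (2016). The pre-$\sigma$ computation $d\langle U,h\rangle_s/h_s=d\langle U,\mu\rangle_s$ and the chain Lemma 2 $\to$ Lemma 1 are the right reductions.

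The gap is in what you yourself flag as the core computation, and it is not a cosmetic one. First, you integrate by parts on $k_tY_t$ and only afterwards multiply by $\mathbbm{1}_{\{\sigma_t\leq u\}}$; but Lemma 1 requires the full product $\bigl(\mathbbm{1}_{\{\sigma_t\leq u\}}k_tY_t\bigr)_{t\geq u}$ to be a $(\mathbb{P},\mathbb{F})$-local martingale, and $t\mapsto\mathbbm{1}_{\{\sigma_t\leq u\}}$ is itself a non-constant adapted process, so it cannot be pulled in and out of the stochastic integrals $\int k\,dU$ and $\int Y\,dk^L$ while preserving the local martingale property; your claim that these two pieces are ``automatic'' local martingales after multiplication by the indicator is therefore unjustified. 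Second, the mechanism you then invoke to kill the remaining drift (a Jeulin--Yor decomposition of $Z^{\mathbb{Q}}$ past the honest time) is left entirely unexecuted, and it is not where the honest-time hypothesis actually does its work. The hypothesis enters earlier and more cheaply: since $\sigma=\sigma_t$ on $\{\sigma\leq t\}$, one has $\mathbbm{1}_{\{\sigma\leq t\}}\mathbbm{1}_{\{\sigma_t\leq u\}}=\mathbbm{1}_{\{\sigma\leq u\}}$ for $t\geq u$, whence
\begin{align*}
\mathbbm{1}_{\{\sigma_t\leq u\}}k_t=\mathbb{E}_{\mathbb{P}}\bigl(\rho\,\mathbbm{1}_{\{\sigma\leq t\}}\mathbbm{1}_{\{\sigma_t\leq u\}}\,\big\vert\,\mathcal{F}_t\bigr)=\mathbb{E}_{\mathbb{P}}\bigl(\rho\,\mathbbm{1}_{\{\sigma\leq u\}}\,\big\vert\,\mathcal{F}_t\bigr),
\end{align*}
so the product $\mathbbm{1}_{\{\sigma_\cdot\leq u\}}k$ is itself a $(\mathbb{P},\mathbb{F})$-martingale on $[u,\infty)$. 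The integration by parts must be carried out on this martingale times $Y$, and the cancellation between $d\langle U,k\rangle_s/k_s$ and the bracket term must be verified against this modified integrator on the event $\{\sigma_t\leq u\}$, not against $k$ alone. Without that identification the drift-killing step is an assertion rather than a proof, and it is precisely the content of the theorem.
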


\par

\par
\bigskip

\subsection{Enlargement of Filtrations as a Change of Measure}

As was noted earlier, there are many similarities between the enlargement of filtrations and Girsanov-type changes of measure where a drift term appears or is removed. This resemblance is especially vivid in the context of explicit enlargement formulas where the drift terms are identified. The main focus of this section is to present a view of enlargement \textit{as} a change of measure. Essentially, the discussion will revolve around the ideas of Ankirchner where enlargement is tied to a change of measure on a product space into which our original probability space in embedded. More specifically, once we have constructed the so-called \textbf{decoupling measure} on the product space (that is assumed to dominate the original product probability measure), it can be shown that the change of filtrations is basically a change from the decoupling measure to the dominated one. Within this setup general Girsanov-type results can be obtained. Of key importance is the fact that the well-known Jacod's condition implies the assumed domination by the decoupling measure. \par
Consider a probability space $(\Omega, \mathcal{F}, \mathbb{P})$ and right-continuous filtrations $(\mathcal{F}_t)_{t \geq 0}$ and $(\mathcal{H}_t)_{t \geq 0}$ with $\mathcal{F}_{\infty} = \bigvee_{t \geq 0} \mathcal{F}_t$, $\mathcal{H}_{\infty} = \bigvee_{t \geq 0} \mathcal{H}_t$. We define the enlarged filtration $\mG$ the usual way as $\mathcal{G}_t = \cap_{s>t} (\mathcal{F}_s \vee \mathcal{H}_s)$ for all $t \geq 0$. We now consider the two filtrations on identical copies of the underlying probability space and introduce the product space $\overline{\Omega} = \Omega \times \Omega$ with the product $\sigma$-algebra $\overline{\mathcal{F}} = \mathcal{F}_{\infty} \otimes \mathcal{H}_{\infty}$ and filtration $\overline{\mathcal{F}}_t = \cap_{s>t}(\mathcal{F}_s \otimes \mathcal{H}_s)$. The map $\psi : (\Omega, \mathcal{F}) \to (\overline{\Omega}, \overline{\mathcal{F}})$ (i.e. $\omega \mapsto (\omega, \omega)$) defines an \textbf{embedding} of $\Omega$ into $\overline{\Omega}$. The image measure of $\mathbb{P}$ under $\psi$ is defined as $\overline{\mathbb{P}} \coloneqq \mathbb{P}_{\psi} \coloneqq \mathbb{P} \circ \psi^{-1}$, and the equality 
\begin{align*}
    \int f(\omega, \omega')d\overline{\mathbb{P}}(\omega, \omega') = \int f(\omega, \omega) d\mathbb{P}(\omega)
\end{align*}
holds for all $\overline{\mathcal{F}}$-measurable functions $f : \overline{\Omega} \to \mathbb{R}$. A number of technical lemmas are necessary to take care of the measurability issues. We only list the lemmas to facilitate the understanding of the following more fundamental results, the proofs can be found in (Ankirchner, 2005, pp. 9-10). Note that $\mathcal{F}^{\mathbb{P}}$ represents the completion of filtration $\mathcal{F}$ by $\mathbb{P}$-null sets.  \par

\begin{lem} If $\overline{f} : \overline{\Omega} \to \mathbb{R}$ is $\overline{\mathcal{F}}^{\mathbb{P}}_t$-measurable, then the map $\overline{f} \circ \psi$ is $\mathcal{G}^{\mathbb{P}}_t$-measurable. 
\end{lem}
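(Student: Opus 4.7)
The plan is to prove the statement in two stages: first for the uncompleted filtration, and then extend to the completion by a standard null-set argument.

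For the first stage, I would fix $s > t$ and show that if $\overline{f}$ is $\mathcal{F}_s \otimes \mathcal{H}_s$-measurable, then $\overline{f} \circ \psi$ is $\mathcal{F}_s \vee \mathcal{H}_s$-measurable. The natural tool is a monotone class / functional form argument. The key observation is that for a measurable rectangle $A \times B$ with $A \in \mathcal{F}_s$, $B \in \mathcal{H}_s$, one has the identity
\begin{align*}
(\mathbbm{1}_{A \times B}) \circ \psi (\omega) = \mathbbm{1}_{A \times B}(\omega,\omega) = \mathbbm{1}_A(\omega)\mathbbm{1}_B(\omega) = \mathbbm{1}_{A \cap B}(\omega),
\end{align*}
and $A \cap B \in \mathcal{F}_s \vee \mathcal{H}_s$. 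Thus the claim holds for indicators of the $\pi$-system of measurable rectangles generating $\mathcal{F}_s \otimes \mathcal{H}_s$; the class of $\overline{f}$ for which $\overline{f} \circ \psi$ is $\mathcal{F}_s \vee \mathcal{H}_s$-measurable is stable under linear combinations and monotone limits, so the functional monotone class theorem extends the conclusion to all bounded $\mathcal{F}_s \otimes \mathcal{H}_s$-measurable functions, and then a truncation argument gives it for general measurable $\overline{f}$.

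Now if $\overline{f}$ is $\overline{\mathcal{F}}_t = \cap_{s>t}(\mathcal{F}_s \otimes \mathcal{H}_s)$-measurable, then $\overline{f}$ is $\mathcal{F}_s \otimes \mathcal{H}_s$-measurable for every $s > t$, so $\overline{f} \circ \psi$ is $\mathcal{F}_s \vee \mathcal{H}_s$-measurable for every $s > t$, hence measurable with respect to $\cap_{s>t}(\mathcal{F}_s \vee \mathcal{H}_s) = \mathcal{G}_t$.

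For the second stage, I would appeal to the standard characterization: $\overline{f}$ is $\overline{\mathcal{F}}^{\mathbb{P}}_t$-measurable if and only if there exists $\overline{g}$ that is $\overline{\mathcal{F}}_t$-measurable with $\overline{f} = \overline{g}$ $\overline{\mathbb{P}}$-a.s. The set $N \coloneqq \{\overline{f} \neq \overline{g}\}$ satisfies $\overline{\mathbb{P}}(N) = 0$. Using the image-measure identity $\overline{\mathbb{P}} = \mathbb{P} \circ \psi^{-1}$, one gets $\mathbb{P}(\psi^{-1}(N)) = \overline{\mathbb{P}}(N) = 0$, so $\overline{f} \circ \psi = \overline{g} \circ \psi$ outside a $\mathbb{P}$-null set. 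By the first stage $\overline{g} \circ \psi$ is $\mathcal{G}_t$-measurable, and therefore $\overline{f} \circ \psi$ is $\mathcal{G}_t^{\mathbb{P}}$-measurable, as required.

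The only subtlety worth flagging — not really an obstacle — is that one must commute the intersection $\cap_{s>t}$ with the preimage step. This works because establishing measurability for each $s > t$ separately is sufficient (intersection of $\sigma$-algebras is compatible with measurability of a single function), and it is important that the same representative $\overline{g}$ is used in the completion argument so that the $\mathcal{G}_t$-measurability of $\overline{g}\circ\psi$ obtained in the first stage carries through intact.
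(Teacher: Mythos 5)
Your argument is correct; note that the paper itself does not prove this lemma but only cites (Ankirchner, 2005, pp.~9--10), so there is no in-text proof to compare against, and what you give is precisely the standard argument one would expect there: the diagonal identity $\psi^{-1}(A\times B)=A\cap B$ propagated from rectangles to the product $\sigma$-algebra, intersection over $s>t$, and a null-set transfer via $\overline{\mathbb{P}}=\mathbb{P}\circ\psi^{-1}$ for the completion. The only cosmetic point is in the last stage: $N=\{\overline{f}\neq\overline{g}\}$ need only lie in the completed $\sigma$-algebra, so one should enclose it in a genuine $\overline{\mathcal{F}}$-measurable set $N'$ with $\overline{\mathbb{P}}(N')=0$ and pull back $N'$ instead; this changes nothing of substance.
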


\par
\begin{lem} If $\overline{X}$ is $(\overline{\mathcal{F}}^{\mathbb{P}}_t)$-predictable, then $\overline{X} \circ \psi$ is $(\mathcal{G}^{\mathbb{P}}_t)$-predictable. 
\end{lem}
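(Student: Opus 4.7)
The plan is to reduce the statement to a check on generators of the predictable $\sigma$-algebra, using the previous lemma for the spatial component and the identity map for the time component. Throughout, let $\mathcal{P}(\overline{\mathcal{F}}^{\mathbb{P}})$ and $\mathcal{P}(\mathcal{G}^{\mathbb{P}})$ denote the predictable $\sigma$-algebras on $\overline{\Omega} \times \mathbb{R}_+$ and $\Omega \times \mathbb{R}_+$ associated with $(\overline{\mathcal{F}}^{\mathbb{P}}_t)$ and $(\mathcal{G}^{\mathbb{P}}_t)$ respectively, and let $\tilde{\psi} : \Omega \times \mathbb{R}_+ \to \overline{\Omega} \times \mathbb{R}_+$ denote the time-augmented embedding $(\omega,u) \mapsto (\psi(\omega),u)$. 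The claim is equivalent to the inclusion $\tilde{\psi}^{-1}\bigl(\mathcal{P}(\overline{\mathcal{F}}^{\mathbb{P}})\bigr) \subset \mathcal{P}(\mathcal{G}^{\mathbb{P}})$.

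First I would recall that $\mathcal{P}(\overline{\mathcal{F}}^{\mathbb{P}})$ is generated by the collection of ``predictable rectangles'', namely the sets of the form $(s,t] \times \overline{A}$ with $0 \leq s < t$ and $\overline{A} \in \overline{\mathcal{F}}^{\mathbb{P}}_s$, together with the sets $\{0\} \times \overline{A}_0$ for $\overline{A}_0 \in \overline{\mathcal{F}}^{\mathbb{P}}_0$. Since preimage commutes with countable set-theoretic operations, it is enough to verify that the $\tilde{\psi}$-preimage of each such generator lies in $\mathcal{P}(\mathcal{G}^{\mathbb{P}})$.

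For a predictable rectangle, a direct computation gives
\begin{align*}
    \tilde{\psi}^{-1}\bigl((s,t] \times \overline{A}\bigr) = (s,t] \times \psi^{-1}(\overline{A}),
\end{align*}
and analogously $\tilde{\psi}^{-1}\bigl(\{0\} \times \overline{A}_0\bigr) = \{0\} \times \psi^{-1}(\overline{A}_0)$. By the preceding lemma applied to $\mathbbm{1}_{\overline{A}}$ (which is $\overline{\mathcal{F}}^{\mathbb{P}}_s$-measurable), the set $\psi^{-1}(\overline{A}) = \{\mathbbm{1}_{\overline{A}} \circ \psi = 1\}$ belongs to $\mathcal{G}^{\mathbb{P}}_s$, and similarly $\psi^{-1}(\overline{A}_0) \in \mathcal{G}^{\mathbb{P}}_0$. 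Hence both preimages are predictable rectangles for $\mathbb{G}^{\mathbb{P}}$, so they lie in $\mathcal{P}(\mathcal{G}^{\mathbb{P}})$.

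Because these rectangles generate $\mathcal{P}(\overline{\mathcal{F}}^{\mathbb{P}})$ and $\tilde{\psi}^{-1}$ commutes with the $\sigma$-algebra operations, the inclusion extends to all of $\mathcal{P}(\overline{\mathcal{F}}^{\mathbb{P}})$. Applying this to $\overline{X}^{-1}(B)$ for any Borel set $B \subset \mathbb{R}$ shows $(\overline{X} \circ \tilde{\psi})^{-1}(B) = \tilde{\psi}^{-1}(\overline{X}^{-1}(B)) \in \mathcal{P}(\mathcal{G}^{\mathbb{P}})$, proving that $\overline{X} \circ \psi$ is $(\mathcal{G}^{\mathbb{P}}_t)$-predictable. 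There is no real obstacle here: the argument is a standard ``generator transfer'' once the previous lemma is in hand; the only thing to be slightly careful about is keeping the time coordinate fixed under $\tilde{\psi}$, so that predictable rectangles are mapped to predictable rectangles rather than to something more complicated.
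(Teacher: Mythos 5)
The paper does not prove this lemma; it only cites (Ankirchner, 2005, pp.\ 9--10) for the whole block of measurability lemmas. Your argument is correct and is the standard one for this kind of statement: reduce to the predictable rectangles $(s,t]\times\overline{A}$, $\overline{A}\in\overline{\mathcal{F}}^{\mathbb{P}}_s$ (plus the time-zero sets), observe that the time-augmented embedding $\tilde{\psi}$ pulls each back to a predictable rectangle for $(\mathcal{G}^{\mathbb{P}}_t)$ via the preceding measurability lemma, and conclude by the fact that $\{\overline{B}:\tilde{\psi}^{-1}(\overline{B})\in\mathcal{P}(\mathcal{G}^{\mathbb{P}})\}$ is a $\sigma$-algebra containing the generators. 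This matches the approach of the cited source, and the one point that genuinely needs care --- that $\tilde{\psi}$ fixes the time coordinate so rectangles go to rectangles --- is exactly the point you flag.
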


\par
\begin{lem} Let $\overline{Y}$ be $(\overline{\mathcal{F}}^{\mathbb{P}}_t)$-adapted. Then the process $Y = \overline{Y} \circ \psi$ is $(\mathcal{G}^{\mathbb{P}}_t)$-adapted. Moreover, if $\overline{Y}$ is a $(\overline{\mathcal{F}}^{\mathbb{P}}_t, \overline{\mathbb{P}})$-local martingale, then $Y$ is a $(\mathcal{G}^{\mathbb{P}}_t, \mathbb{P})$-local martingale. 
\end{lem}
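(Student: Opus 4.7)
The first assertion is immediate from the preceding lemma applied pointwise in $t$: $\overline{Y}_t$ being $\overline{\mathcal{F}}^{\mathbb{P}}_t$-measurable forces $Y_t = \overline{Y}_t \circ \psi$ to be $\mathcal{G}^{\mathbb{P}}_t$-measurable, so adaptedness of $Y$ under $(\mathcal{G}^{\mathbb{P}}_t)$ descends from adaptedness of $\overline{Y}$ under $(\overline{\mathcal{F}}^{\mathbb{P}}_t)$ time by time.

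For the local-martingale claim, my plan is to transport a localizing sequence across $\psi$ and then verify the defining identity through the change-of-variables formula implied by $\overline{\mathbb{P}} = \mathbb{P} \circ \psi^{-1}$. Pick a sequence $(\overline{\tau}_n)$ of $(\overline{\mathcal{F}}^{\mathbb{P}}_t)$-stopping times that reduce $\overline{Y}$, and set $\tau_n \coloneqq \overline{\tau}_n \circ \psi$. Since $\{\tau_n \leq t\} = \psi^{-1}(\{\overline{\tau}_n \leq t\})$ with $\{\overline{\tau}_n \leq t\} \in \overline{\mathcal{F}}^{\mathbb{P}}_t$, the first lemma places $\{\tau_n \leq t\}$ in $\mathcal{G}^{\mathbb{P}}_t$, so each $\tau_n$ is a $(\mathcal{G}^{\mathbb{P}}_t)$-stopping time with $\tau_n \uparrow \infty$, and the stopped process $Y^{\tau_n} = (\overline{Y}^{\overline{\tau}_n}) \circ \psi$ is integrable because $\mathbb{E}_{\mathbb{P}}|Y^{\tau_n}_t| = \mathbb{E}_{\overline{\mathbb{P}}}|\overline{Y}^{\overline{\tau}_n}_t| < \infty$.

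To establish that $Y^{\tau_n}$ is a $(\mathcal{G}^{\mathbb{P}}_t,\mathbb{P})$-martingale, fix $s < t$ and $A \in \mathcal{G}^{\mathbb{P}}_s$. The key structural identity is $\mathcal{G}_u = \psi^{-1}(\overline{\mathcal{F}}_u)$ for every $u \geq 0$: the two $\sigma$-algebras agree on the generating rectangles since $\psi^{-1}(F \times H) = F \cap H$, and the right-continuity intersection $\bigcap_{u>s}$ passes through $\psi^{-1}$. Passing to completions, one selects $\overline{A} \in \overline{\mathcal{F}}^{\mathbb{P}}_s$ with $\mathbbm{1}_A = \mathbbm{1}_{\overline{A}} \circ \psi$ $\mathbb{P}$-almost surely, after which the image-measure identity yields
\begin{align*}
\mathbb{E}_{\mathbb{P}}[Y^{\tau_n}_t \mathbbm{1}_A] = \mathbb{E}_{\overline{\mathbb{P}}}[\overline{Y}^{\overline{\tau}_n}_t \mathbbm{1}_{\overline{A}}] = \mathbb{E}_{\overline{\mathbb{P}}}[\overline{Y}^{\overline{\tau}_n}_s \mathbbm{1}_{\overline{A}}] = \mathbb{E}_{\mathbb{P}}[Y^{\tau_n}_s \mathbbm{1}_A],
\end{align*}
the middle equality being the martingale property of $\overline{Y}^{\overline{\tau}_n}$ under $\overline{\mathbb{P}}$.

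The main obstacle is the completion bookkeeping behind $\mathcal{G}^{\mathbb{P}}_s \subseteq \psi^{-1}(\overline{\mathcal{F}}^{\mathbb{P}}_s)$ modulo $\mathbb{P}$-null sets, i.e.\ extracting a genuine $\overline{\mathcal{F}}^{\mathbb{P}}_s$-representative $\overline{A}$ for an arbitrary $A \in \mathcal{G}^{\mathbb{P}}_s$. A $\mathbb{P}$-negligible set on $\Omega$ need not lift to a $\overline{\mathbb{P}}$-negligible set off the diagonal, but this is harmless: the change of variables is exact on indicators of the form $\mathbbm{1}_{\overline{A}} \circ \psi$, so any discrepancy can be absorbed on the $\Omega$-side into $A$ without affecting any $\mathbb{P}$-integral. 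This is precisely where the careful separation between the $\mathbb{P}$-completion on $\Omega$ and the $\overline{\mathbb{P}}$-completion on $\overline{\Omega}$ becomes delicate, and it is the only non-routine step in the argument.
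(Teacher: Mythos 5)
The paper does not prove this lemma; it defers to (Ankirchner, 2005, pp.~9--10), so there is no in-text argument to compare against. Your proof is correct and is the natural one: transport a localizing sequence via $\overline{\tau}_n\circ\psi$, use the exact pre-completion identity $\mathcal{G}_s=\psi^{-1}(\overline{\mathcal{F}}_s)$ together with the image-measure formula, and verify the martingale identity on sets of the form $\psi^{-1}(\overline{A})$. The only point you flag as delicate is indeed the only one, and your resolution is right: for $A\in\mathcal{G}^{\mathbb{P}}_s$ one first replaces $A$ by some $A'\in\mathcal{G}_s$ differing by a $\mathbb{P}$-null set, and $A'$ then has an exact representative $\overline{A}\in\overline{\mathcal{F}}_s\subseteq\overline{\mathcal{F}}^{\overline{\mathbb{P}}}_s$, so no lifting of null sets to $\overline{\Omega}$ is ever required.
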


\par
Now we can state a key theorem on the transfer of the semimartingale property. \par

\begin{thm} Let $\overline{Y}$ be a $(\overline{\mathcal{F}}^{\mathbb{P}}_t, \overline{\mathbb{P}})$-semimartingale. Then  $Y = \overline{Y} \circ \psi$ is a $(\mathcal{G}^{\mathbb{P}}_t, \mathbb{P})$-semimartingale. 
\end{thm}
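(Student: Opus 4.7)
The plan is to reduce to the decomposition of a semimartingale as a sum of a local martingale and a finite variation process, and then invoke the previously established transfer lemmas for each piece. Specifically, by the standard definition (or the Bichteler--Dellacherie theorem), since $\overline{Y}$ is an $(\overline{\mathcal{F}}^{\mathbb{P}}_t,\overline{\mathbb{P}})$-semimartingale, it admits a decomposition
\begin{align*}
    \overline{Y}_t = \overline{Y}_0 + \overline{M}_t + \overline{A}_t,
\end{align*}
where $\overline{M}$ is an $(\overline{\mathcal{F}}^{\mathbb{P}}_t,\overline{\mathbb{P}})$-local martingale with $\overline{M}_0=0$, and $\overline{A}$ is an $(\overline{\mathcal{F}}^{\mathbb{P}}_t)$-adapted cadlag process of finite variation with $\overline{A}_0=0$. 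Composing with $\psi$ yields
\begin{align*}
    Y_t = \overline{Y}_0\circ\psi + \overline{M}_t\circ\psi + \overline{A}_t\circ\psi,
\end{align*}
so it suffices to show that each summand has the required $(\mathcal{G}^{\mathbb{P}}_t,\mathbb{P})$-structural property.

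First, $\overline{Y}_0\circ\psi$ is $\mathcal{G}^{\mathbb{P}}_0$-measurable by the measurability lemma. Second, by the adaptedness/local-martingale transfer lemma stated just above the theorem, $\overline{M}\circ\psi$ is a $(\mathcal{G}^{\mathbb{P}}_t,\mathbb{P})$-local martingale. Third, for $\overline{A}\circ\psi$ I would observe that the adaptedness lemma already yields $(\mathcal{G}^{\mathbb{P}}_t)$-adaptedness, while the cadlag and finite variation properties are purely pathwise: for each fixed $\omega\in\Omega$ the path $t\mapsto (\overline{A}\circ\psi)_t(\omega)=\overline{A}_t(\omega,\omega)$ is just the evaluation of the $\overline{\mathbb{P}}$-a.s. cadlag, finite-variation path of $\overline{A}$ on the diagonal $\{(\omega,\omega):\omega\in\Omega\}$. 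Since $\overline{\mathbb{P}}=\mathbb{P}\circ\psi^{-1}$ concentrates on this diagonal, the path of $\overline{A}\circ\psi$ inherits the cadlag and finite variation properties $\mathbb{P}$-a.s.

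Combining the three observations, $Y=\overline{Y}\circ\psi$ is the sum of a $\mathcal{G}^{\mathbb{P}}_0$-measurable random variable, a $(\mathcal{G}^{\mathbb{P}}_t,\mathbb{P})$-local martingale, and an $(\mathcal{G}^{\mathbb{P}}_t)$-adapted cadlag finite variation process, hence a $(\mathcal{G}^{\mathbb{P}}_t,\mathbb{P})$-semimartingale by definition.

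Honestly, there is no real obstacle here; the theorem is essentially a packaging of the three preceding lemmas. The only mild subtlety is making sure the completions are handled consistently: the decomposition exists with respect to $\overline{\mathcal{F}}^{\mathbb{P}}_t$, and the target filtration $\mathcal{G}^{\mathbb{P}}_t$ is likewise completed, so the null-set issues that might otherwise arise when pulling back along $\psi$ (whose image $\overline{\mathbb{P}}$ concentrates on a diagonal that is typically $\overline{\mathbb{P}}\otimes\overline{\mathbb{P}}$-null) are already absorbed into the statements of the previous lemmas. If I were to write the full proof, the only line requiring any care would be the pathwise transfer of finite variation, which deserves a single explicit sentence but no computation.
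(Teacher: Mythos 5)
Your proof is correct, but it takes a genuinely different route from the paper's. You start from the Bichteler--Dellacherie decomposition $\overline{Y}=\overline{Y}_0+\overline{M}+\overline{A}$ on the product space and transfer each piece separately: the local martingale part via the local-martingale clause of the transfer lemma stated immediately before the theorem, and the finite variation part via the pathwise observation that $\overline{\mathbb{P}}=\mathbb{P}\circ\psi^{-1}$ pushes the $\overline{\mathbb{P}}$-null set of non-cadlag or infinite-variation paths back to a $\mathbb{P}$-null set. The paper instead works directly with the ``good integrator'' characterization: by the Bichteler--Dellacherie--Mokobodzki theorem it suffices to check that simple $(\mathcal{G}_t)$-adapted integrands $\theta^n$ converging uniformly to $0$ yield simple integrals $\theta^n\cdot Y\to 0$ in probability, and since $\mathcal{G}_t=\psi^{-1}(\overline{\mathcal{F}}_t)$ each $\theta^n$ lifts to a simple $(\overline{\mathcal{F}}_t)$-adapted $\overline{\theta}^n$ with $\overline{\theta}^n\circ\psi=\theta^n$, so the convergence transfers from $\overline{\mathbb{P}}$ to $\mathbb{P}$. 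The trade-off is clear: the paper's argument needs only the adaptedness clause of the preceding lemma and no decomposition theorem, at the cost of manipulating simple integrands; your argument leans on the decomposition and on the harder local-martingale transfer (including the fact that $(\overline{\mathcal{F}}^{\mathbb{P}}_t)$-stopping times pull back to $(\mathcal{G}^{\mathbb{P}}_t)$-stopping times), but in exchange the verification for each piece is essentially immediate. Both are complete proofs, and your one flagged subtlety --- the pathwise transfer of finite variation through the image measure --- is handled correctly.
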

\begin{proof}
Let $\overline{Y}$ be a $(\overline{\mathcal{F}}^{\mathbb{P}}_t, \overline{\mathbb{P}})$-semimartingale. Then $Y = \overline{Y} \circ \psi$ has cadlag paths $\mathbb{P}$-a.s., and $Y$ is $(\mathcal{G}^{\mathbb{P}}_t)$-adapted by the lemma above. By the Bichteler-Dellacherie-Mokobodzki theorem, it suffices to show that if a sequence of simple $\mathcal{G}_t$-adapted integrands $(\theta^n)$ (i.e. of the form $\sum_{1 \leq i \leq n} \mathbbm{1}_{]t_i,t_{i+1}]}\theta_i$, where $\theta_i$ is $\mathcal{G}_{t_i}$-measurable) coverges uniformly to $0$, then the respective simple integrals $(\theta^n \cdot Y)$ converge to 0 in probability with respect to $\mathbb{P}$. Note $\mathcal{G}_t = \psi^{-1}(\overline{\mathcal{F}}_t)$, hence there exists a sequence of simple $(\overline{\mathcal{F}}_t)$-adapted integrands $(\overline{\theta}^n)$ that converges uniformly to 0 such that  $\overline{\theta}^n \circ \psi = \theta^n$. Since $\overline{Y}$ is a semimartingale, then $(\overline{\theta}^n \cdot \overline{Y})$ converges to 0 in probability with respect to $\overline{\mathbb{P}}$, therefore  $(\theta^n \cdot Y)$ converges to 0 in probability with respect to $\mathbb{P}$. 
\end{proof}

\par
The discussion above establishes the preservation of the key properties under projection. We now consider the preservation under an embedding. Let $R$ be a probability measure on $\mathcal{H}_{\infty}$. Define the so-called \textbf{decoupling measure} to be 
\begin{align}
    \overline{\mathbb{Q}} \coloneqq \mathbb{P} \vert_{\mathcal{F}_{\infty}} \otimes R \vert_{\mathcal{H}_{\infty}}.
\end{align}

\begin{lem} Let $M$ be a right-continuous $(\mathcal{F}^{\mathbb{P}}_t, \mathbb{P})$-local martingale. Then the process $\overline{M}(\omega, \omega') = M(\omega)$ is a $(\mathcal{F}^{\overline{Q}}_t, \overline{Q})$-local martingale.
\end{lem}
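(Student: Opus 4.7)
My plan is to check, in order, the three defining attributes of being a local martingale for $\overline{M}$: measurability/adaptedness, path regularity, and the existence of a localizing sequence under which $\overline{M}$ is a true martingale on the product space.

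First I would verify adaptedness and right-continuity. Since $\overline{M}_t(\omega,\omega')=M_t(\omega)$ depends only on the first coordinate, it is $\mathcal{F}^{\mathbb{P}}_t\otimes\{\emptyset,\Omega\}$-measurable and therefore $\overline{\mathcal{F}}^{\overline{\mathbb{Q}}}_t$-measurable (after noting $\mathbb{P}$-null sets in $\mathcal{F}_\infty$ are still $\overline{\mathbb{Q}}$-null, since $\overline{\mathbb{Q}}(N\times\Omega)=\mathbb{P}(N)R(\Omega)=0$). Path regularity of $\overline{M}$ is inherited from $M$ pointwise in $\omega'$.

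Next I would lift the localizing sequence. Let $(\tau_n)$ be a sequence of $(\mathcal{F}^{\mathbb{P}}_t)$-stopping times with $\tau_n\uparrow\infty$ $\mathbb{P}$-a.s.\ such that each $M^{\tau_n}$ is a uniformly integrable $(\mathcal{F}^{\mathbb{P}}_t,\mathbb{P})$-martingale. Define $\overline{\tau}_n(\omega,\omega')\coloneqq\tau_n(\omega)$. Then $\{\overline{\tau}_n\leq t\}=\{\tau_n\leq t\}\times\Omega\in\overline{\mathcal{F}}^{\overline{\mathbb{Q}}}_t$, so $\overline{\tau}_n$ is a stopping time in the product filtration, and $\overline{\tau}_n\uparrow\infty$ outside a $\overline{\mathbb{Q}}$-null set, which again follows from the product structure $\overline{\mathbb{Q}}=\mathbb{P}\otimes R$.

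The heart of the argument is the martingale identity under $\overline{\mathbb{Q}}$. Write $N^n\coloneqq M^{\tau_n}$ and $\overline{N}^n\coloneqq\overline{M}^{\overline{\tau}_n}$. For $s<t$ and a rectangle $A=A_1\times A_2$ with $A_1\in\mathcal{F}^{\mathbb{P}}_s$ and $A_2\in\mathcal{H}_s$, Fubini (applicable since $N^n_t\in L^1(\mathbb{P})$) gives
\begin{align*}
\mathbb{E}_{\overline{\mathbb{Q}}}\!\left[\overline{N}^n_t\mathbbm{1}_{A}\right]
&=\int_\Omega\!\int_\Omega N^n_t(\omega)\,\mathbbm{1}_{A_1}(\omega)\mathbbm{1}_{A_2}(\omega')\,dR(\omega')\,d\mathbb{P}(\omega)\\
&=R(A_2)\,\mathbb{E}_{\mathbb{P}}\!\left[N^n_t\mathbbm{1}_{A_1}\right]
=R(A_2)\,\mathbb{E}_{\mathbb{P}}\!\left[N^n_s\mathbbm{1}_{A_1}\right]
=\mathbb{E}_{\overline{\mathbb{Q}}}\!\left[\overline{N}^n_s\mathbbm{1}_{A}\right],
\end{align*}
using that $N^n$ is a true $(\mathcal{F}^{\mathbb{P}}_t,\mathbb{P})$-martingale. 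The collection of measurable rectangles is a $\pi$-system generating $\mathcal{F}^{\mathbb{P}}_s\otimes\mathcal{H}_s$, so a monotone-class / Dynkin argument extends the identity to every $A\in\mathcal{F}^{\mathbb{P}}_s\otimes\mathcal{H}_s$, and then to $\overline{\mathcal{F}}^{\overline{\mathbb{Q}}}_s$ by completion (sets differing from an element of the product $\sigma$-algebra by a $\overline{\mathbb{Q}}$-null set contribute nothing). This gives $\mathbb{E}_{\overline{\mathbb{Q}}}[\overline{N}^n_t\mid\overline{\mathcal{F}}^{\overline{\mathbb{Q}}}_s]=\overline{N}^n_s$, so each $\overline{M}^{\overline{\tau}_n}$ is a $(\overline{\mathcal{F}}^{\overline{\mathbb{Q}}}_t,\overline{\mathbb{Q}})$-martingale, and hence $\overline{M}$ is a $(\overline{\mathcal{F}}^{\overline{\mathbb{Q}}}_t,\overline{\mathbb{Q}})$-local martingale.

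The only mildly delicate point is the passage from the raw product $\sigma$-algebras $\mathcal{F}^{\mathbb{P}}_s\otimes\mathcal{H}_s$ to the right-continuous $\overline{\mathbb{Q}}$-completion $\overline{\mathcal{F}}^{\overline{\mathbb{Q}}}_s=\bigcap_{u>s}(\mathcal{F}^{\mathbb{P}}_u\otimes\mathcal{H}_u)^{\overline{\mathbb{Q}}}$; right-continuity of the index set is absorbed by the right-continuity of the paths of $\overline{M}$, while the completion is absorbed by the product form of $\overline{\mathbb{Q}}$, which turns $\mathbb{P}$-null and $R$-null fiberwise sets into $\overline{\mathbb{Q}}$-null sets. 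Everything else is a bookkeeping exercise in Fubini's theorem.
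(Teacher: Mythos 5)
Your proof is correct and follows essentially the standard route: the paper states this lemma without proof (deferring to Ankirchner, 2005), and the argument there is the same as yours — lift the localizing sequence to the product space, verify the martingale identity on measurable rectangles via Fubini using the product structure $\overline{\mathbb{Q}}=\mathbb{P}\otimes R$, extend by a monotone-class argument, and absorb the completion and the right-continuous regularization of the product filtration using the product form of the null sets and the right-continuity of the paths together with uniform integrability of the stopped martingales. No gaps.
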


\par
At this point a key assumption is introduced that is very much in the vein of Jacod's condition. \\
\textbf{Assumption.} $\overline{\mathbb{P}} \ll \overline{\mathbb{Q}}$ on $\overline{\mathcal{F}}$. \par
This assumption is satisfied when $R \sim \mathbb{P}$ and $(\mathcal{G}_t)$ is a filtration initially enlarged with a discrete random variable $G$, i.e. when  $\mathcal{H}_t = \sigma(G)$ for all $t \geq 0$. \par
Note that under this assumption the lemma above implies that a  $(\mathcal{F}^{\mathbb{P}}_t, \mathbb{P})$-local martingale $M$ extended to the product space $\overline{\Omega}$ as $\overline{M}$ remains a $(\overline{\mathcal{F}}_t^{\overline{\mathbb{P}}},\overline{\mathbb{P}})$-semimartingale. This in turn implies, by the main theorem above, that $M$ is a $(\mathcal{G}_t^{\mathbb{P}}, \mathbb{P})$-semimartingale, and $\mathcal{H}^{\prime}$ holds. The assumption also allows us to interpret the enlargement of filtrations as essentially equivalent to a change of measure from $\overline{\mathbb{Q}}$ to $\overline{\mathbb{P}}$ on $\overline{\Omega}$, hence Girsanov-type results can be obtained. We refer to (Ankirchner, 2005) for full proofs of the following statements. \par
We consider the cadlag $(\overline{\mathcal{F}}^{\overline{\mathbb{Q}}}_t)$-adapted density process $\overline{Z}_t \coloneqq \frac{d\overline{\mathbb{P}}}{d\overline{\mathbb{Q}}} \big \vert_{\overline{\mathcal{F}}^{\overline{\mathbb{Q}}}_t}$. The discussion above implies that $Z = \overline{Z} \circ \psi$ is a $(\mathcal{G}_t^{\mathbb{P}}, \mathbb{P})$-semimartingale. \par

\begin{lem} Let $\overline{X}$ and $\overline{Y}$ be $(\overline{\mathcal{F}}_t^{\overline{\mathbb{P}}},\overline{\mathbb{P}})$-semimartingales. If $X = \overline{X} \circ \psi$ and $Y = \overline{Y} \circ \psi$, then $[\overline{X}, \overline{Y}] \circ \psi = [X,Y]$ up to indistinguishability with respect to $\mathbb{P}$. 
\end{lem}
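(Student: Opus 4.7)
The plan is to approximate the quadratic covariation by finite sums of products of increments, transfer the convergence in probability from $\overline{\mathbb{P}}$ to $\mathbb{P}$ via the embedding $\psi$, and identify the limits on both sides.

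By the polarization identity $[\overline{X},\overline{Y}] = \tfrac{1}{4}\bigl([\overline{X}+\overline{Y},\overline{X}+\overline{Y}] - [\overline{X}-\overline{Y},\overline{X}-\overline{Y}]\bigr)$, and the analogous identity for $[X,Y]$, it suffices to treat the diagonal case, namely to show $[\overline{X},\overline{X}] \circ \psi = [X,X]$ up to $\mathbb{P}$-indistinguishability. Fix $t \geq 0$ and a sequence of partitions $\pi_n = \{0 = t^n_0 < t^n_1 < \cdots < t^n_{k_n} = t\}$ of $[0,t]$ with mesh tending to zero, and set
\begin{equation*}
S^n_t(\overline{X}) := \sum_{i} \bigl(\overline{X}_{t^n_{i+1}} - \overline{X}_{t^n_i}\bigr)^2, \qquad S^n_t(X) := \sum_i \bigl(X_{t^n_{i+1}} - X_{t^n_i}\bigr)^2.
\end{equation*}

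Since $\overline{X}$ is a cadlag semimartingale under $\overline{\mathbb{P}}$, the standard approximation theorem for quadratic variation yields $S^n_t(\overline{X}) \to [\overline{X},\overline{X}]_t$ in $\overline{\mathbb{P}}$-probability. By the preceding theorem on transfer of the semimartingale property, $X$ is a $(\mathcal{G}_t^{\mathbb{P}},\mathbb{P})$-semimartingale, so the same theorem gives $S^n_t(X) \to [X,X]_t$ in $\mathbb{P}$-probability. Applying the pushforward identity $\int \overline{f}\, d\overline{\mathbb{P}} = \int \overline{f}\circ \psi \, d\mathbb{P}$ to the indicator of $\{|S^n_t(\overline{X}) - [\overline{X},\overline{X}]_t| > \varepsilon\}$ produces
\begin{equation*}
\overline{\mathbb{P}}\bigl(|S^n_t(\overline{X}) - [\overline{X},\overline{X}]_t| > \varepsilon\bigr) = \mathbb{P}\bigl(|S^n_t(\overline{X})\circ \psi - [\overline{X},\overline{X}]_t\circ \psi| > \varepsilon\bigr),
\end{equation*}
so the right-hand side also tends to $0$. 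Since $S^n_t(\overline{X}) \circ \psi = S^n_t(X)$ pointwise, as $(\overline{X}_s-\overline{X}_r)\circ\psi = X_s - X_r$ for all $r \leq s$, uniqueness of limits in probability gives $[\overline{X},\overline{X}]_t \circ \psi = [X,X]_t$ $\mathbb{P}$-a.s.\ for every rational $t \geq 0$, and cadlag sample paths upgrade this to $\mathbb{P}$-indistinguishability. The polarization identity then yields the full statement for $[\overline{X},\overline{Y}]$.

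The main obstacle is measurability bookkeeping across the two completions: $[\overline{X},\overline{X}]$ is defined only up to $\overline{\mathbb{P}}$-indistinguishability, and the composition with $\psi$ must be interpreted consistently modulo $\mathbb{P}$-null sets. This is guaranteed by the identity $\overline{\mathbb{P}} = \mathbb{P}\circ \psi^{-1}$, which sends $\overline{\mathbb{P}}$-null sets in $\overline{\mathcal{F}}^{\overline{\mathbb{P}}}_t$ to $\mathbb{P}$-null sets in $\mathcal{G}^{\mathbb{P}}_t$, together with the three preceding measurability lemmas which ensure that adaptedness and measurability propagate correctly under pullback by $\psi$.
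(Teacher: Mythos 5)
Your proof is correct. The paper itself does not prove this lemma (it defers to Ankirchner, 2005, where the argument is exactly the one you give): reduce to the diagonal case by polarization, approximate $[\overline{X},\overline{X}]_t$ by sums of squared increments, transfer convergence in probability through the pushforward identity $\overline{\mathbb{P}} = \mathbb{P}\circ\psi^{-1}$, and conclude by uniqueness of limits in probability plus right-continuity of paths. Your closing remark on the completion issue is the right one, since $\psi^{-1}$ of a $\overline{\mathbb{P}}$-null set is $\mathbb{P}$-null, so the identity extends to $\overline{\mathcal{F}}^{\overline{\mathbb{P}}}$-measurable representatives of $[\overline{X},\overline{X}]_t$.
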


\par
The following theorems are the main Girsanov-type results. We use the standard notation where $\overline{U}_t = \Delta \overline{M}_{\overline{T}}\mathbbm{1}_{\{t \leq \overline{T}\}}$, $\overline{T} \coloneqq \text{inf}\{t > 0 : \overline{Z}_t = 0, \overline{Z}_{t-}>0\}$, $\widetilde{U}$ is the compensator ($(\overline{\mathcal{F}}^{\overline{\mathbb{Q}}}_t, \overline{\mathbb{Q}})$-predictable projection)  of $\overline{U}$, and $\hat{U} = \widetilde{U} \circ \psi$. 
\par

\begin{thm} If $M$ is a $(\mathcal{F}^{\mathbb{P}}_t, \mathbb{P})$-local martingale with $M_0 = 0$, then
\begin{align*}
    M - \frac{1}{Z} \cdot [M,Z] + \hat{U}
\end{align*}
is a $(\mathcal{G}_t^{\mathbb{P}}, \mathbb{P})$-local martingale.
\end{thm}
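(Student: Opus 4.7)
The plan is to prove this by lifting $M$ to the product space $\overline{\Omega}$, applying the Lenglart-Girsanov theorem there for the change from $\overline{\mathbb{Q}}$ to $\overline{\mathbb{P}}$, and then projecting back to $\Omega$ via the embedding $\psi$. This directly realizes the paper's slogan that enlargement of filtrations can be viewed as a change of measure from the decoupling measure $\overline{\mathbb{Q}}$ to $\overline{\mathbb{P}}$ on the product space.

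First, I would lift $M$ to the product space by setting $\overline{M}(\omega, \omega') \coloneqq M(\omega)$. By the preceding lemma on lifting, $\overline{M}$ is an $(\overline{\mathcal{F}}_t^{\overline{\mathbb{Q}}}, \overline{\mathbb{Q}})$-local martingale with $\overline{M}_0 = 0$. Since $\overline{\mathbb{P}} \ll \overline{\mathbb{Q}}$ on $\overline{\mathcal{F}}$ with density process $\overline{Z}_t = \mathbb{E}_{\overline{\mathbb{Q}}}\bigl[d\overline{\mathbb{P}}/d\overline{\mathbb{Q}} \,\big|\, \overline{\mathcal{F}}_t^{\overline{\mathbb{Q}}}\bigr]$, I would apply the Lenglart-Girsanov theorem stated earlier in the paper to the pair $(\overline{\mathbb{Q}}, \overline{\mathbb{P}})$ on $(\overline{\Omega}, \overline{\mathcal{F}}, (\overline{\mathcal{F}}_t^{\overline{\mathbb{Q}}}))$. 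This yields that
\[
\overline{M} - \frac{1}{\overline{Z}} \cdot [\overline{M}, \overline{Z}] + \widetilde{U}
\]
is an $(\overline{\mathcal{F}}_t^{\overline{\mathbb{P}}}, \overline{\mathbb{P}})$-local martingale, where $\widetilde{U}$ is the predictable compensator of $\overline{U}_t = \Delta \overline{M}_{\overline{T}} \mathbbm{1}_{\{t \leq \overline{T}\}}$ as given in the setup.

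Next, I would project the identity back to $\Omega$ by composing with $\psi$. The lemma on projections of local martingales implies that $\bigl(\overline{M} - \tfrac{1}{\overline{Z}} \cdot [\overline{M}, \overline{Z}] + \widetilde{U}\bigr) \circ \psi$ is a $(\mathcal{G}_t^{\mathbb{P}}, \mathbb{P})$-local martingale. The lemma on the quadratic covariation under the embedding gives $[\overline{M}, \overline{Z}] \circ \psi = [M, Z]$ up to $\mathbb{P}$-indistinguishability. Combined with the obvious identifications $\overline{M} \circ \psi = M$, $\overline{Z} \circ \psi = Z$, and the definition $\hat{U} = \widetilde{U} \circ \psi$, this gives that
\[
M - \frac{1}{Z} \cdot [M, Z] + \hat{U}
\]
is a $(\mathcal{G}_t^{\mathbb{P}}, \mathbb{P})$-local martingale, which is the desired conclusion.

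The main obstacle will be justifying that the stochastic integral $\tfrac{1}{\overline{Z}} \cdot [\overline{M}, \overline{Z}]$ pulls back under $\psi$ to $\tfrac{1}{Z} \cdot [M, Z]$, as the bracket lemma covers only the covariation itself. This requires combining the lemma on predictable processes (so that $\tfrac{1}{\overline{Z}_-} \circ \psi = \tfrac{1}{Z_-}$ as $(\mathcal{G}_t^{\mathbb{P}})$-predictable processes) with the bracket lemma and the commutation of simple stochastic integration with the embedding, then passing to the limit along an approximating sequence of simple integrands exactly as in the proof of the semimartingale transfer theorem above. A secondary subtlety is ensuring that $\overline{Z}$ stays strictly positive on the set where the integrand is used; this is handled by the stopping at $\overline{T}$ and the additive compensation $\widetilde{U}$, exactly as in the classical Lenglart-Girsanov statement, so no new work is needed beyond invoking that theorem on the product space.
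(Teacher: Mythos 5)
Your proposal is correct and follows exactly the route the paper intends: the paper defers the proof to Ankirchner (2005), but the entire subsection --- the embedding $\psi$, the decoupling measure $\overline{\mathbb{Q}}$, the transfer lemmas for adapted/predictable processes, local martingales and brackets, and the earlier statement of the Lenglart--Girsanov theorem --- is assembled precisely so that the result follows by lifting $M$ to $\overline{\Omega}$, applying Lenglart--Girsanov for the change from $\overline{\mathbb{Q}}$ to $\overline{\mathbb{P}}$, and projecting back through $\psi$. The one technical point you flag is in fact easier than you suggest: since $[\overline{M},\overline{Z}]$ has finite variation, $\frac{1}{\overline{Z}} \cdot [\overline{M},\overline{Z}]$ is a pathwise Lebesgue--Stieltjes integral, so the bracket lemma (indistinguishability of $[\overline{M},\overline{Z}]\circ\psi$ and $[M,Z]$) together with the pointwise identity $\overline{Z}\circ\psi = Z$ gives the pullback of the integral directly, with no need for an approximating sequence of simple integrands.
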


\par
\begin{thm} If $M$ is a continuous $(\mathcal{F}^{\mathbb{P}}_t, \mathbb{P})$-local martingale with $M_0 = 0$, then
\begin{align*}
    M - \frac{1}{Z} \cdot [M,Z]
\end{align*}
is a $(\mathcal{G}_t^{\mathbb{P}}, \mathbb{P})$-local martingale.
\end{thm}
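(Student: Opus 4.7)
The plan is to derive this as an immediate corollary of the preceding theorem, by showing that the jump-correction term $\hat U$ vanishes identically when $M$ is continuous. First I would invoke the previous theorem on $M$ to obtain that
\[
M - \frac{1}{Z}\cdot [M,Z] + \hat U
\]
is a $(\mathcal{G}_t^{\mathbb{P}}, \mathbb{P})$-local martingale. The result then reduces to proving $\hat U \equiv 0$.

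The key observation is that the canonical lift $\overline M(\omega,\omega') = M(\omega)$ depends only on the first coordinate and therefore inherits continuity of paths in $t$ from $M$. In particular, $\Delta \overline M_s = 0$ for every $s$, including $s = \overline T$, so the raw jump process
\[
\overline U_t = \Delta \overline M_{\overline T}\, \mathbbm{1}_{\{t \leq \overline T\}}
\]
is identically zero. Its $(\overline{\mathcal F}_t^{\overline{\mathbb Q}}, \overline{\mathbb Q})$-predictable compensator $\widetilde U$ is then also identically zero, and projecting down by $\psi$ gives $\hat U = \widetilde U \circ \psi \equiv 0$. Substituting this into the decomposition from the previous theorem yields exactly the claim.

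The main obstacle, which is essentially no obstacle at all, is verifying that continuity of $M$ on $(\Omega,\mathcal{F}^{\mathbb{P}}_t,\mathbb P)$ genuinely carries over to continuity of $\overline M$ on $(\overline\Omega,\overline{\mathcal{F}}^{\overline{\mathbb Q}}_t,\overline{\mathbb Q})$; this follows at once from the cylindrical definition $\overline M(\omega,\omega') = M(\omega)$ and the absolute continuity $\overline{\mathbb{P}} \ll \overline{\mathbb{Q}}$, since a $\overline{\mathbb{P}}$-indistinguishable modification of $\overline M$ with continuous paths automatically has continuous paths $\overline{\mathbb{Q}}$-a.s. on the support of $\overline{\mathbb P}$, which is all that the compensator argument requires. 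All of the substantive work, including the transfer of the semimartingale property through $\psi$ and the Girsanov-type correction $-\tfrac{1}{Z}\cdot [M,Z]$, was already carried out in the preceding theorem, so once $\hat U$ is eliminated the statement follows with no further calculation.
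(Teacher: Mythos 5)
Your proposal is correct and is the standard route: the paper itself omits the proof (deferring to Ankirchner, 2005), where this statement is obtained exactly as you describe, as a corollary of the preceding theorem by killing the jump-correction term $\hat{U}$. One caveat on your final paragraph: the absolute continuity $\overline{\mathbb{P}} \ll \overline{\mathbb{Q}}$ runs in the wrong direction for the transfer you invoke --- it lets you push $\overline{\mathbb{Q}}$-null sets to $\overline{\mathbb{P}}$-null sets, not conversely, and since $\widetilde{U}$ is the $\overline{\mathbb{Q}}$-compensator you genuinely need $\overline{U}=0$ up to $\overline{\mathbb{Q}}$-evanescence (vanishing merely on the support of $\overline{\mathbb{P}}$ would not force the compensator to vanish there, as compensation averages over the whole space). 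The correct and equally immediate justification is the product structure of the decoupling measure: the discontinuity set of $\overline{M}(\omega,\omega')=M(\omega)$ is a cylinder $N\times\Omega$ with $\mathbb{P}(N)=0$, and $\overline{\mathbb{Q}}=\mathbb{P}\vert_{\mathcal{F}_{\infty}}\otimes R\vert_{\mathcal{H}_{\infty}}$ has first marginal $\mathbb{P}$, so $\overline{\mathbb{Q}}(N\times\Omega)=0$ and $\overline{M}$ is continuous $\overline{\mathbb{Q}}$-a.s.; with that substitution your argument is complete.
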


\bigskip

\section{Martingale Representation Property under Enlargement}

\subsection{Preliminaries and Definitions}

We begin by setting up some notation and definitions (consult Protter, 2005, Medvegyev, 2007, or Cohen, Elliott, 2015, for more details). For $p \in [1, \infty)$, denote by $\mathcal{H}^p$ the space of martingales $M$ such that $\lVert M \rVert_{\mathcal{H}^p} \coloneqq \lVert M^*_{\infty} \rVert_p = \left(\mathbb{E}[\text{sup}_t\lvert M_t\rvert^p]\right)^{1/p} < \infty$. We denote by $\mathcal{H}^p_0$ the processes $M$ in $\mathcal{H}^p$ with $M_0 = 0$, and by $\mathcal{H}^p_{\text{loc}}$ the processes which are locally in $\mathcal{H}^p$. Denote the stopped process by $M^T = (M^T_t) = (M_{t \land T})$. \par

\begin{defn} A subspace $\mathcal{K} \subset \mathcal{H}^p$ is said to be $\textbf{stable}$ if \\
(i) it is closed under the $\mathcal{H}^p$-norm topology, \\
(ii) it is closed under stopping, i.e. for any stopping time $T$ and $M \in \mathcal{K}$ it follows that $M^T \in \mathcal{K}$, \\
(iii) if $M \in \mathcal{K}$ and $A \in \mathcal{F}_0$, then $\mathbbm{1}_AM \in \mathcal{K}$. 
\end{defn}

\par
If $\mathcal{X}$ is a subset of $\mathcal{H}^p_0$, then we denote by stable$(\mathcal{X})$ the smallest closed subspace of $\mathcal{H}^p_0$ which is closed under stopping and contains $\mathcal{X}$. \par

\begin{defn} Let $p \in [1, \infty)$. The subset $\mathcal{X} \subseteq \mathcal{H}^p_0$ is said to have the \textbf{Martingale Representation Property} (MRP) if $\mathcal{H}^p_0 = \text{stable}(\mathcal{X})$. 
\end{defn}

\par
Consider the set of processes $\mathcal{C} \coloneqq \{ X \in \mathcal{H}^p_0 : X = H \bullet M, \ M \in \mathcal{H}^p_0 \}$, \textcolor{black}{where $H \bullet M\coloneqq \left(\int_0^tH_sdM_s\right)_{t\in\mT},$}  which can be shown to be closed in $\mathcal{H}^p_0$. The following theorem is standard (e.g. Medvegyev, 2007, p. 343). \par

\begin{thm} Let $(M_i)_{i=1}^n$ be a finite subset of $\mathcal{H}^p_0$. Assume that for $i \neq j$ the martingales $M_i$ and $M_j$ are strongly orthogonal in the sense of local martingales (i.e. $[M_i,M_j] = 0$ for $i \neq j$). Then
\begin{align*}
    \text{stable}(M_1, ..., M_n) = \left\{ \sum_{i=1}^{n} H_i \bullet M_i : H_i \in \mathcal{L}^p(M_i) \right\}.
\end{align*}
In other words, the stable subspace generated by a finite set of strongly orthogonal $\mathcal{H}^p_0$-martingales is the linear subspace generated by the stochastic integrals $H_i \bullet M_i$, for predictable $H_i \in \mathcal{L}^p(M_i)$, i.e. $\lVert H_i \rVert \coloneqq \left( \mathbb{E}\left( \int_0^{\infty}H_i^p(s)d[M_i,M_i]_s \right) \right)^{1/p} < \infty$. 
\end{thm}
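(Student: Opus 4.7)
The plan is to prove the double inclusion between $\text{stable}(M_1,\dots,M_n)$ and $\mathcal{K}\coloneqq\{\sum_{i=1}^n H_i\bullet M_i:H_i\in\mathcal{L}^p(M_i)\}$. The easier direction is $\mathcal{K}\subseteq\text{stable}(M_1,\dots,M_n)$: each stochastic integral $H_i\bullet M_i$ belongs to $\text{stable}(M_i)$ (a standard fact: it is the $\mathcal{H}^p$-limit of simple integrals $\sum_k c_k(M_i^{T_{k+1}}-M_i^{T_k})$, each of which is obtained from $M_i$ by stopping, multiplication by $\mathcal{F}_{T_k}$-measurable bounded random variables, and taking $\mathcal{H}^p$-limits; see, e.g., Protter 2005 or Medvegyev 2007). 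Since $\text{stable}(M_i)\subseteq\text{stable}(M_1,\dots,M_n)$ and stable subspaces are closed under linear combinations, $\mathcal{K}$ is contained in the right-hand side.

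For the reverse inclusion, I would verify that $\mathcal{K}$ itself satisfies the three defining properties of a stable subspace and contains $\{M_1,\dots,M_n\}$, so that by minimality $\text{stable}(M_1,\dots,M_n)\subseteq\mathcal{K}$. Containment of each $M_i$ is immediate by taking $H_i\equiv 1$, $H_j\equiv 0$ for $j\neq i$. Closure under stopping follows from $\bigl(\sum_i H_i\bullet M_i\bigr)^T=\sum_i(H_i\mathbbm{1}_{\llbracket 0,T\rrbracket})\bullet M_i$ together with $H_i\mathbbm{1}_{\llbracket 0,T\rrbracket}\in\mathcal{L}^p(M_i)$, and an analogous argument handles multiplication by $\mathbbm{1}_A$ for $A\in\mathcal{F}_0$ (since $\mathbbm{1}_A$ is a predictable integrand, $\mathbbm{1}_A\cdot H_i\in\mathcal{L}^p(M_i)$).

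The crux of the proof is closure under the $\mathcal{H}^p$-norm, and this is where strong orthogonality plays its decisive role. The key observation is that if $[M_i,M_j]=0$ for $i\neq j$, then $[H_i\bullet M_i,H_j\bullet M_j]=H_iH_j\bullet[M_i,M_j]=0$, so
\begin{align*}
\Bigl[\sum_i H_i\bullet M_i,\sum_i H_i\bullet M_i\Bigr]_\infty=\sum_{i=1}^n H_i^2\bullet[M_i,M_i]_\infty.
\end{align*}
Combining this with the Burkholder--Davis--Gundy inequality, one obtains
\begin{align*}
\Bigl\|\sum_i H_i\bullet M_i\Bigr\|_{\mathcal{H}^p}^p\asymp\mathbb{E}\Bigl[\Bigl(\sum_{i=1}^n H_i^2\bullet[M_i,M_i]_\infty\Bigr)^{p/2}\Bigr]\asymp\sum_{i=1}^n\|H_i\bullet M_i\|_{\mathcal{H}^p}^p,
\end{align*}
using equivalence of $\ell^1$ and $\ell^{p/2}$ norms on $\mathbb{R}^n$ (with constants depending on $n$ and $p$). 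Thus a Cauchy sequence $X^k=\sum_i H_i^{(k)}\bullet M_i$ in $\mathcal{H}^p$ forces each $H_i^{(k)}\bullet M_i$ to be Cauchy in $\mathcal{H}^p$, hence to converge to some $X_i\in\mathcal{C}\cap\text{stable}(M_i)$, and by the closedness of $\mathcal{C}$ mentioned before the theorem we may write $X_i=H_i\bullet M_i$ for some $H_i\in\mathcal{L}^p(M_i)$; summing yields the limit inside $\mathcal{K}$.

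The main technical obstacle is the BDG-based equivalence of norms in the general $p\in[1,\infty)$ case; for $p=2$ it is pure Pythagoras via $\mathbb{E}[(H_i\bullet M_i)_\infty(H_j\bullet M_j)_\infty]=\mathbb{E}[[H_i\bullet M_i,H_j\bullet M_j]_\infty]=0$, but for $p\neq 2$ one must go through the quadratic variation and invoke BDG, taking care that the equivalence constants do not depend on the particular integrands. Once this componentwise control is established, the rest of the argument is routine bookkeeping.
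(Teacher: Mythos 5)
The paper does not prove this theorem at all --- it simply labels it as standard and cites Medvegyev (2007, p.~343) --- so there is no in-paper argument to compare against. Your proof is essentially the standard textbook one and is correct in outline: the easy inclusion via approximation by simple integrands, and the reverse inclusion by checking that $\mathcal{K}$ is itself a stable subspace containing the generators, with strong orthogonality entering exactly where you put it, namely in making $\bigl[\sum_i H_i\bullet M_i\bigr]_\infty=\sum_i H_i^2\bullet[M_i,M_i]_\infty$ so that BDG plus the equivalence of $\ell^1$ and $\ell^{p/2}$ norms on $\mathbb{R}^n$ yields componentwise control of Cauchy sequences. One step is stated more loosely than it should be: the closedness of $\mathcal{C}=\{H\bullet M:\ M\in\mathcal{H}^p_0\}$ quoted before the theorem only tells you the limit $X_i$ is a stochastic integral with respect to \emph{some} martingale, whereas you need it to be an integral with respect to the \emph{fixed} integrator $M_i$. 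What you actually want is the closedness in $\mathcal{H}^p$ of $\{H\bullet M_i:\ H\in\mathcal{L}^p(M_i)\}$ for fixed $M_i$ (equivalently, the $n=1$ case of the theorem); this is Yor's theorem and is standard, but it is a different statement from the one you cite, so it deserves to be invoked explicitly rather than inferred from the closedness of $\mathcal{C}$. With that reference corrected, the argument is complete.
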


\par
\begin{defn} Let $(M_i)_{i=1}^n$ be a finite subset of $\mathcal{H}^p_0$. We say that $(M_i)_{i=1}^n$ has the \textbf{Predictable Representation Property} (PRP) if for every $X \in \mathcal{H}^p_0$ we have $X = \sum_{i=1}^{n} H_i \bullet M_i$, for some predictable $H_i \in \mathcal{L}^p(M_i)$. 
\end{defn}

\par
The martingale (predictable) representation property plays a central role in mathematical finance. Recall that a financial market is called \textbf{complete} if any payoff is attainable. More formally (Jarrow, 2021): \par

\begin{defn} Assume that the set of local martingale measures is not empty, i.e. $\mathfrak{M}_l \neq \emptyset$ ($\iff$ NFLVR). Choose $\mathbb{Q} \in \mathfrak{M}_l $. The market is said to be \textbf{complete} with respect to $\mathbb{Q}$ if given any $Z_T \in L^1_{+}(\mathbb{Q})$, there exists an $x \geq 0$ and $(\alpha_0, \alpha) \in \mathcal{A}(x)$ such that $x + (\alpha \bullet S)_T = Z_T$ and the value process defined by $X_t \coloneqq x + (\alpha \bullet S)_t$ for all $t \in [0,T]$ is a $\mathbb{Q}$-martingale, i.e. $X_t = \mathbb{E}_{\mathbb{Q}} [ Z_T \vert \mathcal{F}_t]$. 
\end{defn}

\par
Within this setup, the \textbf{Second Fundamental Theorem of Asset Pricing} asserts that completeness implies uniqueness of the local martingale measure, and, (partially) in the other direction, the existence of an equivalent martingale measure implies completeness. It is now easy to see that, essentially, completeness (i.e. the existence of a trading strategy that generates a given payoff) is the financial equivalent of the martingale (predictable) representation property (i.e. the existence of a predictable process that can be used to represent our martingale as a stochastic integral). \par
We would like to stress that completeness is a slightly stronger property than uniqueness of the local martingale measure, i.e. it is the set of \textit{martingale} measures that has to be a singleton for completeness to follow. This essentially stems from the theorem of Jacod-Yor whereby the collection $\mathcal{X} \subset \mathcal{H}^1_0$ possesses the  MRP if and only if the underlying probability measure $\mathbb{P}$ is an extremal point of $\mathfrak{M}^1(\mathcal{X}).$ This fact will play an important role in the settings of several theorems below, specifically in the context of progressive enlargement. \par
It is now only natural to ask whether this property would be preserved under an enlargement of the filtration. In the sequel, we consider initial and progressive enlargements separately.

\subsection{MRP under Initial Enlargement}

There are several important results on the stability of the MRP under initial enlargement with varying degrees of generality. We opted to cover both the well-known classical results and the more recent theorems in the most general setting. The rationale is that there are important ideas present in settings which might not be the most general, yet provide insights that are conceptually valuable in their own right. \par   
The key condition that allows to retain the MRP under initial enlargement is still Jacod's criterion. We begin with the discussion of the classical result of (Amedinger, 2000). Consider a stochastic basis $(\Omega, \mathcal{F}, \mathbb{F}, \mathbb{P})$ with a filtration that satisfies the usual hypotheses, $\mathcal{F}_0$ trivial and $\mathcal{F}_s = \mathcal{F}_T$ for all $s \geq T$ and some $T >0$. Consider a $d$-dimensional cadlag process $S = (S^1, ..., S^d)^\top$ and assume that there exists a probability measure $\mathbb{Q}^{\mathbb{F}}$ equivalent to $\mathbb{P}$ such that each component of $S$ is in $\mathcal{H}^2(\mathbb{Q}^{\mathbb{F}}, \mathbb{F})$. Let $Z^{\mathbb{F}}$ be the density process of $\mathbb{Q}^{\mathbb{F}}$ with respect to $\mathbb{P}.$ Consider an enlargement of $\mathbb{F}$ given by $\mathbb{G} \coloneqq \mathbb{F} \lor \sigma(G)$, where $G$ is a random variable taking values in a Polish space $(X, \mathscr{X}).$ Assume that Jacod's equivalence criterion holds, i.e. the regular conditional distributions of $G$ given $\mathcal{F}_t$, $t \in [0,T]$, are equivalent to the law of $G$ $\mathbb{P}$-a.s., i.e.
\begin{align*}
    \mathbb{P}(G \in \cdot \vert \mathcal{F}_t) \sim \mathbb{P}(G \in \cdot ) \ \text{for all} \ t \in [0,T] \ \mathbb{P}-\text{a.s.}
\end{align*}
\textcolor{black}{As was mentioned in the previous sections}, the existence of a good version of the conditional density process $(\omega,t,x) \mapsto p^x_t(\omega)$ follows from the equivalence assumption (see, e.g., Jacod, 1985, Lemma 1.8). An important step in the overall derivation is to show how the properties of the stochastic processes are transferred from $\mathbb{F}$ to $\mathbb{G}$ via the so-called \textit{martingale preserving probability measure} (Amedinger, 2000, Theorem 3.1., p. 104). \par

\begin{thm} Assume Jacod's equivalence criterion holds \textcolor{black}{and $p$ is the associated conditional density process}. Then \\
(i) $Z^{\mathbb{G}} \coloneqq \frac{Z^{\mathbb{F}}}{p^G}$ is a $\mathbb{G}$-martingale under $\mathbb{P}$; \\
(ii) the martingale preserving probability measure under initial enlargement given by
\begin{align*}
    \mathbb{Q}^{\mathbb{G}}(A) \coloneqq \int_A\frac{Z^{\mathbb{F}}_T}{p^G_T}d\mathbb{P} \quad \text{for} \ A \in \mathcal{G}_T
\end{align*}
has the following properties \\
(a) $\mathcal{F}_T$ and $\sigma(G)$ are independent under $\mathbb{Q}^{\mathbb{G}}$; \\
(b) $\mathbb{Q}^{\mathbb{G}} = \mathbb{Q}^{\mathbb{F}}$ on $(\Omega, \mathcal{F}_T)$ and $\mathbb{Q}^{\mathbb{G}} = \mathbb{P}$ on $(\Omega, \sigma(G))$, \\
i.e. for $A_T \in \mathcal{F}_T$ and $B \in \mathscr{X}$ we have 
\begin{align*}
    \mathbb{Q}^{\mathbb{G}}(A_T \cap \{G \in B\}) = \mathbb{Q}^{\mathbb{F}}(A_T)\mathbb{P}(G \in B) = \mathbb{Q}^{\mathbb{G}}(A_T)\mathbb{Q}^{\mathbb{G}}(G \in B).
\end{align*}
\end{thm}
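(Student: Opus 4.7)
The plan is to establish all three claims via a single workhorse computation: Jacod's conditional expectation formula
\[
\mE\bigl[f(G)\,\big|\,\cF_t\bigr]=\int_E f(x)\,p_t(x)\,\eta(dx) \quad \mP\text{-a.s.}
\]
applied to the density process $p$ from the lemma cited earlier. The equivalence form of Jacod's hypothesis together with the good-version lemma guarantees $p_t^x>0$ for every $x$ on $[0,T]$, so inverse densities are well-defined. Taking $\eta$ to be the law of $G$, the crucial specialization is
\[
\mE\!\left[\frac{\1_{\{G\in B\}}}{p_t^{G}}\,\bigg|\,\cF_t\right]=\int_B \frac{p_t(x)}{p_t(x)}\,\eta(dx)=\mP(G\in B),
\]
a deterministic constant. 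Every subsequent step is a variant of this one-line identity.

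For (i), $Z^{\mG}$ is $\mG$-adapted (evaluation of a jointly measurable positive martingale at $G$) and integrable since $\mE[Z_t^{\mG}]=\mE[Z_t^{\mF}\,\mE(1/p_t^G\mid\cF_t)]=\mE[Z_t^{\mF}]\cdot \eta(E)=1$. To verify the $\mG$-martingale property for $s\le t$, I would apply a monotone-class argument reducing the test on $\cG_s$ to sets of the form $A=A_s\cap\{G\in B\}$ with $A_s\in\cF_s$, $B\in\sX$. Conditioning on $\cF_t$ inside gives
\[
\mE\!\left[\frac{Z_t^{\mF}}{p_t^G}\1_{A_s}\1_{\{G\in B\}}\right]=\mP(G\in B)\,\mE[Z_t^{\mF}\1_{A_s}]=\mP(G\in B)\,\mE[Z_s^{\mF}\1_{A_s}],
\]
where the last equality uses that $Z^{\mF}$ is an $\mF$-martingale. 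The same calculation with $t$ replaced by $s$ yields the identical expression, proving the martingale equality.

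For (ii)(b), the marginal identifications are immediate consequences of the cornerstone formula. Taking $B=E$ gives $\mE[1/p_T^G\mid\cF_T]=\eta(E)=1$, so for $A_T\in\cF_T$
\[
\mQ^{\mG}(A_T)=\mE\bigl[\1_{A_T}Z_T^{\mF}\cdot 1\bigr]=\mQ^{\mF}(A_T),
\]
and taking $A_T=\W$ yields $\mQ^{\mG}(G\in B)=\mE[Z_T^{\mF}]\cdot\mP(G\in B)=\mP(G\in B)$. For (ii)(a), the joint computation
\[
\mQ^{\mG}(A_T\cap\{G\in B\})=\mE[\1_{A_T}Z_T^{\mF}]\,\mP(G\in B)=\mQ^{\mG}(A_T)\,\mQ^{\mG}(G\in B)
\]
factors precisely because the inner conditional expectation is deterministic; this is the required independence.

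The main obstacle is not any single calculation but the ensemble of measurability and positivity issues needed to justify the manipulations: one must invoke the jointly measurable, cadlag, strictly positive good version of the density $p$ supplied by the lemma above (so that $p_t^G$ is genuinely $\cG_t$-measurable and $1/p_t^G$ makes sense pointwise), verify that the Fubini-type interchange implicit in Jacod's formula is legitimate against $\eta$, and handle the mild gap between $\cF_s\lor\s(G)$ and the right-continuously augmented $\cG_s$ via a standard density argument using right-continuity of $Z^{\mG}$. Once these technicalities are in place, the theorem collapses into iterated applications of the single cornerstone identity.
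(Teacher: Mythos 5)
Your proposal is correct and follows essentially the same route as the paper: the cornerstone identity $\mE\bigl[\1_{\{G\in B\}}/p_t^G\mid\cF_t\bigr]=\mP(G\in B)$ is exactly the F\"ollmer--Imkeller/Amendinger equality the paper's proof is built on, and both the factorization argument for (ii) and the verification of the martingale property on generating sets $A_s\cap\{G\in B\}$ match the paper's computations. The only differences are cosmetic (you prove (i) before (ii), and you make the monotone-class and augmentation technicalities explicit where the paper leaves them implicit).
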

\begin{proof}
    The key to the proof is the equality (F\"ollmer and Imkeller, 1993, p.574; Amedinger et al, 1998, Proposition 2.3):
\begin{align*}
  \mathbb{E}\left[\mathbbm{1}_{\{G \in B\}} \frac{1}{p^{\mathbb{G}_T}} \Bigg\vert\mathcal{F}_T\right] = \int_B \frac{1}{p^x_T(\omega)}p^x_T(\omega)\mathbb{P}(G \in dx) = \mathbb{P}(G \in B) 
\end{align*}
\par
We provide the details (see Es-Saghouani, 2006) to introduce some of the proof techniques. We begin with the second assertion. Consider the sets $A_T \in \mathcal{F}_T$ and $B \in \mathscr{X}$, condition on $\mathcal{F}_T$, use the definition of $p^{\mathbb{G}}_T$ and the equivalence condition to get
\begin{align*}
    \mathbb{Q}^{\mathbb{G}}\left[A_T \cap \{G \in B\}\right] &= \mathbb{E}\left[\mathbbm{1}_{A_T\cap\{G \in B\}} \frac{Z^{\mathbb{F}}_T}{p^G_T}\right] = \mathbb{E}\left[\mathbbm{1}_{A_T}\mathbb{E}\left[\mathbbm{1}_{\{G \in B\}} \frac{Z^{\mathbb{F}}_T}{p^G_T} \Bigg\vert \mathcal{F}_T \right]\right] \\
    &= \int_{A_T}\mathbb{E}\left[\mathbbm{1}_{\{G \in B\}} \frac{Z^{\mathbb{F}}_T}{p^G_T} \Bigg\vert \mathcal{F}_T \right](\omega)\mathbb{P}(d\omega) \\
    &= \int_{A_T}\int_B \frac{Z^{\mathbb{F}}_T(\omega)}{p^x_T(\omega)}\mathbb{P}(G \in dx \vert \mathcal{F}_T)\mathbb{P}(d\omega) \\
    &= \int_{A_T}\int_B \frac{Z^{\mathbb{F}}_T(\omega)}{p^x_T(\omega)}p^x_T(\omega) \mathbb{P}(G \in dx)\mathbb{P}(d\omega) \\
    &= \int_{A_T}\int_B Z^{\mathbb{F}}_T(\omega) \mathbb{P}(G \in dx)\mathbb{P}(d\omega) \\
    &= \int_{A_T} Z^{\mathbb{F}}_T(\omega)\int_B \mathbb{P}(G \in dx)\mathbb{P}(d\omega) = \mathbb{Q}^{\mathbb{F}}(A_T)\mathbb{P}(G \in B).
\end{align*}
Noting that $\mathbb{Q}^{\mathbb{G}}\left[\Omega \cap \{G \in B\}\right] = \mathbb{Q}^{\mathbb{G}}[G \in B] = \mathbb{P}(G \in B)$ and $\mathbb{Q}^{\mathbb{G}}\left[A_T \cap \{G \in X\}\right] = \mathbb{Q}^{\mathbb{G}}[A_T] = \mathbb{Q}^{\mathbb{F}}(A_T)$, we establish the second claim. To show the martingale property of $Z^{\mathbb{G}} \coloneqq \frac{Z^{\mathbb{F}}}{p^G}$, take $0 \leq s \leq t \leq T$, $A_s \in \mathcal{F}_s$, $B \in \mathscr{X}$, $A = A_s \cap \{G \in B\} \in \mathcal{G}_s$ and note
\begin{align*}
    \mathbb{E}\left[\mathbbm{1}_A \frac{Z^{\mathbb{F}}_t}{p^G_t}\right] &= \mathbb{Q}^{\mathbb{F}}[A_s]\mathbb{P}[G \in B] = \mathbb{E}\left[\mathbbm{1}_{A_s}Z^{\mathbb{F}}_s\mathbb{P}[G \in B]\right] \\
    &= \int_{A_s}\int_B \frac{Z^{\mathbb{F}}_s(\omega)}{p^x_s(\omega)}p^x_s(\omega) \mathbb{P}(G \in dx)\mathbb{P}(d\omega) \\
    &= \mathbb{E}\left[\mathbbm{1}_{A_s}\mathbb{E}\left[\mathbbm{1}_{\{G \in B\}} \frac{Z^{\mathbb{F}}_s}{p^G_s} \Bigg\vert \mathcal{F}_s \right]\right] = \mathbb{E}\left[\mathbbm{1}_A \frac{Z^{\mathbb{F}}_s}{p^G_s}\right].
\end{align*}
Since $Z^{\mathbb{F}} = 1 = p^G_0$ and hence $\frac{Z^{\mathbb{F}}}{p^G_0} = 1$, then $\mathbb{Q}^{\mathbb{G}}(A) \coloneqq \int_A\frac{Z^{\mathbb{F}}_T}{p^G_T}$ is indeed a probability measure on $(\Omega, \mathcal{G}_T)$.
\end{proof}

 \par
The following theorem shows that the martingale property is preserved when we swtich from $\mathbb{F}$ to $\mathbb{G}$ and from $\mathbb{Q}^{\mathbb{F}}$ to $\mathbb{Q}^{\mathbb{G}}$. \par

\begin{thm} Assume Jacod's equivalence criterion holds. Then $\forall p \in [1,\infty]$ we have
\begin{align*}
    \mathcal{H}^p_{(\text{loc})}(\mathbb{Q}^{\mathbb{F}}, \mathbb{F}) = \mathcal{H}^p_{(\text{loc})}(\mathbb{Q}^{\mathbb{G}}, \mathbb{F}) \subseteq \mathcal{H}^p_{(\text{loc})}(\mathbb{Q}^{\mathbb{G}}, \mathbb{G})
\end{align*}
and, in particular,
\begin{align*}
        \mathcal{M}^p_{(\text{loc})}(\mathbb{Q}^{\mathbb{F}}, \mathbb{F}) = \mathcal{M}^p_{(\text{loc})}(\mathbb{Q}^{\mathbb{G}}, \mathbb{F}) \subseteq \mathcal{M}^p_{(\text{loc})}(\mathbb{Q}^{\mathbb{G}}, \mathbb{G}).
\end{align*}
Furthermore, any $(\mathbb{Q}^{\mathbb{F}},\mathbb{F})$-Brownian motion is $(\mathbb{Q}^{\mathbb{G}},\mathbb{G})$-Brownian motion. 
\end{thm}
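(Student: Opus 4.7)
The plan is to deploy the two properties of $\mathbb{Q}^{\mathbb{G}}$ established in the preceding theorem: (a) $\mathbb{Q}^{\mathbb{G}}=\mathbb{Q}^{\mathbb{F}}$ on $(\Omega,\mathcal{F}_T)$, and (b) $\mathcal{F}_T\perp\!\!\!\perp\sigma(G)$ under $\mathbb{Q}^{\mathbb{G}}$. Property (a) handles the equality of the two $\mathbb{F}$-classes, while property (b) delivers immersion $\mathbb{F}\hookrightarrow\mathbb{G}$ under $\mathbb{Q}^{\mathbb{G}}$ and hence the inclusion in the $\mathbb{G}$-classes. The Brownian-motion statement will then follow from L\'evy's characterization together with these two facts.

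For the equality, recall that $\mathcal{F}_s=\mathcal{F}_T$ for $s\geq T$, so every $\mathbb{F}$-adapted process $M$ has all its values $M_t$ and its maximal function $M^{*}_{\infty}=\sup_t|M_t|$ $\mathcal{F}_T$-measurable. Since $\mathbb{Q}^{\mathbb{G}}$ and $\mathbb{Q}^{\mathbb{F}}$ agree on $\mathcal{F}_T$, expectations and conditional expectations given $\mathcal{F}_t\subset\mathcal{F}_T$ of such random variables are identical under the two measures. Hence $M$ is an $\mathbb{F}$-martingale under $\mathbb{Q}^{\mathbb{F}}$ iff it is under $\mathbb{Q}^{\mathbb{G}}$, and its $\mathcal{H}^p$-norm is unchanged. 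Localization uses the same $\mathbb{F}$-stopping times on both sides, so the equality extends to the local classes.

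For the inclusion, note that $\mathcal{F}_0$ is trivial, so independence of $\mathcal{F}_T$ and $\sigma(G)$ under $\mathbb{Q}^{\mathbb{G}}$ (property (b)) is equivalent to $\s(G)\perp\!\!\!\perp_{\mathcal{F}_0}\mathcal{F}_\infty$, which by the immersion theorem stated at the beginning of the Initial Enlargement subsection yields $\mathbb{F}\hookrightarrow\mathbb{G}$ under $\mathbb{Q}^{\mathbb{G}}$. Consequently every $(\mathbb{Q}^{\mathbb{G}},\mathbb{F})$-martingale is a $(\mathbb{Q}^{\mathbb{G}},\mathbb{G})$-martingale, and since the quantity $\|M\|_{\mathcal{H}^p}=\mathbb{E}_{\mathbb{Q}^{\mathbb{G}}}[(M^{*}_\infty)^p]^{1/p}$ does not depend on the filtration, a process in $\mathcal{H}^p(\mathbb{Q}^{\mathbb{G}},\mathbb{F})$ automatically lies in $\mathcal{H}^p(\mathbb{Q}^{\mathbb{G}},\mathbb{G})$; an $\mathbb{F}$-reducing sequence remains valid as a $\mathbb{G}$-reducing sequence, so the inclusion passes to the localized classes.

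For the Brownian-motion claim, let $B$ be a $(\mathbb{Q}^{\mathbb{F}},\mathbb{F})$-Brownian motion. By L\'evy's characterization it suffices to show that $B$ is a continuous $(\mathbb{Q}^{\mathbb{G}},\mathbb{G})$-local martingale with $[B]_t=t$. Continuity is a pathwise property, unchanged by the change of measure or filtration. By the equality just established, $B$ is a $(\mathbb{Q}^{\mathbb{G}},\mathbb{F})$-local martingale; by immersion under $\mathbb{Q}^{\mathbb{G}}$, it is then a $(\mathbb{Q}^{\mathbb{G}},\mathbb{G})$-local martingale. Quadratic variation is invariant under equivalent changes of measure, and $[B]_t=t$ under $\mathbb{Q}^{\mathbb{F}}$, hence also under $\mathbb{Q}^{\mathbb{G}}$. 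The main subtlety, and the only step that is not a one-line verification, is recognizing that the ``decoupling'' property (b) of $\mathbb{Q}^{\mathbb{G}}$ is precisely the hypothesis required to invoke the immersion criterion; the remaining bookkeeping is routine once the measurability assumption $\mathcal{F}_s=\mathcal{F}_T$ for $s\geq T$ and the triviality of $\mathcal{F}_0$ are brought to bear.
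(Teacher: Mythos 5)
Your proposal is correct and follows essentially the same route as the paper's proof: both use the agreement $\mathbb{Q}^{\mathbb{G}}=\mathbb{Q}^{\mathbb{F}}$ on $\mathcal{F}_T$ for the equality of the $\mathbb{F}$-classes, the independence of $\mathcal{F}_T$ and $\sigma(G)$ under $\mathbb{Q}^{\mathbb{G}}$ for the inclusion into the $\mathbb{G}$-classes (you simply make explicit the appeal to the immersion criterion, which the paper leaves implicit), and the filtration-invariance of the pathwise quadratic variation together with L\'evy's characterization for the Brownian-motion claim.
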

\begin{proof}
We follow the original proof of Amedinger (2000). By independence of $\sigma(G)$ and $\mathcal{F}_T$ under $\mathbb{Q}^{\mathbb{G}}$, and the fact that $\mathbb{Q}^{\mathbb{G}} = \mathbb{Q}^{\mathbb{F}}$ on $(\Omega, \mathcal{F}_T)$, it follows that any
$(\mathbb{Q}^{\mathbb{F}},\mathbb{F})$-martingale is a $(\mathbb{Q}^{\mathbb{G}},\mathbb{G})$-martingale. Note that $\mathbb{F}$-stopping times are also $\mathbb{G}$-stoppping times, hence $(\mathbb{Q}^{\mathbb{F}},\mathbb{F})$-localization carries over to $(\mathbb{Q}^{\mathbb{G}},\mathbb{F})$ and $(\mathbb{Q}^{\mathbb{G}},\mathbb{G})$. Since $\mathbb{Q}^{\mathbb{G}} = \mathbb{Q}^{\mathbb{F}}$ on $(\Omega, \mathcal{F}_T)$, the integrability conditions required for $\mathcal{H}^p$ and $\mathcal{M}^p$ are satisfied. Note that the quadratic variation of continuous martingales is computed pathwise and does not involve the filtration. Since $\mathbb{Q}^{\mathbb{G}} = \mathbb{Q}^{\mathbb{F}}$ on $(\Omega, \mathcal{F}_T)$, we have for $t \in [0,T]$ that 
\begin{align*}
   \langle W \rangle ^{(\mathbb{Q}^{\mathbb{G}},\mathbb{G})}_t = \langle W \rangle ^{(\mathbb{Q}^{\mathbb{G}},\mathbb{F})}_t = \langle W \rangle ^{(\mathbb{Q}^{\mathbb{F}},\mathbb{F})}_t = t
\end{align*}
and hence W is also a $(\mathbb{Q}^{\mathbb{G}},\mathbb{G})$-Brownian motion.
\end{proof}

\par
We are now ready to state the result on the transfer of the MRP from $\mathbb{F}$ to $\mathbb{G}$. We hence need to make an additional assumption that the MRP holds in $\mathbb{F}$ to begin with. Specifically, we assume that for any $F \in L^{\infty}(\mathcal{F}_T)$, there exists a process $\phi \in \mathcal{L}^2(S,\mathbb{Q}^{\mathbb{F}}, \mathbb{F})$ such that $F = \mathbb{E}_{\mathbb{Q}^{\mathbb{F}}}[F] + \int_0^T \phi^*_s dS_s$. Then the following theorem (Amedinger, 2000, Theorem 4.2., p. 106) holds. \par

\begin{thm} Assume Jacod's equivalence criterion holds and the MRP holds for the process $S \in \mathcal{H}^2_{(\text{loc})}(\mathbb{Q}^{\mathbb{F}}, \mathbb{F})$. Then the following statements are true. \\
(i) For any $K \in \mathcal{H}^2(\mathbb{Q}^{\mathbb{G}}, \mathbb{G})$, there exists a process $\psi \in \mathcal{L}^2(S,\mathbb{Q}^{\mathbb{G}}, \mathbb{G})$ such that 
\begin{align*}
    K_t = K_0 + \int_0^t \psi^*_s dS_s, \quad t \in [0,T].
\end{align*}
(ii) For any $K \in \mathcal{M}_{\text{loc}}(\mathbb{Q}^{\mathbb{G}}, \mathbb{G})$, there exists a process $\psi \in \mathcal{L}^1_{\text{loc}}(S,\mathbb{Q}^{\mathbb{G}}, \mathbb{G})$ such that 
\begin{align*}
    K_t = K_0 + \int_0^t \psi^*_s dS_s, \quad t \in [0,T].
\end{align*}
\end{thm}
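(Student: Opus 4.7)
The plan is to leverage the martingale preserving probability measure $\mathbb{Q}^{\mathbb{G}}$ from the preceding theorem, which decouples $\mathcal{F}_T$ from $\sigma(G)$, in order to reduce the $(\mathbb{Q}^{\mathbb{G}}, \mathbb{G})$-representation problem to the assumed $(\mathbb{Q}^{\mathbb{F}}, \mathbb{F})$-representation for $S$. First I would verify that the linear span of products of the form $F \cdot g(G)$, with $F \in L^{\infty}(\mathcal{F}_T)$ and $g : X \to \mathbb{R}$ bounded and Borel, is dense in $L^2(\mathcal{G}_T, \mathbb{Q}^{\mathbb{G}})$; this follows from a standard monotone class argument using $\mathcal{G}_T = \mathcal{F}_T \vee \sigma(G)$ together with the independence of these two $\sigma$-algebras under $\mathbb{Q}^{\mathbb{G}}$ established in part (ii)(a) of the previous theorem.

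For $K_T = F g(G)$ in this dense class, I would use the defining properties of $\mathbb{Q}^{\mathbb{G}}$ (namely $\mathbb{Q}^{\mathbb{G}} = \mathbb{Q}^{\mathbb{F}}$ on $\mathcal{F}_T$, independence of $\mathcal{F}_T$ and $\sigma(G)$, and $\mathcal{G}_0 \supseteq \sigma(G)$) to obtain
\[
K_t = \mathbb{E}_{\mathbb{Q}^{\mathbb{G}}}[F g(G) \vert \mathcal{G}_t] = g(G) \, \mathbb{E}_{\mathbb{Q}^{\mathbb{F}}}[F \vert \mathcal{F}_t] = g(G) M_t,
\]
where $M$ is the closed $(\mathbb{Q}^{\mathbb{F}}, \mathbb{F})$-martingale generated by $F$. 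The MRP in $\mathbb{F}$ then yields $\phi \in \mathcal{L}^2(S, \mathbb{Q}^{\mathbb{F}}, \mathbb{F})$ with $M_t = \mathbb{E}_{\mathbb{Q}^{\mathbb{F}}}[F] + \int_0^t \phi_s^* dS_s$, and since $g(G)$ is $\mathcal{G}_0$-measurable the process $\psi_s \coloneqq g(G) \phi_s$ is $\mathbb{G}$-predictable. Using the preceding theorem's inclusion $\mathcal{H}^2(\mathbb{Q}^{\mathbb{F}}, \mathbb{F}) \subseteq \mathcal{H}^2(\mathbb{Q}^{\mathbb{G}}, \mathbb{G})$ (so in particular $\langle S, S \rangle$ is unchanged on $[0,T]$) together with boundedness of $g$, one concludes $\psi \in \mathcal{L}^2(S, \mathbb{Q}^{\mathbb{G}}, \mathbb{G})$ and $K_t = K_0 + \int_0^t \psi_s^* dS_s$ with $K_0 = \mathbb{E}_{\mathbb{Q}^{\mathbb{F}}}[F] g(G)$.

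Next I would let $\mathcal{K} \subseteq \mathcal{H}^2_0(\mathbb{Q}^{\mathbb{G}}, \mathbb{G})$ denote the set of stochastic integrals $\psi \bullet S$ with $\psi \in \mathcal{L}^2(S, \mathbb{Q}^{\mathbb{G}}, \mathbb{G})$; this is a closed stable subspace by general theory. The construction above shows that $\mathcal{K}$ contains the closed martingales generated by the centered elements of the dense family from the first paragraph, so by closedness $\mathcal{K} = \mathcal{H}^2_0(\mathbb{Q}^{\mathbb{G}}, \mathbb{G})$, proving (i). For (ii), I would localize: given $K \in \mathcal{M}_{\text{loc}}(\mathbb{Q}^{\mathbb{G}}, \mathbb{G})$, pick a localizing sequence $(T_n)$ of $\mathbb{G}$-stopping times with $K^{T_n} \in \mathcal{H}^2(\mathbb{Q}^{\mathbb{G}}, \mathbb{G})$, apply (i) to each to obtain integrands $\psi^{(n)}$, and glue them together on $\llbracket 0, T_n \rrbracket$ via uniqueness of the representation to produce a single $\psi \in \mathcal{L}^1_{\text{loc}}(S, \mathbb{Q}^{\mathbb{G}}, \mathbb{G})$ representing $K$.

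The main obstacle I anticipate is the third step, namely verifying that the dense product family actually generates all of $\mathcal{H}^2_0(\mathbb{Q}^{\mathbb{G}}, \mathbb{G})$ as a stable subspace of stochastic integrals: one must rule out a nonzero $N \in \mathcal{H}^2_0(\mathbb{Q}^{\mathbb{G}}, \mathbb{G})$ strongly orthogonal to every integral $\psi \bullet S$ of the built form, which reduces via the independence structure under $\mathbb{Q}^{\mathbb{G}}$ to a Kunita--Watanabe orthogonality question in $\mathcal{H}^2(\mathbb{Q}^{\mathbb{F}}, \mathbb{F})$ for the $\mathbb{F}$-conditional projections of $N$. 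A secondary technical point is confirming $\psi = g(G) \phi \in \mathcal{L}^2(S, \mathbb{Q}^{\mathbb{G}}, \mathbb{G})$ and the gluing in step four, which both hinge on the fact that $\langle S, S \rangle$ under $(\mathbb{Q}^{\mathbb{G}}, \mathbb{G})$ coincides with that under $(\mathbb{Q}^{\mathbb{F}}, \mathbb{F})$ on $[0,T]$.
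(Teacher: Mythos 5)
Your proposal is correct and follows essentially the same route as the paper's (Amendinger's) proof: approximate a general $H\in L^2(\mathbb{Q}^{\mathbb{G}},\mathcal{G}_T)$ by finite sums of products $f\,g(G)$, represent the $\mathcal{F}_T$-factors via the $(\mathbb{Q}^{\mathbb{F}},\mathbb{F})$-MRP, absorb the $\mathcal{G}_0$-measurable factors $g(G)$ into the integrands, and pass to the limit using the isometry/closedness of the space of stochastic integrals, with (ii) obtained by localization. The ``main obstacle'' you anticipate is not actually one: since the approximating martingales converge in $\mathcal{H}^2$ to the target (via Doob's inequality applied to the $L^2$-convergence of terminal values) and the space of stochastic integrals is closed, no Kunita--Watanabe orthogonality argument is needed.
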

\begin{proof}
 We provide a sketch of the proof, details can be found in the original paper. To prove the first claim it suffices to show that any random variable $H \in L^2(\mathbb{Q}^{\mathbb{G}}, \mathcal{G}_T)$ is of the form $H = \mathbb{E}_{\mathbb{Q}^{\mathbb{G}}}[H \vert \mathcal{G}_0] + \int_0^T \psi^*_s dS_s$ for some process $\psi \in \mathcal{L}^2(S,\mathbb{Q}^{\mathbb{G}}, \mathbb{G})$. Since $\mathcal{G}_T = \mathcal{F}_T \lor \sigma(G)$, and $V \coloneqq \left\{ h \in L^{\infty}(\mathcal{G}_T): h = \sum_{i=1}^mf_ig_i \ \text{with} \ f_i \in L^{\infty}(\mathcal{F}_T), g_i \in L^{\infty}(\sigma(G)), m \in \mathbb{N} \right\}$ is dense in $L^2(\mathbb{Q}^{\mathbb{G}}, \mathcal{G}_T)$, there exists a sequence of the form $(h_n)_{n \in \mathbb{N}} = (\sum_{i=1}^{m(n)}F^{i,n}G^{i,n})_{n \in \mathbb{N}}$ in $V$, with $F^{i,n} \in L^{\infty}(\mathcal{F}_T)$ and $G^{i,n} \in L^{\infty}(\sigma(G))$, such that $h_n$ converges to $H$ in $L^2(\mathbb{Q}^{\mathbb{G}})$. Since $\mathbb{Q}^{\mathbb{G}} = \mathbb{Q}^{\mathbb{F}}$ on $(\Omega, \mathcal{F}_T)$ and the MRP holds in $\mathbb{F}$, there exist processes $\phi^{i,n} \in \mathcal{L}^2(S,\mathbb{Q}^{\mathbb{G}}, \mathbb{F})$ such that the representation  $F^{i,n} = \mathbb{E}_{\mathbb{Q}^{\mathbb{G}}}[F^{i,n}] + \int_0^T (\phi^{i,n}_s)^* dS_s$ holds. It can then be argued that the representation $h_n = \mathbb{E}_{\mathbb{Q}^{\mathbb{G}}}[h_n \vert \mathcal{G}_0] + \int_0^T (\psi^n_s)^* dS_s$ holds, where $\psi^n \coloneqq \sum_{i=1}^{m(n)}G^{i,n}\phi^{i,n}$. Since the mapping $\theta \mapsto \int \theta^* dS$ is an isometry from $(\mathcal{L}^2(S,\mathbb{Q}^{\mathbb{G}}, \mathbb{G}), \lVert \cdot \rVert_{\mathcal{L}^2(S,\mathbb{Q}^{\mathbb{G}}, \mathbb{G})})$ to $(\mathcal{H}^2(\mathbb{Q}^{\mathbb{G}}, \mathbb{G}), \lVert \cdot \rVert_{\mathcal{H}^2(\mathbb{Q}^{\mathbb{G}}, \mathbb{G})})$, it follows that the space of stochastic integrals $\left\{ \int_0^T \theta^*_s dS_s : \theta \in \mathcal{L}^2(S,\mathbb{Q}^{\mathbb{G}}, \mathbb{G}) \right\}$ is closed in $L^2(\mathbb{Q}^{\mathbb{G}})$, hence there exists some process $\psi \in \mathcal{L}^2(S,\mathbb{Q}^{\mathbb{G}}, \mathbb{G})$ such that $\int_0^T(\psi^n_s)^*dS_s$ converges to $\int_0^T\psi^*_s dS_s$ in $L^2(\mathbb{Q}^{\mathbb{G}})$, and therefore, taking the limit, we get $H = \lim_{n \to \infty} h_n = \lim_{n \to \infty} (\mathbb{E}_{\mathbb{Q}^{\mathbb{G}}}[h_n \vert \mathcal{G}_0] + \int_0^T (\psi^n_s)^* dS_s) = \mathbb{E}_{\mathbb{Q}^{\mathbb{G}}}[H \vert \mathcal{G}_0] + \int_0^T \psi^*_s dS_s$. This proves the first claim. The second claim follows by localizaton and the right-continuity of $\mathbb{G}.$ 
 \end{proof}
 
 \par
Summarizing the above results, we see that Jacod's equivalence criterion allows us to define under $\mathbb{Q}^{\mathbb{G}}$ the martingale preserving probability measure on $(\Omega, \mathcal{F}_T)$ with the so-called \textbf{decoupling property}. Under this new measure $\mathbb{Q}^{\mathbb{G}}$ the $\sigma$-algebras $\mathcal{F}_T$ and $\sigma(G))$ become \textit{independent}, $\mathbb{Q}^{\mathbb{G}} = \mathbb{Q}^{\mathbb{F}}$ on $(\Omega, \mathcal{F}_T)$, and hence any $\mathbb{F}$-local martingale under $\mathbb{Q}^{\mathbb{F}}$ remains a $\mathbb{G}$-local martingale under $\mathbb{Q}^{\mathbb{G}}$. \par

\bigskip

We will now discuss conditions under which the PRP is preserved for $(\mathbb{P}, \mathbb{F}^{\sigma(\zeta)})$-local martingales when the original filtration is generated by a point process and a continuous martingale (e.g. a Brownian motion). The results were obtained in (Grorud, Pontier, 2001) and represent a notable extension of generality, although Jacod's stronger \textit{equivalence} condition is assumed. \par
Consider a stochastic basis $(\Omega, \mathcal{F}, \mathbb{F}, \mathbb{P})$, where $\mathbb{F}$ satisfies the usual hypotheses, and an initial enlargement given by $\mathbb{G} = \mathbb{F}^{\sigma(\zeta)} = \mathbb{F} \lor \sigma(\zeta)$ that is interpreted as private information available to the insider at $t=0$. Assume that the price process follows \par
\begin{align*}
    S^i_t = S^i_0 + \int_0^tS^i_s(b^i_sds+\sigma^i_sdW_s) + \int_0^tS^i_{s-}\int_O\phi^i(x,s)N(dx,ds), \quad 0 \leq t \leq T, i = 1,...,d,
\end{align*}
where $W$ is an $m$-dimensional Brownian motion, $N$ is an $n$-dimensional point process that has strictly positive intensity $\nu(x)dxds$, $x\in O$ an open set in $\mathbb{R}^n$, and for which exists the PRP.
We also assume (hypothesis $P$) that there exist two $\mathbb{G}$-predictable processes $l$ and $\delta$ such that
\begin{align*}
    B_t = W_t - \int_0^tl_sds, \quad t<T, \quad \text{is a} \ (\mathbb{G},\mathbb{P})\text{-Brownian motion,} \\
    \overline{N_t} = N_t - \int_0^t\int_O\nu(y)(1+\delta(y,s))dyds, \quad t<T, \quad \text{is a} \ (\mathbb{G},\mathbb{P})\text{-martingale.}
\end{align*}
\par

\begin{thm} Assume that hypothesis $P$ holds, $\zeta$ satisfies Jacod's equivalence criterion and $\mathbb{E}_{\mathbb{P}}\left(\mathcal{E}_t(-l \bullet B)\mathcal{E}_t(-\delta \bullet \overline{N})\right) = 1$ for all $t < T$, where $\mathcal{E}(\cdot)$ are the stochastic exponentials. Then there exists a probability measure $\mathbb{Q}$ equivalent to $\mathbb{P}$ such that under $\mathbb{Q}$, $\forall t < T$, $\mathcal{F}_t$ is independent of $\sigma(\zeta)$. 
\end{thm}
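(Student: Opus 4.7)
The plan is to take $\mathbb{Q}$ to be the probability measure defined by the candidate density process appearing in the hypothesis, namely $L_t \coloneqq \mathcal{E}_t(-l \bullet B)\mathcal{E}_t(-\delta \bullet \overline{N})$, and to verify that this choice enforces the desired decoupling. By hypothesis $P$, $B$ is a $(\mathbb{G},\mathbb{P})$-Brownian motion and $\overline{N}$ is a $(\mathbb{G},\mathbb{P})$-martingale, so $L$ is a strictly positive $(\mathbb{G},\mathbb{P})$-local martingale. The Novikov-type assumption $\mathbb{E}_{\mathbb{P}}[L_t] = 1$ for all $t < T$ upgrades it to a true $(\mathbb{G},\mathbb{P})$-martingale on each $[0,t]$ with $t < T$, so setting $d\mathbb{Q}/d\mathbb{P}\vert_{\mathcal{G}_t} \coloneqq L_t$ yields a consistent family of probability measures on $\bigcup_{t<T}\mathcal{G}_t$, equivalent to $\mathbb{P}$ on each $\mathcal{G}_t$ by strict positivity of $L$. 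Jacod's equivalence criterion enters implicitly here: it is precisely the hypothesis that guarantees existence of the information drifts $l$ and $\delta$ used to define $L$.

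The second step is to apply the Girsanov-Meyer theorem in its general semimartingale-with-jumps form (e.g.\ Protter Ch.~III) to identify the $(\mathbb{G},\mathbb{Q})$-characteristics of the original drivers $W$ and $N$. For the continuous part, $\langle B, L\rangle_s = -L_{s-} l_s\, ds$, so the Girsanov-translated process $B_t + \int_0^t l_s\, ds = W_t$ is a $(\mathbb{G},\mathbb{Q})$-local martingale; its quadratic variation remains $t$, whence by L\'evy's characterization $W$ is a $(\mathbb{G},\mathbb{Q})$-Brownian motion. The analogous computation on the jump side, via the random-measure Girsanov formula applied to $\mathcal{E}(-\delta \bullet \overline{N})$, cancels the correction factor $(1+\delta)$ in the $\mathbb{G}$-compensator of $N$ under $\mathbb{P}$ and restores the $(\mathbb{G},\mathbb{Q})$-compensator of $N$ to the \emph{deterministic} kernel $\nu(x)\, dx\, ds$. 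In other words, under $\mathbb{Q}$ the pair $(W,N)$ has, in the enlarged filtration $\mathbb{G}$, the same predictable characteristics it originally had in $\mathbb{F}$ under $\mathbb{P}$.

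The conclusion now follows from standard characterization theorems for noise processes with deterministic characteristics. Because $W$ is a $(\mathbb{G},\mathbb{Q})$-Brownian motion, the increments $(W_s - W_0)_{s \leq t}$ are $\mathbb{Q}$-independent of $\mathcal{G}_0 \supseteq \sigma(\zeta)$; because $N$ has deterministic $(\mathbb{G},\mathbb{Q})$-compensator $\nu(x)\,dx\,ds$, Watanabe's characterization of Poisson random measures shows that $N$ is a $(\mathbb{G},\mathbb{Q})$-Poisson random measure, hence likewise independent of $\mathcal{G}_0$ under $\mathbb{Q}$. Since $\mathbb{F}$ is (up to completion) the natural filtration of $(W, N)$ — recall that the PRP for this pair was assumed in setting up the model — the $\sigma$-algebra $\mathcal{F}_t$ is generated by $(W_s, N([0,s]\times\cdot): s \leq t)$, and we conclude that $\mathcal{F}_t$ is $\mathbb{Q}$-independent of $\sigma(\zeta)$ for every $t < T$.

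The main technical obstacle is the jump-component Girsanov computation: one must verify carefully, using the Dol\'eans exponential structure of $\mathcal{E}(-\delta \bullet \overline{N})$ and the explicit formula for compensators of integer-valued random measures under an equivalent change of measure, that the post-change $\mathbb{G}$-compensator of $N$ is exactly $\nu(x)\, dx\, ds$ and \emph{not} merely some $\mathbb{F}$-predictable kernel — the deterministic character is what activates Watanabe's characterization in the final step. The hypothesis $\mathbb{E}_{\mathbb{P}}[L_t] = 1$ is precisely what rules out the usual pathology in jump-process Girsanov transforms in which the exponential fails to be a genuine martingale and the computed ``compensator'' ceases to correctly describe the law of $N$ under $\mathbb{Q}$.
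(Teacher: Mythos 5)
The paper states this theorem without proof, deferring entirely to Grorud--Pontier (2001); your proposal reconstructs essentially the argument of that source: take $L_t=\mathcal{E}_t(-l\bullet B)\,\mathcal{E}_t(-\delta\bullet\overline{N})$ as the $\mathbb{G}$-density of $\mathbb{Q}$, use the normalization $\mathbb{E}_{\mathbb{P}}[L_t]=1$ to get a true martingale and hence a consistent measure on each $\mathcal{G}_t$, apply Girsanov to recover $(W,N)$ with deterministic $(\mathbb{G},\mathbb{Q})$-characteristics, and conclude independence of $\mathcal{F}_t$ from $\mathcal{G}_0\supseteq\sigma(\zeta)$. This is the right route and I see no essential gap. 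Two points deserve tightening: first, deducing independence of $W$ from $\mathcal{G}_0$ and of $N$ from $\mathcal{G}_0$ separately does not by itself give independence of the pair $(W,N)$ from $\mathcal{G}_0$, which is what you need since $\mathcal{F}_t$ is generated jointly by both drivers; the clean statement is that a $\mathbb{G}$-semimartingale whose $\mathbb{G}$-characteristics are deterministic is a process with independent increments relative to $\mathbb{G}$ (Jacod--Shiryaev, Thm.\ II.4.15/II.6.6), which yields the joint conditional independence from $\mathcal{G}_0$ in one stroke. Second, Jacod's \emph{equivalence} condition is not merely what produces $l$ and $\delta$ (hypothesis $P$ already posits those); its role is to guarantee that the candidate density stays strictly positive (equivalently $1+\delta>0$, so the Dol\'eans exponential of the jump part does not hit zero), which is what upgrades $\mathbb{Q}\ll\mathbb{P}$ to the claimed equivalence $\mathbb{Q}\sim\mathbb{P}$ on each $\mathcal{G}_t$, $t<T$.
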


\par
The implied independence of $\sigma$-algebras unsurprisingly outputs a martingale representation result. We denote by $\mathbb{G}_A$ the filtration defined by $(\mathcal{G}_t = \mathcal{F}_t \lor \sigma(\zeta) ; t \leq A)$. \par

\begin{thm} Assume that either \par
(a) there exists a probability measure $\mathbb{Q}$ equivalent to $\mathbb{P}$ such that under $\mathbb{Q}$, $\forall t < T$, $\mathcal{F}_t$ is independent of $\sigma(\zeta)$ \par
or \par
(b) hypothesis $P$ and $\mathcal{E}_t(-l \bullet B)\mathcal{E}_t(-\delta \bullet \overline{N})$ is a $(\mathbb{G}_A,\mathbb{P})$-integrable martingale; \par
then $\forall A<T$, every $(\mathbb{G}_A,\mathbb{Q})$-local martingale can be represented as \par
\begin{align*}
    Z_t = \mathbb{E}_{\mathbb{Q}}(Z \vert \mathcal{G}_0) + \int_0^t\chi_sdW_s + \int_0^t\int_O\xi(y,s)(N(dy,ds) - \nu(y)dyds), \quad t\leq A
\end{align*}
for a unique predictable process $(\chi,\xi).$
\end{thm}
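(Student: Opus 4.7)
The plan is to reduce case (b) to case (a), and then exploit the independence of $\mathcal{F}_A$ and $\sigma(\zeta)$ under $\mathbb{Q}$ to transfer the predictable representation property from $\mathbb{F}$ to $\mathbb{G}_A$. Under (b) the preceding theorem directly furnishes an equivalent measure $\mathbb{Q}$ such that $\mathcal{F}_t$ and $\sigma(\zeta)$ are independent for every $t<T$, so we may assume (a) throughout. The first order of business is to verify that $W$ remains a $(\mathbb{G}_A,\mathbb{Q})$-Brownian motion and that $\widetilde{N}(dy,ds)\coloneqq N(dy,ds)-\nu(y)\,dy\,ds$ remains a $(\mathbb{G}_A,\mathbb{Q})$-martingale. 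For $s<t\leq A$ and any $E=A_s\cap\{\zeta\in B\}$ with $A_s\in\mathcal{F}_s$, independence yields
\[
\mathbb{E}_{\mathbb{Q}}\bigl[\mathbbm{1}_E(W_t-W_s)\bigr]=\mathbb{Q}(\zeta\in B)\,\mathbb{E}_{\mathbb{Q}}\bigl[\mathbbm{1}_{A_s}(W_t-W_s)\bigr]=0,
\]
and the analogous identity takes care of $\widetilde{N}$; a $\pi$--$\lambda$ argument extends these zero-integrals to all of $\mathcal{G}_s$.

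The heart of the argument is a density-plus-isometry step in the spirit of Amedinger's theorem on the initial enlargement of a Brownian filtration. For any bounded measurable $g$, the random variable $g(\zeta)$ is $\mathcal{G}_0$-measurable and hence $\mathbb{G}_A$-predictable, so it can be pulled inside stochastic integrals against $W$ and $\widetilde{N}$. Fix $F\in L^{\infty}(\mathcal{F}_A,\mathbb{Q})$; the hypothesised PRP of $(W,\widetilde{N})$ in $(\mathbb{F},\mathbb{Q})$ (obtained by transporting the original $\mathbb{P}$-PRP via the equivalence of $\mathbb{Q}$ and $\mathbb{P}$) supplies predictable $(\chi^F,\xi^F)$ with
\[
F=\mathbb{E}_{\mathbb{Q}}[F]+\int_0^A\chi^F_s\,dW_s+\int_0^A\int_O\xi^F(y,s)\,\widetilde{N}(dy,ds).
\]
Multiplying by $g(\zeta)$ and using the independence identity $g(\zeta)\,\mathbb{E}_{\mathbb{Q}}[F]=\mathbb{E}_{\mathbb{Q}}[F\,g(\zeta)\mid\mathcal{G}_0]$ yields the claimed representation for the product $F\,g(\zeta)$, with integrands $g(\zeta)\,\chi^F$ and $g(\zeta)\,\xi^F$.

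Since the linear span of such products is dense in $L^2(\mathcal{G}_A,\mathbb{Q})$ by a monotone-class argument on $\mathcal{G}_A=\mathcal{F}_A\lor\sigma(\zeta)$, the stochastic-integral isometry allows me to pass to the $L^2$-closure: every $Z_A\in L^2(\mathcal{G}_A,\mathbb{Q})$ admits the representation, whence so does every square-integrable $(\mathbb{G}_A,\mathbb{Q})$-martingale. Uniqueness of $(\chi,\xi)$ follows from the strong orthogonality of $W$ and $\widetilde{N}$ and the injectivity of the stochastic integral on $\mathcal{L}^2$-predictable processes. Extension from square-integrable to general $(\mathbb{G}_A,\mathbb{Q})$-local martingales is then a standard localization along a sequence of $\mathbb{G}_A$-stopping times.

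The main obstacle is justifying, under the purely axiomatic hypothesis (a), that the PRP of $(W,\widetilde{N})$ in $(\mathbb{F},\mathbb{P})$ actually transports to $(\mathbb{F},\mathbb{Q})$. This is where Jacod's equivalence condition is indispensable: it ensures that the density $p^\zeta$ is $\mathbb{F}$-adapted and, via the decoupling construction described in the previous section, that the $\mathbb{Q}$-laws of $W$ and $N$ on $\mathbb{F}$ coincide with their $\mathbb{P}$-laws so that stochastic-integral representations survive the measure change. A secondary subtlety is that the $L^2$ approximation must be controlled after multiplication by $g(\zeta)$, which is why we restrict to bounded $F$ and $g$ before taking the $L^2$-closure.
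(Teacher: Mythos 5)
Your proof is essentially correct, but it takes a genuinely different route from the paper, whose entire argument is a citation of the general representation theorem of Jacod--Shiryaev (Theorem III.4.34): once one knows that under $\mathbb{Q}$ the process $W$ is a $(\mathbb{G}_A,\mathbb{Q})$-Brownian motion and $N$ has $(\mathbb{G}_A,\mathbb{Q})$-compensator $\nu(y)\,dy\,ds$, that theorem (which ties the representation property to uniqueness/extremality of the martingale problem for these characteristics) delivers the conclusion in one stroke. You instead give a constructive Amedinger-style argument: reduce (b) to (a), check by a $\pi$--$\lambda$ argument that $W$ and $\widetilde{N}$ remain $(\mathbb{G}_A,\mathbb{Q})$-martingales, represent the dense family of products $F\,g(\zeta)$ by pulling the $\mathcal{G}_0$-measurable factor $g(\zeta)$ inside the $\mathbb{F}$-representation of $F$, pass to the $L^2$-closure via the isometry and strong orthogonality of $W$ and $\widetilde{N}$, and localize. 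This mirrors the proof of Amedinger's theorem reproduced earlier in the paper and has the advantage of making explicit exactly where independence enters; the citation route is shorter and covers the general semimartingale-characteristics setting without the density/isometry bookkeeping. One caveat, which you correctly flag yourself: under the bare hypothesis (a) an equivalent $\mathbb{Q}$ decoupling $\mathcal{F}_t$ from $\sigma(\zeta)$ need not leave $W$ a Brownian motion nor $\nu(y)\,dy\,ds$ the compensator of $N$ (a Girsanov drift could appear on $\mathbb{F}$), so the representation \emph{with respect to these specific integrators} additionally requires that $\mathbb{Q}$ coincide with $\mathbb{P}$ on each $\mathcal{F}_t$ --- a property of the Grorud--Pontier/decoupling measure, implicitly assumed in the theorem's statement; your appeal to Jacod's equivalence condition and the decoupling construction is the right patch, and the same implicit assumption underlies the paper's citation.
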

\begin{proof}
See (Jacod, Shiryaev, 2002, Theorem 4.34, p.176).
\end{proof} 

It can actually be shown that the existience of the equivalent probability measure $\mathbb{Q}$ implies hypothesis $P$ and the existence of predictable processes $l$ and $\delta$ such that $B$ is a $(\mathbb{G},\mathbb{P})$-Brownian motion, $\overline{N}$ is a $(\mathbb{G},\mathbb{P})$-martingale and $\mathbb{E}_{\mathbb{P}}\left(\mathcal{E}_t(-l \bullet B)\mathcal{E}_t(-\delta \bullet \overline{N})\right) = 1$ for all $t < T$. \par
Finally, we present another result (Callegaro et al, 2011) on the stability of the PRP under initial (as well as progressive) enlargement when Jacod's equivalence criterion is assumed. Consider a complete stochastic basis $(\Omega, \mathcal{F}, \mathbb{F}, \mathbb{P})$ and an enlargement given by $\mathbb{G} \coloneqq \mathbb{F} \lor \sigma(\zeta).$ $\mathbb{P}^*$ denotes the decoupling measure as defined above for the results of Amedinger. Also recall that any $(\mathbb{P},\mathbb{F})$-local martingale $X$ is a $(\mathbb{P}, \mathbb{G})$-semimartingale with canonical decomposition  \par
\begin{align*}
    X_t = \widetilde{X}_t + \int_0^t\frac{d\langle X, p^{\zeta} \rangle_s}{p_{s-}^{\zeta}},
\end{align*}
where $\widetilde{X}$ is a $(\mathbb{P}, \mathbb{G})$-local martingale \textcolor{black}{and $p$ is the usual conditional density process}. \par

\begin{thm} Assume that there exists a process $Z$ (not necessarily continuous) that possesses the PRP for $(\mathbb{P},\mathbb{F})$. Then every $\widetilde{X} \in \mathcal{M}_{\text{loc}}(\mathbb{P}^*, \mathbb{G})$ admits a representation \par
\begin{align*}
    \widetilde{X}_t = \widetilde{X}_0 + \int_0^t\widetilde{\Phi}_sdZ_s,
\end{align*}
where $\widetilde{\Phi} \in \mathcal{L}(Z,\mathbb{P}^*,\mathbb{G})$. In the special case where $\widetilde{X} \in \mathcal{M}^2(\mathbb{P}^*, \mathbb{G})$, it holds that \\ $\mathbb{E}_{\mathbb{P}^*}\left(\int_0^t(\widetilde{\Phi}_s)^2d[Z]_s\right)<\infty$, for all $t\geq0$, and the representation is unique. 
\end{thm}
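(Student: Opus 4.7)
The plan is to adapt the argument of the Amedinger theorem given earlier in this section (on $\cH^2(\mQ^{\mG}, \mG)$-representability), replacing the price process $S$ by the general PRP-generator $Z$. First I would reduce the $\cM_{\text{loc}}(\mP^*, \mG)$ case to the $\cM^2(\mP^*, \mG)$ case by a standard localization: choose $\mG$-stopping times $(T_n)$ localizing $\widetilde{X}$ to bounded $\mG$-martingales (for instance, $T_n = \inf\{t : \vert \widetilde{X}_t \vert + [Z]_t \geq n\}$), represent each $\widetilde{X}^{T_n}$ in $L^2(\mP^*)$, and paste the integrands into a global $\widetilde{\Phi} \in \cL(Z, \mP^*, \mG)$ using uniqueness on overlaps.

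For $\widetilde{X} \in \cM^2(\mP^*, \mG)$, it suffices to represent $H \coloneqq \widetilde{X}_T \in L^2(\mP^*, \cG_T)$. Since $\cG_T = \cF_T \lor \s(\z)$ and under $\mP^*$ the $\s$-algebras $\cF_T$ and $\s(\z)$ are (conditionally) independent, the linear span
\[
V = \left\{\sum_{i=1}^m f_i g_i : f_i \in L^{\infty}(\cF_T),\ g_i \in L^{\infty}(\s(\z)),\ m \in \mN \right\}
\]
is dense in $L^2(\mP^*, \cG_T)$. Pick $(h_n) \subset V$ with $h_n \to H$ in $L^2(\mP^*)$. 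Because $\mP^* = \mP$ on $\cF_{\infty}$, the PRP of $Z$ for $(\mP, \mF)$ transfers to $(\mP^*, \mF)$, so for each $f_i^{(n)}$ there is an $\mF$-predictable $\phi^{i,n}$ with $f_i^{(n)} = \mE_{\mP^*}[f_i^{(n)}] + \int_0^T \phi_s^{i,n} dZ_s$. Multiplying by $g_i^{(n)}$ (which is $\s(\z) \subset \cG_0$-measurable) and summing gives
\[
h_n = \mE_{\mP^*}[h_n \vert \cG_0] + \int_0^T \psi^n_s dZ_s, \qquad \psi^n \coloneqq \sum_i g_i^{(n)} \phi^{i,n} \in \cL^2(Z, \mP^*, \mG),
\]
where $\mP^*$-independence identifies $\sum_i g_i^{(n)} \mE_{\mP^*}[f_i^{(n)}]$ with $\mE_{\mP^*}[h_n \vert \cG_0]$.

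For the limit passage, the $L^2$-isometry
\[
\mE_{\mP^*}\Bigl[\Bigl(\int_0^T \theta_s dZ_s\Bigr)^2\Bigr] = \mE_{\mP^*}\Bigl[\int_0^T \theta_s^2 d[Z]_s\Bigr], \qquad \theta \in \cL^2(Z, \mP^*, \mG),
\]
implies that $\{\int_0^T \theta_s dZ_s : \theta \in \cL^2(Z, \mP^*, \mG)\}$ is closed in $L^2(\mP^*, \cG_T)$. Hence $(\psi^n)$ is Cauchy in $\cL^2(Z, \mP^*, \mG)$ with some limit $\widetilde{\Phi}$, and $\int_0^T \psi^n_s dZ_s \to \int_0^T \widetilde{\Phi}_s dZ_s$ in $L^2(\mP^*)$. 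Combined with $\mE_{\mP^*}[h_n \vert \cG_0] \to \mE_{\mP^*}[H \vert \cG_0]$ in $L^2(\mP^*)$, this yields the desired representation together with the integrability bound, while uniqueness follows from the same isometry. The main obstacle I anticipate is justifying the isometry and closedness of the stochastic-integral subspace when $Z$ has jumps; this reduces to a careful application of the theory of $\cL^2(Z)$-integrands for general (possibly discontinuous) $\mG$-martingales and to verifying $Z \in \cH^2_{\text{loc}}(\mP^*, \mG)$, the latter being inherited from the $\mF$-side via the immersion $\mF \hookrightarrow \mG$ under $\mP^*$.
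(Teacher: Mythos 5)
The paper itself gives no proof of this theorem (it is quoted from Callegaro et al., 2011), but your strategy --- reduce to the square-integrable case, represent the terminal value by approximating with products $f_ig_i$ with $f_i\in L^{\infty}(\mathcal{F}_T)$ and $g_i\in L^{\infty}(\sigma(\zeta))$, pull the $\mathcal{G}_0$-measurable factors $g_i$ inside the stochastic integral, and close up via the It\^o isometry --- is exactly the mechanism of the Amedinger theorem proved earlier in this section, and it is also how Callegaro et al. argue. The core of the argument is sound: $\mathbb{P}^*=\mathbb{P}$ on $\mathcal{F}_{\infty}$ does transfer the PRP of $Z$ to $(\mathbb{P}^*,\mathbb{F})$; the immersion $\mathbb{F}\hookrightarrow\mathbb{G}$ under $\mathbb{P}^*$ keeps $Z$ a $\mathbb{G}$-local martingale with unchanged stochastic integrals; and the closedness of the space of $L^2$-stochastic integrals delivers both the limit passage and uniqueness.

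The genuine gap is the localization step. The stopping times $T_n=\inf\{t: \vert\widetilde{X}_t\vert+[Z]_t\geq n\}$ do not make $\widetilde{X}^{T_n}$ square-integrable: a local martingale can have an unbounded jump at $T_n$ (a compensated compound Poisson process with infinite-variance jumps is a martingale that is not locally in $\mathcal{H}^2$), so ``represent each $\widetilde{X}^{T_n}$ in $L^2(\mathbb{P}^*)$'' is not available, and the reduction of $\mathcal{M}_{\text{loc}}(\mathbb{P}^*,\mathbb{G})$ to $\mathcal{M}^2(\mathbb{P}^*,\mathbb{G})$ collapses at exactly the point where the hypothesis ``$Z$ not necessarily continuous'' matters. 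The standard repair is to prove the representation for bounded (or $\mathcal{H}^2$) $\mathbb{G}$-martingales, conclude that the stable subspace generated by $Z$ is all of $\mathcal{H}^1_0(\mathbb{P}^*,\mathbb{G})$ (bounded martingales being dense in $\mathcal{H}^1$), and then use that every local martingale is locally in $\mathcal{H}^1$ together with the stability of the class $\{\Phi\bullet Z\}$ under such localization; alternatively, split $\widetilde{X}$ into a locally bounded part plus a finite-variation part carrying the large jumps and treat the latter separately. A second, minor point: unless $\mathcal{F}_0$ is trivial, the constants $\mathbb{E}_{\mathbb{P}^*}[f_i^{(n)}]$ must be replaced by $\mathbb{E}_{\mathbb{P}^*}[f_i^{(n)}\,\vert\,\mathcal{F}_0]$, and the identification of the initial term uses the conditional independence of $\mathcal{F}_{\infty}$ and $\sigma(\zeta)$ given $\mathcal{F}_0$ rather than full independence.
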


\bigskip

\par
We now relax Jacod's equivalence criterion and consider the case where only absolute continuity is satisfied. The following is a well-known result on the predictable representation property under initial enlargement (see, e.g., Askamit, Jeanblanc, 2017, pp. 92-93). \par

\begin{thm} Assume that $\xi$ satisfies Jacod's absolute continuity condition and there exists a process $X \in \mathcal{M}_{loc}(\mathbb{F})$ such that every $Y \in \mathcal{M}_{loc}(\mathbb{F})$ can be represented as $Y = Y_0 + \phi \bullet X$ for some $X \in \mathcal{L}(X,\mathbb{F}),$ where $\mathcal{L}(X,\mathbb{F}) = \left\{ \phi \
\text{predictable} : \left(\int_0^{.}\phi^2_sd[X]_s\right)^{1/2} \text{locally integrable} \right\}$. Then every $Y \in \mathcal{M}_{loc}(\mathbb{F}^{\sigma(\xi)})$ admits a representation
\begin{align*}
    Y_t = Y_0 + (\Phi \bullet \overline{X})_t,
\end{align*}
where $\Phi \in \mathcal{L}(\overline{X},\mathbb{F}^{\sigma(\xi)}), Y_0 \in \mathcal{F}^{\sigma(\xi)}_0$ and $\overline{X}$ is the $\mathbb{F}^{\sigma(\xi)}$-local martingale component in the $\mathbb{F}^{\sigma(\xi)}$-semimartingale decomposition of $X$, given by 
\begin{align*}
    \overline{X}_t \coloneqq X_t - \int_0^t \frac{1}{p_{s-}(\zeta)}d[X, p(\zeta)]_s + \left(\mathbbm{1}_{\llbracket\tilde{R}(u),\infty\llbracket} \Delta X_{\tilde{R}(u)}\right)^{p,\mathbb{F}}_t\vert_{u = \zeta},
\end{align*}
where $\tilde{R}(u) \coloneqq R(u)_{\{p_{R(u)-}(u)>0\}}$, and $R(u) \coloneqq \text{inf} \{t : p_{t-}(u) = 0\},$ as defined in the previous section.
\end{thm}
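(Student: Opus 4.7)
The plan is to combine two ingredients: a structural representation of $\mathbb{F}^{\sigma(\xi)}$-local martingales under Jacod's absolute continuity condition, and the hypothesised PRP of $X$ in $\mathbb{F}$. The overall strategy follows Jacod's original 1985 argument adapted to the degenerate (non-equivalence) regime where $p(x)$ is allowed to hit $0$.

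First I would invoke the parametric structure theorem for $\mathbb{F}^{\sigma(\xi)}$-local martingales. Using the good version of the density $p$ supplied by the lemma recalled in the excerpt, every $Y \in \mathcal{M}_{\text{loc}}(\mathbb{F}^{\sigma(\xi)})$ can be written, starting from $Y_0 \in \mathcal{F}^{\sigma(\xi)}_0$, as $Y_t - Y_0 = N_t(\xi)$ for some $\mathcal{O}(\mathbb{F}) \otimes \mathcal{B}(\overline{\mathbb{R}})$-measurable family $(N_t(x))$ such that, for $\eta$-a.e.\ $x$, the process $N(x)$ is an $\mathbb{F}$-local martingale on $\llbracket 0, \zeta^x \llbracket$, vanishing on $\llbracket \zeta^x, \infty \llbracket$, with the appropriate ``$p(x)$-orthogonality'' that guarantees the substitution $x = \xi$ yields an $\mathbb{F}^{\sigma(\xi)}$-local martingale. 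This step is purely on the $\mathbb{F}^{\sigma(\xi)}$ side and does not yet use the PRP of $X$.

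Next I would apply the PRP parameterwise in $\mathbb{F}$. For each $x$, by the assumed PRP there exists an $\mathbb{F}$-predictable $\phi(\cdot, x) \in \mathcal{L}(X, \mathbb{F})$ with $N(x) = \phi(\cdot, x) \bullet X$ on $\llbracket 0, \zeta^x \llbracket$. The uniqueness of the PRP integrand combined with a measurable-projection argument (as in Jacod, 1985) upgrades this pointwise selection to a jointly $\mathcal{P}(\mathbb{F}) \otimes \mathcal{B}(\overline{\mathbb{R}})$-measurable $\phi$, so that $\Phi_t \coloneqq \phi_t(\xi)$ is $\mathbb{F}^{\sigma(\xi)}$-predictable and lies in $\mathcal{L}(X, \mathbb{F}^{\sigma(\xi)})$. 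The last step is to convert $\Phi \bullet X$ into $\Phi \bullet \overline{X}$: the theorem statement provides the decomposition $X = \overline{X} + A$ with $A$ an $\mathbb{F}^{\sigma(\xi)}$-predictable finite-variation process, so $Y - Y_0 = \Phi \bullet X = \Phi \bullet \overline{X} + \Phi \bullet A$. Since both $Y - Y_0$ and $\Phi \bullet \overline{X}$ are $\mathbb{F}^{\sigma(\xi)}$-local martingales, uniqueness of the canonical decomposition of a special semimartingale forces $\Phi \bullet A \equiv 0$, giving $Y_t = Y_0 + (\Phi \bullet \overline{X})_t$.

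The main obstacle will be the joint measurability / selection step under mere absolute continuity. The PRP must be applied only on the random sets $\llbracket 0, \zeta^x \llbracket = \{p_-(x) > 0\}$, and one has to verify that extending $\phi(\cdot, x)$ by $0$ off this set preserves predictability and that the $\mathbb{F}$-level integrability condition propagates, after evaluation at $\xi$, to $\Phi \in \mathcal{L}(\overline{X}, \mathbb{F}^{\sigma(\xi)})$; tracking the brackets carefully through the decomposition is what makes this nontrivial. A subsidiary technicality is confirming that the compensated jump term $\bigl(\mathbbm{1}_{\llbracket \tilde{R}(u), \infty \llbracket} \Delta X_{\tilde{R}(u)}\bigr)^{p, \mathbb{F}}_t \vert_{u = \xi}$ built into $\overline{X}$ does not introduce any ``extra'' martingale direction transverse to $\overline{X}$ that $\Phi$ would fail to capture; this amounts to checking that the degeneracy time $\zeta^{\xi}$ and the stopping time $\tilde{R}(\xi)$ interact so that no martingale mass is lost at the jumps of $p(\xi)$ to $0$.
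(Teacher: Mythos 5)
The paper itself does not prove this theorem (it defers to Aksamit--Jeanblanc, Theorem 4.34, and Fontana, 2017, Theorem 2.6), so your sketch has to be measured against the cited argument. Your top-level skeleton --- parametric structure theorem for $\mathbb{F}^{\sigma(\xi)}$-local martingales, parameterwise use of the PRP in $\mathbb{F}$, a measurable-selection step, then passage to $\overline{X}$ --- is the right one, but two of the three steps are carried out on the wrong objects, and the resulting gap is not merely technical.

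First, the structure theorem under Jacod's condition does not say that $Y-Y_0=N(\xi)$ with $N(x)$ an $\mathbb{F}$-local martingale on $\llbracket 0,\zeta^x\llbracket$; it says that $Y=y(\xi)$ for a measurable family $y(x)$ such that $p(x)\,y(x)$ is an $\mathbb{F}$-local martingale for $\eta$-a.e.\ $x$. The PRP in $\mathbb{F}$ therefore has to be applied to $p(x)y(x)$ (and, separately, to $p(x)$ itself, which supplies an integrand $\psi(x)$ with $p(x)=p_0(x)+\psi(x)\bullet X$), not to $y(x)$ restricted to a stochastic interval --- the latter is not an $\mathbb{F}$-local martingale and the hypothesis does not apply to it. Second, and consequently, your final step is circular: you assume $Y-Y_0=\Phi\bullet X$ as a $\mathbb{G}$-integral and deduce $\Phi\bullet A\equiv 0$ from uniqueness of the canonical decomposition. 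In general $\Phi\bullet A\neq 0$ (already in the Brownian-bridge example the drift $\int_0^t\Phi_s\frac{B_1-B_s}{1-s}\,ds$ does not vanish), so $Y-Y_0$ is \emph{not} $\Phi\bullet X$; the identity you actually need, $Y-Y_0=\Phi\bullet\overline{X}$, is obtained by an It\^o/integration-by-parts computation on $y(\xi)=\bigl(p(\xi)y(\xi)\bigr)/p(\xi)$. Dividing by $p(\xi)$ is precisely what generates the drift $\int_0^{\cdot} p_{s-}(\xi)^{-1}d[X,p(\xi)]_s$ and the compensated-jump term at $\tilde{R}(\xi)$ that define $\overline{X}$, and it is this computation that produces the explicit integrand $\Phi$ (roughly $(\varphi(\xi)-y_-(\xi)\psi(\xi))/p_{s-}(\xi)$) and lets one verify $\Phi\in\mathcal{L}(\overline{X},\mathbb{F}^{\sigma(\xi)})$. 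The measurability and integrability concerns you flag are real, but they sit downstream of this missing computation, which is the actual content of the theorem.
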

\begin{proof} The proof is rather technical. A sketch can be found in (Askamit, Jeanblanc, 2017, p.93, Theorem 4.34). See (Fontana, 2017, p.13, Theorem 2.6) for a full proof with all the details.
\end{proof}
\par
\begin{example} (Fontana, 2017, p. 9) Consider a standard Brownian motion $W = (W_t)_{t\in[0,T]}$ on a stochastic basis $(\Omega, \mathcal{F}_T, \mathbb{F}, \mathbb{P})$ that satisfies the usual conditions, and define a process by $S = (S_t)_{t\in[0,T]} \coloneqq 1 + W$. Consider a discrete random variable $\xi \coloneqq \mathbbm{1}_{\{\text{inf}_{t \in [0,T]}S_t > 0\}}$. Thus, Jacod's criterion is satisfied, therefore 
\begin{align*}
p_t^1 = \frac{\mathbb{P}(\text{inf}_{u\in[0,t]}S_u>0 \vert \mathcal{F}_t)}{\mathbb{P}(\text{inf}_{u\in[0,t]}S_u>0)} = 1 + \frac{1}{\mathbb{P}(\text{inf}_{u\in[0,t]}S_u>0)}\sqrt{\frac{2}{\pi}}\int_0^{\sigma \land t}\frac{1}{\sqrt{T-s}}\text{exp}\left(-\frac{S_s^2}{2(T-s)}\right)dW_s, \\
p_t^0 = \frac{\mathbb{P}(\text{inf}_{u\in[0,t]}S_u\leq0 \vert \mathcal{F}_t)}{\mathbb{P}(\text{inf}_{u\in[0,t]}S_u\leq0)} = 1 + \frac{1}{\mathbb{P}(\text{inf}_{u\in[0,t]}S_u\leq0)}\sqrt{\frac{2}{\pi}}\int_0^{\sigma \land t}\frac{1}{\sqrt{T-s}}\text{exp}\left(-\frac{S_s^2}{2(T-s)}\right)dW_s,
\end{align*}
where $\sigma = \text{inf}\{t \in \mathbb{R}_+ \vert S_t = 0\}.$ Note that there is a positive probability that $p^1$ and $p^0$ hit 0 in a continuous way. Note that $W$ has the predictable representation property on $(\Omega, \mathcal{F}_T, \mathbb{F}, \mathbb{P})$. Since $p^{\xi}_t = p^1_t\mathbbm{1}_{\{\sigma > T\}} + p^0_t\mathbbm{1}_{\{\sigma \leq T\}}$, the predictable representation property holds in the enlarged stochastic basis $(\Omega, \mathcal{F}_T, \mathbb{F}^{\sigma(\xi)}, \mathbb{P})$ with respect to the following process
\begin{align*}
    \overline{S}_t &= S_t - \int_0^t\frac{1}{p^{\xi}_s}d\langle S, p^x\rangle_s^{\mathbb{F}}\vert_{x = \xi} \\
    &= S_t - \frac{1}{\mathbb{P}(\text{inf}_{u\in[0,t]}S_u>0)}\sqrt{\frac{2}{\pi}}\int_0^{\sigma \land t}\frac{1}{\sqrt{T-s}}\text{exp}\left(-\frac{S_s^2}{2(T-s)}\right)dW_s \\
    &+  \frac{1}{\mathbb{P}(\text{inf}_{u\in[0,t]}S_u\leq0)}\sqrt{\frac{2}{\pi}}\int_0^{\sigma \land t}\frac{1}{\sqrt{T-s}}\text{exp}\left(-\frac{S_s^2}{2(T-s)}\right)dW_s.
\end{align*}
\end{example}

\begin{example} (Fontana, 2017, p. 8) Consider a standard Poisson process $N = (N_t)_{t\in[0,T]}$ on a stochastic basis $(\Omega, \mathcal{F}_T, \mathbb{F}, \mathbb{P})$ that satisfies the usual conditions, with the sequence of jump times $\{\tau_n\}_{n \in \mathbb{N}}$. Consider the random variable $\xi \coloneqq N_T$. It is well-known that the corresponding conditional density process is given by 
\begin{align*}
    p^n_t &= \frac{\mathbb{P}(\xi = n \vert \mathcal{F}_t)}{\mathbb{P}(\xi = n)} = e^t\frac{(T-t)^{n - N_t}}{T_n}\frac{n!}{(n - N_t)!}\mathbbm{1}_{\{N_t\leq n\}}, \text{for all} \ t < T \\
    p^n_T &= e^TT^{-n}n!\mathbbm{1}_{\{N_T=n\}},
\end{align*}
hence Jacod's criterion is satisfied. Note that $p^n$ hits 0 in a jump at $\tau_{n+1}$ (if $\tau_{n+1} \leq T$), i.e. $R(n) = \tau_{n+1}$ for all $n \in \mathbb{N}$, \textcolor{black}{where $R$ is defined as above}. Since the compensated Poisson process $S \coloneqq N - t$ has the predictable representation property on $(\Omega, \mathcal{F}_T, \mathbb{F}, \mathbb{P})$, the property also holds  in the enlarged stochastic basis $(\Omega, \mathcal{F}_T, \mathbb{F}^{\sigma(\xi)}, \mathbb{P})$ with respect to the following process (we omit the details, which can be found in the original paper)
\begin{align*}
    \overline{S}_t = N_t - t - \sum_{n+1}^{N_T}\mathbbm{1}_{\{\tau_n\leq t\}}\left(1 - \frac{T - \tau_n}{N_t - n + 1}\right) + (t - \tau_{N_T}\land t).
\end{align*}
\end{example}

\bigskip

\subsection{MRP under Progressive Enlargement}
The preservation of the Martingale Representation Property under progressive enlargement requires finer analysis and the results are less general. For example, it is well-known that the Brownian motion loses the MRP under progressive enlargement by a random time which is not a stopping time with respect to its natural filtration. As before, we do not restrict ourselves to immediately stating the most general theorems available, and rather trace the important ideas and strands in the literature. \par
We begin with a number of negative results, i.e., following (Calzolari, Torti, 2016a), we investigate in greater detail the conditions under which the loss of the PRP occurs. \par
Consider a square-integrable semimartingale $X$ defined on a stochastic basis $(\Omega, \mathcal{F}_T, \mathbb{F}, \mathbb{P})$, and take $\mathbb{G}$ to be an arbitrary enlargement of $\mathbb{F}$. Denote by $\mathfrak{M}(X,\mathbb{G})$ the set of probability measures on $(\Omega, \mathcal{G}_T)$ under which $X$ is a $\mathbb{G}$-martingale and which are equivalent to $\mathbb{P}$.
\par

\begin{thm} Assume that $\mathcal{G}_0$ is $\mathbb{P}$-trivial and $u \coloneqq \text{inf}\{t \in [0,T] : \mathcal{F}_t \subsetneq \mathcal{G}_t \} = \text{min}\{t \in [0,T] : \mathcal{F}_t \subsetneq \mathcal{G}_t \}.$ If $\mathfrak{M}(X,\mathbb{G}) \neq \emptyset$, then for any $\mathbb{P} \in \mathfrak{M}(X,\mathbb{G})$ 
\begin{align*}
    K^2(\Omega,\mathbb{G},\mathbb{P},X) \subsetneq L^2_0(\Omega,\mathcal{G}_T,\mathbb{P}),
\end{align*}
where $ K^2(\Omega,\mathbb{G},\mathbb{P},X) \coloneqq \left\{\int_0^T\xi dX_s, \xi \in \mathcal{L}^2(X,\mathbb{P},\mathbb{F})\right\}$. \end{thm}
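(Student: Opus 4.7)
My plan is to argue by contradiction, exploiting the defining property of $u$. Suppose that $K^2(\Omega,\mathbb{G},\mathbb{P},X) = L^2_0(\Omega,\mathcal{G}_T,\mathbb{P})$. Because $\mathcal{F}_u \subsetneq \mathcal{G}_u$ and the filtrations are completed under the usual hypotheses (so the strict inclusion is not a mere null-set artifact), we can choose an event $A \in \mathcal{G}_u$ whose indicator $\mathbbm{1}_A$ is not $\mathcal{F}_u$-measurable. The centered random variable $Y := \mathbbm{1}_A - \mathbb{P}(A)$ lies in $L^2_0(\Omega, \mathcal{G}_T, \mathbb{P})$, so by hypothesis there exists $\xi \in \mathcal{L}^2(X, \mathbb{P}, \mathbb{F})$ with $Y = \int_0^T \xi_s dX_s$. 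The target is to force $\mathbbm{1}_A$ to be $\mathcal{F}_u$-measurable, contradicting the choice of $A$.

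The crux is to observe that the stochastic integral $N_t := \int_0^t \xi_s dX_s$ is simultaneously an $\mathbb{F}$-martingale and a $\mathbb{G}$-martingale under $\mathbb{P}$. Indeed, since $\mathbb{P} \in \mathfrak{M}(X,\mathbb{G})$ the process $X$ is a $\mathbb{G}$-martingale; being $\mathbb{F}$-adapted (it is an $\mathbb{F}$-semimartingale by hypothesis), $X$ is then also an $\mathbb{F}$-martingale under $\mathbb{P}$ by the tower property. As $\xi$ is $\mathbb{F}$-predictable, the $\mathbb{F}$-stochastic integral $N$ is $\mathbb{F}$-adapted, and by the standard coincidence theorem for stochastic integrals of predictable processes with respect to a process which is a semimartingale in both filtrations, $N$ is also the $\mathbb{G}$-stochastic integral, hence a $\mathbb{G}$-martingale. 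Conditioning the identity $Y = N_T$ on $\mathcal{G}_u$ yields $\mathbb{E}_{\mathbb{P}}[Y \mid \mathcal{G}_u] = N_u$; the left-hand side equals $\mathbbm{1}_A - \mathbb{P}(A)$ since $A \in \mathcal{G}_u$, while the right-hand side is $\mathcal{F}_u$-measurable. Therefore $\mathbbm{1}_A$ is $\mathcal{F}_u$-measurable, the desired contradiction.

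The main delicate point I expect is the identification of the $\mathbb{F}$- and $\mathbb{G}$-stochastic integrals: one must check that $\xi$ remains $X$-integrable in the $\mathbb{G}$-sense (which follows from $\mathbb{F}$-integrability because $[X]$ is the same pathwise in both filtrations) and that the two constructions are indistinguishable, which is a standard uniqueness fact for stochastic integrals once $X$ is known to be a semimartingale in both $\mathbb{F}$ and $\mathbb{G}$. A minor secondary point is that the hypothesis $u = \min$ rather than $u = \inf$ lets us work at the deterministic time $u$ directly; if only the infimum were given, one would have to take a sequence $t_n \downarrow u$ with $\mathcal{F}_{t_n} \subsetneq \mathcal{G}_{t_n}$ and invoke right-continuity of both filtrations to extract a single $A$ and time at which the argument applies.
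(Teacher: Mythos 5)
Your proof is correct and follows the same route as the source the paper cites for this theorem (Calzolari--Torti 2016a, Theorem 3.4): take $A\in\mathcal{G}_u\setminus\mathcal{F}_u$, suppose $\mathbbm{1}_A-\mathbb{P}(A)=\int_0^T\xi_s\,dX_s$ with $\xi$ $\mathbb{F}$-predictable, note that $X$ is a martingale in both filtrations and that the stochastic integral is filtration-independent, and condition on $\mathcal{G}_u$ to force $\mathbbm{1}_A$ to coincide a.s.\ with the $\mathcal{F}_u$-measurable random variable $N_u+\mathbb{P}(A)$, a contradiction. One minor remark: your closing aside is unnecessary (and the proposed right-continuity fix would not quite work), since your argument applies at \emph{any} $t$ with $\mathcal{F}_t\subsetneq\mathcal{G}_t$ and therefore never actually needs the infimum to be attained.
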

\begin{proof} See (Calzolari, Torti, 2016a, Theorem 3.4, p. 6). 
\end{proof}
\par
An immediate corollary of the strict inclusion in the preceding theorem is that if the set of equivalent probability measures is a singleton, then the PRP of $X$ is destroyed under enlargement, i.e. \par
\begin{cor} Assume $\mathfrak{M}(X,\mathbb{F}) = \{\mathbb{P}\}$. Then, under the hypotheses of the previous theorem, $X$ does not possess the PRP in $\mathbb{G}$. 
\end{cor}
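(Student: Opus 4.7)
The plan is to combine the Jacod--Yor extremality characterization of the PRP with the strict inclusion obtained in the preceding theorem. Since $\mathfrak{M}(X,\mathbb{F})=\{\mathbb{P}\}$ is a singleton, $\mathbb{P}$ is automatically an extreme point of this convex set, so by Jacod--Yor the process $X$ possesses the PRP in $\mathbb{F}$ under $\mathbb{P}$; in particular every $Y\in L^{2}_{0}(\Omega,\mathcal{F}_{T},\mathbb{P})$ is of the form $\int_{0}^{T}\xi_{s}dX_{s}$ with $\xi\in\mathcal{L}^{2}(X,\mathbb{P},\mathbb{F})$, so that $L^{2}_{0}(\Omega,\mathcal{F}_{T},\mathbb{P})\subseteq K^{2}(\Omega,\mathbb{G},\mathbb{P},X)$.

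Since $\mathfrak{M}(X,\mathbb{G})\neq\emptyset$ is part of the hypothesis, I would argue by contradiction. Suppose $X$ possessed the PRP in $\mathbb{G}$ under some $\mathbb{Q}\in\mathfrak{M}(X,\mathbb{G})$. By Jacod--Yor applied now in $\mathbb{G}$, this is equivalent to $\mathbb{Q}$ being extremal in $\mathfrak{M}(X,\mathbb{G})$. I plan to refute extremality using the strict inclusion $K^{2}(\Omega,\mathbb{G},\mathbb{Q},X)\subsetneq L^{2}_{0}(\Omega,\mathcal{G}_{T},\mathbb{Q})$ supplied by the preceding theorem. Concretely, I would select a nonzero bounded $Z\in L^{2}_{0}(\Omega,\mathcal{G}_{T},\mathbb{Q})$ orthogonal in $L^{2}(\mathbb{Q})$ to the closed subspace $K^{2}$ (truncating if necessary to ensure $\lVert Z\rVert_{\infty}<\infty$), and set $d\mathbb{Q}_{\pm}\coloneqq (1\pm\varepsilon Z)\,d\mathbb{Q}$ for $\varepsilon>0$ small enough to keep both densities strictly positive. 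The orthogonality of $Z$ to $K^{2}$---which contains every stochastic integral of $X$ with $\mathbb{F}$-predictable integrand, in particular each elementary increment $X_{t}-X_{s}=\int_{0}^{T}\mathbbm{1}_{(s,t]}dX$---is then exploited to show that $X$ remains a $\mathbb{G}$-martingale under each of $\mathbb{Q}_{\pm}$, giving two distinct elements of $\mathfrak{M}(X,\mathbb{G})$ with $\mathbb{Q}=\tfrac{1}{2}(\mathbb{Q}_{+}+\mathbb{Q}_{-})$, and thereby contradicting extremality.

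The main obstacle will be lifting the martingale test from $\mathcal{F}_{s}$-conditioning to $\mathcal{G}_{s}$-conditioning under the perturbed measures. Orthogonality of $Z$ to $K^{2}$ directly yields $\mathbb{E}_{\mathbb{Q}}[Z\cdot G\cdot(X_{t}-X_{s})]=0$ whenever $G$ is bounded and $\mathcal{F}_{s}$-measurable (via the simple $\mathbb{F}$-predictable integrand $G\mathbbm{1}_{(s,t]}\in\mathcal{L}^{2}(X,\mathbb{Q},\mathbb{F})$), but the $\mathbb{G}$-martingale property of $X$ under $\mathbb{Q}_{\pm}$ requires the same identity for every bounded $\mathcal{G}_{s}$-measurable $G$. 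This is precisely where the singleton hypothesis $\mathfrak{M}(X,\mathbb{F})=\{\mathbb{P}\}$ enters crucially: the resulting PRP of $X$ in $\mathbb{F}$, combined with the structural description of $\mathcal{G}_{s}$ coming from the enlargement (using the minimality assumption on the time $u$), should allow one to approximate $\mathcal{G}_{s}$-measurable test functions by $\mathcal{F}_{s}$-measurable ones in a suitable $L^{2}$-sense, so that the orthogonality of $Z$ against $K^{2}$ transports to the full $\mathbb{G}$-level martingale test and the perturbation argument closes.
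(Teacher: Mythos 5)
The paper treats this corollary as genuinely immediate: the strict inclusion $K^2(\Omega,\mathbb{G},\mathbb{Q},X)\subsetneq L^2_0(\Omega,\mathcal{G}_T,\mathbb{Q})$, asserted by the preceding theorem for \emph{every} $\mathbb{Q}\in\mathfrak{M}(X,\mathbb{G})$, is itself the negation of the PRP of $X$ in $\mathbb{G}$ once $K^2$ is read (as in the source, Calzolari--Torti 2016a) as the space of terminal values of stochastic integrals with $\mathbb{G}$-predictable integrands; the hypothesis $\mathfrak{M}(X,\mathbb{F})=\{\mathbb{P}\}$ only serves, via Jacod--Yor, to guarantee that $X$ did possess the PRP in $\mathbb{F}$, so that a property is genuinely destroyed. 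Your proposal instead re-derives the easy direction of Jacod--Yor through the perturbation $d\mathbb{Q}_{\pm}=(1\pm\varepsilon Z)\,d\mathbb{Q}$, which is legitimate in principle, but as you set it up the decisive step is missing.

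Concretely: you read $K^2$ as containing only the $\mathbb{F}$-predictable integrals, and you correctly observe that orthogonality of $Z$ to that space only yields $\mathbb{E}_{\mathbb{Q}}[ZG(X_t-X_s)]=0$ for bounded $\mathcal{F}_s$-measurable $G$, whereas $\mathbb{Q}_{\pm}\in\mathfrak{M}(X,\mathbb{G})$ requires the same identity for $\mathcal{G}_s$-measurable $G$. Your proposed repair --- approximating bounded $\mathcal{G}_s$-measurable test functions by $\mathcal{F}_s$-measurable ones in $L^2$ --- cannot work: the standing hypothesis of the preceding theorem is precisely that $\mathcal{F}_t\subsetneq\mathcal{G}_t$ for $t\geq u$, so for such $s$ there exist bounded $\mathcal{G}_s$-measurable variables at positive $L^2$-distance from $L^2(\mathcal{F}_s)$, and the singleton hypothesis on $\mathfrak{M}(X,\mathbb{F})$ says nothing about the extra information carried by $\mathcal{G}_s$. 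If instead you take $K^2$ with $\mathbb{G}$-predictable integrands, your obstacle disappears (the test function $G\mathbbm{1}_{(s,t]}$ with $G$ bounded and $\mathcal{G}_s$-measurable is then an admissible integrand, and the perturbation argument closes) --- but then the entire extremality detour is superfluous, because the assumed PRP of $X$ in $\mathbb{G}$ under $\mathbb{Q}$ is by definition the equality $K^2(\Omega,\mathbb{G},\mathbb{Q},X)=L^2_0(\Omega,\mathcal{G}_T,\mathbb{Q})$, which contradicts the theorem on the spot. Either way the argument as written does not close, and the intended proof is the one-line definition-chase.
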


\par
In the context of initial enlargement by a random variable $\xi$, assuming the existence of a probability measure $\mathbb{Q}$ equivalent to $\mathbb{P}$ under which $\mathcal{F}_T$ and $\sigma(\xi)$ are independent guarantees the preservation of the PRP, whereas now the same assumption results in the loss of that property. We quote the general result and a discussion on the resolution of the paradox (see Calzolari, Torti, 2016a, p. 17 for more details). \par

\begin{thm} Assume $\mathfrak{M}(X,\mathbb{F}) = \{\mathbb{P}\}$ and $\mathbb{G} \coloneqq \mathbb{F} \lor \mathbb{K}$, where $\mathbb{K}$ is a filtration on $(\Omega, \mathcal{F}, \mathbb{P})$ that satisfies the usual hypotheses and such that $\mathcal{K}_t \neq \mathcal{K}_0$ for some $t \in (0,T]$. Let $\mathbb{Q}$ be a probability measure on $(\Omega, \mathcal{F})$ equivalent to $\mathbb{P}$ under which $\mathcal{F}_T$ and $\mathcal{K}_T$ are independent. Then $X$ does not possess the PRP in $\mathbb{G}$. 
\end{thm}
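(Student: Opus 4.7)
The plan is to reduce to the preceding corollary by producing at least one element of $\mathfrak{M}(X,\mathbb{G})$. Given the standing uniqueness assumption $\mathfrak{M}(X,\mathbb{F})=\{\mathbb{P}\}$, once $\mathfrak{M}(X,\mathbb{G})\neq\emptyset$ is established the corollary yields loss of the PRP of $X$ in $\mathbb{G}$ (the technical hypotheses $\mathcal{G}_0$ trivial and $u=\min$ being inherited from right-continuity of $\mathbb{K}$ and the assumption $\mathcal{K}_t\neq\mathcal{K}_0$ for some $t\in(0,T]$, possibly after restricting to $[0,t_0]$ for such a $t_0$).

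The key construction is a martingale-preserving decoupling measure analogous to the one used in Amedinger's theorem above. Since $\mathbb{Q}\sim\mathbb{P}$ on $\mathcal{F}$, the density $L\coloneqq d\mathbb{P}/d\mathbb{Q}\big|_{\mathcal{F}_T}$ is a strictly positive $\mathcal{F}_T$-measurable random variable. Define
\[
\tilde{\mathbb{P}}\coloneqq L\cdot\mathbb{Q}\quad\text{on }(\Omega,\mathcal{G}_T).
\]
By the Radon--Nikodym change of variables, $\tilde{\mathbb{P}}\sim\mathbb{P}$ on $\mathcal{G}_T$ and $\tilde{\mathbb{P}}|_{\mathcal{F}_T}=\mathbb{P}|_{\mathcal{F}_T}$. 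Moreover, for $A\in\mathcal{F}_T$ and $B\in\mathcal{K}_T$, using the $\mathbb{Q}$-independence of $\mathcal{F}_T$ and $\mathcal{K}_T$ and the $\mathcal{F}_T$-measurability of $L$,
\[
\tilde{\mathbb{P}}(A\cap B)=\mathbb{E}_{\mathbb{Q}}[L\mathbbm{1}_A\mathbbm{1}_B]=\mathbb{E}_{\mathbb{Q}}[L\mathbbm{1}_A]\,\mathbb{Q}(B)=\tilde{\mathbb{P}}(A)\,\tilde{\mathbb{P}}(B),
\]
so $\mathcal{F}_T$ and $\mathcal{K}_T$ remain independent under $\tilde{\mathbb{P}}$.

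With $\tilde{\mathbb{P}}$ in hand, the $(\mathbb{G},\tilde{\mathbb{P}})$-martingale property of $X$ follows at once. For $s\leq t$, since $X_t$ is $\mathcal{F}_t$-measurable and $\mathcal{F}_T\perp\!\!\!\perp\mathcal{K}_s$ under $\tilde{\mathbb{P}}$,
\[
\mathbb{E}_{\tilde{\mathbb{P}}}[X_t\mid\mathcal{G}_s]=\mathbb{E}_{\tilde{\mathbb{P}}}[X_t\mid\mathcal{F}_s\vee\mathcal{K}_s]=\mathbb{E}_{\tilde{\mathbb{P}}}[X_t\mid\mathcal{F}_s]=\mathbb{E}_{\mathbb{P}}[X_t\mid\mathcal{F}_s]=X_s,
\]
where the third equality uses $\tilde{\mathbb{P}}|_{\mathcal{F}_T}=\mathbb{P}|_{\mathcal{F}_T}$ and the fourth uses the $(\mathbb{F},\mathbb{P})$-martingale property of $X$. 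Hence $\tilde{\mathbb{P}}\in\mathfrak{M}(X,\mathbb{G})$, and the preceding corollary delivers the conclusion.

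The only real difficulty lies in choosing the density $L$ correctly: it has to simultaneously preserve $\mathbb{P}$ on $\mathcal{F}_T$ (so that $X$ retains its $\mathbb{F}$-martingale dynamics under the new measure) and transfer the $\mathbb{Q}$-independence between $\mathcal{F}_T$ and $\mathcal{K}_T$ to $\tilde{\mathbb{P}}$ (so that $\mathcal{K}$-information is irrelevant when conditioning $\mathcal{F}_T$-measurable quantities). The choice $L=d\mathbb{P}/d\mathbb{Q}|_{\mathcal{F}_T}$ is tailor-made to reconcile these two requirements, and is essentially the same trick that produced the martingale-preserving measure in the initial enlargement arguments of Amedinger discussed earlier.
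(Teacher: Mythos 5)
Your construction of the decoupling measure $\tilde{\mathbb{P}} \coloneqq \bigl(d\mathbb{P}/d\mathbb{Q}\big|_{\mathcal{F}_T}\bigr)\cdot\mathbb{Q}$ and the reduction to the preceding corollary by exhibiting $\tilde{\mathbb{P}}\in\mathfrak{M}(X,\mathbb{G})$ is correct and is essentially the argument of the cited source (Calzolari--Torti, 2016a); the paper itself states this theorem without proof. The one imprecision is attributing the $\mathbb{P}$-triviality of $\mathcal{G}_0$ to right-continuity of $\mathbb{K}$ --- it in fact comes from $\mathfrak{M}(X,\mathbb{F})=\{\mathbb{P}\}$ forcing $\mathcal{F}_0$ to be trivial together with the standing assumption in that setting that $\mathcal{K}_0$ is trivial --- but this does not affect the validity of your argument.
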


\par
The resolution of the paradox lies in the observation that initial and progressive enlargements have fundamentally different impacts on the information set available at time 0. In the case of initial enlargement $\mathcal{F}_0$ is augmented by information contained in $\sigma(\xi)$, whereas under progressive enlargement $\mathcal{F}_0$ remains trivial. Under the assumption that $\mathfrak{M}(X,\mathbb{F})$ is a singleton the initial $\sigma$-algebra $\mathcal{F}_0$ is trivial under $\mathbb{P}$ and, as is well-known, $X$ possesses the PRP in $\mathbb{F}$ under $\mathbb{P}$. Intuitively, this means that all randomness in $\mathbb{F}$-local martingales is contained in $X$ (Calzolari, Torti, 2016a, p. 17). In the enlarged filtration the randomness contained in $X$ is no longer sufficient to represent all $\mathbb{G}$-local martingales. Under initial enlargement this results in the initial values of the martingales becoming random variables, since $\mathcal{F}_0$ is no longer trivial, whereas under progressive enlargement additional stochastic processes are required as stochastic integrators. 
\par

\bigskip

\par
The preceding discussion leads us to the conclusion that more assumptions are needed to salvage (some form of) the PRP. In the following, we discuss a number of results not necessarily in an increasing (or nested) order of generality. \par
Within the setup of the previous theorems, \textcolor{black}{where $X$ is a square-integrable semimartingale}, take $\mathbb{H}$ to be a filtration on $(\Omega, \mathcal{F}, \mathbb{P})$ that satisfies the usual conditions, with $\mathcal{H}_0$ trivial. Take $Y$ to be a square-integrable $(\mathbb{P},\mathbb{H})$-semimartingale enjoying the PRP in $\mathbb{H}$ under an equivalent probability measure. Denote by $M$ and $N$ the martingale parts of $X$ and $Y$. Denote by $\overline{K^X(\mathbb{F})}$ the closure in $L^1(\Omega,\mathcal{F}_T,\mathbb{P})$ of the set $\left\{\int_0^T\xi dX_s, \ \xi \ \ \mathbb{F}-\text{predictable, simple, bounded} \right\}$, and similarly $\overline{K^Y(\mathbb{H})}$. Denote by $\alpha$ the predictable process associated with $L^2([0,T], \mathcal{B}([0,T]), d\langle M \rangle_t)$ and by $\delta$ the predictable process associated with $L^2([0,T], \mathcal{B}([0,T]), d\langle N \rangle_t)$, such that $A_t = \int_0^t\alpha_sd\langle M \rangle_s$, $D_t = \int_0^t\delta_sd\langle N \rangle_s$. Finally, define the enlarged filtration to be $\mathbb{G} \coloneqq \mathbb{F} \lor \mathbb{H}$. We are going to show that there is a probability measure $\mathbb{Q} \in \mathfrak{M}(X, Y, [X,Y], \mathbb{G})$ such that $(X, Y, [X,Y])$ possesses the PRP in $\mathbb{G}$ under $\mathbb{Q}$.   \par

\begin{thm} Assume that the following conditions hold \\
(i) $\mathfrak{M}(X,\mathbb{F}) = \{\mathbb{P}^X\}$, $\mathfrak{M}(Y,\mathbb{H}) = \{\mathbb{P}^Y\}$ \footnote{\textcolor{black}{The superscripts emphasize the fact that these measures are defined for different processes and different filtrations. Note both sets are singletons and the two measures can be restrictions of a single measure $\mP$ to the corresponding filtrations.}}, \\
(ii) $\overline{K^X(\mathbb{F})}\cap L_+^1(\Omega,\mathcal{F}_T,\mathbb{P}) = \{0\}$, $\overline{K^Y(\mathbb{H})}\cap L_+^1(\Omega,\mathcal{H}_T,\mathbb{P}) = \{0\}$, \\
(iii) $\alpha \Delta M < 1$ $\mathbb{P}$-a.s., $\delta \Delta N < 1$ $\mathbb{P}$-a.s. \\
(iv) 
\begin{align*}
& \mathbb{E}^{\mathbb{P}}\left[\text{exp}\left\{\frac{1}{2}\int_0^T\alpha_t^2d\langle M^c \rangle_t +  \int_0^T\alpha_t^2d\langle M^d \rangle_t  \right\} \right] < \infty \\
& \mathbb{E}^{\mathbb{P}}\left[\text{exp}\left\{\frac{1}{2}\int_0^T\delta_t^2d\langle N^c \rangle_t +  \int_0^T\delta_t^2d\langle N^d \rangle_t  \right\} \right] < \infty. 
\end{align*}
(v) $M$ and $N$ are $(\mathbb{P}, \mathbb{G})$-strongly orthogonal martingales. \par Then the following assertions hold \\
(1) $\mathcal{F}_T$ and $\mathcal{H}_T$ are independent under $\mathbb{P}$, $\mathbb{G}$ satisfies the usual hypotheses and every $W \in \mathcal{M}^2(\mathbb{P}.\mathbb{G})$ has the unique representation
\begin{align*}
    W_t = W_0 + \int_0^t\gamma^W_sdM_s + \int_0^t\kappa^W_sdN_s + \int_0^t\phi^W_sd[M,N]_s \quad \mathbb{P}-\text{a.s.,}
\end{align*}
where $\gamma^W \in \mathcal{L}^2(M,\mathbb{P},\mathbb{G})$, $\kappa^W \in \mathcal{L}^2(N,\mathbb{P},\mathbb{G})$ and $\phi^W \in \mathcal{L}^2([M,N],\mathbb{P},\mathbb{G})$; \\
(2) there exists a probability measure $\mathbb{Q}$ on $(\Omega, \mathcal{G}_T)$ such that $(X,Y, [X,Y])$ possesses the PRP in $\mathbb{G}$ under $\mathbb{Q}$, i.e. every $Z \in \mathcal{M}^2(\mathbb{Q},\mathbb{G})$ has the unique representation
\begin{align*}
    Z_t = Z_0 + \int_0^t\eta^Z_sdX_s + \int_0^t\theta^Z_sdY_s + \int_0^t\zeta^Z_sd[X,Y]_s \quad \mathbb{Q}-\text{a.s.,}
\end{align*}
where $\eta^Z \in \mathcal{L}^2(X,\mathbb{Q},\mathbb{G})$, $\theta^Z \in \mathcal{L}^2(Y,\mathbb{Q},\mathbb{G})$ and $\zeta^Z \in \mathcal{L}^2([X,Y],\mathbb{Q},\mathbb{G})$. 
\end{thm}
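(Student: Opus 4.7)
\medskip

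\noindent\textbf{Proof proposal.} The plan is to deduce part~(1) from Jacod--Yor extremality applied on each component filtration, combined with the $\mathbb{G}$-strong orthogonality assumption~(v), and then to obtain part~(2) via a Girsanov-type change of measure that simultaneously removes the two drift terms $A$ and $D$.

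First I would use conditions~(i)--(ii): the uniqueness in~(i), paired with the cone condition~(ii), is precisely the Jacod--Yor criterion for $\mathbb{P}$ to be an extremal point of $\mathfrak{M}(X,\mathbb{F})$, hence the PRP holds for $X$ (and so for $M$, since $X-M=A$ is predictable of finite variation) on $(\mathbb{F},\mathbb{P})$; the analogous statement gives the PRP for $N$ on $(\mathbb{H},\mathbb{P})$. Independence of $\mathcal{F}_T$ and $\mathcal{H}_T$ under $\mathbb{P}$ is then extracted from~(v): for centered $f\in L^2(\mathcal{F}_T)$ and $g\in L^2(\mathcal{H}_T)$ with representations $f=\xi\bullet M$, $g=\eta\bullet N$, strong orthogonality of $M$ and $N$ in $\mathbb{G}$ yields $\mathbb{E}[fg]=\mathbb{E}\bigl[(\xi\eta)\bullet[M,N]_T\bigr]=0$. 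The usual hypotheses for $\mathbb{G}=\mathbb{F}\lor\mathbb{H}$ follow routinely from those of $\mathbb{F}$ and $\mathbb{H}$.

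For the $\mathbb{G}$-representation in~(1), the individual PRPs together with the independence of $\mathcal{F}_T$ and $\mathcal{H}_T$ give $\mathrm{stable}(M,N)=\mathcal{H}^2_0(\mathbb{P},\mathbb{G})$ in the absence of common jumps; in the general case the cross-bracket $[M,N]$ is a nontrivial $\mathbb{G}$-martingale, strongly orthogonal to both $M$ and $N$ (verified by a direct computation with the compensated joint jump measure), and $\mathrm{stable}(M,N,[M,N])$ then exhausts $\mathcal{H}^2_0(\mathbb{P},\mathbb{G})$; uniqueness of the three integrands follows from pairwise strong orthogonality via the Kunita--Watanabe identification. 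For part~(2), I would set
\[
\frac{d\mathbb{Q}}{d\mathbb{P}}\coloneqq \mathcal{E}(-\alpha\bullet M)_T\,\mathcal{E}(-\delta\bullet N)_T;
\]
condition~(iii) guarantees strict positivity of each factor and~(iv) is the Novikov-type bound making each a true $\mathbb{G}$-martingale of unit expectation, so $\mathbb{Q}\sim\mathbb{P}$. Girsanov's theorem then turns $X=M+\alpha\bullet\langle M\rangle$ and $Y=N+\delta\bullet\langle N\rangle$ into $(\mathbb{Q},\mathbb{G})$-local martingales. Applying part~(1) under $\mathbb{Q}$ and then rewriting $dM=dX-\alpha\,d\langle M\rangle$, $dN=dY-\delta\,d\langle N\rangle$, and $d[M,N]=d[X,Y]$ (finite-variation drifts do not contribute to the quadratic covariation) converts every $\mathcal{M}^2(\mathbb{Q},\mathbb{G})$-representation against $(M,N,[M,N])$ into one against $(X,Y,[X,Y])$.

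The main obstacle is justifying that $[M,N]$ belongs in the representation with the correct orthogonality and uniqueness: when $M$ and $N$ share jump times, $(M,N,[M,N])$ is not automatically pairwise strongly orthogonal in the strict sense, and the exhaustion of $\mathcal{H}^2_0(\mathbb{P},\mathbb{G})$ by $\mathrm{stable}(M,N,[M,N])$ requires careful control of the compensator of the joint jump measure. A secondary hurdle is the persistence of the Jacod--Yor extremality conditions under the Girsanov step (so that part~(1) may legitimately be re-applied under $\mathbb{Q}$) together with the preservation of strong orthogonality of the martingale parts after the measure change; these should follow from the bijective correspondence between equivalent local-martingale-measure sets induced by the Radon--Nikodym density and from the pathwise invariance of $[X,Y]=[M,N]$ under equivalent measures.
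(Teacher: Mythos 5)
The paper itself offers no proof of this theorem: it defers entirely to (Calzolari, Torti, 2016a, Theorem 4.13), so there is no in-text argument to compare yours against step by step. That said, your strategy matches the one used in that reference: Jacod--Yor extremality from (i)--(ii) for the component PRPs, independence of $\mathcal{F}_T$ and $\mathcal{H}_T$ extracted from the $\mathbb{G}$-strong orthogonality of $M$ and $N$ via the product formula, and a Girsanov step with density $\mathcal{E}(-\alpha\bullet M)_T\,\mathcal{E}(-\delta\bullet N)_T$ for part (2), with (iii) giving strict positivity and (iv) the Novikov-type integrability. Those pieces are all correctly placed.

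The genuine gap sits in the middle of part (1), and you have in effect flagged it yourself: you assert that $\mathrm{stable}(M,N,[M,N])$ ``exhausts'' $\mathcal{H}^2_0(\mathbb{P},\mathbb{G})$, but that assertion \emph{is} the theorem. The missing idea is the product--density mechanism: once $\mathcal{F}_T$ and $\mathcal{H}_T$ are independent, linear combinations of products $fg$ with $f\in L^2(\mathcal{F}_T)$ and $g\in L^2(\mathcal{H}_T)$ are dense in $L^2(\mathcal{G}_T)$ by a monotone class argument, and for each such product the It\^o integration-by-parts formula applied to $(\xi\bullet M)(\eta\bullet N)$ produces exactly three terms --- an integral against $M$, one against $N$, and one against $[M,N]$ --- which is precisely where the bracket enters the basis and why no further martingale is needed. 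One then passes to the closure in $\mathcal{H}^2$ and uses the pairwise orthogonality to identify and separate the three integrands (existence and uniqueness). Without this step your argument assumes what it must prove. A secondary imprecision: in part (2) you propose to ``re-apply part (1) under $\mathbb{Q}$,'' but the hypotheses of part (1) are formulated under $\mathbb{P}$; the clean route is to transfer the $\mathbb{P}$-representation to $\mathbb{Q}$ through the density (Bayes rule together with the classical invariance of stable subspaces under equivalent changes of measure, noting $[X,Y]=[M,N]$ pathwise), rather than re-verifying (i)--(v) under the new measure.
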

\begin{proof} See (Calzolari, Torti, 2016a, Theorem 4.13, p. 14).
\end{proof}

\par
An important special case is worth stating separately as a corollary. \par

\begin{cor} Under the hypotheses of the previous theorem, $[X,Y]_t \equiv 0$ $\mathbb{P}$-a.s. if and only if the $(\mathbb{P},\mathbb{G})$-semimartingale $(X,Y)$ possesses the PRP in $\mathbb{G}.$ 
\end{cor}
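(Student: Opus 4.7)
The plan is to exploit the uniqueness clause of the three-term PRP established in the preceding theorem. The forward implication is immediate: if $[X,Y]_t \equiv 0$ $\mathbb{P}$-a.s., then the integral $\int_0^t \zeta^Z_s d[X,Y]_s$ vanishes identically for every predictable $\zeta^Z$, so the representation
\[
Z_t = Z_0 + \int_0^t \eta^Z_s dX_s + \int_0^t \theta^Z_s dY_s + \int_0^t \zeta^Z_s d[X,Y]_s
\]
collapses to one involving only $X$ and $Y$, which is exactly the PRP of $(X,Y)$ in $\mathbb{G}$ under $\mathbb{Q}$.

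For the converse, I would first recall that by the preceding theorem $\mathbb{Q}$ is a martingale measure for the triple $(X, Y, [X,Y])$, so in particular $[X,Y]$ is itself a $\mathbb{Q}$-martingale (at least after a standard localization). Applying the hypothesized PRP of $(X,Y)$ to the choice $Z = [X,Y]$ (which starts at $0$) yields predictable integrands $\tilde{\eta}, \tilde{\theta}$ satisfying
\[
[X,Y]_t = \int_0^t \tilde{\eta}_s dX_s + \int_0^t \tilde{\theta}_s dY_s,
\]
which, after rearrangement, produces a three-term representation of the zero $\mathbb{Q}$-martingale with integrand triple $(\tilde{\eta}, \tilde{\theta}, -1)$.

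The decisive step is then to invoke the uniqueness clause of the theorem: the zero martingale admits $(0,0,0)$ as its unique representation in $\mathcal{L}^2(X, \mathbb{Q}, \mathbb{G}) \times \mathcal{L}^2(Y, \mathbb{Q}, \mathbb{G}) \times \mathcal{L}^2([X,Y], \mathbb{Q}, \mathbb{G})$. Matching the third coordinate forces $-1 = 0$ as elements of $\mathcal{L}^2([X,Y], \mathbb{Q}, \mathbb{G})$, which by the very definition of this space means $\mathbb{E}_{\mathbb{Q}}\left[\,[[X,Y], [X,Y]]_T\,\right] = 0$. Since $[X,Y]$ is thus a $\mathbb{Q}$-martingale starting at $0$ with vanishing quadratic variation, it is indistinguishable from the zero process under $\mathbb{Q}$, and hence under $\mathbb{P}$ by the equivalence $\mathbb{Q} \sim \mathbb{P}$.

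The main obstacle I anticipate is technical rather than conceptual: verifying that $[X,Y]$ lies in $\mathcal{M}^2(\mathbb{Q}, \mathbb{G})$ so that the assumed PRP of $(X,Y)$ applies directly, rather than after localization. The square-integrability of $X$ and $Y$ together with the exponential moment conditions (iv) on $\alpha$ and $\delta$ should provide enough control on the jump contributions to $[X,Y]$ (the continuous part equals $\langle X^c, Y^c\rangle$ and presents no difficulty), but if needed, the argument can be run along a localizing sequence of $\mathbb{G}$-stopping times without altering the conclusion.
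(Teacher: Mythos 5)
The paper states this corollary without proof, deferring to (Calzolari, Torti, 2016a), so there is no in-text argument to compare against; judged on its own terms, your proof is essentially correct and is the natural argument. The forward direction is indeed immediate. For the converse, the chain "$[X,Y]$ is a $\mathbb{Q}$-martingale null at zero $\Rightarrow$ represent it by the two-term PRP $\Rightarrow$ read off a second three-term representation of the zero martingale $\Rightarrow$ uniqueness forces $\mathbb{E}_{\mathbb{Q}}\bigl[[[X,Y],[X,Y]]_T\bigr]=0$ $\Rightarrow$ a square-integrable martingale null at zero with vanishing quadratic variation is indistinguishable from $0$" is sound, and the final transfer from $\mathbb{Q}$ to $\mathbb{P}$ is legitimate because $[X,Y]$ is defined pathwise and the measures are equivalent. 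Two points deserve explicit mention rather than a passing remark. First, you must check that the constant $-1$ actually lies in $\mathcal{L}^2([X,Y],\mathbb{Q},\mathbb{G})$ before the uniqueness clause can be invoked coordinatewise; in the Calzolari--Torti framework $[M,N]$ is shown to be a square-integrable $\mathbb{G}$-martingale under hypotheses (i)--(v), which supplies this, and otherwise your fallback of running the argument along a localizing sequence (concluding $[X,Y]^{T_n}\equiv 0$ for each $n$) closes the gap. Second, the corollary's phrase "possesses the PRP in $\mathbb{G}$" refers to a martingale measure for the pair $(X,Y)$, which a priori need not be the $\mathbb{Q}$ produced by the theorem; this is harmless because the PRP of $(X,Y)$ under some $\mathbb{Q}'$ forces $\mathfrak{M}((X,Y),\mathbb{G})=\{\mathbb{Q}'\}$ by the Jacod--Yor extremality characterization, and $\mathbb{Q}\in\mathfrak{M}((X,Y,[X,Y]),\mathbb{G})\subseteq\mathfrak{M}((X,Y),\mathbb{G})$ pins down $\mathbb{Q}=\mathbb{Q}'$, but you should say so. With these two points made explicit, the argument is complete and matches in spirit the orthogonality argument of the original source, where $[M,N]$ is shown to be both inside and strongly orthogonal to the stable subspace generated by $\{M,N\}$.
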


\par
These results can be extended to the multidimensional setting. The transfer, however, is not entirely straightforward, due to the well-known differences between \textit{vector} stochastic integrals and multidimensional \textit{componentwise} stochastic integrals. We only provide the statements of the main result and refer to (Calzolari, Torti, 2016b) for the full details. \par
In the following $X$ and $Y$ denote an $m$-dimensional square-integrable $(\mathbb{P}, \mathbb{F})$-semimartingale and an $n$-dimensional square-integrable $(\mathbb{P}, \mathbb{H})$-semimartingale respectively, with the canonical decompositions $X = X_0 + M + A$ and $Y = Y_0 + N + D$, \textcolor{black}{where $M$ and $N$ are local martingales and $A$ and $D$ are finite variation processes}.
$\mathcal{L}^2(M, \mathbb{P}, \mathbb{G})$ denotes the set of $\mathbb{G}$-predictable $m$-dimensional processes $\xi = (\xi_t)_{t \in [0,T]} = ((\xi_t^1,...,\xi^m_t))_{t \in [0,T]}$ such that $\lVert \xi \rVert_{\mathcal{L}^2(M, \mathbb{P}, \mathbb{G})} = \mathbb{E}^{\mathbb{P}}[(\xi \bullet X)_T] < \infty$, where $\xi \bullet X$ denotes the vector stochastic integral, and is thus a one-dimensional stochastic process. The Radon-Nikodym derivate processes are denoted as $L_t^X \coloneqq \frac{d\mathbb{P}^X}{d\mathbb{P} \vert \mathcal{F}_t}$, similarly for $Y$. $[M, N]^V$ denotes the process 
\begin{align*}
    ([M^1,N^1], ..., [M^1,N^n], [M^2,N^1], ..., [M^2,N^n], ..., [M^m,N^1], ..., [M^m,N^n]).
\end{align*}
\par
The first multidimensional result is nearly identical to the one-dimensional case, and only requires an interpretation of stochastic integrals in the vector sense. 
\par
\begin{thm} Assume that the following conditions hold \\
(i) $\mathfrak{M}(X,\mathbb{F}) = \{\mathbb{P}^X\}$, $\mathfrak{M}(Y,\mathbb{H}) = \{\mathbb{P}^Y\}$, \\
(ii) $L_T^X \in L_{\text{loc}}(\Omega, \mathcal{F}_T, \mathbb{P})$, $L_T^Y \in L_{\text{loc}}(\Omega, \mathcal{H}_T, \mathbb{P})$, \\
(iii) for any $i \in (1, ..., m)$ and $j \in (1, ..., n)$, $M^i$ and $N^j$ are real $(\mathbb{P}, \mathbb{G})$-strongly orthogonal martingales. \par
Then the following statements hold. \\
(1) $\mathcal{F}_T$ and $\mathcal{H}_T$ are independent under $\mathbb{P}$; \\
(2) $\mathbb{G}$ satisfies the standard hypotheses; \\
(3) every $W \in \mathcal{M}^2(\mathbb{P}.\mathbb{G})$ has the unique representation
\begin{align*}
    W_t = W_0 + (\gamma^W \bullet M)_t + (\kappa^W \bullet N)_t + (\phi^W \bullet [M,N]^V)_t \quad \mathbb{P}-\text{a.s.,}
\end{align*}
where $\gamma^W \in \mathcal{L}^2(M,\mathbb{P},\mathbb{G})$, $\kappa^W \in \mathcal{L}^2(N,\mathbb{P},\mathbb{G})$ and $\phi^W \in \mathcal{L}^2([M,N]^V,\mathbb{P},\mathbb{G})$; \\
(4) there exists a probability measure $\mathbb{Q}$ on $(\Omega, \mathcal{G}_T)$ such that $(X,Y, [X,Y]^V)$ possesses the PRP in $\mathbb{G}$ under $\mathbb{Q}$, i.e. every $Z \in \mathcal{M}^2(\mathbb{Q},\mathbb{G})$ has the unique representation
\begin{align*}
    Z_t = Z_0 + (\eta^Z \bullet X)_t + (\theta^Z \bullet Y)_t + (\zeta^Z \bullet [X,Y])_t \quad \mathbb{Q}-\text{a.s.,}
\end{align*}
where $\eta^Z \in \mathcal{L}^2(X,\mathbb{Q},\mathbb{G})$, $\theta^Z \in \mathcal{L}^2(Y,\mathbb{Q},\mathbb{G})$ and $\zeta^Z \in \mathcal{L}^2([X,Y]^V,\mathbb{Q},\mathbb{G})$. 
\end{thm}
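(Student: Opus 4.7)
The plan is to mirror the strategy of the one-dimensional Theorem 4.13 from Calzolari-Torti (2016a) while carefully adapting each step to the setting of vector stochastic integration. The key conceptual difficulty is that in the multidimensional case one must distinguish between componentwise stochastic integrals and vector stochastic integrals: only the latter guarantee a semimartingale structure in general, and the cross-bracket process $[M,N]^V$ has $mn$ components that must all be handled simultaneously as a single vector-valued integrator.

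For assertion (1), uniqueness of $\mathbb{P}^X$ in $\mathfrak{M}(X,\mathbb{F})$ combined with the Jacod-Yor extremality criterion implies that $X$ possesses the PRP in $\mathbb{F}$ under $\mathbb{P}^X$, and analogously for $Y$ under $\mathbb{P}^Y$ on $\mathbb{H}$. I would then exploit the componentwise strong orthogonality of each $M^i$ against each $N^j$ under $(\mathbb{P},\mathbb{G})$ to decouple $\mathcal{F}_T$ and $\mathcal{H}_T$: pairwise strong orthogonality means $M^i N^j$ is a $\mathbb{G}$-local martingale, and combining this with the one-dimensional PRPs forces bounded $\mathcal{F}_T$- and $\mathcal{H}_T$-measurable random variables to be uncorrelated in a way that iterates across products, yielding the full independence. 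Assertion (2) follows from right-continuity of $\mathbb{F}$ and $\mathbb{H}$ together with the established product structure and completeness inherited from $\mathbb{P}$.

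For assertion (3), I would invoke the classical result that for two strongly orthogonal square-integrable martingales $M^i$ and $N^j$, the stable subspace they jointly generate is the orthogonal sum $\mathrm{stable}(M^i) \oplus \mathrm{stable}(N^j) \oplus \mathrm{stable}([M^i,N^j])$. Iterating over all index pairs and leveraging the independence from step~1 together with the PRP inside each individual filtration, the union $\{M^i\}_i \cup \{N^j\}_j \cup \{[M^i,N^j]\}_{i,j}$ generates all of $\mathcal{M}^2(\mathbb{P},\mathbb{G})$ starting from trivial initial value. The vector representation then follows by bundling componentwise integrals into vector integrals against $M$, $N$, and $[M,N]^V$, with uniqueness inherited from the $L^2$-orthogonality of the three integrator groups.

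For assertion (4), I would define $\mathbb{Q}$ on $(\Omega,\mathcal{G}_T)$ through the density $L^X_T L^Y_T$, which is legitimate by the independence of $\mathcal{F}_T$ and $\mathcal{H}_T$ established in step~1. Under $\mathbb{Q}$ the finite variation parts $A$ and $D$ are killed by Girsanov, so $X$ and $Y$ become $\mathbb{G}$-local martingales and $[X,Y]^V = [M,N]^V$; the representation from assertion (3) can then be rewritten directly in terms of $X$, $Y$, and $[X,Y]^V$, with the new integrands $\eta^Z, \theta^Z, \zeta^Z$ obtained from $\gamma^W, \kappa^W, \phi^W$ via the density process. The main obstacle I anticipate is ensuring that this measure change and reformulation preserves the appropriate vector integrability classes $\mathcal{L}^2(\cdot,\mathbb{Q},\mathbb{G})$: hypothesis (ii) provides only local integrability of the densities, which is weaker than the Novikov-type control used in the one-dimensional theorem, so the transfer of integrability is expected to require a careful localization argument combined with the technical apparatus of vector stochastic integrals under an equivalent measure change.
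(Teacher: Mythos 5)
Your overall strategy (independence via the two one\--dimensional PRPs plus strong orthogonality, representation via the triple $(M,N,[M,N]^V)$, then a product\--density change of measure to pass to $(X,Y,[X,Y]^V)$) is the same route as the source the paper defers to (Calzolari--Torti 2016b, Theorem 3.4); the paper itself gives no proof beyond that citation. However, there is a concrete gap in your step for assertion (3). The ``classical result'' you invoke --- that two strongly orthogonal square\--integrable martingales $M^i$ and $N^j$ jointly generate the stable subspace $\mathrm{stable}(M^i)\oplus\mathrm{stable}(N^j)\oplus\mathrm{stable}([M^i,N^j])$ --- is not a classical result and is false as a statement about $\mathrm{stable}(M^i,N^j)$: the stable subspace generated by $M^i$ and $N^j$ need not contain $[M^i,N^j]$ at all, and the entire reason the bracket appears as a \emph{third, independent integrator} in this theorem is that it is \emph{not} recoverable from integrals against $M$ and $N$. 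The actual argument runs the other way: one first verifies (this is a nontrivial step you omit) that the family $\{M^i\}_i\cup\{N^j\}_j\cup\{[M^i,N^j]\}_{i,j}$ is \emph{pairwise} strongly orthogonal in $\mathbb{G}$ --- in particular that each $[M^i,N^j]$ is strongly orthogonal to every $M^k$, every $N^l$, and every other bracket, which uses the independence of $\mathcal{F}_T$ and $\mathcal{H}_T$ and the structure of the common jumps --- and only then does the stable\--subspace theorem quoted earlier in the paper identify $\mathrm{stable}$ of that \emph{enlarged} family with the space of sums of stochastic integrals. Totality is then obtained by applying It\^o's product formula to the representations of $F\in L^\infty(\mathcal{F}_T)$ and $G\in L^\infty(\mathcal{H}_T)$, which produces exactly the three integral terms, and by the density of linear combinations of such products $FG$ in $L^2(\Omega,\mathcal{G}_T,\mathbb{P})$.

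A second, related point: you correctly flag the distinction between componentwise and vector stochastic integrals in your preamble, but then dismiss it in assertion (3) by ``bundling componentwise integrals into vector integrals.'' The direction that matters is the opposite one. The limiting argument (passing from the dense set of products $FG$ to a general $W_T\in L^2(\mathcal{G}_T)$) requires the space of stochastic integrals against each integrator group to be \emph{closed} in $\mathcal{H}^2$, and it is the space of \emph{vector} integrals $\{\xi\bullet M\}$ that is closed, while the set of sums of componentwise integrals generally is not. So the vector formulation is not cosmetic bundling at the end; it is what makes the closure step legitimate. Your treatment of assertions (1), (2) and (4) is sound in outline, and your caution about the weaker integrability hypothesis (ii) relative to the Novikov\--type condition of the one\--dimensional theorem is well placed.
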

\begin{proof} See (Calzolari, Torti, 2016b, Theorem 3.4, p. 12). 
\end{proof}

\par
We conclude with an example that combines the ideas discussed above (e.g. the decoupling measure of Amedinger, 2000) with the multidimensional theorem. \par

\begin{example} (Calzolari, Torti, 2016b, Proposition 4.4, p. 20). Consider a stochastic basis $(\Omega, \mathcal{F}, \mathbb{F}, \mathbb{P})$, let $\tau$ be a continuous random variable such that $\mathbb{P}( \tau \in \cdot \vert \mathcal{F}_t) \sim \mathbb{P}(\tau \in \cdot)$, for every $t \in [0,T]$, $\mathbb{P}$-a.s. Let $X$ be an $m$-dimensional square-integrable $(\mathbb{P}, \mathbb{F})$-semimartingale that satisfies (i) and (ii) above. Let $\mathbb{G}$ be defined by $\mathcal{G}_t \coloneqq \cap_{s>t}\mathcal{F}_s\lor\sigma(\tau \land s).$ Define a real process $\lambda$ given by 
\begin{align*}
    \lambda_t = \frac{p_t(t)}{\mathbb{P}(\tau > t \vert \mathcal{F}_t)}, \quad t \in [0,T],
\end{align*}
where the process $(p_t(u))_u$ is given by 
\begin{align*}
    \int_t^Tp_t(u)du = \mathbb{P}(\tau > t \vert \mathcal{F}_t).
\end{align*}
Assume that $\mathbb{F}$ is immersed in $\mathbb{G}$ under $\mathbb{P}$. Then it can be shown that the pair 
\begin{align*}
    \left( M, \mathbbm{1}_{\{\tau \leq \cdot \}} - \int_0^{\tau \land \cdot} \lambda_udu \right)
\end{align*}
is a $(\mathbb{P}, \mathbb{G})$-basis of multidimensional martingales. We focus on the main ideas only. In the setting above, we take $\mathbb{H}$ to be the natural filtration of the process $(\mathbbm{1}_{\{\tau \leq t\}})_t$, and the process $N$ to be given by 
\begin{align*}
    N_t = \mathbbm{1}_{\{\tau \leq t\}} - \int_0^{\tau \land t}\frac{dF_u}{1 - F_u},
\end{align*}
where $F$ is the continuous cdf of $\tau$. It is known that $N$ is a $(\mathbb{P}, \mathbb{H})$-martingale that possesses the PRP in $\mathbb{H}$ under $\mathbb{P}\vert_{\mathcal{H}_T}$, and $M$ possesses the PRP in $\mathbb{F}$ under $\mathbb{P}\vert_{\mathcal{F}_T}$. We know from the results in (Amedinger, 2000) covered above that under Jacod's equivalence condition there exists a decoupling probability measure $\mathbb{P}^*$ on $(\Omega, \mathcal{G}_T)$ under which $\mathbb{F}$ and $\sigma(\tau)$ are independent, and $\mathbb{P}^*\vert_{\mathcal{F}_T} = \mathbb{P}\vert_{\mathcal{F}_T}$, $\mathbb{P}^*\vert_{\mathcal{H}_T} = \mathbb{P}\vert_{\mathcal{H}_T}$. Since $\mathcal{H}_T = \sigma(\tau)$, it follows from the independence that under $\mathbb{P}^*$ for any $i \in (1,...,m)$, $M^i$ and $N$ are strongly orthogonal martingales in $\mathbb{G}$, hence the theorem above applies and $(M,N)$ constitutes a $(\mathbb{P}^*, \mathbb{G})$-multidimensional basis. A bit of more work is required to show that the result carries from $(M,N)$ to the pair $\left( M, \mathbbm{1}_{\{\tau \leq \cdot \}} - \int_0^{\tau \land \cdot} \lambda_udu \right)$ under $\mathbb{P}$. We skip this part.
\end{example}

\par
\bigskip

We give another representation result from default risk modeling due to (Kusuoka, 1999) that is slightly less general than the previous example but is important in it own right. We begin by establishing a number of preliminary results that are standard in the literature on default times and can be found, e.g., in (Aksamit, Jeanblanc, 2017, pp. 31-32). Here the filtration $\mathbb{A}$ is the natural filtration generated by the increasing process $A_t = \mathbbm{1}_{\{\tau \leq t\} }$. \par

\begin{thm} The process $M$ given by 
\begin{align*}
    M_t \coloneqq A_t - \int_0^{\tau\land t}\frac{dF(s)}{1 - F(s-)}
\end{align*}
is an $\mathbb{A}$-martingale. \par
If, in addition, $F = \mathbb{P}(\tau \leq t)$ is absolutely continuous with respect to the Lebesgue measure with density $f$, then the process given by
\begin{align*}
    M_t \coloneqq A_t - \int_0^{\tau\land t}\lambda(s)ds = A_t - \int_0^t\lambda(s)(1 - A(s))ds
\end{align*}
is an $\mathbb{A}$-martingale, where $\lambda(t) = \frac{f(t)}{1 - F(t)}\mathbbm{1}_{\{F(t)<1\}}$ is a deterministic non-negative function of $t$ known as the $\textit{intensity rate}$ of $\tau$. 
\end{thm}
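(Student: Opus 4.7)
The plan is to verify $\mathbb{E}[M_t Z] = \mathbb{E}[M_s Z]$ for an arbitrary bounded $\mathcal{A}_s$-measurable $Z$, exploiting the very specific structure of the natural filtration of the single jump process $A$. Any such $Z$ admits a decomposition $Z = g(\tau)\mathbbm{1}_{\{\tau \leq s\}} + c\,\mathbbm{1}_{\{\tau > s\}}$ with $g$ Borel on $[0,s]$ and $c$ a constant, since on $\{\tau > s\}$ the $\sigma$-algebra $\mathcal{A}_s$ carries no information beyond the event itself. Consequently it suffices to handle two cases: $Z = g(\tau)\mathbbm{1}_{\{\tau \leq s\}}$ and $Z = \mathbbm{1}_{\{\tau > s\}}$.

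The first case is immediate: on $\{\tau \leq s\}$ we have $A_t = A_s = 1$ and
\[ \int_0^{\tau \land t}\frac{dF(u)}{1-F(u-)} = \int_0^{\tau}\frac{dF(u)}{1-F(u-)} = \int_0^{\tau \land s}\frac{dF(u)}{1-F(u-)}, \]
so $M_t = M_s$ pathwise on this set. For the second case, the jump term yields $\mathbb{E}[(A_t - A_s)\mathbbm{1}_{\{\tau > s\}}] = \mathbb{P}(s < \tau \leq t) = F(t) - F(s)$. For the compensator term I would rewrite, on $\{\tau > s\}$,
\[ \int_s^{\tau \land t}\frac{dF(u)}{1-F(u-)} = \int_{(s,t]}\mathbbm{1}_{\{u \leq \tau\}}\frac{dF(u)}{1-F(u-)}, \]
take expectation, exchange with the Lebesgue--Stieltjes integral by Fubini, and use $\mathbb{P}(\tau \geq u) = 1 - F(u-)$ for $u > s$:
\[ \mathbb{E}\!\left[\mathbbm{1}_{\{\tau > s\}}\int_{(s,t]}\mathbbm{1}_{\{u \leq \tau\}}\frac{dF(u)}{1-F(u-)}\right] = \int_{(s,t]}\frac{1-F(u-)}{1-F(u-)}\,dF(u) = F(t) - F(s). \]
Equating the two delivers the martingale property.

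The main care point, rather than a genuine obstacle, is bookkeeping with the left limits $F(u-)$ to accommodate possible atoms of $F$ and the implicit requirement that $F(u-) < 1$ on $\{u \leq \tau\}$ so that the integrand is well defined; integrability of $M$ then follows from $A_t \leq 1$ together with $\mathbb{E}\!\int_0^{\tau \land t}\frac{dF(u)}{1-F(u-)} = F(t)$. For the second assertion, absolute continuity gives $dF(u) = f(u)\,du$ and $F(u-) = F(u)$, hence $\int_0^{\tau \land t}\frac{dF(u)}{1-F(u-)} = \int_0^{\tau \land t}\lambda(u)\,du$; since Lebesgue measure ignores the single time $\tau$, this in turn equals $\int_0^t(1-A_u)\lambda(u)\,du$, so the second formula is merely a rewriting of the first under the density assumption.
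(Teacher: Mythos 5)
Your proof is correct. The paper itself gives no argument for this theorem (it only cites Aksamit--Jeanblanc, pp.\ 31--32), and your computation is precisely the standard one found there: the structural fact that every bounded $\mathcal{A}_s$-measurable variable is of the form $g(\tau)\mathbbm{1}_{\{\tau\leq s\}}+c\,\mathbbm{1}_{\{\tau>s\}}$ reduces everything to the single identity $\mathbb{E}\bigl[\mathbbm{1}_{\{u\leq\tau\}}/(1-F(u-))\bigr]=1$, and your handling of the care points (the set $\{F(u-)=1\}$ being $\mathbb{P}(\tau\in\cdot)$-null, integrability via $\mathbb{E}\int_0^{\tau\wedge t}dF/(1-F_-)=F(t)$, and $A_{u-}=A_u$ Lebesgue-a.e.\ for the second formula) is exactly right.
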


\par
The fundamental importance of this martingale stems from the fact that it possesses the MRP in $\mathbb{A}$ as attested by the following theorem. 
\par

\begin{thm} Any $\mathbb{A}$-local martingale can be represented as a stochastic integral with respect to $M$ as defined above. 
\end{thm}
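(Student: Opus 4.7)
The plan is to exploit the very restrictive structure of the filtration $\mathbb{A}=(\mathcal{A}_t)$, which is generated by a single indicator process with at most one jump. First I would show that since $\mathcal{A}_t=\sigma(\tau\wedge t,\1_{\{\tau\leq t\}})$, every $\mathcal{A}_t$-measurable random variable has the form $c(t)\1_{\{\tau>t\}}+h_t(\tau)\1_{\{\tau\leq t\}}$ with $c(t)$ deterministic and $h_t$ Borel. Using the tower property across times $s<t$, I would upgrade this to show that any cadlag $\mathbb{A}$-martingale $X$ admits the representation $X_t=a(t)\1_{\{\tau>t\}}+h(\tau)\1_{\{\tau\leq t\}}$ for a deterministic function $a$ and a single Borel function $h$ that does not depend on $t$ (after possibly modifying on a $\mathbb{P}\circ\tau^{-1}$-null set).

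Second, I would extract an ODE for $a$. Computing $\mathbb{E}[X_s\mid\mathcal{A}_t]$ on $\{\tau>t\}$ via the elementary identity $\mathbb{E}[Y\mid\mathcal{A}_t]\1_{\{\tau>t\}}=\frac{\mathbb{E}[Y\1_{\{\tau>t\}}]}{\mathbb{P}(\tau>t)}\1_{\{\tau>t\}}$ and setting the result equal to $a(t)\1_{\{\tau>t\}}$ yields
\[ a(t)(1-F(t))=a(s)(1-F(s))+\int_t^s h(u)f(u)\,du. \]
Differentiating in $s$ gives the linear relation $a'(s)=(a(s)-h(s))\lambda(s)$; in particular $a$ is absolutely continuous.

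Third, I would define the candidate integrand $\psi_s:=h(s)-a(s-)$ on $\llbracket 0,\tau\rrbracket$ (extended by zero afterwards), which is deterministic on $\{s\leq\tau\}$ and hence $\mathbb{A}$-predictable. A direct computation combined with the ODE then gives
\begin{align*}
\int_0^t\psi_s\,dM_s&=\psi_\tau\1_{\{\tau\leq t\}}-\int_0^{t\wedge\tau}(h(s)-a(s))\lambda(s)\,ds\\
&=(h(\tau)-a(\tau))\1_{\{\tau\leq t\}}+\bigl(a(t\wedge\tau)-a(0)\bigr),
\end{align*}
which equals $X_t-X_0$ after splitting on $\{\tau\leq t\}$ versus $\{\tau>t\}$.

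Finally, the extension from martingales to local martingales proceeds by localization: choose a reducing sequence $T_n\uparrow\infty$ with $X^{T_n}$ a true martingale, apply the preceding construction to obtain predictable $\psi^n$, use uniqueness to see that $\psi^n$ and $\psi^{n+1}$ agree on $\llbracket 0,T_n\rrbracket$, and paste them into a single $\psi\in\mathcal{L}_{\mathrm{loc}}(M)$. The main obstacle is the first step: justifying that $h$ in the canonical form is genuinely a single Borel function independent of $t$, and that $a$ is deterministic and regular enough to differentiate; this requires a careful measurability and monotone-class argument. Steps two through four are then routine once the structural theorem is in hand.
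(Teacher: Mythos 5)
Your argument is correct and is essentially the standard proof that the paper relies on by citation (Aksamit--Jeanblanc, pp.~31--32): exploit the atomic structure of $\mathcal{A}_t$ to write $X_t=a(t)\1_{\{\tau>t\}}+h(\tau)\1_{\{\tau\le t\}}$, derive the integral equation for $a$, and check that $\psi=h-a_-$ reproduces $X$, then localize. The only caveat is that your differentiation step assumes $F$ has a density; for the general $M$ with compensator $\int_0^{\tau\wedge t}dF(s)/(1-F(s-))$ the same computation goes through with the ODE replaced by the Stieltjes identity $d\bigl(a(s)(1-F(s))\bigr)=-h(s)\,dF(s)$, which your choice of integrand $h(s)-a(s-)$ already anticipates.
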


\par
Consider a stochastic basis $(\Omega, \mathcal{F}, \mathbb{F}, \mathbb{P})$, where the filtration $\mathbb{F}$ is taken to be the $d$-dimensional \textit{weak Brownian filtration} that satisfies the usual conditions, i.e., \textcolor{black}{by definition, there exists a $d$-dimensional Brownian motion $\b$ that possesses the PRP in $\mF$, i.e. every $\mF$-local martingale can be represented as a stochastic integral with respect to $\b$ for some predictable integrand process $h$}. Consider the stochastic processes given by $A^i_t = \mathbbm{1}_{\{\tau_i \leq t\}} $, where the the distributions of the default times $\tau_i$, $i=1,2,...,N$, are continuous and $\mathbb{P}(\tau_k = \tau_l) = 0$ for all $k \neq l$ holds.

Consider an enlarged filtration given by $\mathcal{G}_t = \mathcal{F}_t \lor \sigma(\tau_i \land t; i = 1,2,...,N)$. Assume that the $\mathcal{H}$-hypothesis holds, and there exist $\mathbb{G}$-progressively measurable processes $\lambda^i : [0,T) \times \Omega \mapsto [0,T], k=1,2,..,N$, such that 
\begin{align*}
    M^i(t) = A^i_t - \int_0^t(1 - A^i(s))\lambda^i(s)ds
\end{align*}
are $\mathbb{G}$-martingales under $\mathbb{P}$. Then the following martingale representation result holds. \par

\begin{thm} For any $(\mathbb{P},\mathbb{G})$-square integrable martingale $(Y_t)_{t\in[0,T]}$, there are $\mathbb{G}$-predictable processes $f : [0,T] \times \Omega \mapsto \mathbb{R}^d$ and  $\tilde{f}^i : [0,T] \times \Omega \mapsto \mathbb{R}$,$i =1,...,N,$ such that
\begin{align*}
    \mathbb{E}\left[\int_0^T\lvert f(t)\rvert^2dt\right]<\infty, \quad \mathbb{E}\left[\int_0^T\lvert \tilde{f}^i(t)\rvert^2 \lambda^i(t)dt\right]<\infty, \quad i =1,...,N
\end{align*}
and the following MRP holds
\begin{align*}
    Y_t &= Y_0 + \int_0^tf(s)dB(s) + \sum_{i=1}^{N}\int_0^t\tilde{f}^i(s)dM^i(s) \\
    &= Y_0 + (f \bullet B)_t + (\tilde{f} \bullet M)_t, \quad t\in[]0,T].
\end{align*}
\end{thm}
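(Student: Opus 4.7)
The approach is to invoke the characterization of stable subspaces stated earlier in the excerpt: if $B^1,\ldots,B^d,M^1,\ldots,M^N$ are pairwise strongly orthogonal $\mathbb{G}$-martingales and together generate all of $\mathcal{H}_0^2(\mathbb{P},\mathbb{G})$ as a stable subspace, then every square-integrable $\mathbb{G}$-martingale admits the desired decomposition as a sum of stochastic integrals against $B$ and the $M^i$. The work then splits into (i) verifying pairwise strong orthogonality and (ii) showing the stable subspace exhausts $\mathcal{H}_0^2(\mathbb{P},\mathbb{G})$.

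For (i), the immersion hypothesis guarantees that $B$ remains a $d$-dimensional $\mathbb{G}$-Brownian motion, hence a continuous $\mathbb{G}$-martingale, while each $M^i$ is a $\mathbb{G}$-martingale by assumption. Since $B^k$ is continuous, $[B^k,M^i]_t=\sum_{s\leq t}\Delta B^k_s\,\Delta M^i_s=0$. For $i\neq j$, the processes $M^i$ and $M^j$ have their unique jumps at $\tau_i$ and $\tau_j$ respectively, so
\[
[M^i,M^j]_t=\Delta M^i_{\tau_i}\,\Delta M^j_{\tau_i}\mathbbm{1}_{\{\tau_i=\tau_j\leq t\}},
\]
which vanishes almost surely by the assumption $\mathbb{P}(\tau_k=\tau_l)=0$ for $k\neq l$.

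For (ii), I would proceed by a density argument. A monotone class argument gives that the linear span of products
\[
\xi\prod_{i=1}^N g_i(\tau_i\wedge T),\qquad \xi\in L^{\infty}(\mathcal{F}_T),\ g_i\in C_b(\mathbb{R}),
\]
is dense in $L^2(\mathcal{G}_T)$. For each such product I would build a representation inductively: the PRP of $B$ in $\mathbb{F}$, combined with immersion (which preserves both $\mathbb{F}$-predictable integrands and the resulting stochastic integrals), represents $(\mathbb{E}[\xi\mid\mathcal{F}_t])_t$ against $B$; the single-default MRP stated just before the theorem, applied in $\mathbb{G}$ using the $\mathbb{G}$-intensity $\lambda^i$ and the $\mathbb{G}$-martingale $M^i$, represents the martingale generated by $g_i(\tau_i\wedge T)$ against $M^i$ alone. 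Iterated integration by parts then expresses the full product as a sum of stochastic integrals against $B$ and the $M^i$, with every cross-variation contribution annihilated by the orthogonality just established in step (i). Closing under the $\mathcal{H}^2$-norm topology yields the identification of the stable subspace with all of $\mathcal{H}_0^2(\mathbb{P},\mathbb{G})$.

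The main obstacle I anticipate is the inductive integration-by-parts expansion: at each step one must simultaneously control $\mathcal{L}^2$-integrability of the integrands with respect to $B$ and the $M^i$ and verify that no term escapes the span of integrals against these drivers. A cleaner alternative that shifts the difficulty is to invoke the Jacod-Yor extremality criterion mentioned earlier in the excerpt: it then suffices to show that $\mathbb{P}$ is extremal in the convex set of probability measures on $(\Omega,\mathcal{G}_T)$ under which $B$ and each $M^i$ are local martingales. This extremality can in turn be argued from the conditional-independence structure afforded by immersion together with the uniqueness of the compensators of the $\tau_i$ forced by the prescribed $\mathbb{G}$-intensities $\lambda^i$.
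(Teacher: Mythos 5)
The paper itself does not prove this theorem; it is quoted from Kusuoka (1999), so your argument has to stand on its own. The orthogonality step (i) is fine: $[B^k,M^i]=0$ because $B^k$ is continuous and $M^i$ has finite variation, and $[M^i,M^j]=0$ for $i\neq j$ because $\mathbb{P}(\tau_i=\tau_j)=0$. The gap is in step (ii), and it is twofold. First, the claim that the single-default MRP represents the martingale $\big(\mathbb{E}[g_i(\tau_i\wedge T)\mid\mathcal{G}_t]\big)_t$ ``against $M^i$ alone'' is false in this setting: the $\tau_i$ are not independent of $\mathcal{F}_T$, so on $\{\tau_i>t\}$ this conditional expectation involves the Az\'ema supermartingale $\mathbb{P}(\tau_i>t\mid\mathcal{F}_t)$ and acquires a genuine Brownian component. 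This is exactly what the theorem stated immediately after in the paper shows: $\mathbb{E}[h_{\tau}\mid\mathcal{G}_t]$ decomposes as $m_0^h+(G^{-1}\bullet m^h)_{t\wedge\tau}+((h-J)\bullet M)_{t\wedge\tau}$, with a continuous part driven by an $\mathbb{F}$-martingale $m^h$ (hence by $B$) in addition to the $M$-integral. The single-driver MRP you invoke holds only for the filtration $\mathbb{A}$ generated by the default indicator alone, or under full independence of $\tau_i$ from $\mathcal{F}_\infty$.

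Second, and more seriously, the product/integration-by-parts scheme does not close. For the dense class $\xi\prod_i g_i(\tau_i\wedge T)$ one would need $\mathbb{E}\big[\xi\prod_i g_i(\tau_i\wedge T)\mid\mathcal{G}_t\big]$ to factor as the product of the individual martingales; that factorization is exactly what independence of $\mathcal{F}_T$ and the $\sigma(\tau_i)$ would buy (as in the Calzolari--Torti results earlier in the paper), and it fails here. Even granting a factorization, once two factors both carry $dB$-integrands, their product picks up the drift $\int\phi_s\cdot\alpha_s\,ds$ from $[B,B]_t=t$, so the product of two elements of the stable subspace is not a martingale and does not stay in the subspace; orthogonality of $B$ to the $M^i$ does not annihilate this term. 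The argument that actually works (Kusuoka's, and the route followed in Aksamit--Jeanblanc) replaces the product decomposition by the optional splitting of a $\mathcal{G}_T$-measurable variable, $\zeta=\zeta_0\mathbbm{1}_{\{\tau>T\}}+\zeta_1(\tau)\mathbbm{1}_{\{\tau\le T\}}$, computes $\mathbb{E}[\zeta\mid\mathcal{G}_t]$ explicitly via the key lemma $\mathbb{E}[\zeta\mathbbm{1}_{\{\tau>t\}}\mid\mathcal{G}_t]=\mathbbm{1}_{\{\tau>t\}}Z_t^{-1}\mathbb{E}[\zeta\mathbbm{1}_{\{\tau>t\}}\mid\mathcal{F}_t]$, and identifies the $B$- and $M$-integrands by It\^o's formula, iterating over the $N$ default times. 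Your Jacod--Yor fallback is a legitimate alternative in principle, but you give no argument for extremality of $\mathbb{P}$, and establishing it here is essentially equivalent in difficulty to the representation theorem itself.
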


\par
In particular, when the $\mathbb{G}$-martingale $Y$ is of the form $\mathbb{E}(X (1 - \mathbbm{1}_{\{\tau \leq T\}}) \vert \mathcal{G}_t) = \mathbb{E}(X\mathbbm{1}_{\{T < \tau\}} \vert \mathcal{G}_t)$, where $X$ is $\mathcal{F}_t$-measurable, or for a martingale of the form $\mathbb{E}(h_{\tau} \vert \mathcal{G}_t)$, then following representation holds (see also Jeanblanc, Yor, Chesney, 2009). \par

\begin{thm} Suppose the $\mathcal{H}$-hypothesis holds, $G = 1 - F$ is continuous, where $F_t \coloneqq \mathbb{P}(\tau \leq t \vert \mathcal{F}_t)$, and every $\mathbb{F}$-martingale is continuous. Then the martingale $Y_t = \mathbb{E}(h_{\tau} \vert \mathcal{G}_t)$, where $h$ is an $\mathbb{F}$-predictable process such that $\mathbb{E}\lvert h_{\tau} \rvert) < \infty $, admits the following representation as the sum of a $\mathbb{G}$-continuous martingale and a $\mathbb{G}$-purely discontinuous martingale, i.e.
\begin{align*}
    Y_t = m_0^h + (G^{-1} \bullet m^h)_{t\land\tau} + ((h-J) \bullet M)_{t\land\tau},
\end{align*}
where $m^h$ is the continuous $\mathbb{F}$-martingale given by 
\begin{align*}
    m^h_t = \mathbb{E}\left(\int_0^{\infty}h_udF_u \Bigg\vert \mathcal{F}_t\right),
\end{align*}
and $h - J$ is the $\mathbb{G}$-purely discontinuous martingale, where $J$ given by 
\begin{align*}
    J_t = G_t^{-1}(m^h_t - \int_0^th_udF_u),
\end{align*}
and $J_u = H_u$ on the set $\{u < \tau\}$.
\end{thm}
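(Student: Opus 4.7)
The plan is to reduce the problem to an explicit It\^o calculation: first obtain a closed form for $Y$ on $\{t<\tau\}$ via the standard projection formula for progressive enlargements, and then split the jump at $\tau$ through the $\mathbb{G}$-compensator of $A$.

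The first step is to exploit the $\mathcal{H}$-hypothesis, which makes the $\mathcal{F}_{\infty}$-conditional cdf of $\tau$ equal to $t\mapsto F_t$. For $\mathbb{F}$-predictable $h$ with $\mathbb{E}[|h_{\tau}|]<\infty$ this gives
\begin{align*}
\mathbb{E}\bigl[h_{\tau}\mathbbm{1}_{\{\tau>t\}}\,\bigm|\,\mathcal{F}_t\bigr]
= \mathbb{E}\biggl[\int_t^{\infty} h_u\,dF_u\,\biggm|\,\mathcal{F}_t\biggr]
= m^h_t-\int_0^t h_u\,dF_u.
\end{align*}
Combined with the classical identity $\mathbb{E}[\mathbbm{1}_{\{t<\tau\}}X\mid \mathcal{G}_t]=\mathbbm{1}_{\{t<\tau\}}G_t^{-1}\mathbb{E}[\mathbbm{1}_{\{t<\tau\}}X\mid \mathcal{F}_t]$, this yields $Y_t=J_t$ on $\{t<\tau\}$ and trivially $Y_t=h_{\tau}$ on $\{t\geq\tau\}$. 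Since $A$ jumps only at $\tau$ and by $1$, these two cases combine into
\begin{align*}
Y_t = J_{t\wedge\tau}+\int_0^t(h_s-J_s)\,dA_s.
\end{align*}

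The second step is an It\^o-type calculation on $J=(m^h-\int_0^{\cdot} h\,dF)G^{-1}$. Under assumption (C) $m^h$ is continuous, and $\mathcal{H}$ together with continuity of $\mathbb{F}$-martingales makes $G$ continuous as well, so all cross-variation terms vanish; using $dG=-dF$ one obtains
\begin{align*}
dJ_s = G_s^{-1}\,dm^h_s-\frac{h_s-J_s}{G_s}\,dF_s.
\end{align*}
At the same time, under these hypotheses the $\mathbb{G}$-compensator of $A$ on $[0,\tau]$ is $\int_0^{\cdot\wedge\tau} dF_s/G_s$, so $M_t = A_t-\int_0^{t\wedge\tau} dF_s/G_s$ and
\begin{align*}
\int_0^t(h_s-J_s)\,dA_s = ((h-J)\bullet M)_t+\int_0^{t\wedge\tau}\frac{h_s-J_s}{G_s}\,dF_s.
\end{align*}
Substituting both displays, the two finite-variation terms cancel exactly; since $G_0=1$ gives $J_0=m^h_0$, the remainder is precisely the claimed decomposition
\begin{align*}
Y_t = m^h_0+(G^{-1}\bullet m^h)_{t\wedge\tau}+((h-J)\bullet M)_{t\wedge\tau}.
\end{align*}

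The main obstacle I anticipate is verifying the measurability and integrability conditions that make this calculation rigorous: $G_s>0$ on $\llbracket 0,\tau\llbracket$ so that $J$ is well-defined up to $\tau$; $J$ is $\mathbb{F}$-predictable (hence $\mathbb{G}$-predictable) and $M$-integrable so that $(h-J)\bullet M$ is a genuine $\mathbb{G}$-local martingale; and $G^{-1}$ is $m^h$-integrable up to $\tau$ so that $(G^{-1}\bullet m^h)^{\tau}$ is one as well. The continuity of $\mathbb{F}$-martingales is what makes everything work: without it, the cross-variation term in $d(N/G)$ would survive and the $\mathbb{G}$-compensator of $A$ would no longer reduce to $dF/G$, so the decisive cancellation underlying the representation would fail.
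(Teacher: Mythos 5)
Your argument is correct and is exactly the standard route that the paper itself does not reproduce (it only points to Jeanblanc, Yor, Chesney, 2009): the key lemma $\mathbb{E}[\mathbbm{1}_{\{t<\tau\}}X\mid\mathcal{G}_t]=\mathbbm{1}_{\{t<\tau\}}G_t^{-1}\mathbb{E}[\mathbbm{1}_{\{t<\tau\}}X\mid\mathcal{F}_t]$ combined with the $\mathcal{H}$-hypothesis (so that $F$ is the $\mathcal{F}_\infty$-conditional cdf of $\tau$, $A^{p,\mathbb{F}}=F$ and $M_t=A_t-\int_0^{t\wedge\tau}G_s^{-1}dF_s$), followed by It\^o on $J=N/G$ and cancellation of the two finite-variation integrals. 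The only points left implicit, which you correctly flag, are $G>0$ on $\llbracket 0,\tau\llbracket$ (automatic since $\tau\leq\inf\{t:G_t=0\}$ a.s.) and the integrability making $(G^{-1}\bullet m^h)^{\tau}$ and $((h-J)\bullet M)^{\tau}$ genuine (local) martingales; given these, your derivation establishes the stated decomposition, including $J_0=m^h_0$ and the identification $Y=J$ on $\{t<\tau\}$ (the paper's ``$J_u=H_u$'' is evidently a typo for this).
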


\par
Assume the MRP holds in $\mathbb{F}$ for a family of $\mathbb{F}$-local martingales $X = (X^1, ..., X^n)$, and there exists a process $\widetilde{X} = ( \widetilde{X}^1, ..., \widetilde{X}^n)$ that is the $\mathbb{G}$-local martingale component in the $\mathbb{G}$-semimartingale decomposition of $X$. \par
Under Jacod's criterion (i.e. for $\mathcal{J}$-times or \textit{initial} times), from the discussion in the previous section on initial enlargement it follows that $(\widetilde{X}, M)$ has the MRP in $\mathbb{G}$, where $M = A - (1 - A_-)\frac{1}{Z_-}\bullet A^p$. \par
We provide a version of this result due to (Callegaro et al, 2011). Consider a complete stochastic basis $(\Omega, \mathcal{F}, \mathbb{F}, \mathbb{P})$ and an enlargement given by $\mathcal{G}_t = \mathcal{F}_t \lor \mathcal{A}_t, \ \mathcal{A}_t = \sigma(\mathbbm{1}_{\{\tau\leq s\}}; s\leq t),$ i.e. the smallest filtration with respect to which a random time $\tau$ is a stopping time. As above, $\mathbb{P}^*$ denotes the decoupling measure. As usual, we define the $(\mathbb{P}^*,\mathbb{G})$-martingale \par
\begin{align*}
    M_t \coloneqq A_t - \int_0^{\tau\land t}\lambda(s)\nu(ds), \quad t\geq0,
\end{align*}
where $\lambda(t) = \frac{1}{1-F(t)}$, $F(t) = \mathbb{P}(\tau\leq t)$ and \textcolor{black}{$\nu$ is the law of $\t$}.
\par

\begin{thm} Assume that there exists a process $Z$ (not necessarily continuous) that possesses the PRP for $(\mathbb{P},\mathbb{F})$. Then every $X \in \mathcal{M}_{\text{loc}}(\mathbb{P}^*, \mathbb{G})$ admits a representation \par
\begin{align*}
    X_t = X_0 + \int_0^t\Phi_sdZ_s + \int_0^t\Psi_sdM_s,
\end{align*}
for some processes $\Phi \in \mathcal{L}(Z,\mathbb{P}^*,\mathbb{G})$ and $\Psi \in \mathcal{L}(M,\mathbb{P}^*,\mathbb{G})$ . In the special case where $X \in \mathcal{M}^2(\mathbb{P}^*, \mathbb{G})$, it holds that, for all $t\geq0$, $\mathbb{E}_{\mathbb{P}^*}\left(\int_0^t\Phi_s^2d[Z]_s\right)<\infty$, $\mathbb{E}_{\mathbb{P}^*}\left(\int_0^t\Psi_s^2\lambda(s)d[M]_s\right)<\infty$ and the representation is unique. 
\end{thm}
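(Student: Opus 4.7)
The plan is to reduce the claim to the multidimensional PRP theorem recorded above, by exploiting the defining feature of the decoupling measure $\mathbb{P}^*$: under $\mathbb{P}^*$, $\mathcal{F}_\infty$ and $\mathcal{A}_\infty = \sigma(\tau)$ are independent while $\mathbb{P}^* = \mathbb{P}$ on each separately. This is the progressive-enlargement analogue of the Amedinger martingale-preserving measure constructed in the initial-enlargement section.

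First, I would transfer the PRP of each building block to $\mathbb{P}^*$. Since $\mathbb{P}^* = \mathbb{P}$ on $\mathcal{F}_\infty$, the process $Z$ is an $\mathbb{F}$-local martingale under $\mathbb{P}^*$ with PRP in $(\mathbb{P}^*, \mathbb{F})$ (the class of $\mathbb{F}$-local martingales and their admissible integrands being invariant under this restriction). Independence of $\mathcal{F}_\infty$ and $\sigma(\tau)$ yields $\mathbb{F} \hookrightarrow \mathbb{G}$ under $\mathbb{P}^*$, so $Z$ remains a $\mathbb{G}$-local martingale. Symmetrically, $M$ is an $(\mathbb{A},\mathbb{P}^*)$-martingale possessing the classical PRP in $\mathbb{A}$ by the result on the compensated default indicator recalled above, and remains a $\mathbb{G}$-martingale under $\mathbb{P}^*$.

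Next, I would verify $[Z,M] \equiv 0$ under $\mathbb{P}^*$. Continuity of $F = \mathbb{P}(\tau \leq \cdot)$ combined with the independence of $\tau$ from $\mathcal{F}_\infty$ under $\mathbb{P}^*$ gives $\mathbb{P}^*(\tau = \sigma) = 0$ for every $\mathbb{F}$-stopping time $\sigma$, so $Z$ and $M$ share no common jumps; since $M$ is purely discontinuous, $[Z,M]^c = 0$ as well, and hence $[Z,M] = \sum_s \Delta Z_s \Delta M_s = 0$. At this point the hypotheses of the corollary ``$[X,Y] = 0 \Leftrightarrow (X,Y)$ has PRP'' stated above in the Calzolari--Torti multidimensional framework are in force, so $(Z,M)$ possesses the PRP in $(\mathbb{P}^*,\mathbb{G})$. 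This yields the desired representation for $X \in \mathcal{M}^2(\mathbb{P}^*,\mathbb{G})$ with the stated $\mathcal{L}^2$-bounds, the condition on $\Psi$ being a rewriting of $\mathbb{E}_{\mathbb{P}^*}(\int \Psi^2 d\langle M \rangle) < \infty$ via $d\langle M \rangle_s = \lambda(s)ds$ on the pre-$\tau$ set. Uniqueness in the $\mathcal{M}^2$ case follows immediately from strong orthogonality: taking covariations of $\int \Phi dZ + \int \Psi dM = 0$ successively with $Z$ and $M$ forces $\Phi = 0$ $d[Z]$-a.e.\ and $\Psi = 0$ $d[M]$-a.e. The extension to $X \in \mathcal{M}_{\text{loc}}(\mathbb{P}^*,\mathbb{G})$ is obtained by a routine localization along stopping times $(T_n)$ reducing $X^{T_n}$ to $\mathcal{M}^2$ and gluing the representations via uniqueness.

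The principal obstacle is not the representation step but the construction of $\mathbb{P}^*$ in the progressive setting: one must produce a density $d\mathbb{P}^*/d\mathbb{P}$ on $\mathcal{G}_\infty$ that simultaneously renders $\mathcal{F}_\infty$ independent of $\sigma(\tau)$, restricts to $\mathbb{P}$ on each of these $\sigma$-algebras, and keeps $M$ a $\mathbb{G}$-martingale (so that its intensity structure is preserved). This is a nontrivial adaptation of the Amedinger density construction from the initial-enlargement context and forms the technical core of Callegaro et al.; once it is in hand, the PRP conclusion follows cleanly from the multidimensional machinery already assembled above.
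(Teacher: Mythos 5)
The paper states this theorem without proof, quoting it from Callegaro et al.\ (2011), so there is no in-paper argument to compare against; your proposal has to be judged against the standard proof in the literature. Your overall strategy is the right one and is essentially how the result is actually proved: under $\mathbb{P}^*$ the $\sigma$-algebras $\mathcal{F}_\infty$ and $\sigma(\tau)$ are independent with $\mathbb{P}^*=\mathbb{P}$ on each, so $Z$ keeps its PRP in $(\mathbb{P}^*,\mathbb{F})$ and stays a $\mathbb{G}$-local martingale by immersion, $M$ keeps the classical PRP in $\mathbb{A}$, the avoidance of $\mathbb{F}$-stopping times gives $[Z,M]=0$, and one then represents the dense class of products $F\cdot G$ (with $F$ $\mathcal{F}_T$-measurable, $G$ $\mathcal{A}_T$-measurable) via integration by parts before closing in $L^2$ and localizing. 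This is exactly the progressive analogue of the Amedinger argument the paper does sketch in the initial-enlargement subsection.

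The one step that does not hold up as written is the appeal to the Calzolari--Torti corollary ``$[X,Y]\equiv 0$ if and only if $(X,Y)$ has the PRP in $\mathbb{G}$.'' That corollary is stated \emph{under the hypotheses of the preceding theorem}, which include conditions (i)--(iv): uniqueness of the equivalent martingale measure for each factor, the condition $\overline{K^X(\mathbb{F})}\cap L^1_+=\{0\}$, the jump bounds $\alpha\Delta M<1$, and the Novikov-type exponential moment conditions. You verify none of these, and they are not consequences of the assumptions of the present theorem; they are there precisely so that Calzolari--Torti can \emph{construct} a decoupling measure starting from $\mathbb{P}$. In your situation $\mathbb{P}^*$ is already given, so you should bypass that machinery entirely and argue directly: for $U_t=\mathbb{E}_{\mathbb{P}^*}[F\mid\mathcal{F}_t]=U_0+\int\phi\,dZ$ and $V_t=\mathbb{E}_{\mathbb{P}^*}[G\mid\mathcal{A}_t]=V_0+\int\psi\,dM$, independence gives $\mathbb{E}_{\mathbb{P}^*}[FG\mid\mathcal{G}_t]=U_tV_t$, and integration by parts together with $[U,V]=\int\phi\psi\,d[Z,M]=0$ yields $U_tV_t=U_0V_0+\int V_{s-}\phi_s\,dZ_s+\int U_{s-}\psi_s\,dM_s$; density of such products in $L^2(\mathcal{G}_T,\mathbb{P}^*)$ and the isometry then give the representation, with uniqueness and localization exactly as you describe. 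Your closing remark about the construction of $\mathbb{P}^*$ being the technical core is fair context but not a gap: the theorem takes $\mathbb{P}^*$ as given from the standing Jacod-equivalence hypothesis.
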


\par
As the authors point out, intuitively, since $\mathbb{G} = \mathbb{F} \lor \mathbb{A}$ consists of two filtrations, we need another martingale to salvage the PRP in the enlarged filtration, and a good candidate (that turns out to do the job) seems to be the compensated martingale of $A.$ This is a recurring theme in the topic and is termed the \textbf{multiplicity} of the filtration which is defined to be the minimal number of martingales that allow the PRP. Typically, an enlargement of the filtration results in an increase of the multiplicity (Aksamit, Jeanblanc, 2017, p.99).

\par
A classical result by (Barlow, 1978) for honest times $\tau$ states that under assumption (C) all continuous $\mathbb{G}$-martingales are generated by $\widetilde{X}$. Moreover, for an honest time $\tau$, all square integrable $\mathbb{G}$-martingales are generated by the familes $\left\{\widetilde{X} : X \ \text{is a bounded} \ \mathbb{F}\text{-martingale}\right\}$ and $\left\{ v\mathbbm{1}_{\llbracket \tau, \infty \llbracket} - (v\mathbbm{1}_{\llbracket \tau, \infty \llbracket})^{\mathbb{G},p} : v \ \text{is a bounded} \ \mathcal{G}_{\tau}\text{-measurable random variable}\right\}$. We provide just the statements of the main results and refer to the original paper for full details. \par
Consider a stochastic basis $(\Omega, \mathcal{F}, \mathbb{F}, \mathbb{P})$, where $\mathbb{F}$ satisfies the usual conditions, and an honest time $\tau : \Omega \mapsto [0,\infty]$. Let $\mathbb{G}$ be an enlargement of $\mathbb{F}$ given by 
\begin{align*}
    \mathcal{G}_t = \left\{ \Lambda \in \mathcal{F} : \Lambda = (E \cap \{L \leq t\}) \cup (F \cap \{L>t\}) \ \text{for some} \ E,F \in \mathcal{F}_t \right\} \quad \text{for all} \ t\in\mathbb{R}_+.
\end{align*}
Then it follows that $\mathbb{G}$ also satisfies the usual conditions and $\tau$ is a $\mathbb{G}$-stopping time. \par
Now define $A^o$ and $\prescript{o}{}A$ to be the optional and dual optional projections of $A = (\mathbbm{1}_{\llbracket \tau, \infty \llbracket})_{t \geq 0}$. The following theorem is a crucial ingredient in the derivation of the main result. \par

\begin{thm} Let $M$ be a square integrable $\mathbb{F}$-martingale, and $\widetilde{M}$ be defined by
\begin{align*}
    \widetilde{M}_t = M_t + \int_0^t \left(\frac{1 - A_{s-}}{1 - A^o_{s-}} - \frac{A_{s-}}{A^o_{s-}}\right)d\langle M, A^o - \prescript{o}{}A \rangle_s
\end{align*}
Then $\widetilde{M}$ is a square-integrable $\mathbb{G}$-martingale. 
\end{thm}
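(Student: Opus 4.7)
The plan is to derive the formula by applying the Jeulin--Yor semimartingale decomposition for honest times (already stated in the excerpt) and then matching up notation carefully. The earlier theorem asserts that for any cadlag $\mathbb{F}$-local martingale $X$ and an honest time $\tau$,
\[
X_t = \widetilde{X}_t + \int_0^{t\wedge\tau}\frac{d\langle X,m\rangle^\mathbb{F}_s}{Z_{s-}} - \int_{\tau}^{t\vee\tau}\frac{d\langle X,m\rangle^\mathbb{F}_s}{1-Z_{s-}},
\]
where $\widetilde{X}$ is a $\mathbb{G}$-local martingale, $Z=\mP(\tau>\cdot\,|\,\mathcal{F}_\cdot)$ is the Azema supermartingale, and $m$ is the $\mathbb{F}$-martingale part of $Z$ in its Doob--Meyer decomposition $Z=m-\prescript{o}{}A$ (using the present statement's convention that $\prescript{o}{}A$ is the dual optional projection of $A$).

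First I would convert everything into the theorem's notation. Since $Z_t=\mathbb{P}(\tau>t\,|\,\mathcal{F}_t)$ is the optional projection of $1-A$, we have $Z=1-A^o$, and consequently $m=Z+\prescript{o}{}A=1-(A^o-\prescript{o}{}A)$, so $d\langle M,m\rangle^\mathbb{F}=-d\langle M,A^o-\prescript{o}{}A\rangle^\mathbb{F}$. Second, I would rewrite the two domains of integration: using $A_{s-}=\mathbbm{1}_{\{s>\tau\}}$ and $1-A_{s-}=\mathbbm{1}_{\{s\le\tau\}}$, the integrals $\int_0^{t\wedge\tau}$ and $\int_{\tau}^{t\vee\tau}$ become $\int_0^t(1-A_{s-})$ and $\int_0^t A_{s-}$ respectively. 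Substituting these ingredients and carrying the sign from the $m$-to-$(A^o-\prescript{o}{}A)$ identification produces
\[
\widetilde{M}_t=M_t+\int_0^t\left(\frac{1-A_{s-}}{1-A^o_{s-}}-\frac{A_{s-}}{A^o_{s-}}\right)d\langle M,A^o-\prescript{o}{}A\rangle_s,
\]
which is exactly the claimed expression and, by Jeulin--Yor, is a $\mathbb{G}$-local martingale.

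The remaining, and more delicate, step is upgrading from local to square-integrable martingale. The natural approach is to control the finite-variation corrector using the bound $Z_-=1-A^o_-\in(0,1]$ on $\llbracket0,\tau\rrbracket$ and the symmetric bound $A^o_-\in(0,1]$ on $\rrbracket\tau,\infty\llbracket$, together with the Kunita--Watanabe inequality applied to $\langle M,A^o-\prescript{o}{}A\rangle$, noting that $A^o$ and $\prescript{o}{}A$ are bounded processes so $A^o-\prescript{o}{}A\in\mathcal{H}^2(\mathbb{F})$. Combined with $M\in\mathcal{H}^2(\mathbb{F})$, this bounds $\mE[\langle \widetilde{M}\rangle_\infty]$ in terms of $\mE[\langle M\rangle_\infty]$, giving $\widetilde{M}\in\mathcal{H}^2(\mathbb{G})$. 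The main obstacle is precisely this last estimate: one must show that integrating against the singular measure $d\langle M,A^o-\prescript{o}{}A\rangle$ against the $Z_{-}^{-1}$ and $(1-Z_{-})^{-1}$ weights is well-defined and square-integrable even near $\{Z_-=0\}\cup\{A^o_-=0\}$; the key is that these singular sets cannot be charged once the integration region is restricted by the indicators $1-A_{s-}$ and $A_{s-}$ respectively, because on $\{s\le\tau\}$ one has $Z_{s-}>0$ a.s.\ and analogously on $\{s>\tau\}$ one has $A^o_{s-}>0$ a.s.
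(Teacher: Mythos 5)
Your derivation of the formula itself is correct and is the natural route: translating the Jeulin--Yor honest-time decomposition into Barlow's notation via $Z=1-A^o$, $m=1-(A^o-\prescript{o}{}A)$ and $1-A_{s-}=\mathbbm{1}_{\{s\le\tau\}}$ reproduces the stated integrand exactly. Note that the paper offers no proof of this theorem to compare against (it defers entirely to Barlow, 1978), so the identification with the earlier decomposition theorem is a reasonable starting point, and your observation that the singular sets $\{Z_-=0\}$ and $\{A^o_-=0\}$ are not charged on the relevant stochastic intervals is the correct justification that the integrals are well defined.

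The gap is in the upgrade from $\mathbb{G}$-local martingale to square-integrable $\mathbb{G}$-martingale, and it is not merely a technicality. First, the claim that $A^o-\prescript{o}{}A\in\mathcal{H}^2(\mathbb{F})$ "because $A^o$ and $\prescript{o}{}A$ are bounded" fails: the dual optional projection $\prescript{o}{}A$ of the bounded increasing process $A$ is in general an unbounded increasing process (only $\mathbb{E}[\prescript{o}{}A_\infty]=\mathbb{E}[A_\infty]\le 1$ is guaranteed; compare the last zero of a Brownian motion before time $1$, whose dual projection involves $\int_0^{t\wedge 1}(1-u)^{-1/2}dL_u$). Hence $m=1-(A^o-\prescript{o}{}A)$, which is essentially the closed martingale $\mathbb{E}[\prescript{o}{}A_\infty\mid\mathcal{F}_t]$, need not be square-integrable, and the Kunita--Watanabe factor involving $\langle m\rangle_\infty$ that your estimate requires is not controlled. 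Second, even granting $m\in\mathcal{H}^2(\mathbb{F})$, the bound you describe (Kunita--Watanabe combined with the predictable-projection identity $\mathbb{E}\int_0^\infty\mathbbm{1}_{\{s\le\tau\}}Z_{s-}^{-1}d\langle M\rangle_s\le\mathbb{E}[\langle M\rangle_\infty]$ and its post-$\tau$ analogue) shows that the corrector $V$ has integrable total variation, which gives $\widetilde{M}\in\mathcal{H}^1(\mathbb{G})$, i.e.\ a true uniformly integrable martingale --- not square-integrability. To conclude $\widetilde{M}\in\mathcal{H}^2(\mathbb{G})$ one must control $\mathbb{E}[\,[\widetilde{M}]_\infty\,]$, and since $V$ is predictable of finite variation but not continuous in general, $[\widetilde{M}]=[M]+2[M,V]+[V]$ carries jump contributions $\sum_s\Delta M_s\Delta V_s+\sum_s(\Delta V_s)^2$ that your argument does not address. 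This last estimate is precisely where Barlow's original proof does its real work (he obtains the $L^2$ bound directly rather than by post-processing the local-martingale decomposition), so as written the proposal establishes the decomposition formula and the local-martingale property but not the theorem's actual conclusion.
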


\par
\begin{thm} Suppose that $\{M^i : i \in I\}$ is a finite collection of continuous $\mathbb{F}$-local martingales that possesses the PRP in $\mathbb{F}$, i.e. such that if $Y$ is any continuous $\mathbb{F}$-local martingale then there exist $\mathbb{F}$-predictable processes $\phi^i$, $i\in I$, such that
\begin{align*}
    Y_t = \sum_{i \in I} \int_0^t \phi^i_sdM^i_s.
\end{align*}
Then the finite collection  $\{\widetilde{M}^i : i \in I\}$ possesses the PRP in $\mathbb{G}$, i.e. if $Z$ is any continuous $\mathbb{G}$-local martingale, then there exist $\mathbb{G}$-predictable processes $\psi$, $i \in I$, such that 
\begin{align*}
    Z_t = \sum_{i \in I} \int_0^t \psi^i_sd\widetilde{M}^i_s.
\end{align*}
\end{thm}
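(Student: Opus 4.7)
The plan is to combine the stable-subspace formulation of the PRP with the structural decomposition of honest times. By a standard localization argument (stopping each $M^i$ at $\mathbb{F}$-stopping times making $M^{i,T_n}$ bounded, and localizing $Z$ similarly in $\mathbb{G}$), it is enough to prove the representation when $Z \in \mathcal{H}^2_0(\mathbb{G})$ and each $M^i \in \mathcal{H}^2(\mathbb{F})$. Let $\mathcal{K}$ denote the stable subspace of $\mathcal{H}^2_0(\mathbb{G})$ generated by $\{\widetilde{M}^i : i \in I\}$; by the preceding theorem these generators lie in $\mathcal{H}^2_0(\mathbb{G})$, so $\mathcal{K}$ is well-defined. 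The goal reduces, via the standard stable-subspace criterion, to showing that any continuous $N \in \mathcal{H}^2_0(\mathbb{G})$ strongly orthogonal to every $\widetilde{M}^i$ must vanish identically.

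A key preliminary reduction is that the drift $\widetilde{M}^i - M^i$ is a continuous finite-variation process (since $M^i$ is continuous and $\langle M^i, A^o - \prescript{o}{}A\rangle$ inherits continuity from $M^i$), so $[N, \widetilde{M}^i] = [N, M^i]$, and the hypothesis $[N, \widetilde{M}^i] = 0$ is equivalent to $[N, M^i] = 0$ for all $i$. I would then split $N = N^{\tau} + (N - N^{\tau})$, both summands being continuous $\mathbb{G}$-martingales because $\tau$ is a $\mathbb{G}$-stopping time.

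For the pre-$\tau$ piece $N^{\tau}$, the Azema-supermartingale structure on $\llbracket 0, \tau \rrbracket$ and the explicit enlargement formula recalled above should produce an $\mathbb{F}$-local martingale $N'$ whose Jeulin-type $\mathbb{G}$-extension agrees with $N^{\tau}$ on $\llbracket 0, \tau\rrbracket$; the strong orthogonality $[N, M^i] = 0$ then descends to $\mathbb{F}$-orthogonality of $N'$ against every $M^i$, and the assumed $\mathbb{F}$-PRP forces $N' = 0$, hence $N^{\tau} = 0$. For the post-$\tau$ piece $N - N^{\tau}$, the honest-time hypothesis supplies, for each $t > 0$, an $\mathcal{F}_{t-}$-measurable $\tau_t$ with $\tau = \tau_t$ on $\{\tau < t\}$; this makes the post-$\tau$ increments of $N$ effectively $\mathbb{F}$-adapted along the level sets $\{\tau = \tau_t\}$. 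On each such slice the orthogonality condition, governed now by the drift coefficient $-A_{s-}/A^o_{s-}$ in $\widetilde{M}^i$, can again be rewritten as an $\mathbb{F}$-orthogonality relation with each $M^i$; the $\mathbb{F}$-PRP then yields $N - N^{\tau} = 0$.

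The principal obstacle is the rigorous construction of the $\mathbb{F}$-lift on each of the two intervals and the verification that the two drift coefficients $\frac{1 - A_{s-}}{1 - A^o_{s-}}$ and $-\frac{A_{s-}}{A^o_{s-}}$ exactly compensate for the failure of $\mathbb{F}$-adaptedness of $N$ on each side of $\tau$. This will call for a monotone-class argument over $\mathcal{G}_t$-measurable functionals of the form $F\mathbbm{1}_{\{\tau \leq t\}} + G\mathbbm{1}_{\{\tau > t\}}$ with $F, G \in \mathcal{F}_t$, the decomposition made available precisely by the characterization of $\mathcal{G}_t$ stated just before the theorem, coupled with dual-projection identities to ensure that the $\mathbb{G}$-orthogonality conditions propagate cleanly to $\mathbb{F}$ without any residual contribution from the jumps of $A$ at $\tau$.
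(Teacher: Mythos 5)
The paper does not actually prove this theorem --- it states it and refers to Barlow (1978) for the details --- so your proposal can only be judged on its own terms. Its skeleton is consistent with Barlow's strategy: localize to $\mathcal{H}^2$, pass to the stable subspace $\mathcal{K}$ generated by $\{\widetilde{M}^i\}$, reduce to showing that a continuous $N\in\mathcal{H}^2_0(\mathbb{G})$ orthogonal to $\mathcal{K}$ vanishes, observe that $[N,\widetilde{M}^i]=[N,M^i]$ because the drift is continuous and of finite variation, and split $N$ at $\tau$. Those reductions are correct (and the deduction that an $\mathbb{F}$-continuous local martingale $N'$ with $[N',M^i]=0$ for all $i$ must vanish under the $\mathbb{F}$-PRP is sound, via $[N',N']=\sum_i\phi^i\bullet[N',M^i]=0$).

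The gap is that everything after the splitting is asserted rather than proved, and what is asserted is essentially the theorem itself. The existence of an $\mathbb{F}$-local martingale $N'$ whose Jeulin-type transform agrees with $N^{\tau}$ on $\llbracket 0,\tau\rrbracket$ is not available from anything established earlier: a priori a continuous $\mathbb{G}$-martingale stopped at $\tau$ need not arise this way, and producing $N'$ (via the optional projection of $N^\tau$ onto $\mathbb{F}$, divided by $1-A^o$, together with the verification that the resulting process is an $\mathbb{F}$-local martingale and that the correction term involving $d\langle \cdot, A^o-\prescript{o}{}A\rangle/(1-A^o_{-})$ reproduces $N^{\tau}$) is the core of Barlow's argument. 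The post-$\tau$ piece is harder still: the phrase ``effectively $\mathbb{F}$-adapted along the level sets $\{\tau=\tau_t\}$'' hides a balayage-type argument and the dual-projection identities needed to show that $[N,M^i]=0$ forces the $\mathbb{F}$-projection of the post-$\tau$ increments to be orthogonal to each $M^i$; none of this is carried out, and you flag it yourself as ``the principal obstacle.'' A smaller unaddressed point: to pass from $N\in\mathcal{K}$ to the displayed representation $Z=\sum_i\psi^i\bullet\widetilde{M}^i$ you invoke the finite-family description of stable subspaces, but the quoted theorem in the paper requires the generators to be strongly orthogonal, which the $\widetilde{M}^i$ need not be; one must either use the general (vector-integral) description of stable subspaces or orthogonalize first. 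As it stands the proposal is a plan for a proof, with the load-bearing steps missing.
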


\par
 The analysis can be extended to the case of purely discontinuous $\mathbb{G}$-martingales. Providing all the details even for the statements would take us too far, we only give the bare essentials to understand the result (see Barlow, 1978, for the full details). $\mu$ is an $\mathbb{F}$-optional integer-valued random measure on $(0,\infty)\times E$, where $E$ is a Lusin space, and $\nu$ is its $\mathbb{F}$-dual predictable projection. $\widetilde{\mu}$ and $\widetilde{\nu}$ are defined similarly for the enlarged filtration $\mathbb{G}$. Then the following results hold. \par
 
\begin{thm} If the finite collection $\{M^i : i \in I\ ; \mu - \nu\}$ has the martingale representation property for $\mathbb{F}$-martingales, then $\{\widetilde{M}^i : i \in I\ ; \widetilde{\mu} - \widetilde{\nu}\}$ has the martingale representation property for $\mathbb{G}$-martingales. 
\end{thm}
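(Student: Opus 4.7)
The plan is to mirror the structure of the preceding continuous-martingale theorem and reduce the jump case to known representation results for purely discontinuous local martingales via integer-valued random measures. The first step is to fix an arbitrary $\mathbb{G}$-local martingale $Z$ (by the usual localization argument it suffices to treat $Z \in \mathcal{H}^2(\mathbb{G})$) and invoke the canonical orthogonal decomposition $Z = Z_0 + Z^c + Z^d$ into its continuous and purely discontinuous $\mathbb{G}$-martingale parts. For the continuous component $Z^c$, the preceding theorem applies directly and yields $\mathbb{G}$-predictable processes $\psi^i$ with $Z^c_t = \sum_{i \in I}\int_0^t \psi^i_s d\widetilde{M}^i_s$. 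Only $Z^d$ requires new work.

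For the second step I would appeal to the standard Jacod representation: any purely discontinuous $\mathbb{G}$-local martingale is of the form $U * (\widetilde{\mu}^{Z^d} - \widetilde{\nu}^{Z^d})$ where $\widetilde{\mu}^{Z^d}$ is the $\mathbb{G}$-jump measure of $Z^d$ and $\widetilde{\nu}^{Z^d}$ is its $\mathbb{G}$-predictable compensator. The goal is then to upgrade this to a representation of the form $W * (\widetilde{\mu} - \widetilde{\nu})$. The key observation is that, under the hypothesis that $\{M^i : i \in I;\, \mu - \nu\}$ has the MRP in $\mathbb{F}$, every jump of every $\mathbb{F}$-local martingale is carried by the support of $\mu$, and the enlargement by the honest time $\tau$ can add at most one new jump (the one at $\tau$ itself), which by construction is incorporated into $\widetilde{\mu}$. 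Consequently the support of $\widetilde{\mu}^{Z^d}$ is almost surely contained in the support of $\widetilde{\mu}$, so one can extract a $\mathbb{G}$-predictable $W$ with $U * (\widetilde{\mu}^{Z^d} - \widetilde{\nu}^{Z^d}) = W * (\widetilde{\mu} - \widetilde{\nu})$ by a standard measurable-selection / Radon-Nikodym argument in the Dol\'eans measure on $\widetilde{\Omega} \coloneqq \Omega \times \mathbb{R}_+ \times E$.

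The main obstacle, and the technical heart of the argument, is the third step: verifying that $\widetilde{\mu}$ really does carry every $\mathbb{G}$-jump, i.e.\ that the enlargement does not produce spurious jumps outside the support of the transformed random measure. This requires a careful analysis of how the change of filtration acts on jump times. The proof must exclude the possibility that $Z^d$ has jumps at $\mathbb{G}$-predictable (but not $\mathbb{F}$-predictable) times other than $\tau$, and must also verify that the $\mathbb{G}$-compensator $\widetilde{\nu}$ correctly absorbs the predictable part of the newly adapted jump at $\tau$. Barlow handles this by decomposing $\tau$ into its accessible and totally inaccessible $\mathbb{F}$-parts, computing the $\mathbb{G}$-compensator of $\1_{\llbracket \tau, \infty\llbracket}$ explicitly via $A^o$ and $\prescript{o}{}A$, and then arguing that any extra $\mathbb{G}$-jump can be absorbed into the $\tau$-component of $\widetilde{\mu}$.

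Once those three steps are in place the conclusion is immediate: setting $Z = Z_0 + \sum_{i \in I} \psi^i \bullet \widetilde{M}^i + W * (\widetilde{\mu} - \widetilde{\nu})$ gives the required representation, and the uniqueness of $\psi^i$ and $W$ follows from the $\mathcal{H}^2$-orthogonality of $\widetilde{M}^{i,c}$, $\widetilde{M}^{i,d}$, and $\widetilde{\mu} - \widetilde{\nu}$ coupled with the localization procedure. The full technical execution, including the measurable selection in Step~2 and the jump-time analysis in Step~3, we leave to Barlow (1978).
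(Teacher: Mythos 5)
The paper does not actually prove this theorem: it states it, remarks that ``providing all the details even for the statements would take us too far,'' and refers the reader to Barlow (1978). So there is no in-paper argument to compare yours against, and your proposal must be judged on its own. Its skeleton is the right one --- split a $\mathbb{G}$-martingale into continuous and purely discontinuous parts, dispose of the continuous part with the preceding theorem (which does require checking that the $M^i$ are continuous and that the $\mathbb{F}$-MRP hypothesis for the mixed family $\{M^i;\mu-\nu\}$ yields the PRP of $\{M^i\}$ for continuous $\mathbb{F}$-martingales by taking continuous parts), and write the purely discontinuous part as an integral against a compensated random measure.

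The genuine gap is that Steps 2 and 3 --- which you yourself identify as the technical heart --- are asserted rather than proved, and both are deferred to the reference. Two points cannot be waved through. First, the claim that the new jump at $\tau$ is ``by construction incorporated into $\widetilde{\mu}$'' has no content until $\widetilde{\mu}$ is defined, and neither the survey nor your proposal defines it. If $\widetilde{\mu}$ were simply $\mu$ re-read as a $\mathbb{G}$-optional measure, the claim would be false: the compensated process $v\mathbbm{1}_{\llbracket\tau,\infty\llbracket}-(v\mathbbm{1}_{\llbracket\tau,\infty\llbracket})^{p,\mathbb{G}}$ is a $\mathbb{G}$-martingale jumping only at $\tau$, and when $\tau$ avoids the times charged by $\mu$ it cannot be an integral against $\mu-\widetilde{\nu}$ (this is exactly why the paper's earlier Barlow statement needs the second generating family indexed by $\mathcal{G}_{\tau}$-measurable $v$). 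Barlow's construction adjoins an isolated mark $\delta$ to $E$ and sets $\widetilde{\mu}=\mu+\varepsilon_{(\tau,\delta)}$ on $(0,\infty)\times(E\cup\{\delta\})$; without writing this down, the ``support containment'' in Step 2 and the ``absorption into the $\tau$-component'' in Step 3 are not arguments. Second, the claim that every $\mathbb{G}$-jump off $\llbracket\tau\rrbracket$ is carried by $\mu$ does not follow from the $\mathbb{F}$-MRP hypothesis plus ``at most one new jump''; it requires the explicit description of $\mathbb{G}$-optional and $\mathbb{G}$-predictable processes as pairs of $\mathbb{F}$-processes glued along $\llbracket\tau\rrbracket$ --- the place where honesty of $\tau$ actually enters --- and that description is the engine of Barlow's proof, not a routine measurable-selection step. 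In short: correct roadmap, but the two load-bearing steps are citations rather than proofs, which leaves the proposal at essentially the same level of completeness as the paper's own treatment.
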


\par
\begin{thm} If $\mu - \nu$ has the martingale representation property for purely discontinuous $\mathbb{F}$-martingales, then $\widetilde{\mu} - \widetilde{\nu}$ has the martingale representation property for purely discontinuous $\mathbb{G}$-martingales.
\end{thm}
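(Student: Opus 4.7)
My plan is to mirror the structure of the preceding theorem (the mixed case with $\{M^i; \mu-\nu\}$) while specializing to purely discontinuous parts, and to reduce the statement to an appropriate measure-theoretic representation for jump components in the enlarged filtration.

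The first step is to describe $\widetilde{\mu}$ in terms of $\mu$. Since the underlying process whose jumps are being recorded is the same in both filtrations, the atoms of $\widetilde{\mu}$ on $(0,\infty)\times E$ coincide with those of $\mu$; the only new ``jump'' in $\mathbb{G}$ that is not in $\mathbb{F}$ is the unit jump of $A=\mathbbm{1}_{\llbracket\tau,\infty\llbracket}$ at $\tau$. To make $\widetilde{\mu}-\widetilde{\nu}$ encode both kinds of jumps uniformly, I would either enlarge the mark space to $E\cup\{*\}$ with a single atom $(\tau,*)$, or keep $\widetilde{\mu}=\mu$ and treat the $\tau$-jump separately via the $\mathbb{G}$-compensated martingale of $A$ (which is subsumed in the enlarged random measure framework).

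The second step is to compute $\widetilde{\nu}$ using the honest-time enlargement formulas. For a bounded predictable test function $W$ on $\Omega\times(0,\infty)\times E$, the $\mathbb{G}$-dual predictable projection of $W*\mu$ takes a form analogous to the correction in the preceding theorem for $\widetilde{M}$: namely
\[
W*\widetilde{\nu}_t \;=\; W*\nu_t \;+\; \int_0^t W\!\left(\tfrac{1-A_{s-}}{1-A^o_{s-}}-\tfrac{A_{s-}}{A^o_{s-}}\right) d\langle W*\nu,\, A^o-\prescript{o}{}{A}\rangle_s,
\]
so that $W*(\widetilde{\mu}-\widetilde{\nu})$ is a purely discontinuous $\mathbb{G}$-martingale whenever $W*(\mu-\nu)$ is a purely discontinuous $\mathbb{F}$-martingale. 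This is the direct jump-analogue of the continuous correction used to produce $\widetilde{M}$ from $M$.

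The third step is the representation itself. Let $Z$ be a purely discontinuous $\mathbb{G}$-local martingale. Its jumps are supported on the union of the atoms of $\widetilde{\mu}$ and (if not already absorbed) the graph of $\tau$; by construction of $\widetilde{\mu}$ in Step~1, all these atoms lie in the support of $\widetilde{\mu}$. Define the candidate integrand $W(\omega,s,x)\coloneqq \Delta Z_s(\omega)$ on the atoms of $\widetilde{\mu}$ and $0$ elsewhere; by the optional cross-section theorem there is a $\widetilde{\mathcal{P}}$-measurable version. The map $W\mapsto W*(\widetilde{\mu}-\widetilde{\nu})$ is then an isometry onto the space of purely discontinuous $\mathbb{G}$-local martingales, and the assumed $\mathbb{F}$-PRP for $\mu-\nu$ guarantees that this isometry is surjective after transfer: any approximating sequence of $\mathbb{F}$-predictable integrands producing $\mathbb{F}$-jump martingales can be transported to $\mathbb{G}$-predictable integrands via the formula of Step~2, with norms preserved thanks to the identity $\langle W*\widetilde{\nu}\rangle^{\mathbb{G}}=\langle W*\nu\rangle^{\mathbb{F}}$ modulo the honest-time correction.

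The hard part will be Step~3, specifically verifying that the support of $\widetilde{\mu}$ exhausts the jumps of an \emph{arbitrary} purely discontinuous $\mathbb{G}$-martingale (not merely those arising from $\mathbb{F}$-objects), and that the integrand one writes down is in fact $\mathbb{G}$-predictable rather than merely $\mathbb{G}$-optional. For honest times the jump at $\tau$ is the only ``new'' atom and can be handled by the decomposition $Z=Z^{\mathbb{F}\text{-like}}+(\Delta Z_\tau)\mathbbm{1}_{\llbracket\tau,\infty\llbracket}-(\cdot)^{p,\mathbb{G}}$, but checking that the first summand inherits its representation from the $\mathbb{F}$-PRP of $\mu-\nu$ (rather than from an unavailable MRP for the full $\mathbb{F}$) is the delicate point; it is here that the special purely-discontinuous structure, and the fact that jump-martingale representations pass through the dual predictable projection, do the essential work.
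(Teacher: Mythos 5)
The paper offers no proof of this statement: it explicitly defers all details to Barlow (1978) and records only the bare statement, so there is no in-paper argument to compare yours against. Judged on its own terms, your outline follows the right general strategy (correct the compensator by the honest-time drift, then transfer integrands), but it contains two concrete gaps.

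First, the compensator formula in your Step 2 is wrong as written. The process $W*\nu$ is $\mathbb{F}$-predictable and of finite variation, so $\langle W*\nu,\,A^o-\prescript{o}{}A\rangle$ carries no information, and the test function $W$ cannot sit inside the integrand the way you have placed it. The honest-time correction must be applied to the \emph{martingale} $W*(\mu-\nu)$; the analogue of the preceding theorem is
\begin{align*}
W*(\mu-\nu)_t+\int_0^t\Bigl(\tfrac{1-A_{s-}}{1-A^o_{s-}}-\tfrac{A_{s-}}{A^o_{s-}}\Bigr)\,d\langle W*(\mu-\nu),\,A^o-\prescript{o}{}A\rangle_s,
\end{align*}
and one must then \emph{prove} that this $\mathbb{G}$-martingale equals $W*(\widetilde{\mu}-\widetilde{\nu})$, which is a nontrivial statement about how $\widetilde{\nu}$ disintegrates relative to $\nu$, not a definition one may impose.

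Second, and more seriously, your Step 3 asserts that $W\mapsto W*(\widetilde{\mu}-\widetilde{\nu})$ is an isometry \emph{onto} the purely discontinuous $\mathbb{G}$-martingales; surjectivity is precisely the conclusion of the theorem, so this is circular. Setting $W=\Delta Z$ on the atoms of $\widetilde{\mu}$ only works once you have shown that those atoms exhaust the jumps of an \emph{arbitrary} purely discontinuous $\mathbb{G}$-martingale, that the resulting $W$ admits a $\mathbb{G}$-predictable version, and that $W*(\widetilde{\mu}-\widetilde{\nu})=Z$. For an honest time the genuinely new jumps sit on the graph of $\tau$, and the paper's own earlier rendering of Barlow's result shows that generating all square-integrable $\mathbb{G}$-martingales requires the extra family $v\mathbbm{1}_{\llbracket\tau,\infty\llbracket}-(v\mathbbm{1}_{\llbracket\tau,\infty\llbracket})^{\mathbb{G},p}$ with $v$ bounded and $\mathcal{G}_{\tau}$-measurable. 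Whether $\widetilde{\mu}-\widetilde{\nu}$ alone suffices therefore hinges entirely on how $\widetilde{\mu}$ is constructed to absorb the atom at $\tau$; your Step 1 offers two alternatives but commits to neither, and the decomposition of a general $Z$ into an $\mathbb{F}$-transferred part plus a $\tau$-jump part, with the latter identified as an integral against $\widetilde{\mu}-\widetilde{\nu}$, is exactly the content that remains missing.
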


\par

\bigskip

\par
For a general random time $\tau$ (Jeanblanc, Song, 2015)  show that when $\mathbb{F}$ is immersed in $\mathbb{G}$ and $X_{\tau}$ is $\mathcal{F}_{\tau-}$-measurable, then $\mathcal{G}_{\tau} = \mathcal{G}_{\tau-}$ and $(\widetilde{X},M)$ has the MRP in  $\mathbb{G}$, where $M = A - (1 - A_-)\frac{1}{Z_-}\bullet A^p$. They also show that if $\tau$ is an honest time and $\mathbb{F}$ is a Brownian filtration, then the collection of stochastic processes can be enlarged to salvage the PRP in $\mathbb{G}$. Specifically, the following statement holds (as usual, we work on a complete stochastic basis $(\Omega, \mathcal{F}, \mathbb{F}, \mathbb{P}))$. \par

\begin{thm} Assume that $\tau$ is an $\mathbb{F}$-honest time, the MRP holds for a $d$-dimensional $\mathbb{F}$-adapted cadlag process $X$ and $\mathbb{F}$ is a Brownian filtration. Then there exists a bounded $\mathbb{G}$-martingale $Y$ such that the MRP holds for $(\widetilde{X}, M, Y)$ in $\mathbb{G}.$ 
\end{thm}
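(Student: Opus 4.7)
The approach is to establish the representation separately for the continuous and purely discontinuous parts of an arbitrary bounded $\mathbb{G}$-local martingale $N$. Write $N = N^c + N^d$. For the continuous component I would invoke Barlow's representation theorem stated earlier in this section: since $\mathbb{F}$ is a Brownian filtration, all $\mathbb{F}$-martingales are continuous, so assumption (C) is automatic, and the honesty of $\tau$ then gives $N^c = (\phi \bullet \widetilde{X})$ for some $\mathbb{G}$-predictable integrand $\phi \in \mathcal{L}(\widetilde{X},\mathbb{G})$. This disposes of the continuous component with the single generator $\widetilde{X}$.

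For the discontinuous component, Barlow's second result quoted above implies that $N^d$ lies in the stable subspace generated by the parametric family $\{v\mathbbm{1}_{\llbracket\tau,\infty\llbracket} - (v\mathbbm{1}_{\llbracket\tau,\infty\llbracket})^{\mathbb{G},p} : v \in L^\infty(\mathcal{G}_\tau)\}$. The subfamily obtained by taking $v = V_\tau$ for an $\mathbb{F}$-predictable process $V$ yields precisely the stochastic integrals $V \bullet M$; what remains is to show that a \emph{single} bounded $\mathbb{G}$-martingale $Y$ handles all residual directions in $L^\infty(\mathcal{G}_\tau)$ that are not captured by $\mathbb{F}$-predictable restrictions. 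I would construct $Y$ as $Y_t \coloneqq \mathbb{E}[h \mid \mathcal{G}_t]$ for a suitably chosen bounded $\mathcal{G}_\infty$-measurable random variable $h$ whose jump at $\tau$, and at the exceptional $\mathbb{G}$-stopping time $\widetilde{R}$ appearing in the second progressive-enlargement decomposition above, encodes precisely the portion of $\mathcal{G}_\tau$ transverse to the $\mathbb{F}$-predictable $\sigma$-algebra at $\tau$.

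Once $Y$ is in hand, mutual strong orthogonality of $\widetilde{X}$, $M$ and $Y$ (the continuous/discontinuous split handles the first pair, and $M$, $Y$ can be arranged orthogonal by subtracting a suitable $M$-integral from $Y$) together with the stable-subspace theorem cited in the preliminaries yields the desired representation
\begin{align*}
    N_t = N_0 + (\phi \bullet \widetilde{X})_t + (\psi \bullet M)_t + (\xi \bullet Y)_t,
\end{align*}
with uniqueness inherited from the $L^2$-isometry on each summand.

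The hard part will be the construction of a single, universal bounded $\mathbb{G}$-martingale $Y$ that simultaneously generates the whole Barlow family indexed by $v \in L^\infty(\mathcal{G}_\tau)$ when combined with $M$, rather than producing a distinct generator for each $v$. This reduction relies on the separability of $L^\infty(\mathcal{G}_\tau)$ modulo $\mathcal{F}_{\tau-}\vee\sigma(\tau)$ together with the Brownian structure, which forces the residual information to be effectively one-dimensional on the thin set where the Azema supermartingale of $\tau$ produces its jumps. Making this step precise, and verifying that the resulting $Y$ can be taken bounded rather than merely locally bounded, is the delicate technical point of the argument, and is where the Jeanblanc--Song analysis enters in an essential way.
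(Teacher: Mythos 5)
Your overall architecture (continuous part via Barlow's theorem and $\widetilde{X}$; jumps at $\tau$ parametrized by bounded $\mathcal{G}_\tau$-measurable $v$, with the $\mathbb{F}$-predictable slice giving the $M$-integrals) is consistent with how the paper frames the problem, but you have left unproved precisely the assertion that constitutes the theorem: that a \emph{single} bounded $\mathbb{G}$-martingale $Y$ exhausts the residual directions. Your justification --- ``separability of $L^\infty(\mathcal{G}_\tau)$ modulo $\mathcal{F}_{\tau-}\vee\sigma(\tau)$ together with the Brownian structure'' --- is not the mechanism that makes this work, and separability alone cannot collapse an infinite family of generators indexed by $v\in L^\infty(\mathcal{G}_\tau)$ into one. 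The paper's proof rests on a specific structural fact for honest times in a Brownian filtration: there exists a single set $A\in\mathcal{G}_\tau$ with $\mathcal{G}_\tau=\mathcal{G}_{\tau-}\vee\sigma(A)$, i.e.\ the information revealed exactly at $\tau$ beyond $\mathcal{G}_{\tau-}$ is one binary digit. Given this, any $\xi\in L^\infty(\mathcal{G}_\tau)$ with $\mathbb{E}[\xi\,\vert\,\mathcal{G}_{\tau-}]=0$ factors as a $\mathcal{G}_{\tau-}$-measurable multiple of $(1-p)\mathbbm{1}_A-p\mathbbm{1}_{A^c}$, where $p=\mathbb{P}(A\,\vert\,\mathcal{G}_{\tau-})$, and one takes $Y=((1-p)\mathbbm{1}_A-p\mathbbm{1}_{A^c})\mathbbm{1}_{\llbracket\tau,\infty\llbracket}$, which is bounded by construction, the coefficient being realized as $F_\tau$ for an $\mathbb{F}$-predictable process $F$. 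Without invoking this fact, your passage from ``the Barlow family indexed by $v$'' to ``one martingale $Y$'' does not go through.

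Two smaller points. The stopping time $\widetilde{R}$ from the second progressive-enlargement decomposition plays no role here: the only exceptional time is $\tau$ itself. And the ad hoc orthogonalization of $Y$ against $M$ is unnecessary once $Y$ is built as above, since its jump at $\tau$ already satisfies $\mathbb{E}[\Delta Y_\tau\,\vert\,\mathcal{G}_{\tau-}]=0$ and lies transverse to the $\mathbb{F}$-predictable directions generating the $M$-integrals. The paper also bypasses your continuous/purely-discontinuous split altogether: it writes the representation of a bounded $\mathbb{G}$-martingale directly, using the fact that every $\mathbb{G}$-predictable $J$ decomposes as $J'\mathbbm{1}_{\llbracket 0,\tau\rrbracket}+J''\mathbbm{1}_{\rrbracket\tau,\infty\llbracket}$ with $J',J''$ $\mathbb{F}$-predictable, which is the cleaner route.
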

\begin{proof} We provide a sketch of the main arguments and refer to the original paper for the missing details and appropriate context. \par
Since $\mathbb{F}$ is a Brownian filtration and $\tau$ is an honest time, then there exists a set $A\in\mathcal{G}_{\tau}$ such that $\mathcal{G}_{\tau} = \mathcal{G}_{\tau-} \lor \sigma(A)$, and any $\xi \in \mathcal{G}_{\tau}$ has the representation $\xi = \xi^{\prime}\mathbbm{1}_A + \xi^{\prime\prime}\mathbbm{1}_{A^c}$. If $0 = \xi^{\prime}p + \xi^{\prime\prime}(1-p)$, where $p = \mathbb{P}(A \vert \mathcal{G}_{\tau-})$, then 
\begin{align*}
    \xi = \left( -\mathbbm{1}_{\{p>0\}}\xi^{\prime\prime}\frac{1}{p} + \mathbbm{1}_{\{p=0\}}\xi^{\prime}\frac{1}{1-p} \right)((1-p)\mathbbm{1}_A - p\mathbbm{1}_{A^c}).
\end{align*}
Define $Y = ((1-p)\mathbbm{1}_A - p\mathbbm{1}_{A^c})H$, where $H \coloneqq \mathbbm{1}_{\llbracket\tau,\infty\llbracket}$, and let $F$ be an $\mathbb{F}$-predictable proccess such that $F_{\tau} = \left( -\mathbbm{1}_{\{p>0\}}\xi^{\prime\prime}\frac{1}{p} + \mathbbm{1}_{\{p=0\}}\xi^{\prime}\frac{1}{1-p} \right)$. It now follows that $Y$ is a bounded $\mathbb{G}$-martingale, $F$ is $Y$-integrable and the following decomposition formula for any bounded $\mathbb{G}$-martingale $Z$ holds
\begin{align*}
    Z_t = Z_0 + \int_0^t(J^{\prime}\mathbbm{1}_{\{s\leq\tau\}} + J^{\prime\prime}\mathbbm{1}_{\{\tau<s\}})d\widetilde{X}_s + \int_0^tK_s\mathbbm{1}_{\{0<s\leq\tau\}}dM_s + \int_0^tF_sdY_s,
\end{align*}
where we have used the fact that for any $\mathbb{G}$-predictable process $J$ there exist $\mathbb{F}$-predictable processes $J^{\prime}$ and $J^{\prime\prime}$, such that $J\mathbbm{1}_{(0,\infty)} = J^{\prime}\mathbbm{1}_{(0,\tau\rrbracket} + J^{\prime\prime}\mathbbm{1}_{\rrbracket\tau,\infty)}$.
This shows that the MRP holds for $(\widetilde{X}, M, Y)$ in $\mathbb{G}.$ 
\end{proof}
\par

\bigskip

\par
We now turn our to attention once more to general random times. Consider a stochastic basis $(\Omega, \mathcal{F}, \mathbb{F}, \mathbb{P})$ and an $\mathcal{F}$-measurable random time $\tau$. Take $\mathbb{G}$ to be an enlargement of $\mathbb{F}$ by the random time $\tau$. Recall that the process $L_t = \mathbbm{1}_{\{\tau > 0\}}H_t - \int_0^{t\land\tau}\frac{dA_s}{Z_{s-}}$ is a $\mathbb{G}$-local martingale, where $H \coloneqq \mathbbm{1}_{\llbracket \tau, \infty \llbracket}$. Denote by $\mathcal{I}(\mathbb{P}, \mathbb{F}, M)$ the family of multi-dimensional $\mathbb{F}$-predictable processes that are $M$-integrable under $\mathbb{P}$. We also use the following definition of the $\mathcal{H}^{\prime}$-hypothesis: there exists a map $\Gamma$ from $\mathcal{M}_{loc}(\mathbb{F})$ into the space of cadlag $\mathbb{G}$-predictable processes with finite variation, such that for any $X \in \mathcal{M}_{loc}(\mathbb{F})$, $\Gamma(X)_0 = 0$ and $\widetilde{X} \coloneqq X - \Gamma(X)$ is a $\mathbb{G}$-local martingale. The operator $\Gamma$ is called the \textbf{drift operator}.  We assume throughout that (i) the MRP holds for a $d$-dimensional $\mathbb{F}$-adapted cadlag process $X$ and (ii) $\mathcal{H}^{\prime}$ is satisfied for $\mathbb{G}$. Then the following general theorem holds (see Jeanblanc, Song, 2015, p. 8, Theorem 3.1). \par

\begin{thm} For any bounded $\mathcal{G}_{\tau}$-measurable random variable $\zeta$, there exists an $\mathbb{F}$-predictable process $J$ such that $J\mathbbm{1}_{\llbracket 0, \tau \rrbracket} \in \mathcal{I}(\mathbb{P}, \mathbb{G}, \widetilde{X})$ and 
\begin{align*}
    \mathbb{E}_{\mathbb{P}}[\zeta \vert \mathcal{G}_t] = \mathbb{E}_{\mathbb{P}}[\zeta \vert \mathcal{G}_0] + \int_0^t(1 - H_{s-})J_sd\widetilde{X}_s + \mathbbm{1}_{\{\tau > 0 \}}(\zeta - Y_{\tau})H_t - \int_0^tK_s(1 - H_{s-})\frac{1}{Z_{s-}}dA_s,
\end{align*}
for $t \geq 0$, where $K$ is a bounded $\mathbb{F}$-predictable process such that $K_{\tau}\mathbbm{1}_{\{0 < \tau < \infty\}} = \mathbb{E}_{\mathbb{P}}[(\zeta - Y_{\tau})\mathbbm{1}_{\{0 < \tau < \infty\}} \vert \mathcal{F}_{\tau-}]$, and $Y_t \coloneqq \frac{\mathbb{E}_{\mathbb{P}}[\zeta\mathbbm{1}_{\{t<\tau} \vert \mathcal{F}_t]}{Z_t}\mathbbm{1}_{\{t<R\}}$, for $0 \leq t < \infty$. 
\end{thm}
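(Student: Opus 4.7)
The plan is to identify the $\mathbb{G}$-martingale $V_t:=\mathbb{E}_{\mathbb{P}}[\zeta\vert\mathcal{G}_t]$ in closed form and then extract its $\mathbb{G}$-canonical decomposition piece by piece. First, I would apply the classical key formula of progressive enlargement to the $\mathcal{G}_\tau$-measurable variable $\zeta$: representing $\zeta\mathbbm{1}_{\{\tau<\infty\}}=U_\tau\mathbbm{1}_{\{\tau<\infty\}}$ for some $\mathbb{F}$-optional process $U$ and applying the standard identity $\mathbb{E}_{\mathbb{P}}[\xi\vert\mathcal{G}_t]\mathbbm{1}_{\{t<\tau\}}=Z_t^{-1}\mathbb{E}_{\mathbb{P}}[\xi\mathbbm{1}_{\{t<\tau\}}\vert\mathcal{F}_t]\mathbbm{1}_{\{t<\tau\}}$ to $\xi=U_\tau$, combined with the trivial fact $V_t=\zeta$ on $\{t\geq\tau\}$, I obtain the closed form $V_t=(1-H_t)Y_t+H_t\zeta$ on $\{t<R\}\cup\{t\geq\tau\}$. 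Rewriting this as $V_t=Y_{t\wedge\tau}\mathbbm{1}_{\{\tau>0\}}+V_0\mathbbm{1}_{\{\tau=0\}}+(\zeta-Y_\tau)H_t$ cleanly isolates the jump of $V$ at $\tau$ from the continuous-before-$\tau$ part driven by $Y$.

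Next, I would analyze the $\mathbb{G}$-semimartingale dynamics of $Y_{t\wedge\tau}$. The numerator $N_t:=\mathbb{E}_{\mathbb{P}}[\zeta\mathbbm{1}_{\{t<\tau\}}\vert\mathcal{F}_t]$ is a bounded c\`adl\`ag $\mathbb{F}$-supermartingale whose martingale part, by the MRP of $X$ in $\mathbb{F}$, writes as $\tilde{J}\bullet X$ for some $\mathbb{F}$-predictable $\tilde{J}$. Integration by parts applied to $Y=N/Z$ on $\llbracket 0,R\llbracket$, combined with the $\mathbb{G}$-decomposition $X=\widetilde{X}+\Gamma(X)$ supplied by $\mathcal{H}'$, produces a $\mathbb{G}$-semimartingale representation of $Y_{t\wedge\tau}$ whose local martingale part takes the form $\int_0^t(1-H_{s-})J_s\,d\widetilde{X}_s$ for an $\mathbb{F}$-predictable process $J$ determined by $\tilde{J}$, $\Gamma$, and $Z$. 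Since $V$ is a genuine $\mathbb{G}$-martingale and the jump at $\tau$ has been separated out, the residual $\mathbb{F}$-predictable finite-variation part must equal $-\int_0^tK_s(1-H_{s-})Z_{s-}^{-1}dA_s$, which pins down both $J$ and the role of $K$.

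The third step handles the jump of $V$ at $\tau$. Exploiting the $\mathcal{G}_\tau$-measurability of $\zeta$ together with the classical representation of $\mathcal{G}_\tau$-measurable variables on $\{0<\tau<\infty\}$ by $\mathbb{F}$-optional processes evaluated at $\tau$, I would select a bounded $\mathbb{F}$-predictable process $K$ satisfying the stated identity $K_\tau\mathbbm{1}_{\{0<\tau<\infty\}}=\mathbb{E}_{\mathbb{P}}[(\zeta-Y_\tau)\mathbbm{1}_{\{0<\tau<\infty\}}\vert\mathcal{F}_{\tau-}]$. The $\mathbb{G}$-compensator of $\mathbbm{1}_{\{\tau>0\}}H_t$ on $\llbracket 0,\tau\rrbracket$ is precisely $\int_0^{t\wedge\tau}Z_{s-}^{-1}dA_s$, which is just the statement that the process $L$ recalled in the preamble is a $\mathbb{G}$-local martingale. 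A standard dual predictable projection argument then shows that $\mathbbm{1}_{\{\tau>0\}}(\zeta-Y_\tau)H_t-\int_0^tK_s(1-H_{s-})Z_{s-}^{-1}dA_s$ is a purely discontinuous $\mathbb{G}$-local martingale, realizing the compensated jump of $V$ at $\tau$. Combining this with the continuous-before-$\tau$ contribution from the second step reproduces the canonical decomposition claimed in the theorem.

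The main obstacle will be the technical bookkeeping around the stopping time $R=\inf\{t:Z_t=0\}$ and the $\widetilde{X}$-integrability of $J\mathbbm{1}_{\llbracket 0,\tau\rrbracket}$ under $\mathbb{P}$, which does not follow automatically from $\mathbb{F}$-integrability of $\tilde{J}$ with respect to $X$ and requires a careful localization argument through $\Gamma$. The second delicate point is the measurability selection of $K$: representing an $\mathcal{F}_{\tau-}$-conditional expectation by an $\mathbb{F}$-predictable process evaluated at $\tau$ relies on the section theorem for predictable sets and is where the $\mathcal{H}'$ hypothesis, rather than just the MRP of $X$, genuinely enters the argument.
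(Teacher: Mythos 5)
The paper does not actually prove this statement; it is quoted verbatim from Jeanblanc--Song (2015, Theorem 3.1) with a bare citation, so there is no in-paper proof to compare against. Your proposal reconstructs the strategy of the cited source correctly: split $V_t=\mathbb{E}_{\mathbb{P}}[\zeta\,\vert\,\mathcal{G}_t]$ via the key lemma into $Y_{t\wedge\tau}\mathbbm{1}_{\{\tau>0\}}$ plus the compensated single jump at $\tau$, obtain the $d\widetilde{X}$-integral from the MRP applied to the martingale part of $N=\mathbb{E}_{\mathbb{P}}[\zeta\mathbbm{1}_{\{t<\tau\}}\vert\mathcal{F}_t]$ together with integration by parts against $1/Z$ and the drift operator $\Gamma$, and kill the residual finite-variation part by uniqueness of the canonical decomposition of the $\mathbb{G}$-martingale $V$. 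Two small corrections. First, your closing remark misplaces where $\mathcal{H}^{\prime}$ enters: representing the $\mathcal{F}_{\tau-}$-conditional expectation as $K_{\tau}$ for an $\mathbb{F}$-predictable $K$ is the classical description of $\mathcal{F}_{\tau-}$ as the $\sigma$-field generated by predictable processes evaluated at $\tau$ (Dellacherie--Meyer/Jeulin) and needs neither $\mathcal{H}^{\prime}$ nor a section theorem; $\mathcal{H}^{\prime}$ is consumed entirely in your second step, where it supplies $\widetilde{X}=X-\Gamma(X)$. Second, your "uniqueness of the special decomposition" shortcut requires the compensating term $K(1-H_{-})Z_{-}^{-1}\bullet A$ to be $\mathbb{G}$-predictable, which is delicate when $A$ is only the dual \emph{optional} projection; in a careful write-up one either works with the dual predictable projection version of the compensator of $H$ or verifies directly, as in your third step, that the single-jump term minus this integral is a $\mathbb{G}$-local martingale and then cancels the remaining drifts explicitly rather than by abstract uniqueness. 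Neither point invalidates the plan.
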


\par
If we further suppose that $\{0 < \tau  < \infty\} \cap G_{\tau} = \{0 < \tau  < \infty\} \cap G_{\tau-}$ under $\mathbb{P}$, then it follows easily that the MRP holds in the stopped filtration $\mathbb{G}^{\tau}$ for the stopped process $(\widetilde{X}, L)$. More specifically, the following is true. \par

\begin{thm} Under the assumptions above, MRP$(\mathbb{P}, \mathbb{G}^{\tau}, (\widetilde{X}^{\tau},L))$ and $\mathbbm{1}_{\{0 < \tau < \infty\}}X_{\tau}$ is $\mathcal{G}_{\tau-}$-measurable if and only if $\{0 < \tau  < \infty\} \cap G_{\tau} = \{0 < \tau  < \infty\} \cap G_{\tau-}$ under $\mathbb{P}$. 
\end{thm}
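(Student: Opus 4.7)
The plan is to exploit the general representation formula of the preceding theorem and to analyze the behavior of the representing integrals at the single jump point $\tau$. The process $L_t = \mathbbm{1}_{\{\tau>0\}}H_t - \int_0^{t\land\tau} dA_s/Z_{s-}$ jumps by $\mathbbm{1}_{\{0<\tau<\infty\}}$ at $\tau$, while $\widetilde{X}^\tau$ jumps by $\Delta X_\tau - \Delta \Gamma(X)_\tau$ at $\tau$, with $\Gamma(X)$ the $\mathbb{G}$-predictable drift. The trace condition $\{0<\tau<\infty\} \cap \mathcal{G}_\tau = \{0<\tau<\infty\} \cap \mathcal{G}_{\tau-}$ is equivalent to saying every $\mathcal{G}_\tau$-measurable random variable coincides with a $\mathcal{G}_{\tau-}$-measurable one on this event, so the task reduces to matching the $\mathcal{G}_{\tau-}$-content of arbitrary $\mathcal{G}_\tau$-measurable random variables with what stochastic integrals against $\widetilde{X}^\tau$ and $L$ can synthesize at $\tau$.

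For the direction $(\Leftarrow)$, the measurability of $\mathbbm{1}_{\{0<\tau<\infty\}} X_\tau$ in $\mathcal{G}_{\tau-}$ is immediate from the trace equality, since $X_\tau$ is $\mathcal{G}_\tau$-measurable. For the MRP, fix a bounded $\mathcal{G}_\tau$-measurable $\zeta$ and apply the preceding theorem. On $\llbracket 0,\tau \rrbracket$ we have $1-H_{s-}\equiv 1$, so the $\widetilde{X}$-integral collapses to $\int_0^{t \land \tau} J_s\, d\widetilde{X}^\tau_s$. Under the hypothesis, $(\zeta - Y_\tau)\mathbbm{1}_{\{0<\tau<\infty\}}$ is $\mathcal{G}_{\tau-}$-measurable; using the standard fact that every $\mathcal{G}_{\tau-}$-measurable random variable agrees on $\{0<\tau<\infty\}$ with the value at $\tau$ of some $\mathbb{F}$-predictable process, together with uniqueness of the $\mathcal{F}_{\tau-}$-conditional expectation that defines $K$ in the preceding theorem, one obtains $K_\tau \mathbbm{1}_{\{0<\tau<\infty\}} = (\zeta - Y_\tau)\mathbbm{1}_{\{0<\tau<\infty\}}$. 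The last two terms of the representation then combine into $\int_0^{t \land \tau} K_s\, dL_s$, giving the MRP for $(\widetilde{X}^\tau, L)$.

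For the direction $(\Rightarrow)$, take any bounded $\mathcal{G}_\tau$-measurable $\zeta$ and form the bounded $\mathbb{G}^\tau$-martingale $N_t = \mathbb{E}_{\mathbb{P}}[\zeta \vert \mathcal{G}_{t \land \tau}]$. By MRP there exist $\mathbb{G}$-predictable $\phi$ and $\psi$ with $N_t = N_0 + (\phi \bullet \widetilde{X}^\tau)_t + (\psi \bullet L)_t$. Evaluating at $\tau$ and extracting the jump yields $\zeta = N_{\tau-} + \phi_\tau \Delta \widetilde{X}^\tau_\tau + \psi_\tau \Delta L_\tau$. Both $\phi_\tau$ and $\psi_\tau$ are $\mathcal{G}_{\tau-}$-measurable by $\mathbb{G}$-predictability; $\Delta L_\tau = \mathbbm{1}_{\{0<\tau<\infty\}}$; and $\Delta \widetilde{X}^\tau_\tau = \Delta X_\tau - \Delta \Gamma(X)_\tau$, where $\Delta \Gamma(X)_\tau$ is $\mathcal{G}_{\tau-}$-measurable since $\Gamma(X)$ is $\mathbb{G}$-predictable, and $\Delta X_\tau = X_\tau - X_{\tau-}$ is $\mathcal{G}_{\tau-}$-measurable on $\{0<\tau<\infty\}$ by the second hypothesis combined with $\mathcal{G}_{\tau-}$-measurability of the left limit $X_{\tau-}$. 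Hence $\zeta$ is $\mathcal{G}_{\tau-}$-measurable on $\{0<\tau<\infty\}$, which is precisely the trace equality.

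The main obstacle is the $(\Leftarrow)$ identification $(\zeta - Y_\tau)\mathbbm{1}_{\{0<\tau<\infty\}} = K_\tau \mathbbm{1}_{\{0<\tau<\infty\}}$. It requires the classical description of the $\mathcal{G}_{\tau-}$-trace on $\{0<\tau<\infty\}$ via $\mathbb{F}$-predictable sections, together with uniqueness of the $\mathcal{F}_{\tau-}$-conditional expectation, so that the section actually coincides with the specific $K$ produced by the preceding theorem. Care must be taken not to alter the behavior off $\{0<\tau<\infty\}$, where the representation is trivially compatible, and to verify $\mathbb{G}$-integrability of $K\mathbbm{1}_{\llbracket 0,\tau \rrbracket}$ against $L$, which follows from boundedness of $\zeta$ and the preceding theorem's integrability statement.
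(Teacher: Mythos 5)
The paper does not actually prove this theorem; it states it as an easy consequence of the preceding representation formula and defers to (Jeanblanc, Song, 2015), so there is no in-paper argument to compare against. Your proposal is essentially the standard proof and is correct: the $(\Leftarrow)$ direction hinges, as you say, on identifying $K_\tau\mathbbm{1}_{\{0<\tau<\infty\}}$ with $(\zeta-Y_\tau)\mathbbm{1}_{\{0<\tau<\infty\}}$ via the coincidence of $\mathbb{G}$- and $\mathbb{F}$-predictable processes on $\llbracket 0,\tau\rrbracket$, and the $(\Rightarrow)$ direction is the jump-at-$\tau$ computation. Two small points to tighten: $\Delta L_\tau$ equals $\mathbbm{1}_{\{0<\tau<\infty\}}$ only up to the (predictable, hence $\mathcal{G}_{\tau-}$-measurable) jump of the compensator term, which fortunately does not disturb the argument; and to pass from "agrees on $\{0<\tau<\infty\}$ with a $\mathcal{G}_{\tau-}$-measurable variable" to genuine $\mathcal{G}_{\tau-}$-measurability of $\mathbbm{1}_{\{0<\tau<\infty\}}X_\tau$ you should note that $\{0<\tau<\infty\}\in\mathcal{G}_{\tau-}$ since $\tau$ is a $\mathbb{G}$-stopping time, and the extension from bounded $\zeta$ to general $\mathbb{G}^\tau$-local martingales still needs the usual density and localization step.
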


\par
The preservation of the MRP beyond the default time $\tau$ is a subtler problem that we choose not to cover. We end the discussion of the preservation of the MRP under progressive enlargement with a particularly vivid example from (Jeanblanc, Song, 2011, 2015). \par

\begin{example} (An evolution model.) Consider a stochastic basis $(\Omega, \mathcal{F}, \mathbb{F}, \mathbb{P})$, where $\mathbb{F}$ satisfies the usual hypotheses. Consider an $\mathbb{F}$-adapted continuous increasing process $\Lambda$ and a non-negative $\mathbb{F}$-local martingale $N$, with $\Lambda_0 = 0$, $N_0 = 1$, and $0 \leq N_t e^{-\Lambda_t} \leq 1$ for all $0\leq t<\infty$. Define a product space given by $([0,\infty]\times\Omega,\mathscr{B}([0,\infty])\otimes\mathcal{F}_{\infty}),$ and the corresponding canonical projection maps $\pi(s,\omega) = \omega$ and $\tau(s,\omega) = s.$ We can now pull back the filtration onto the product space via $\widehat{\mathbb{F}} = \pi^{-1}(\mathbb{F})$ and the probability measure on $\pi^{-1}(\mathcal{F}_{\infty})$ via $\widehat{\mathbb{P}}(\pi^{-1}(A)) = \mathbb{P}(A)$ for all $A \in \mathcal{F}_{\infty}.$ We thus obtain a probability structure $([0,\infty]\times\Omega, \widehat{\mathbb{F}}, \widehat{\mathbb{P}})$ that is isomorphic to $(\Omega, \mathbb{F}, \mathbb{P}).$ In the sequel, $(\widehat{\mathbb{F}}, \widehat{\mathbb{P}})$ is denoted simply as $(\mathbb{F}, \mathbb{P}),$ and an $\mathcal{F}_{\infty}$-measurable random variable $\xi$ on $\Omega$ is identified with $\xi \circ \pi$ on the product sample space $[0,\infty]\times\Omega$. Within this setup, an interesting problem is posed.\par
\textit{Problem $\mathcal{P}^*$} Construct a probability measure $\mathbb{Q}$ on $([0,\infty]\times\Omega,\mathscr{B}([0,\infty])\otimes\mathcal{F}_{\infty})$ such that \par
(i) $\mathbb{Q} \vert_{\mathcal{F}_{\infty}} = \mathbb{P} \vert_{\mathcal{F}_{\infty}}$ (the restriction condition); \par
(ii) $\mathbb{Q}[\tau > t \vert \mathcal{F}_t] = N_te^{-\Lambda_t}$ for all $0\leq t < \infty$ (the projection condition). \par
It can be shown that under certain assumption there exist infinitely many solutions to the Problem $\mathcal{P}^*.$ For example, for any $(\mathbb{P}, \mathbb{F})$-local martingale $Y$, and any bounded differentiable function $f$ with bounded continuous derivative and $f(0) = 0$, there exists a solution $\mathbb{Q}^{\#}$ such that for any $u \in \mathbb{R}_+^*$ the martingale $M^u_t = \mathbb{Q}^{\#}\left[\tau \leq u \vert \mathcal{F}_t\right]$, $t\geq u$, satisfies the following evolution equation \par
\[
\begin{cases}
dX_t = X_t\left(-\frac{e^{-\Lambda_t}}{1 - Z_t}dN_t + f(X_t - (1-Z_t))dY_t\right), \quad u \leq t < \infty, \\
X_u = 1 - Z_u,
\end{cases}
\]
\end{example}
\textcolor{black}{where $Z$ is the Azema supermartingale associated with $\t$ with the multiplicative decomposition $Z=Ne^{-\L}$}.
\par
We now consider an enlarged by a random time $\tau$ filtration $\mathbb{G}$ on the product space given by $\mathcal{G}_t = \mathcal{F}_t \lor \sigma(\tau \land t)$. Then the following theorems hold. \par

\begin{thm} Assume the above conditions hold, and for each $0 < t < \infty$, the map $u \mapsto M^u_t$ is continuous on $(0,t].$ Then, for any $(\mathbb{P}, \mathbb{F})$-local martingale $X$, the process 
\begin{align*}
    \Gamma(X)_t &\coloneqq \int_0^t\mathbbm{1}_{\{s\leq\tau\}}\frac{e^{-\Lambda_s}}{Z_s}d\langle N,X \rangle_s - \int_0^t\mathbbm{1}_{\{\tau < s\}}\frac{e^{-\Lambda_s}}{1 - Z_s}d\langle N,X \rangle_s \\ 
    &+ \int_0^t\mathbbm{1}_{\{\tau<s\}}(f(M^{\tau}_s - (1-Z_s)) + M^{\tau}_s f^{\prime}(M^{\tau}_s - (1-Z_s)))d\langle Y,X \rangle_s, \quad 0\leq t <\infty,
\end{align*}
is a well-defined $\mathbb{G}$-predictable process with finite variation, and the process $\widetilde{X} = X - \Gamma(X)$ is a $(\mathbb{Q}^{\#}, \mathbb{G})$-local martingale. \end{thm}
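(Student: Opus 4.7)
The plan is to prove the two assertions by splitting $\widetilde{X}$ into its part stopped at $\tau$ and its post-$\tau$ increment, handling each piece via a suitable enlargement formula under $\mathbb{Q}^{\#}$.

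First I would verify that $\Gamma(X)$ is well-defined, $\mathbb{G}$-predictable, and of finite variation. The indicators $\mathbbm{1}_{\{s \leq \tau\}}$ and $\mathbbm{1}_{\{\tau < s\}}$ are left-continuous and $\mathbb{G}$-adapted, hence $\mathbb{G}$-predictable; the continuity hypothesis on $u \mapsto M^u_t$, combined with the $t$-continuity inherited from the evolution SDE, gives continuity of $s \mapsto M^{\tau}_s$ on $\{\tau < s\}$, ensuring $\mathbb{G}$-predictability of the third integrand. The integrators $\langle N, X\rangle$ and $\langle Y, X\rangle$ are $\mathbb{F}$-predictable of finite variation, so each Stieltjes integral is well defined pathwise and $\Gamma(X)$ has finite variation.

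For the pre-$\tau$ piece I would apply the classical progressive enlargement decomposition under $\mathbb{Q}^{\#}$. The restriction condition $\mathbb{Q}^{\#}\vert_{\mathcal{F}_{\infty}} = \mathbb{P}\vert_{\mathcal{F}_{\infty}}$ keeps $X$ an $\mathbb{F}$-local martingale, and the projection condition identifies the $\mathbb{Q}^{\#}$-Azema supermartingale of $\tau$ as $Z = Ne^{-\Lambda}$. Since $\Lambda$ is continuous and $N$ is a continuous local martingale, It\^o's formula yields the Doob--Meyer decomposition
\[ Z_t = 1 + \int_0^t e^{-\Lambda_s}\,dN_s - \int_0^t Z_s\,d\Lambda_s, \]
so the martingale part $m$ of $Z$ satisfies $d\langle m, X\rangle_s = e^{-\Lambda_s}\,d\langle N, X\rangle_s$. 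The progressive enlargement theorem quoted earlier, applied under $\mathbb{Q}^{\#}$, then shows that $X^{\tau}_t - \int_0^{t \land \tau}\frac{e^{-\Lambda_s}}{Z_s}\,d\langle N, X\rangle_s$ is a $(\mathbb{Q}^{\#},\mathbb{G})$-local martingale, reproducing the first integral in $\Gamma(X)$.

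The post-$\tau$ increment is where the terms involving $f$ and $Y$ enter, and this is the harder part. The key idea is to differentiate the evolution SDE for $M^u_t$ in the parameter $u$: writing $p^u_t \coloneqq \partial_u M^u_t$ for the conditional density and applying the chain rule to the SDE one obtains
\[ dp^u_t = p^u_t\Bigl(-\frac{e^{-\Lambda_t}}{1-Z_t}\,dN_t + \bigl(f(M^u_t - (1-Z_t)) + M^u_t f'(M^u_t - (1-Z_t))\bigr)\,dY_t\Bigr), \]
so that $\frac{1}{p^u_{s-}}\,d\langle p^u, X\rangle_s$ evaluated at $u = \tau$ matches exactly the sum of the last two integrands in $\Gamma(X)$. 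A Jacod-type decomposition on $(\tau,\infty)$---in the spirit of the Jeanblanc--Song formulation invoked earlier in the excerpt for $\mathcal{H}'$ random times---then yields that $X_t - X_{t\land\tau}$ minus this bracket correction is a $(\mathbb{Q}^{\#},\mathbb{G})$-local martingale on $(\tau,\infty)$, and concatenation with the pre-$\tau$ piece gives $\widetilde{X}$. The main obstacle will be rigorously carrying out this post-$\tau$ step: one must justify the parameter differentiation $\partial_u M^u_t$ (this is precisely the role of the continuity hypothesis on $u \mapsto M^u_t$, supplemented by a stability-of-solutions argument for the driving SDE), verify the integrability conditions needed to treat $p^u$ as a valid density for the enlargement formula, and apply the beyond-$\tau$ formula under $\mathbb{Q}^{\#}$ rather than $\mathbb{P}$, given that $\tau$ is in general neither an honest nor an initial time in the original filtration and therefore the decomposition must be justified through the explicit density dynamics rather than through any of the standard sufficient conditions.
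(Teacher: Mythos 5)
The paper itself gives no proof of this statement: it is quoted from the Jeanblanc--Song evolution model and the reader is referred to the original papers, so there is no in-paper argument to compare yours against. Judged on its own merits, your architecture is sound and you have correctly located the source of every term in $\Gamma(X)$: the pre-$\tau$ integral is exactly the classical progressive-enlargement drift $\int_0^{t\wedge\tau} Z_s^{-1}\,d\langle m,X\rangle_s$ with $m = 1+\int e^{-\Lambda}\,dN$ coming from $d(Ne^{-\Lambda}) = e^{-\Lambda}dN - Z\,d\Lambda$, and your first-variation computation for the evolution SDE does reproduce precisely the two post-$\tau$ integrands, since the logarithmic $t$-dynamics of $\partial_u M^u_t$ carry the coefficient $-\frac{e^{-\Lambda_t}}{1-Z_t}$ against $dN_t$ and $f(M^u_t-(1-Z_t))+M^u_tf'(M^u_t-(1-Z_t))$ against $dY_t$.

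The genuine gap is in turning that heuristic into a proof of the post-$\tau$ local martingale property. First, the stated hypothesis is only that $u\mapsto M^u_t$ is \emph{continuous} on $(0,t]$; this rules out atoms of the conditional law of $\tau$ but gives neither differentiability in $u$ nor absolute continuity, so the object $p^u_t=\partial_u M^u_t$ on which your argument rests need not exist. Second, even where a density exists, the Jacod-type machinery you invoke requires the conditional laws $d_uM^u_t(\cdot)$ to be dominated by a fixed non-random measure $\eta(du)$; here the natural reference measure is the \emph{random} measure $d_u(1-Z_u)$ (the law of $\tau$ at its "birth time"), so the theorem as stated earlier in the paper does not apply beyond $\tau$, and "stability of solutions of the driving SDE" does not close this. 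The way this is actually made rigorous is to bypass densities altogether: verify $\mathbb{E}_{\mathbb{Q}^{\#}}\bigl[h\,g(\tau\wedge s)\,(\widetilde{X}_t-\widetilde{X}_s)\bigr]=0$ for bounded $\mathcal{F}_s$-measurable $h$ and bounded Borel $g$ (such products generate $\mathcal{G}_s$), split on $\{\tau>s\}$ and $\{\tau\le s\}$, and on the latter event rewrite the expectation under $\mathbb{P}$ as a Fubini integral of $g(u)$ against the Stieltjes measure $d_uM^u_\cdot$; an integration by parts using the evolution SDE for $t\mapsto M^u_t$ then produces exactly the bracket corrections in $\Gamma(X)$, with the continuity of $u\mapsto M^u_t$ entering only to justify the limiting/measurability steps. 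Your proposal identifies the right formula but, as written, the step that is supposed to prove it would fail without this reformulation.
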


\par
\begin{thm} \textcolor{black}{Let $X$ an $\mF$-adapted cadlag process.} Assume the following conditions hold: \par
(i) all $(\mathbb{P}, \mathbb{F})$-local martingales are continuous; \par
(ii) for each $0 < t < \infty$, the map $u \mapsto M^u_t$ is continuous on $(0,t]$; \par
(iii) $X$ possesses the MRP in $\mathbb{F}$ under $\mathbb{P}$; \par
(iv) $0 < Z_t < 1$ for $0 < t < \infty$. \par
Then $(\widetilde{X}, L)$ possesses the MRP in $\mathbb{G}$ under $\mathbb{Q}^{\#}$, where $L$ is given by 
\begin{align*}
    L_t = \mathbbm{1}_{\{\tau >0\}}H_t - \int_0^{\tau \land t}\frac{dA_s}{Z_{s-}}, \quad t\geq0,
\end{align*}
\textcolor{black}{where $A$ is the predictable increasing process in the Doob-Meyer decomposition of $Z$ and $H\coloneqq \1_{\llbracket \t,\infty)}.$}
\end{thm}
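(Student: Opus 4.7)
The plan is to reduce the MRP for general $\mathbb{Q}^{\#}$-$\mathbb{G}$-local martingales to the representation of bounded $\mathbb{G}$-martingales of the form $V_t = \mathbb{E}_{\mathbb{Q}^{\#}}[\zeta \vert \mathcal{G}_t]$ for suitable test $\zeta$, and to verify the claim separately on $\llbracket 0, \tau \rrbracket$ and on $(\tau, \infty)$. A monotone class argument shows it suffices to take $\zeta = g \cdot f(\tau)$ with $g \in L^\infty(\mathcal{F}_\infty)$ and $f$ bounded Borel, since such products generate $\mathcal{G}_\infty = \mathcal{F}_\infty \lor \sigma(\tau)$. Localization together with the closure of stochastic integrals against $(\widetilde{X}, L)$ in $\mathcal{H}^2$ will then deliver the full MRP by density.

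On $\llbracket 0, \tau \rrbracket$ the representation is essentially delivered by the general representation theorem for $\mathcal{G}_\tau$-measurable random variables stated just before this example. Writing that formula in the form
\begin{align*}
\mathbb{E}_{\mathbb{Q}^{\#}}[\zeta \vert \mathcal{G}_t] = c_0 + \int_0^t(1-H_{s-})J_s\,d\widetilde{X}_s + \mathbbm{1}_{\{\tau>0\}}(\zeta - Y_\tau)H_t - \int_0^tK_s(1-H_{s-})\frac{dA_s}{Z_{s-}},
\end{align*}
and recalling that $L_t = \mathbbm{1}_{\{\tau>0\}}H_t - \int_0^{\tau\land t}\frac{dA_s}{Z_{s-}}$, I would identify the last two terms as a stochastic integral $\Phi \bullet L$ for a $\mathbb{G}$-predictable $\Phi$ built from $K$ and $\zeta - Y_\tau$. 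Assumption (iv) makes division by $Z_{s-}$ legitimate, and assumption (i) gives continuity of $Z$, so that the single jump of the conditional expectation is carried by $H$ and matches exactly the jump structure of $L$.

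To push the representation beyond $\tau$ I would exploit the evolution equation characterizing $M^u_t = \mathbb{Q}^{\#}[\tau \leq u \vert \mathcal{F}_t]$, together with assumptions (ii) and (iv). For $\zeta = g f(\tau)$, conditioning first on $\tau$ and invoking the MRP of $X$ in $\mathbb{F}$ (assumption (iii)), one may write $g = \mathbb{E}_{\mathbb{P}}[g \vert \mathcal{F}_\tau] + (\xi \bullet X)$ for some $\mathbb{F}$-predictable $\xi$. Substituting and converting $X = \widetilde{X} + \Gamma(X)$ by means of the preceding theorem, the drift $\Gamma(X)$ produced past $\tau$ should cancel against the compensator of the $\mathbb{F}$-conditional law of $\tau$ that appears in the evolution equation, because the same $\langle N, X\rangle$ and $\langle Y, X\rangle$ structures underlie both expressions. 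Continuity of all $(\mathbb{P},\mathbb{F})$-local martingales then ensures that $\widetilde{X}$ is continuous and that the required bracket manipulations go through without extra jump contributions.

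The main obstacle is the coupling after $\tau$ between $\sigma(\tau)$ and the subsequent evolution of $\mathcal{F}_t$: one must check that the integrand assembled from $\xi$ and from the $u$-derivatives of $M^u_t$ (whose regularity is secured by assumption (ii)) is genuinely $\mathbb{G}$-predictable and $\widetilde{X}$-integrable, and that on $(\tau,\infty)$ no further $L$-component is required, since $L$ is essentially inert there. Once this identification is in place, an $\mathcal{H}^2(\mathbb{Q}^{\#},\mathbb{G})$-isometry argument for the bivariate integrator $(\widetilde{X},L)$ (using their mutual $\mathbb{Q}^{\#}$-orthogonality on $\llbracket 0,\tau\rrbracket$ that follows from the continuity of $\widetilde{X}$ and the thin support of $L$'s jump), combined with the density of the test class in $L^\infty(\mathcal{G}_\infty)$ and a standard localization, extends the representation to an arbitrary $\mathbb{Q}^{\#}$-$\mathbb{G}$-local martingale.
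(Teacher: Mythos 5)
The paper itself offers no proof of this theorem --- it is quoted from Jeanblanc and Song (2011, 2015) --- so there is no in-text argument to compare against; judged on its own, your architecture (reduce to test variables $\zeta = g\,f(\tau)$ with $g\in L^{\infty}(\mathcal{F}_{\infty})$, treat $\llbracket 0,\tau\rrbracket$ via the general $\mathcal{G}_{\tau}$-representation theorem quoted just before, treat $(\tau,\infty)$ via the dynamics of $M^{u}_{t}$, and close up by stability of the space of stochastic integrals) is the correct one and is the route taken in the source. However, the two steps you treat as bookkeeping are exactly where hypotheses (i)--(iv) do their work, and as written both are gaps. On $\llbracket 0,\tau\rrbracket$: the last two terms of the quoted formula are $\mathbbm{1}_{\{\tau>0\}}(\zeta - Y_{\tau})H_{t} - \int_{0}^{t}K_{s}(1-H_{s-})Z_{s-}^{-1}\,dA_{s}$ with $K_{\tau}\mathbbm{1}_{\{0<\tau<\infty\}} = \mathbb{E}[(\zeta - Y_{\tau})\mathbbm{1}_{\{0<\tau<\infty\}}\mid\mathcal{F}_{\tau-}]$, and this equals $K\bullet L$ \emph{plus} the residual purely discontinuous martingale $\mathbbm{1}_{\{\tau>0\}}(\zeta - Y_{\tau}-K_{\tau})H_{t}$, which is orthogonal to both integrators. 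Identifying the expression with an integral against $L$ requires this residual to vanish, i.e. $\{0<\tau<\infty\}\cap\mathcal{G}_{\tau} = \{0<\tau<\infty\}\cap\mathcal{G}_{\tau-}$ --- precisely the necessary-and-sufficient condition in the theorem that follows the general representation result. You must derive it from (i), (ii) and (iv) (continuity of all $\mathbb{F}$-martingales together with continuity of $u\mapsto M^{u}_{t}$ forces $\tau$ to avoid $\mathbb{F}$-stopping times and kills the extra atom at $\tau$); it is not automatic.

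Second, on $(\tau,\infty)$ the assertion that the drift $\Gamma(X)$ "should cancel against the compensator of the conditional law of $\tau$" is the entire content of the theorem there and cannot be left as a hoped-for cancellation. The mechanism is that the evolution equation is explicitly solvable: for $t\geq u$ one has $M^{u}_{t} = (1-Z_{u})\,\mathcal{E}\bigl(-\tfrac{e^{-\Lambda}}{1-Z}\bullet N + f(M^{u}-(1-Z))\bullet Y\bigr)_{u,t}$, so the field $u\mapsto M^{u}_{t}$ carries an explicit density (the stochastic exponential) with respect to $d(1-Z_{u})$; the martingale $\mathbb{E}_{\mathbb{Q}^{\#}}[g\,f(\tau)\mid\mathcal{G}_{t}]$ on $\{\tau\leq t\}$ is a ratio of such quantities evaluated at $u=\tau$, and It\^o's formula applied to that ratio --- using (iii) to write $N$ and $Y$, hence $M^{u}$, as integrals against $X$, and (iv) to keep denominators in $(0,1)$ --- is what produces the $\widetilde{X}$-integrand and shows no further $L$-component is needed. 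Assumption (ii) gives continuity in $u$, not the "$u$-derivatives" or density structure you invoke; without exploiting the explicit exponential solution, that step does not go through. Until these two computations are supplied, what you have is a correct plan rather than a proof.
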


\bigskip

\section{Levy Processes, Processes with Independent Increments and Point Processes} 

In this section we focus on a number of results that can be considered as either of more specialized nature (compared to general semimartingale results) or as generalizations of some of the examples we have encountered before (e.g. enlargement in a Brownian motion setting or the Kusuoka model and its extensions by Calzolari and Torti, to name a few). We have decided to group these results together in a separate section since they revolve around the ideas of filtration enlargement in the context of Levy processes, processes with independent increments and point processes that share and employ similar mathematical techniques. As we have done before, we consider the preservation of the semimartingale property and the predictable representation property in separate subsections. \par

\bigskip

\subsection{$\cH \ \text{and} \ \cH^{\prime}$ hypotheses under enlargement}

One of the first general results in this area was established in (Jacod, Protter, 1988), which we follow in the sequel. We begin by setting up the notation and definitions first. Let $Z$ denote a Levy process on $[0,1]$, i.e. a process with stationary and independent increments, cadlag paths, with $Z_0 = 0$ a.s. $Z^c$ will denote the local martingale part of $Z$ with respect to its natural filtration $\mF$. Thus, excluding the degenerate case of $Z^c \equiv 0$, it follows that the normalized process $Z^c/\s$ is a standard Wiener process for some $\s>0.$ Let $\m$ be the jump measure of $Z$ given by 
\begin{align*}
    \m(\w; dt \times dx) = \sum_{s>0, \D Z_s(\w)\neq 0}\e_{(s,\D Z_s(\w))}(dt \times dx),
\end{align*}
\textcolor{black}{where $\e_a$ denotes the Dirac measure at $a$.}
\par

The $\mF$-compensator of the jump measure is given by 
\begin{align*}
    \n(\w; dt \times dx) = dt \otimes F(dx) 
\end{align*}
for some non-random measure $F$ on $\mR$ such that\textcolor{black}{ $\int_{\mR}\min(x^2,1)F(dx)<\infty$}. Thus, for every $\a>0$ we have the standard decomposition for $Z$
\begin{align*}
    Z_t = b_{\a}t + Z^c_t + \int_0^t\int_{|x|\leq\a}x(\m - \n)(ds \times dx) + \sum_{0<s<t}\D Z_s\1_{\{|\D Z_s|>\a\}},
\end{align*}
where $b_{\a}\in\mR$. \par
We define the filtration $\mG$ to be the smallest filtration that satisifes the usual conditions relative to which $Z$ is adapted and $Z_1$ is $\cG_0$-measurable. The filtration $\mH$ is defined to be the smallest filtration that satisfies the usual conditions, with respect to which $Z$ is adapted, $Z_1^c$ and $\sum_{0 < s \leq 1}\1_A(\D Z_s)$ are $\cH_0$-measurable for all Borel sets $A$ at a positive distance away from $0.$ Note that $\mH$ is larger than $\mG$ since $\int f(x) \m((0,1] \times dx) = \sum_{0<s\leq1}f(\D Z_s)$ is $\cH_0$-measureable for all Borel functions $f$ that vanish on a neighborhood of $0$ and the integral representation of $Z$ above is the $L^2$-limit of 
\begin{align*}
    \int_0^t\int_{1/n<|x|\leq\a}x(\m-\n)(ds \times dx) = \sum_
    {0<s\leq t}\D Z_s\1_{\{1/n<|\D Z_s| \leq \a\}} - t \int_{1/n<|x|\leq\a}xF(dx)
\end{align*}
as $n \to \infty$, hence $Z_1$ is $\cH_0$-measurable. \par
The main result of Jacod and Protter concerning enlargement is the following. \par

\begin{thm} Let $Z$ be a Levy process. Then every $\mF$-semimartingale is an $\mH$-semimartingale on $[0,1)$. 
\end{thm}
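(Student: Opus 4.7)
The plan is to use the predictable representation property of Lévy processes to reduce the statement to verifying that each martingale building block of the Lévy basis is an $\mH$-semimartingale on $[0,1)$. First, any $\mF$-semimartingale $X$ admits a decomposition $X=X_0+M+A$ with $M\in\cM_{\loc}(\mF)$ and $A$ an $\mF$-adapted process of finite variation; since $A$ retains finite variation under any enlargement, it suffices to prove the statement for $M$. By the classical Kunita--Watanabe/Jacod representation for the Lévy basis $(Z^c,\m-\n)$, one may further reduce to showing separately that $Z^c$ and the compensated jump measure $\m-\n$ are $\mH$-semimartingales on $[0,1)$, after which the stochastic integrals defining $M$ will automatically be $\mH$-semimartingales on $[0,1)$.

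For the continuous martingale part: $Z^c/\s$ is an $\mF$-standard Brownian motion, and $Z^c_1$ is $\cH_0$-measurable by construction, so the classical Brownian-bridge computation yields
\[
Z^c_t=\widetilde{B}_t+\int_0^t\frac{Z^c_1-Z^c_s}{1-s}\,ds,\qquad t<1,
\]
where $\widetilde{B}$ is (up to $\s$) an $\mH$-Brownian motion. The bridging drift is pathwise of finite variation on each $[0,T]$ with $T<1$ since $1-s\geq 1-T>0$, so $Z^c$ is an $\mH$-semimartingale on $[0,1)$.

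For the jump part, observe that $\cH_0$ encodes, for every $\e>0$, the Borel counting measure $\m([0,1]\times\cdot)$ restricted to $\{|x|>\e\}$, i.e.\ the multiset of all jump sizes of magnitude exceeding $\e$ but not the associated times. Conditional on these counts, the jump times are distributed as independent uniform order statistics on $[0,1]$, a standard fact for Poisson random measures with deterministic intensity. Fix $\e>0$ and consider the truncated large-jump measure $\m|_{\{|x|>\e\}}$; an explicit Poisson-bridge calculation (analogous to the Brownian case) shows that its $\mH$-compensator on $[0,1)$ is obtained from $\n|_{\{|x|>\e\}}=dt\otimes F|_{\{|x|>\e\}}$ by a multiplicative correction of the form $(\text{remaining expected count in }A)/(1-s)$, which is locally bounded on $[0,1)$. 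Decomposing $\m-\n$ into the increasing union of such truncated pieces as $\e\downarrow 0$ and using independence of disjoint jump slices (a consequence of independent increments), the $\mH$-compensators assemble consistently; the remaining tiny jumps give an $L^2$-convergent compensated sum that carries over to $\mH$ unchanged on $\{|x|\leq\e\}$ information that is absent from $\cH_0$.

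The main obstacle is the jump-bridge construction: one must verify that the slice-by-slice compensator corrections remain locally integrable and assemble into a well-defined predictable finite-variation process on $[0,T]$ for each $T<1$, and that the $L^2$ limit as $\e\downarrow 0$ yields a genuine $\mH$-semimartingale. The crucial technical ingredient is that the factor $1-s$ in the bridging drift is bounded below on $[0,T]$ for $T<1$ but degenerates at $t=1$, which is precisely why the conclusion holds only on $[0,1)$ and why the analogue on the closed interval would fail in general.
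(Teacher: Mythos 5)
Your overall strategy coincides with the paper's: reduce via the martingale representation theorem to the two building blocks $Z^c$ and the compensated jump measure, establish bridge-type decompositions for each, and handle general integrands by truncation. However, there are two genuine gaps. First, the claim that once $Z^c$ and $\m-\n$ are $\mH$-semimartingales "the stochastic integrals defining $M$ will automatically be $\mH$-semimartingales" is false as stated: one must (a) invoke Stricker's theorem to identify the $\mF$-stochastic integral with the $\mH$-one for bounded integrands and pass to a limit by truncation, and (b) verify that the integrand is integrable against the \emph{new drift}, i.e.\ that $\int_0^t H_s\,dC_s$ (continuous case) and $\int_0^t\int W(s,x)(\r-\n)(ds\times dx)$ (jump case) define finite-variation processes on $[0,t]$ for $t<1$. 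You name this as "the main obstacle" but do not resolve it; in the paper this is precisely the content of the proof, carried out via the estimate $\mE\int_0^t(U^n_s-U^m_s)^2ds=\int_0^t\frac{ds}{1-s}\,\mE\int_{1/n<|x|\leq 1/m}W(s,x)^2F(dx)\to 0$, which shows the accumulated drift corrections converge in $L^2(\W\times[0,t])$ and hence that $C$ has finite variation on $[0,t]$. (That such an integrability condition is not automatic is underscored by Gal'chuk's theorem later in the paper, where the semimartingale property of $\int f\,dX^c$ in the enlarged filtration holds only under an explicit integrability condition on $f$.)

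Second, your treatment of the small jumps is structurally wrong. You assert that the compensated sum over $\{|x|\leq\e\}$ "carries over to $\mH$ unchanged" because this information "is absent from $\cH_0$." But $\cH_0$ contains $\sum_{0<s\leq 1}\1_A(\D Z_s)$ for \emph{every} Borel set $A$ at positive distance from $0$, so for each fixed threshold $\delta>0$ the jumps of size in $(\delta,\e]$ are visible at time $0$; correspondingly the $\mH$-compensator $\r(\w;dt\times dx)=dt\times\frac{\m(\w;(t,1]\times dx)}{1-t}\1_{[0,1)}(t)$ differs from $\n$ at \emph{all} jump sizes, not just the large ones. The small-jump compensated integral is therefore not an $\mH$-martingale unchanged; rather, the point of the paper's argument is that although every truncation level contributes a nonzero correction $C^n-C^m$, these corrections are summable in $L^2$, so the total correction $C=M-\wh{M}$ is still of finite variation on $[0,t]$ for $t<1$. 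Without this estimate your assembly of the jump part does not go through.
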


\par
Before proving the result, we focus on a result that generalizes the Brownian motion case we have met before. While rather standard and well-known in the modern literature, it captures the key aspects of some of the essential ideas involved in initial enlargement by the terminal value - an idea that is in some sense foundational to the notion of insider information. Moreover, this example presents a useful exercise in obtaining "tangible derivations" that shed light on the more \textcolor{black}{abstract} (and often cluttered in terms of notation and technical conditions that usually accompany greater generality) theoretical results, and thus has a lot of instructive value. It is due to E. Kurtz, we follow the exposition in (Jacod, Protter, 1988, p.624).  \par

\begin{thm} Assume that the Levy process $Z$ is integrable, i.e. $\mE |Z_t| < \infty$ for all $t$. Then 
\begin{align*}
    M_t = Z_t - \int_0^t\frac{Z_1-Z_s}{1-s}ds 
\end{align*}
is a $\mG$-martingale on $[0,1)$. 
\end{thm}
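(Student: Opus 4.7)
The plan is to verify the martingale property directly, i.e., to show for $0 \leq s < t < 1$ that $M_t \in L^1(\mP)$ and $\mE[M_t - M_s \mid \cG_s]=0$. Integrability is routine: since $\mE|Z_u|$ is a locally bounded function of $u$ under the hypothesis $\mE|Z_1|<\infty$, one has
\[
\mE|M_t|\leq\mE|Z_t|+\int_0^t\frac{\mE|Z_1|+\mE|Z_u|}{1-u}\,du<\infty,\quad t<1,
\]
the last integral being finite because $\int_0^t(1-u)^{-1}du=-\log(1-t)$.

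The crux of the argument is the conditional expectation identity
\[
\mE[Z_t-Z_s\mid\cG_s]=\frac{t-s}{1-s}(Z_1-Z_s),\qquad 0\leq s<t<1.
\]
Writing $\cG_s=\cF_s\lor\sigma(Z_1-Z_s)$ and using that $(Z_u-Z_s)_{u\geq s}$ is independent of $\cF_s$, a standard disintegration reduces the identity to computing $\mE[Z_t-Z_s\mid\sigma(Z_1-Z_s)]$. I would then establish it by exchangeability: for $t\in(s,1)$ with $(t-s)/(1-s)=m/n\in\mQ$, partition $[s,1]$ into $n$ equal subintervals, and note that the increments $\Delta_k := Z_{s+k(1-s)/n}-Z_{s+(k-1)(1-s)/n}$, $k=1,\dots,n$, are i.i.d.\ and integrable. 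By the symmetry argument for i.i.d.\ sums, $\mE[\Delta_k\mid\sum_j\Delta_j]=\frac{1}{n}\sum_j\Delta_j=\frac{1}{n}(Z_1-Z_s)$, and summing over $k=1,\dots,m$ yields the formula on a dense subset of $t\in(s,1)$. Density together with the $L^1$-right-continuity of $u\mapsto Z_u$ extends it to all $t$.

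Granting this identity, Fubini--Tonelli gives
\[
\mE\!\left[\int_s^t\frac{Z_1-Z_u}{1-u}\,du\,\Big|\,\cG_s\right]=\int_s^t\frac{Z_1-\mE[Z_u\mid\cG_s]}{1-u}\,du.
\]
Substituting $\mE[Z_u\mid\cG_s]=Z_s+\frac{u-s}{1-s}(Z_1-Z_s)$ reduces the integrand to the constant $(Z_1-Z_s)/(1-s)$, so the integral equals $\frac{t-s}{1-s}(Z_1-Z_s)$. This exactly matches $\mE[Z_t-Z_s\mid\cG_s]$, so $\mE[M_t-M_s\mid\cG_s]=0$, as required.

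The main obstacle is the conditional expectation identity above. Although classical and intuitively clear from the Brownian-bridge analogy, its proof for a general integrable L\'evy process relies on the exchangeability of equal-length increments and needs a careful passage from rational ratios $(t-s)/(1-s)$ to arbitrary $t\in(s,1)$ via an $L^1$-limit; the hypothesis $\mE|Z_1|<\infty$ is essential throughout, both for the exchangeability step and to make Fubini applicable. Once this identity is secured, the cancellation yielding the $\mG$-martingale property is a purely mechanical computation.
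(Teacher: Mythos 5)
Your proof is correct and follows essentially the same route as the paper: the same exchangeability identity $\mE[Z_t-Z_s\mid\cG_s]=\tfrac{t-s}{1-s}(Z_1-Z_s)$, established first for rational ratios via i.i.d.\ increments and extended by $L^1$-continuity, followed by the same conditional-Fubini cancellation. The only divergence is the integrability step: the paper runs a two-stage argument (first assuming $\mE Z_t^2<\infty$ to obtain $\mE\int_0^1\frac{|Z_1-Z_s|}{1-s}\,ds<\infty$ from the bound $\mE|Z_1-Z_s|\leq\alpha\sqrt{1-s}$, then subtracting the large jumps to relax to mere integrability), which yields the martingale property on the closed interval $[0,1]$, whereas your $-\log(1-t)$ bound is simpler but only reaches $t<1$ --- which is exactly what the stated claim on $[0,1)$ requires.
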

\begin{proof} We begin by assuming a stronger integrability condition and require that $\mE Z_t^2 < \infty$ for all $t.$ Next define
\begin{align*}
    Y_i = Z_{(i+1)/n} - Z_{i/n}.
\end{align*}
Then we note that for rational $0\leq s<t\leq 1, s=j/n, t=k/n,$ we have 
\begin{align*}
    Z_1 - Z_s &= \sum_{i=j}^{n-1}Y_i \\
    Z_t - Z_s &= \sum_{i=j}^{k-1}Y_i.
\end{align*}
Since $Z$ is a Levy process, the random variables $Y_i$ are i.i.d., and, moreover, integrable. Note
\begin{align*}
    \mE(Z_t-Z_s | Z_1 - Z_s) &= \mE \left(\sum_{i=j}^{k-1}Y_i \Bigg| \sum_{i=j}^{n-1}Y_i \right) \\
    &= \frac{k-j}{n-j}\sum_{i=j}^{n-1}Y_i \\
    &= \frac{t-s}{1-s}(Z_1-Z_s).
\end{align*}
By the independent increments property, $\mE(Z_t-Z_s|\cG_s) = \mE(Z_t-Z_s|Z_1-Z_s)$, and hence $\mE(Z_t-Z_s|\cG_s) = \frac{t-s}{1-s}(Z_1-Z_s)$ for all rational $0\leq s < t \leq 1$. Since $Z_t - \mE Z_t$ is an $\mF$-martingale, the family $(Z_t)_{0\leq t \leq 1}$ is uniformly integrable and $Z$ has cadlag paths, hence the equality holds for all real $0 \leq s<t \leq 1$. We now proceed to check the martingale properties for $M$ as defined in the statement of the theorem. For fixed $0\leq s<t <1$, the conditional Fubini theorem gives 
\begin{align*}
    \mE(M_t-M_s | \cG_s) &= \mE (Z_t-Z_s|\cG_s) - \int_s^t\frac{1}{1-u}\mE (Z_1-Z_u|\cG_s)du \\
    &=\frac{t-s}{1-s}(Z_1-Z_s) - \int_s^t\frac{1}{1-u}\frac{1-u}{1-s}(Z_1-Z_s)du \\
    &= 0.
\end{align*}
The indepedent and stationary increments property gives 
\begin{align*}
    \mE |Z_1-Z_s| \leq \mE \left[(Z_1-Z_s)^2\right]^{1/2}\leq \a (1-s)^{1/2}
\end{align*}
for some $\a$ and $0\leq s \leq 1$, which gives the bound
\begin{align*}
    \mE \int_0^1\frac{|Z_1-Z_s|}{1-s}ds \leq \a \int_0^1\frac{\sqrt{1-s}}{1-s}ds < \infty.
\end{align*}
We now need to relax the integrability condition to $\mE |Z_t|<\infty$ for all $0\leq t \leq 1.$ As a first step, we subtract away the big jumps of the process. Set 
\begin{align*}
    J_t^1 &= \sum_{0<s\leq t} \D Z_s \1_{\{\D Z_s>1\}}, \\
    J_t^2 &= -\sum_{0<s\leq t} \D Z_s \1_{\{\D Z_s<-1\}},
\end{align*}
and note that $Y = Z - J^1 + J^2$ is a Levy process with bounded jumps and hence square integrable. The independence of $Y, J^1$ and $J^2$ implies that $Y_t - \int_0^t\frac{Y_1-Y_s}{1-s}ds$ is a $\mG$-martingale on $[0,1]$. Similarly, it can be shown that if $\mE \int_0^1\frac{|J_1^i-J^i_s|}{1-s}ds<\infty$, then $J^i_t - \int_0^t\frac{J^i_1-J^i_s}{1-s}ds$ is a martingale on $[0,1]$. To obtain the necessary bound, note that by the independent and stationary increments we have 
\begin{align*}
    \mE \int_0^1\frac{|J^i_1-J^i_s|}{1-s}ds &= \mE \int_0^1\frac{J^i_1-J^i_s}{1-s}ds \\
    &=\int_0^1\mE \frac{J^i_1-J^i_s}{1-s}ds \\
    &=\a_i\int_0^1\frac{1-s}{1-s}ds = \a_i<\infty,
\end{align*}
which concludes that $M$ is a $\mG$-martingale on $[0,1]$. 
\end{proof}

\par
We note that a key step in the proof is checking the integrability conditions, which in general is not a trivial matter.
\par

The following result is also of interest. \par

\begin{thm} Let $Z$ be a Levy process. \par
(i) The process
\begin{align*}
    \wh{Z}^c_t = Z^c_t -\int_0^t\frac{Z^c_1 - Z^c_s}{1-s}ds
\end{align*}
is an $\mH$-martingale on $[0,1]$ with quadratic variation $\langle\wh{Z}^c,\wh{Z}^c\rangle = \langle Z^c, Z^c \rangle .$ \par
(ii) the $\mH$-compensator $\r$ of the jump measure $\m$ on $[0,1]$ is given by 
\begin{align*}
    \r(\w; dt \times dx) = dt \times \frac{\m(\w;(t,1]\times dx)}{1-t}\1_{[0,1)}(t).
\end{align*}
\end{thm}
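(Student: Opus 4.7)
The plan is to prove both parts by exploiting the independence of the continuous martingale part $Z^c$ and the jump structure of the Levy process $Z$, reducing (i) to the initial enlargement of a Brownian motion by its terminal value (handled by the Kurtz-type result above) and (ii) to the compensator calculation for a Poisson process conditioned on its terminal count, namely the classical Poisson bridge.

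For part (i), the starting point is the Levy--It\^o decomposition: $Z^c = \s W$ for a standard Brownian motion $W$ that is independent of the Poisson random measure $\m$ of jumps. Let $\wt{\mF}$ denote the smallest filtration satisfying the usual conditions containing $\mF$ and making $Z_1^c$ measurable at time $0$. Applied to the integrable Levy process $Z^c$, the Kurtz result above gives that $\wh{Z}^c$ is a $\wt{\mF}$-martingale on $[0,1)$. The filtration $\mH$ is obtained from $\wt{\mF}$ by further adjoining at time $0$ the counts $\sum_{0<s\leq 1}\1_A(\D Z_s)$ for Borel sets $A$ at positive distance from $0$; these counts are measurable with respect to the $\s$-algebra generated by the jumps of $Z$, which is independent of $Z^c$, hence independent of $\wt{\cF}_\infty$. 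Because adjoining an independent $\s$-algebra preserves the martingale property (a standard monotone class / conditional independence argument), $\wh{Z}^c$ remains an $\mH$-martingale on $[0,1)$. For the quadratic variation, the difference $\wh{Z}^c - Z^c = -\int_0^{\cdot}\frac{Z_1^c - Z_s^c}{1-s}ds$ is a continuous finite variation process, so $[\wh{Z}^c,\wh{Z}^c] = [Z^c,Z^c] = \s^2 t$; since this bracket is deterministic, the predictable and optional versions coincide, yielding $\langle\wh{Z}^c,\wh{Z}^c\rangle = \langle Z^c,Z^c\rangle$.

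For part (ii), fix a Borel set $A$ at positive distance from $0$, so $F(A)<\infty$, and set $N^A_t \coloneqq \m((0,t]\times A)$, an $\mF$-Poisson process with rate $F(A)$. For disjoint such sets the processes $N^A$ are mutually independent and also independent of $Z^c$. The heart of the argument is the Poisson-bridge compensator calculation: conditional on $N^A_1 = n$, the $n$ jump times are distributed as the order statistics of $n$ i.i.d. uniforms on $(0,1]$, from which a direct computation of $\mP(\t_{k+1} > t\,|\,\t_k = s, N^A_1 = n)$ yields the hazard rate $(n-k)/(1-t)$ for the next jump. Consequently, in the smallest filtration containing the natural filtration of $N^A$ together with $N^A_1$ at time $0$, the compensator of $N^A_t$ is $\int_0^t \frac{N^A_1 - N^A_{s-}}{1-s}ds$. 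By the independence structure adjoining the remaining jump counts and $Z_1^c$ to this filtration (which together yield $\mH$) does not alter the compensator. Rewriting $N^A_1 - N^A_{s-} = \m((s,1]\times A)$ for Lebesgue-a.e. $s$, we obtain the $\mH$-compensator $\int_0^t \frac{\m(\w;(s,1]\times A)}{1-s}ds$ on $[0,1)$.

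To upgrade from individual sets $A$ to the full random measure $\r$, I would apply a standard monotone class argument: the identity of compensators is verified on the algebra of simple predictable integrands of the form $f(\w,s,x) = H_s(\w)\1_A(x)$ with $A$ bounded away from $0$, and then extended by linearity, monotone convergence, and an approximation of a general Borel set by sets bounded away from $0$ (using that $\r$ and $\m - \r$ are both well-defined as random measures by construction). The main obstacle I anticipate is precisely the Poisson-bridge hazard computation together with the measurability bookkeeping when passing from the single-set filtration to $\mH$; this requires verifying that the finer filtration does not reveal additional information about the next jump time beyond what $N^A_1 - N^A_{s-}$ already encodes, which follows from the conditional independence of the coordinates of the Poisson random measure across disjoint regions and the independence from $Z^c$.
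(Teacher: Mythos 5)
Your argument is correct and is essentially the proof of the cited source (the paper itself gives no proof of this statement, deferring to Jacod--Protter 1988, Theorem 2.9): part (i) by applying the Kurtz theorem to the square-integrable Levy process $Z^c$ and then transferring the martingale property across the adjunction of the independent $\s$-algebra generated by the jump measure, and part (ii) by the conditional uniformity of the jump times given the terminal jump counts. Two small remarks. First, the statement asserts the martingale property on the closed interval $[0,1]$; your appeal to the Kurtz result as stated only gives $[0,1)$, but the extension to $t=1$ is exactly the integrability estimate $\mE\int_0^1\frac{|Z^c_1-Z^c_s|}{1-s}\,ds\leq\a\int_0^1(1-s)^{-1/2}ds<\infty$ already established in the square-integrable case of that proof, so you should say so explicitly. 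Second, your order-statistics/hazard-rate derivation in part (ii) is a detour: for a fixed set $A$ at positive distance from $0$, the process $N^A_t=\m((0,t]\times A)$ is itself an integrable Levy process, so the Kurtz theorem applied to $N^A$ directly yields that $N^A_t-\int_0^t\frac{N^A_1-N^A_s}{1-s}ds$ is a martingale in the initially enlarged filtration, and since the drift term is continuous, adapted and increasing it is the compensator; the independence and monotone-class steps you describe then finish the argument as you say. One point worth making explicit in the monotone-class step is that $dt\times\frac{\m(\w;(t,1]\times dx)}{1-t}$ and $dt\times\frac{\m(\w;[t,1]\times dx)}{1-t}$ coincide as measures (they differ only on the countably many jump times, a Lebesgue-null set in $t$), so the density can be taken left-continuous and adapted, hence the candidate compensator is indeed a predictable random measure.
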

\begin{proof}
    See (Jacod, Protter, 1988, Theorem 2.9).
\end{proof} 

We now proceed to the proof of the main result of Jacod and Protter. \par
\begin{proof} \textit{(the Main Thoerem)}. It suffices to show hat any square-integrable $\mF$-martingale $M$ on $[0,1)$ (either continuous or purely discontinuous) with $M_0=0$ is an $\mH$-semimartingale on $[0,1)$. The two cases are treated separately. \par
(i) Let $M$ be a continuous square-integrable $\mF$-martingale on $[0,1]$ with $M_0=0$. By the martingale representation theorem, there exists a predictable process $H$ such that 
\begin{align*}
    \mE\int_0^1H^2_sd\langle Z^c,Z^c\rangle_s = \s^2\mE \int_0^1H^2_sds<\infty,
\end{align*}
for which 
\begin{align*}
    M_t = \int_0^tH_sdZ^c_s,
\end{align*}
where we have used the fact that $\langle Z^c,Z^c\rangle
_t = \s^2t, \ \s\geq0$. \par
    By the result stated above, $\langle\wh{Z}^c,\wh{Z}^c\rangle = \langle Z^c, Z^c \rangle,$ hence $\wh{M}_t = \int_0^t H_s d\wh{Z}^c_s$ is a well-defined stochastic integral and an $\mH$-martingale on $[0,1).$ Setting $C = Z - \wh{Z}^c$, we can also define the Stieltjes integral $D_t = \int_0^tH_sdC_s$ on $[0,1),$ and obtain the decomposition $M = \wh{M} + D.$ This decomposition is explicit in the case of bounded $H$, which follows from Stricker's theorem, which guarantees that $\int_0^1 H_sdZ^c_s$ takes the same value in $\mF$ and $\mH$. The unbounded case is treated by truncation, i.e. setting $H^n \coloneqq H\1_{\{|H|\leq n\}}$, $M^n_t \coloneqq \int_0^tH^n_sdZ^c_s, \wh{M}^n_t \coloneqq \int_0^t H^n_sd\wh{Z}^c_s,$ and $D^n_t \coloneqq \int_0^tH^n_sdC_s.$ We thus obtain the decomposition $M^n = \wh{M}^n + D^n,$ where $M^n, \wh{M}^n$ and $D^n$ converge in probability to $M, \wh{M}$ and $D$ respectively. \par
    (ii) Now let $M$ be a purely discontinuous square-integrable $\mF$-martingale on $[0,1)$ with $M_0=0.$ By the martingale representation theorem, there exists a predictable process $W$ on $\W \times [0,1) \times \mR$ such that 
    \begin{align*}
        M_t = \int_0^t\int_{\mR}W(s,x)(\m - \n)(ds \times dx),
    \end{align*}
    and 
    \begin{align*}
        \mE \int_0^1\int_{\mR}W(s,x)^2\m(ds\times dx) = \mE \int_0^1\int_{\mR}W(s,x)^2dsF(dx)<\infty,
    \end{align*}
    where $\m,\n$ and $F$ are as defined at the beginning of the section. Moreover, the following process is well defined and is an $\mH$-martingale on $[0,1)$
    \begin{align*}
        \wh{M}_t = \int_0^t\int_{\mR}W(s,x)(\m - \r)(ds\times dx),
    \end{align*}
    where $\r$ is as defined in the preceding theorem. We now define a sequence of stochastic processes given by the following
    \begin{align*}
         M_t^n &= \int_0^t\int_{|x|>1/n}W(s,x)(\m - \n)(ds\times dx), \\
         \wh{M}_t^n &= \int_0^t\int_{|x|>1/n}W(s,x)(\m - \r)(ds\times dx), \\
         C_t^n &= \int_0^t\int_{|x|>1/n}W(s,x)(\r - \n)(ds\times dx),
    \end{align*}
    with $M^n_0=\wh{M}^n_0=C^n_0=0$ These processes are of finite variation on $[0,t]$ for any $t<1.$ Note the decomposition $M^n = \wh{M}^n + C^n,$ where $M_n$ is an $\mF$-martingale on $[0,1]$ and $\wh{M}^n$ is an $\mH$-martingale on $[0,1).$ By the standard results from random measure theory, $M^n_t$ and $\wh{M}^n_t$ converge in $L^2$ as $n\to\infty$ to $M_t$ and $\wh{M}_t$ respectively, hence $C^n_t$ converges in $L^2$ to $C_t =M_t - \wh{M}_t.$ It remains to show that $C$ has paths of finite variation on $[0,t]$ for any $t<1$. Note that 
    \begin{align*}
        C^n_t = \int_0^tU_s^nds,
    \end{align*}
    where $U$ is given by
    \begin{align*}
        U^n_s(\w) \coloneqq \int_{|x|>1/n}\frac{1}{1-s}\m(\w;(s,1]\times dx)W(\w;s,x) - \int_{|x|>1/n}W(\w;s,x)F(dx)
    \end{align*}
    Letting $n>m\geq0$ and using the convention $1/0 = +\infty$, we get 
    \begin{align*}
        N_t^{n,m,s}(\w) = \frac{1}{1-s}\1_{\{s<t\}}\int_s^t\int_{1/n<|x|\leq1/m}W(\w;u,x)(\m-\n)(\w; du\times dx)
    \end{align*}
    Since $N_t^{n,m,s}(\w)$ is the stochastic integral with respect to $\m-\n$ of the function 
    \begin{align*}
        (\w, u,x) \mapsto W^{n,m,s}(\w;u,x) = \frac{1}{1-s}\1_{\{s<u\}}W(\w;u,x)\1_{\{1/n<|x|\leq1/m\}},
    \end{align*}
    hence $N^{n,m,s}$ is an $\mF$-martingale, and 
    \begin{align*}
        \mE(N_1^{n,m,s})^2 &= \mE \int_0^1\int_{\mR}(W^{n,m,s}(u,x))^2du F(dx) \\
        &= \frac{1}{1-s}\mE \int_{1/n<|x|\leq1/m} W(s,x)^2F(dx).
    \end{align*}
    Since $N_1^{n,m,s} = U^n_s - U^m_s$ by construction, we have for $t<1$ the expression
    \begin{align*}
        \mE\int_0^t(U^n_s-U^m_s)^2ds = \int_0^t\frac{ds}{1-s}\mE \int_{1/n<|x|\leq 1/m}W(s,x)^2F(dx),
    \end{align*}
\noindent
tends to $0$ as $n,m \to \infty$ since $\mE \int_0^1\int_{\mR}W(s,x)^2\m(ds\times dx) = \mE \int_0^1\int_{\mR}W(s,x)^2dsF(dx)<\infty$. This implies that $U^n$ converges to a limit $U$ in $L^2(\W\times[0,t],\mP(d\w)\otimes du)$ and $C_t = \int_0^tU_sds$, from which the claim follows. 
\end{proof}

\par

\bigskip

We now discuss a generalization of the above theorem to semimartingales with independent increments obtained by L. Gal'chuk. We will focus on the main theorem and its proof (which follows the same pattern as that of of Jacod and Protter) and state the auxiliary lemmas without proofs, which can be found in (Gal'chuk, 1993). \par
Let $(\W,\cF,\mF,\mP)$ be a stochastic basis, where $\mF$ satisfies the usual conditions. Let $X$ be a stochastic process with independent increments on $[0,T]$, $T<\infty$ adapted to $\mF$, which we take to be the natural filtration generated by $X.$ We also assume $X$ to be an $\mF$-semimartingale, and thus has the Levy-Khintchine representation 
\begin{align*}
    X_t = v_t + X^c_t + \int_{|x|\leq 1}x(p-\p)((0,t],dx) + \int_{|x|>1}xp((0,t],dx),
\end{align*}
where $v$ is a continuous deterministic process of finite variation, $X^c$ is a continuous Gaussian martingale with independent increments, $p((0,t],A)$ is a Poisson random measure defined by 
\begin{align*}
    p(\w, (0,t], A) = \sum_{0<s\leq t}\1_{(\w,s:\D X_s\in A)}(\w,s), \ \ A\in\cB_0,
\end{align*}
$\p = \mE p$ is the $\mF$-compensator of $p$, and $\cB_0$ is the collection of sets from $\cB(\mR)$ lying away from $0$. The enlarged filtration $\mG$ is defined via
\begin{align*}
    \cG_t \coloneqq \cF_t \lor \s\{X^c_T, p((0,T],A), A\in\cB_0\}.
\end{align*}
We introduce the following disintegration formula for the compensator
\begin{align*}
    \p(dt,dx)= \P(t,dx)da_t,
\end{align*}
where $a$ is some continuous increasing function and $\P(t,dx)$ is a kernel from $[0,T]$ to $\mR$. We will also use the notation $\p^s(dx) \coloneqq \p((s,T],dx)$. Moreover, we impose an additional condition on $X$.\par
\textbf{(C)} For any fixed $s\in[0,T]$, the measure $\p$ is absolutely continuous with respect to $\p^s\otimes a$ with the density 
\begin{align*}
    \G(s,t,x) = \frac{d\p}{d(\p^s\otimes a)}, \ \ s\leq t \leq T.
\end{align*}
    Note that the density $\G(s,t,x)$ is measurable in $(t,x)$ and continuous in $s$, hence measurable in $(s,t,x).$ \par
Note that every $\mF$-semimartingale $Y$ has the representation
\begin{align*}
    Y_t = V_t + \int_0^tf_sdX^c_s + \int_0^t\int_{\mR}g(s,x)(p-\p)(ds,dx),
\end{align*}
where $V$ is a finite variation process, $f$ and $g$ are $\mF$-predictable, and satisfy
\begin{align*}
    \int_0^Tf^2_sd\langle X^c\rangle_s<\infty, \quad \int_0^T\int_{\mR}g^2(s,x)\p(ds,dx)<\infty \quad a.s.
\end{align*}
    If $Y$ is an $\mF$-semimartingale of finite variation on $[0,T]$, then it is also a $\mG$-semimartingale of finite varitation on $[0,T]$. We are now ready to state the main result of Gal'chuk. \par

\begin{thm} Let $X = (X_t, t\in [0,T])$ be a stochastic process with independent increments that is continuous in probability and satisfies the condition $(C)$. \par
(i) if $Y$ is a continuous $\mF$-semimartingale on $[0,T]$, then it is a $\mG$-semimartingale on $[0,\t)$, where $\t = \inf \{s : b_s = b_T\}$, where $b = \langle X^c\rangle.$ \par
$Y$ is a $\mG$-semimartingale on $[0,\t]$ if and only if
\begin{align*}
    \int_0^{\t}\left| \frac{d\langle Y^c,X^c\rangle}{db} (s) \right| \frac{db_s}{\sqrt{b_T-b_s}}<\infty \ \ a.s.
\end{align*}
\par
(ii) If $Y$ is a purely discontinuous $\mF$-martingale with the representation
\begin{align*}
    Y_t = \int_0^t\int_{\mR}g(s,x)(p-\p)(ds,dx),
\end{align*}
and if 
\begin{align*}
    \int_0^T\int_{\mR}g^2(s,x)\G(s,s,x)\p(ds,dx)<\infty \ \ a.s.,
\end{align*}
then $Y$ is a $\mG$-semimartingale on $[0,T].$
\end{thm}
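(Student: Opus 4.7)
The plan is to follow the blueprint of the Jacod--Protter theorem stated immediately above, handling the continuous and purely discontinuous pieces of $Y$ separately. Finite-variation processes transfer automatically, so the substantive work is to decompose the $\mF$-local martingale part of $Y$ as a $\mG$-local martingale plus a $\mG$-predictable finite-variation process, and to identify the correct $\mG$-compensators by exploiting condition (C).

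For part (i), any continuous $\mF$-semimartingale can be written as $Y=V+f\bullet X^c$ for some $\mF$-predictable $f$, and $V$ is automatically a $\mG$-semimartingale. Generalizing the Brownian bridge/Kurtz construction proved earlier in the excerpt, the natural candidate for a $\mG$-local martingale on $[0,\t)$ is
\begin{align*}
\wh{X}^c_t = X^c_t - \int_0^t \frac{X^c_T - X^c_s}{b_T - b_s}\,db_s,
\end{align*}
which has the same quadratic variation $b$ as $X^c$. Verifying the $\mG$-martingale property reduces to showing $\mE[X^c_t - X^c_s \mid \cG_s] = (b_t-b_s)(b_T-b_s)^{-1}(X^c_T-X^c_s)$ for $s<t<\t$, which is the continuous/time-inhomogeneous analogue of the Kurtz computation (it uses the Gaussian structure of $X^c$ and the independent-increments property conditioned on $X^c_T$). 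One then has
\begin{align*}
f\bullet X^c = f\bullet \wh{X}^c + \int_0^{\cdot} f_s\,\frac{X^c_T-X^c_s}{b_T-b_s}\,db_s,
\end{align*}
where the second term is $\mG$-adapted and of finite variation on $[0,\t)$ for free; extending to $[0,\t]$ holds exactly when the stated integral is a.s. finite, by a Cauchy--Schwarz bound using $\mE[(X^c_T-X^c_s)^2]=b_T-b_s$. Unbounded $f$ is handled by truncation $f^n=f\1_{\{|f|\leq n\}}$ and passage to the limit in probability, exactly as in Jacod--Protter.

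For part (ii), the MRP for random measures gives $Y = g\star(p-\p)$. Extrapolating from the Jacod--Protter formula $\rho(dt,dx) = (1-t)^{-1}\m((t,1]\times dx)\,dt$, condition (C) suggests that the $\mG$-compensator of $p$ should be
\begin{align*}
\rho(ds,dx) = \G(s,s,x)\,p((s,T],dx)\,da_s,
\end{align*}
which in the Levy case collapses to the Jacod--Protter expression since $\G(s,s,x)\equiv (T-s)^{-1}$ there. With this choice, $Y = g\star(p-\rho) + g\star(\rho-\p)$, where the first term is a $\mG$-local martingale and the second is $\mG$-adapted, absolutely continuous with respect to $da$, hence of finite variation. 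To make this rigorous for general $g$, approximate by $g^n = g\,\1_{\{|x|>1/n\}}$, mirror the $M^n/\wh{M}^n/C^n$ splitting, and verify $L^2$-convergence of the compensation remainder. The key $L^2$ estimate for the difference $U^n-U^m$ in the Jacod--Protter argument becomes, in the present setting, $\mE\int_0^t \int_{1/n<|x|\leq 1/m} g(s,x)^2 \G(s,s,x)\,\p(ds,dx)$, which tends to $0$ as $n,m\to\infty$ precisely by the stated hypothesis $\mE\int\int g^2\,\G(s,s,x)\,\p(ds,dx)<\infty$.

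The main obstacle is identifying $\rho$ and rigorously verifying it is the $\mG$-compensator of $p$: this requires careful conditional expectation computations $\mE[p(B\times A)\mid \cG_s]$ for $B\subset(s,T]$ and $A\in\cB_0$, relying on condition (C) together with continuity in probability of $X$ (which rules out fixed atoms of $\pi$ and thus makes $\G(s,s,\cdot)$ well-defined as a boundary value). A secondary difficulty, in part (i), is that the finite-variation drift is integrable on $[0,\t)$ automatically but integrable at the endpoint $\t$ only under the additional hypothesis, which is what sharpens the interval from $[0,\t)$ to $[0,\t]$; this is the one point where the restriction is sharp.
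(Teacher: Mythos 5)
Your proposal follows essentially the same route as the paper's proof: the same reduction to the continuous part $f\bullet X^c$ and the purely discontinuous part $g\star(p-\p)$, the same candidate $\mG$-martingales $\wt{X}^c_t = X^c_t - \int_0^t\frac{X^c_T-X^c_u}{b_T-b_u}db_u$ and $p-\rho$ with $\rho(ds,dx)=\G(s,s,x)p((s,T],dx)da_s$, the same truncation arguments ($f^n=f\1_{\{|f|\le n\}}$ and $A^n=\{|x|>1/n\}$), and the same key $L^2$ estimate on the compensation remainder driven by the hypothesis $\int\int g^2\G(s,s,x)\p(ds,dx)<\infty$. The only differences are presentational: the paper defers the identification of the $\mG$-compensator and the sharp $[0,\t]$ criterion to separately stated lemmas and an unproven final proposition, whereas you sketch those verifications inline.
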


\par

Note that if $Y$ is a $\mG$-semimartingale on $[0,\t]$, then it can be extended to the whole interval $[0,T]$ by setting $Y_t = Y_{\t}$ for $t\in(\t,T].$ \par
To prove the theorem we need a number of auxiliary lemmas (some of which are interesting in their own right), as well as the theorem of Kurtz that we have encountered before. \par

\begin{lem} Let $(p_t)$ be a Poisson process with the $\mF$-compensator $(\p_t), t\in[0,T].$ Then, for $s\leq t \leq \d = \inf\{s : \p_s = \p_T\}$ it holds
\begin{align*}
    \mE (p_t-p_s|\cG_s) = \frac{p_T-p_s}{\p_T-\p_s}(\p_t-\p_s) \ \ a.s.
\end{align*}
\end{lem}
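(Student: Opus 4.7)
The plan is to reduce the claim to the classical binomial-conditioning property of Poisson increments, in three short steps. First I would simplify $\cG_s$. Since $(p_t)$ is a Poisson process, its continuous martingale part $X^c$ vanishes and every jump has size one, so $p((0,T],A) = p_T \1_{\{1\in A\}}$ for $A \in \cB_0$. Consequently $\s\{X^c_T,\, p((0,T],A):A\in\cB_0\} = \s(p_T)$, and because $p_s$ is $\cF_s$-measurable,
\[\cG_s = \cF_s \vee \s(p_T) = \cF_s \vee \s(p_T-p_s).\]

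Next I would invoke the independent-increments property of $p$: the vector $(p_t-p_s,\,p_T-p_s)$ is a function of the post-$s$ increments of $p$ and is therefore independent of $\cF_s$. A standard application of the ``freezing'' lemma for conditional expectations then yields
\[\mE(p_t - p_s \mid \cG_s) = \mE\bigl(p_t - p_s \bigm| \cF_s \vee \s(p_T-p_s)\bigr) = \mE(p_t - p_s \mid p_T - p_s) \ \ \text{a.s.}\]

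Finally, I would decompose $p_T-p_s = (p_t-p_s) + (p_T-p_t)$ as a sum of two independent Poisson variables with means $\p_t-\p_s$ and $\p_T-\p_t$; the condition $t\leq\d$ ensures $\p_T-\p_s>0$, so the ratios below are well defined. A direct computation then shows that, conditional on $\{p_T-p_s=n\}$, the variable $p_t-p_s$ follows a $\text{Binomial}\bigl(n,\,(\p_t-\p_s)/(\p_T-\p_s)\bigr)$ distribution, whence
\[\mE(p_t-p_s\mid p_T-p_s) = \frac{\p_t-\p_s}{\p_T-\p_s}\,(p_T-p_s),\]
which combined with the previous display gives the claim. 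The only mildly delicate point is the first step: verifying that all the information in $\cG_s$ beyond $\cF_s$ is captured by $\s(p_T-p_s)$. Once that reduction is made, the independence and binomial-conditioning arguments are routine.
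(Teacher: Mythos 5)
Your argument is correct, but note that the paper itself does not prove this lemma: it is listed among Gal'chuk's auxiliary results with the proof deferred to the original 1993 paper. The closest argument actually carried out in the paper is the proof of Kurtz's theorem for integrable L\'evy processes, where the analogous identity $\mE(Z_t-Z_s\mid\cG_s)=\frac{t-s}{1-s}(Z_1-Z_s)$ is obtained by discretizing over a rational grid, writing the increments as i.i.d.\ summands $Y_i$, and using exchangeability to get $\mE\bigl(\sum_{i=j}^{k-1}Y_i\mid\sum_{i=j}^{n-1}Y_i\bigr)=\frac{k-j}{n-j}\sum_{i=j}^{n-1}Y_i$, followed by a passage from rational to real times. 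Your route is genuinely different and, for this particular lemma, better adapted: here the compensator $\p$ is a general continuous deterministic increasing function, so the increments of $p$ over an even grid are independent but not identically distributed, and the exchangeability trick does not apply verbatim (one would first have to time-change by $\p$, which is precisely what the restriction to $[0,\d]$ accommodates). The binomial-conditioning computation handles the inhomogeneous case directly and requires no limiting argument; what it loses is generality, being special to Poisson marginals, whereas the exchangeability argument covers arbitrary integrable L\'evy increments. Two points to tidy up: (i) as you yourself flag, in Gal'chuk's setting $\cG_s$ also contains $X^c_T$ and the terminal jump counts $p((0,T],B)$ for sets $B$ other than the one generating your counting process, so the reduction to $\cF_s\vee\s(p_T-p_s)$ needs the extra observation that this additional information is independent of the post-$s$ increments of $p$ jointly with $\cF_s$ (disjoint jump-size sets give independent counting processes, and $X^c$ is independent of the jumps); and (ii) the nondegeneracy $\p_T-\p_s>0$ is guaranteed by $s<\d$ rather than by $t\le\d$, the boundary case $s=\d$ being trivial since no jumps remain and both sides vanish under the convention $0/0=0$ adopted later in that section.
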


\par
Under the condition $(C)$, for $A\in\cB_0, s\in[0,T]$, set
\begin{align*}
    &\d_A \coloneqq \inf\left(t>0: \p((0,t],A)=\p((0,T],A)\right), \\
    &\p^s(A) \coloneqq \p((s,\d_A],A) = \p((s,T],A), \\
    &p^s(A) \coloneqq p((s,T],A).
\end{align*}
Let $\m^s(\D,A)$ be the regular version of the conditional expectation $\mE [p(\D,A)|\cG_s]$, $\D = (u,v]\subseteq (s,\d_A].$ Then, by the lemma above, 
\begin{align*}
    \m^s(\D,A)  = \begin{cases}
    \frac{p^s(A)}{\p^s(A)}\p(\D,A), \quad \text{if} \quad \p^s(A)\neq0, \\
    0 \quad \text{otherwise}.
    \end{cases}
\end{align*}
    For $\D\subseteq(\d_A,T]$ we \textcolor{blue}{set} $0/0=0.$ Finally, for any bounded positive $\cB([0,T])$-measurable function $f$ define 
\begin{align*}
    \m^s(f,A) &\coloneqq \int_s^T\frac{p^s(A)}{\p^s(A)}\int_Af(v)\G(s,v, x)\p^s(dx)da_v, \\
    \wh{\m}(f,A) &\coloneqq \int_s^T\int_Af(v)\G(s,v,x)p^s(dx)da_v. 
\end{align*}
\par

\begin{lem} For $s\in[0,T], A\in\cB_0, p^s(A)<\infty$ a.s. and any positive bounded $\cB([0,T])$-measurable function $f$ it holds that
\begin{align*}
    \m^s(f,A) = \wh{\m}^s(f,A) \quad \text{a.s.}
\end{align*}
\end{lem}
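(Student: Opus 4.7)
The strategy is to reduce, by linearity and a monotone-class argument, to test functions of indicator form, and then to identify both sides with the same $\cG_s$-conditional expectation via the preceding lemma. Both $f \mapsto \m^s(f, A)$ and $f \mapsto \wh{\m}^s(f, A)$ are positive and linear in $f$, and both are continuous under bounded monotone convergence of positive $f$: the random measure $\frac{p^s(A)}{\p^s(A)} \G(s, v, x)\, \p^s(dx)\, da_v$ governing $\m^s$ is a.s.\ finite on $(s, T] \times A$ under the hypothesis $p^s(A) < \infty$ a.s., while the random measure $\G(s, v, x)\, p^s(dx)\, da_v$ governing $\wh{\m}^s$ is a.s.\ finite on the same set because $p^s$ has only finitely many atoms on $A$ and $a$ is continuous on the bounded interval $[s, T]$. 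By the functional monotone class theorem it therefore suffices to prove $\m^s(f, A) = \wh{\m}^s(f, A)$ a.s.\ for indicator functions $f = \1_{(u, v]}$ with $(u, v] \subseteq (s, T]$.

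For such an $f$, unwinding the definitions and using $\G(s, \cdot, \cdot) = d\p / d(\p^s \otimes a)$, one finds $\m^s(\1_{(u, v]}, A) = \frac{p^s(A)}{\p^s(A)} \p((u, v], A)$, which by the preceding lemma (applied to $\D = (u, v] \subseteq (s, \d_A]$) is a.s.\ a version of $\mE[p((u, v], A) \mid \cG_s]$. The random variable $\wh{\m}^s(\1_{(u, v]}, A)$ is itself $\cG_s$-measurable --- the random measure $p^s$ on Borel subsets of $A$ is $\cG_s$-measurable because $\cG_s$ records the multiset of jump values of $X$ on $(s, T]$ --- so it suffices to verify that $\wh{\m}^s(\1_{(u, v]}, A)$ is another version of the same conditional expectation. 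I would do this by integrating against an arbitrary bounded $\cG_s$-measurable test variable and invoking the independent-increments property of $X$: conditional on $\cG_s$, the jump times in $(s, T]$ associated with a fixed jump value $x_i \in A$ are independent with density $\G(s, \cdot, x_i)$ against $da$ on $(s, T]$, the normalization $\int_s^T \G(s, w, x)\,da_w = 1$ for $\p^s$-a.e.\ $x$ being immediate from $\G$ being the Radon--Nikodym density of $\p$ with respect to $\p^s \otimes a$.

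The main obstacle is the rigorous identification of the $\cG_s$-conditional intensity of $p(dv, dx)$ on $(s, T] \times A$ as $\G(s, v, x)\, p^s(dx)\, da_v$: the preceding lemma supplies only the $A$-marginal of this intensity, so a further disintegration argument --- leveraging condition (C) and the independent-increments structure of $X$ --- is required to pass from the marginal to the joint conditional intensity. Once this refined intensity is in hand, integrating $\1_{(u, v]}(w)\1_A(x)$ against it produces $\wh{\m}^s(\1_{(u, v]}, A)$, completing the identification of the two sides via uniqueness of conditional expectation; the monotone-class extension to general bounded positive Borel $f$ then finishes the proof.
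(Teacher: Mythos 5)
The paper does not actually prove this lemma --- it explicitly defers all the auxiliary lemmas of this subsection to Gal'chuk (1993) --- so your proposal can only be judged on its own terms. Your reduction to indicators $f=\1_{(u,v]}$ and your computation of the $\cG_s$-conditional law of the jump times (given the multiset of jump sizes in $(s,T]$, the time of a jump of size $x$ has density $\G(s,\cdot,x)$ against $da$, whence $\wh{\m}^s(\1_{(u,v]},A)$ is a version of $\mE[p((u,v],A)\mid\cG_s]$) are the right ingredients. The gap is in the other half of the argument: $\frac{p^s(A)}{\p^s(A)}\p((u,v],A)$ is \emph{not} a version of $\mE[p((u,v],A)\mid\cG_s]$, so uniqueness of conditional expectation cannot close the proof. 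The preceding Poisson lemma conditions only on $\cF_s\lor\s(p((0,T],A))$, i.e.\ on the single count over $A$; but $\cG_s$ also contains $p((0,T],B)$ for every Borel $B\subseteq A$, i.e.\ the whole multiset of jump sizes landing in $A$, and this finer information changes the answer. The tower property yields only $\m^s(\1_{(u,v]},A)=\mE\bigl[\wh{\m}^s(\1_{(u,v]},A)\mid\cF_s\lor\s(p^s(A))\bigr]$, and the two expressions agree a.s.\ exactly when $x\mapsto\int_u^v\G(s,w,x)\,da_w$ is $\p^s$-essentially constant on $A$.

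That this is not a removable technicality can be seen from an example: take $T=1$, $a_t=t$, and a process with independent increments whose jumps of size $1$ arrive at rate $2\cdot\1_{[0,1/2]}(t)$ and whose jumps of size $2$ arrive at rate $2\cdot\1_{(1/2,1]}(t)$, independently. With $s=0$, $A=\{1,2\}$, $f=\1_{(0,1/2]}$, and $N^i$ denoting the total number of size-$i$ jumps on $(0,1]$, one computes $\m^0(f,A)=\tfrac12(N^1+N^2)$ while $\wh{\m}^0(f,A)=N^1$; these differ with positive probability although condition $(C)$ holds on $[s,T]$ for every $s$. So the displayed integral formula defining $\m^s(f,A)$ cannot be taken at face value: the statement is coherent only if $\m^s(\cdot,A)$ is read as the genuine regular conditional measure $\mE[p(\cdot,A)\mid\cG_s]$, as the prose preceding the display suggests, and the proof must then establish directly that this conditional measure equals $\G(s,v,x)\,p^s(dx)\,da_v$ --- for instance by partitioning $A$ into small cells $A_j$, applying the coarse Poisson formula to each cell, summing, and passing to the limit along refining partitions via martingale convergence. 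The ``further disintegration argument'' that you correctly flag as the main obstacle is therefore the entire content of the lemma; it cannot be bypassed by exhibiting two versions of one conditional expectation, because the version supplied by the preceding lemma lives over the wrong $\s$-field.
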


\par
\begin{lem} For $A\in\cB_0$, the process $(\wh{p}((0,t],A))$ given by 
\begin{align*}
    \wh{p}((0,t],A) \coloneqq p((0,t],A) - \int_0^t\int_A\G(u,u,x)p((0,t],dx)da_u
\end{align*}
is a $\mG$-martingale on $[0,T].$
\end{lem}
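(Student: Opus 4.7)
The plan is to verify the $\mG$-martingale property of $\wh{p}((0,\cdot],A)$ by a direct computation of conditional expectations, leveraging the previous lemma $\m^s(f,A) = \hat{\m}^s(f,A)$. I read the inner measure in the compensator as $p((u,T],dx)$ rather than literally $p((0,t],dx)$, which I believe is the intended form: it is the natural extension of the Jacod--Protter compensator $\r(dt\times dx) = \frac{\m((t,1]\times dx)}{1-t}dt$ to the PII setting with density $\G$, and it specializes correctly to that formula when $a_u = u$ and $\G(u,u,x) = 1/(T-u)$. Fix $0 \leq s < t \leq T$ and $A \in \cB_0$; the martingale property reduces to
\begin{align*}
\mE[p((s,t],A)\mid\cG_s] = \mE\left[\int_s^t\int_A\G(u,u,x)p((u,T],dx)da_u\,\Big|\,\cG_s\right],
\end{align*}
whose left-hand side, by the previous lemma with $f = \1_{(s,t]}$, equals $\int_s^t\int_A\G(s,v,x)p^s(dx)da_v$.

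For the right-hand side I would decompose $p^u(dx) = p^s(dx) - p((s,u],dx)$ for $u \geq s$. The piece involving $p^s$ is $\cG_s$-measurable in $x$ for every fixed $u$, since for $A \in \cB_0$ the jump positions in $A$ over $(s,T]$ are determined by $\cG_0 \vee \cF_s$; hence its conditional expectation is itself. The piece involving $p((s,u],dx)$ requires an extension of the previous lemma from integrands $f(v)$ to integrands $h(u,v,x)$ depending on an external parameter --- a standard monotone-class extension --- applied with $h(u,v,x) = \1_{(s,u]}(v)\G(u,u,x)\1_A(x)$. After invoking Fubini to exchange $da_u$ with the conditional expectation and with the resulting $da_r$ integration, the whole identity collapses to the purely deterministic relation, valid for each fixed $x$:
\begin{align*}
\int_s^t\G(u,u,x)da_u - \int_s^t\G(u,u,x)\left(\int_s^u\G(s,r,x)da_r\right)da_u = \int_s^t\G(s,v,x)da_v.
\end{align*}

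The main obstacle will be proving this last deterministic identity. I would attack it by differentiating both sides in $t$, reducing to the pointwise equality $\G(s,t,x) = \G(t,t,x)\bigl(1 - \int_s^t\G(s,r,x)da_r\bigr)$, and then proving this from the disintegration $\p(dv,dx) = \G(s,v,x)\p^s(dx)da_v$ together with the consistency of $\G$ across different starting points $s$: the factor $1 - \int_s^t\G(s,r,x)da_r$ should be recognizable as the ratio $\p^t(A_x)/\p^s(A_x)$ localized at $x$ in a disintegrated sense, encoding how $\p^s(dx)$ relates to $\p^t(dx)$ via condition (C). Integrability and Fubini applications are routine throughout because $A \in \cB_0$ yields $p((0,T],A) < \infty$ almost surely, so all sums and integrals involved are finite. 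Once the deterministic identity is in hand, the two conditional expectations match and the $\mG$-martingale property follows.
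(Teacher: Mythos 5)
The paper states this lemma without proof, deferring to Gal'chuk (1993), so there is no in-text argument to compare against; judged on its own terms, your proposal is essentially correct. You are right to read the inner measure as $p((u,T],dx)$ rather than the printed $p((0,t],dx)$: this is confirmed by the computation of $J_t$ in the proof of the main theorem, where $\m(ds,dx)=\G(s,s,x)p((s,T],dx)da_s$ is used, and by the analogy with the Jacod--Protter compensator $\r$. Your reduction is sound: the left-hand side $\mE[p((s,t],A)\mid\cG_s]=\int_s^t\int_A\G(s,v,x)p^s(dx)da_v$ is exactly the preceding lemma with $f=\1_{(s,t]}$; the split $p((u,T],dx)=p^s(dx)-p((s,u],dx)$ isolates a $\cG_s$-measurable term, since $p((s,T],B)=p((0,T],B)-p((0,s],B)$ with the first summand $\cG_0$-measurable and the second $\cF_s$-measurable; and the remaining term, after Fubini, becomes $\int_{(s,t]}\int_A\bigl(\int_v^t\G(u,u,x)da_u\bigr)p(dv,dx)$, to which the monotone-class extension of the preceding lemma to integrands $W(v,x)$ applies, and integrability is controlled by $\mE\,p((0,T],A)=\p((0,T],A)<\infty$ for $A\in\cB_0$. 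Two points should be made explicit when you write this up. First, your deterministic identity need only hold for $\p^s\otimes da$-a.e.\ $(x,u)$, which suffices because a $\p^s$-null set is $p^s$-null almost surely (as $\mE\,p((s,T],N)=\p^s(N)$). Second --- the only genuinely delicate step --- the relation $\G(s,u,x)=\G(u,u,x)\bigl(1-\int_s^u\G(s,r,x)da_r\bigr)$ evaluates the density on the diagonal. The chain rule $\G(s,v,x)\,\p^s(dx)=\G(u,v,x)\,\p^u(dx)$ together with $\p^u(dx)=\bigl(1-\int_s^u\G(s,r,x)da_r\bigr)\p^s(dx)$ yields this only for a.e.\ $v$ strictly greater than $u$; to reach $v=u$ you must let $u\uparrow v$ along a sequence and invoke the continuity of $\G$ in its \emph{first} argument, which is precisely what condition $(C)$ supplies (measurability alone in the second argument would not get you to the diagonal). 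With that step spelled out, the two conditional expectations match and the $\mG$-martingale property follows.
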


\par
\begin{lem} The measure given by
\begin{align*}
    \m(u,dx) = \G(u,u,x)p((0,T],dx)da_u
\end{align*}
is the $\mG$-compensator of the measure $p(dt,dx).$
\end{lem}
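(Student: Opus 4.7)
The plan is to deduce this identification of the $\mG$-compensator directly from the preceding lemma together with the uniqueness of the dual $\mG$-predictable projection of an integer-valued random measure. Recall that the $\mG$-compensator of $p$ is characterized as the \emph{unique} $\mG$-predictable random measure $\mu$ with the property that, for every nonnegative $\mG$-predictable function $W(\w,u,x)$ satisfying the usual local integrability condition, the process $W\star(p-\mu)$ is a $\mG$-local martingale.

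First I would verify $\mG$-predictability of the candidate $\mu(u,dx)=\G(u,u,x)p((0,T],dx)da_u$. By the very construction of the enlarged filtration, the random measure $p((0,T],\cdot)$ is $\cG_0$-measurable; the kernel $\G(u,u,x)$ is a measurable deterministic function by Condition (C); and $a$ is deterministic and continuous. Hence $\mu(\w;du,dx)$ is the product of a $\cG_0$-measurable random kernel with a continuous deterministic factor in $u$, so it is $\mG$-predictable.

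Next I would check the martingale property on a generating class of predictable integrands. For the elementary choice $W(\w,u,x)=\1_{(r,t]}(u)\1_A(x)$ with $0\leq r\leq t\leq T$ and $A\in\cB_0$, the process $W\star(p-\mu)$ coincides, up to its value at time $r$, with the increment $\wh{p}((0,t],A)-\wh{p}((0,r],A)$ of the process furnished by the preceding lemma, whose $\mG$-martingale property is already established. Linearity then extends the identity to simple predictable integrands, and a standard functional monotone class argument, combined with truncation in $x$ along sets $\{|x|\geq 1/n\}$ and routine localization, propagates it to every $\mG$-predictable $W$ with the required local integrability.

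The main obstacle I anticipate is the integrability and monotone-class bookkeeping needed to pass from the elementary class to all predictable integrands; in particular, one must show that $\int_0^t\int_A\G(u,u,x)p((0,T],dx)da_u<\infty$ a.s.\ for $A\in\cB_0$, which uses both that $A$ is bounded away from $0$ and that $a$ is continuous on $[0,T]$. Once this is in place, the uniqueness of the dual $\mG$-predictable projection of the integer-valued random measure $p$ pins down $\mu$ as the $\mG$-compensator, completing the proof.
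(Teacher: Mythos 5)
The paper offers no proof of this lemma to compare against: it is one of the auxiliary results stated with a pointer to Gal'chuk (1993). Your overall strategy --- check $\mG$-predictability of the candidate, get the martingale property of $W*(p-\m)$ for generators $\1_B\1_{(r,t]}\1_A$ with $B\in\cG_r$, $A\in\cB_0$ directly from the preceding lemma, then extend by monotone class and invoke uniqueness of the dual predictable projection --- is the standard and essentially the only reasonable route, and the reduction of the elementary-integrand step to the preceding lemma is correct.

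However, the step you yourself flag as ``the main obstacle'' is a genuine gap, and it cannot be closed for the statement as literally written. With $p((0,T],dx)$ inside, the candidate assigns total mass $\int_0^T\int_A\G(u,u,x)p((0,T],dx)\,da_u$ to $[0,T]\times A$, and this is generically infinite: in the homogeneous case one computes from Condition (C) that $\G(s,t,x)=1/(a_T-a_s)$, so the integral equals
\begin{align*}
p((0,T],A)\int_0^T\frac{da_u}{a_T-a_u}=+\infty \quad\text{on }\{p((0,T],A)\geq 1\},
\end{align*}
precisely because $a$ is continuous. Since the compensator of the locally integrable increasing process $p(\cdot,A)$ must itself be locally integrable, neither the boundedness of $A$ away from $0$ nor the continuity of $a$ rescues this, and no monotone-class bookkeeping can. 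What is true --- and what the paper actually uses later in the proof of Gal'chuk's main theorem, where it writes $\G(s,s,x)p((s,T],dx)da_s$ --- is that the compensator is $\m(u,dx)=\G(u,u,x)\,p((u,T],dx)\,da_u$, built from the jumps \emph{remaining after} time $u$, in exact analogy with the Jacod--Protter formula $\r(dt\times dx)=dt\,\m((t,1]\times dx)/(1-t)$. With that correction your integrability claim does hold (each jump at time $\tau$ with mark $x\in A$ contributes $\int_0^{\tau}\G(u,u,x)\,da_u$, and the inner measure vanishes after the last jump in $A$), your predictability argument needs only the cosmetic change that $p((u,T],A)=p((0,T],A)-p((0,u],A)$ is right-continuous and adapted and is integrated against the atomless measure $da_u$, and the rest of your plan goes through.
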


\par

\bigskip

We are now ready to prove the theorem of Gal'chuk. Note the similarity of the steps and techniques of the proof to the theorem of Jacod and Protter. \par

\begin{proof} It suffices to consider the case of a square-integrable $\mF$-martingale $m=(m_t),m
_0=0$, which is known to have the representation
\begin{align*}
    m = m^c + m^d,
\end{align*}
where
\begin{align*}
    m^c_t &= \int_0^tf_sdX^c_s, \\
    m^d_t &= \int_0^t\int_{\mR}g(s,x)(p-\p)(ds,dx),
\end{align*}
where $f$ is some $\cP(\mF)$-measurable function and $g$ is a function measurable with respect to $\cP(\mF)\otimes\cB(\mR)$, where $\cP(\mF)$ is the $\mF$-predictable $\s$-algebra, and these functions satisfy 
\begin{align*}
    \mE \int_0^Tf^2_sd\langle X^c\rangle_s<\infty, \quad \mE \int_0^T\int_{\mR}g^2(s,x)\p(ds,dx)<\infty.
\end{align*}
\par
As before, we consider two cases.\par
(i) When $Y = m^c$, from the integrability condition $\mE \int_0^Tf^2_sd\langle X^c\rangle_s<\infty, t\in[0,T]$, it follows that $\wh{Y}_t = \int_0^Tf_sd\wt{X}^c_s$ is well-defined and is a $\mG$-martingale on $[0,T].$ By the lemma above, the process $C = X^c - \wt{X}^c$ is of finite variation, hence the Stieltjes integral $D_t = \int_0^tf_sdC_s$ is well-defined on $[0,\t).$ It remains to show that $Y = \wh{Y} + D$. By Stricker's theorem, when $f$ is bounded, the stochastic integral $\int_0^tf_sdX^c_s$ takes the same value in $\mF$ and $\mG$. In the unbounded case, we use truncation by setting
\begin{align*}
    f^n &\coloneqq f\1_{\{|f|\leq n\}}, \quad Y^n_t \coloneqq \int_0^t f^n_sdX^c_s, \\
    \wh{Y}^n_t &\coloneqq \int_0^tf^n_sd\wt{X}^c_s, \quad D^n_t \coloneqq \int_0^t f^n_sdC_s.
\end{align*}
We thus have the decomposition $Y^n = \wh{Y}^n + D^n$, and $Y^n, \wh{Y}_n, D^n$ converge in probability as $n\to\infty$ to $Y,\wh{Y}, D$ respectively, which proves the validity of the decomposition.\par
(ii) When $Y = m^d$, define a process by 
\begin{align*}
    \wh{Y}_t \coloneqq \int_0^t\int_{\mR}g(s,x)(p-\m)(ds,dx), \quad t\in[0,T].
\end{align*}
To show that the object is well defined and $\wh{Y}$ is a $\mG$-martingale, we need to check that 
\begin{align*}
    J_t = \mE \int_0^t\int_{\mR}g^2(s,x)\m(ds,dx)<\infty, \quad t\in[0,T].
\end{align*}
By the definition of the measure $\m$ and Fubini's theorem we have 
\begin{align*}
    J_t &= \mE \int_0^t\int_{\mR}g^2(s,x)\G(s,s,x)p((s,T], dx)da_s \\
    &= \mE \int_0^t\int_0^T\int_{\mR}g^2(s,x)\G(s,s,x)\1_{\{s<u\leq T\}}p(du,dx)da_s \\
    &= \mE \int_0^T\int_{\mR}\left[\int_0^tg^2(s,x)\G(s,s,x)\1_{\{s<u\leq T\}}da_s\right]p(du,dx).
\end{align*}
Since the function $\int_0^tg^2(s,x)\G(s,s,x)\1_{\{s<u\leq T\}}da_s$ is predictable and positive, and $\p$ is the $\mF$-compensator of the measure $p$, then 
\begin{align*}
    J_t = \mE \int_0^T\int_{\mR}\left[\int_0^tg^2(s,x)\G(s,s,x)\1_{\{s<u\leq T\}}da_s\right]\p(du,dx).
\end{align*}
By another application of Fubini's theorem, we get 
\begin{align*}
    J_t &= \mE \int_0^t\int_{\mR}g^2(s,x)\G(s,s,x)\p((s,T],dx)da_s \\
    &= \mE \int_0^t\int_{\mR}g^2(s,x)\p(ds,dx)<\infty.
\end{align*}
\par
As before, it now remains to show that the process $Y-\wh{Y}$ has finite variation. Let $n\in\mN$ and $A^n \coloneqq \{x : |x|>1/n\}.$ Similarly to the proof of Jacod and Protter, define
\begin{align*}
    Y^n_t &\coloneqq \int_0^t\int_{A^n}g(s,x)(p-\p)(ds,dx), \\
    \wh{Y}^n_t &\coloneqq \int_0^t\int_{A^n}g(s,x)(p-\m)(ds,dx), \\
    C^n_t &\coloneqq \int_0^t\int_{A^n}g(s,x)(\m-\p)(ds,dx), \quad t\in[0,T],\\
    Y^n_0 &= \wh{Y}^n_0=C^n_0=0.
\end{align*}
Note that these processes have finite variation, $Y^n$ is a square-integrable $\mF$-martingale, $\wh{Y}^n$ is a square-integrable $\mG$-martingale, and $Y^n_t, \wh{Y}^n_t, C^n_t$ converge in $L^2$ as $n\to\infty$ to $Y_t, \wh{Y}_t, C_t$ respectively, $t\leq T.$ To show that $C$ has finite variation, note 
\begin{align*}
    C^n_t = \int_0^t\int_{A^n}g(s,x)\G(s,s,x)\left[p((s,T],dx) - \p((s,T],dx)\right]da_s,
\end{align*}
and, for $n>m, A^{mn}\coloneqq \{1/n<|x|\leq 1/m\},$ we have
\begin{align*}
    C^n_t - C^m_t &= \int_0^t\int_{A^{mn}}g(s,x)\G(s,s,x)\left[p((s,T],dx) - \p((s,T],dx)\right]da_s \\
    &= \int_0^t\left[\int_s^T\int_{A^{mn}}g(s,x)\G(s,s,x)(p-\p)(du,dx)\right]da_s.
\end{align*}
Since for a fixed $s$ the process $\int_s^t\int_{A^{mn}}g(s,x)\G(s,s,x)(p-\p)(du,dx), s\leq t \leq T,$ is a square-integrable martingale, we can estimate the variation of $|C^n_t - C^m_t|$ on $[0,t]$ in the following way 
\begin{align*}
    \mE |C^n_t - C^m_t|^2 &\leq a_t \int_0^tda_s \mE\left|\int_s^T\int_{A^{mn}}g(s,x)\G(s,s,x)(p-\p)(du,dx)\right|^2 \\ 
    &\leq a_t \mE \int_0^tda_s \int_s^T\int_{A^{mn}}g^2(s,x)\G^2(s,s,x)\p(du,dx) \\
    &= a_t \mE \int_0^t\int_{A^{mn}}g^2(s,x)\G^2(s,s,x)\p((s,T],dx)da_s \\
    &= a_t \mE \int_0^t\int_{A^{mn}}g^2(s,x)\G(s,s,x)\p(ds,dx)<\infty,
\end{align*}
where we have used the fact that $\G(s,s,x)\p((s,T],dx)da_s = \p(ds,dx)$. It thus follows that on each interval $[0,t], t\leq T,$ the variation of $|C^n_t - C^m_t|$ converges in probability to $0$ as $n,m\to\infty,$ hence $C$ has finite variation on $[0,T].$ The last statement of part (i) of Gal'chuk's theorem follows from the following proposition which we do not prove and refer to the original paper for details. 
\end{proof}

\par
\begin{thm} Let $Y$ be a continuous $\mF$-local martingale with the integral representation 
\begin{align*}
    Y_t = \int_0^tf_sdX^c_s,
\end{align*}
where $f$ is an $\mF$-predictable process such that $\int_0^Tf^2_sdb_s<\infty$ a.s., for \textcolor{blue}{$b=\langle X^c\rangle$}. \par
Then, the following statements are equivalent:\\
(a) $Y$ is a $\mG$-semimartingale on $[0,\t];$ \\
(b) 
\begin{align*}
    \int_0^T|f_s|\frac{|X^c_T-X^c_s|}{b_T-b_s}db_s<\infty \quad a.s.
\end{align*} \\
(c)
\begin{align*}
    \int_0^t|f_s|(b_T-b_s)^{-1/2}db_s<\infty \quad a.s.,
\end{align*}
where $\t \coloneqq \inf\{s:b_s=b_T\}.$ Moreover, if these condition are satisfied, then $Y$ admits the following canonical decomposition in $\mG$
\begin{align*}
    Y_t = \int_0^tf_sd\wt{X}^c_s + \int_0^tf_s\frac{X^c_T-X^c_s}{b_T-b_s}db_s,
\end{align*}
\textcolor{black}{where $\wt{X}^c_t = X^c_t - \int_0^t\frac{X^c_T-X^c_u}{b_T-b_u}db_u.$}
\end{thm}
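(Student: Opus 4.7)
The plan is to use the preceding Gal'chuk theorem as the backbone: it supplies the continuous $\mG$-local martingale $\wt{X}^c$ on $[0,\t)$ with $\langle \wt{X}^c\rangle = b$ and the identity $X^c = \wt{X}^c + \int_0^\cdot (X^c_T - X^c_u)/(b_T - b_u)\,db_u$. Associativity of the stochastic integral then points to a candidate decomposition $Y = N + C$, where $N_t \coloneqq \int_0^t f_s\,d\wt{X}^c_s$ is a continuous $\mG$-local martingale (the integrability $\int_0^T f_s^2\,db_s < \infty$ transfers verbatim because $\langle\wt{X}^c\rangle=\langle X^c\rangle=b$) and $C_t \coloneqq \int_0^t f_s (X^c_T - X^c_s)/(b_T - b_s)\,db_s$ is a pathwise Lebesgue--Stieltjes integral whose total variation on $[0,\t]$ is exactly the integral appearing in (b). I would then run the cycle $(b)\Rightarrow(a)\Rightarrow(b)\Leftrightarrow(c)$, the last equivalence being the one with genuine analytic content.

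First I would verify $(b)\Rightarrow(a)$ together with the displayed canonical decomposition: (b) is precisely the statement that $C$ has finite total variation on $[0,\t]$, and combined with $N$ being a $\mG$-local martingale, $Y = N + C$ exhibits $Y$ as a $\mG$-semimartingale on $[0,\t]$ with the claimed decomposition.

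Next I would prove $(a)\Rightarrow(b)$ via a quadratic-variation argument. Assuming $Y$ is a $\mG$-semimartingale on $[0,\t]$, the difference $Y - N$ is a continuous $\mG$-semimartingale, and a direct computation using that continuous quadratic covariations ignore finite-variation corrections ($[X^c,\wt{X}^c]=[X^c,X^c]=b$) yields $[Y,Y]=[Y,N]=[N,N]=\int_0^\cdot f_s^2\,db_s$, hence $[Y-N,Y-N]\equiv 0$. A continuous semimartingale with vanishing quadratic variation is of finite variation; since pathwise $Y-N=C$ by the Gal'chuk identity, the total variation of $C$ on $[0,\t]$ equals the integral in (b) and must therefore be finite almost surely.

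Finally, for $(b)\Leftrightarrow(c)$ I would reduce by Dambis--Dubins--Schwarz to a classical Brownian motion fact: writing $X^c_t = W_{b_t}$ for a standard Brownian motion $W$ (after a routine extension of the probability space if needed) and changing variable $u=b_s$, conditions (b) and (c) become, respectively,
\begin{align*}
\int_0^{b_T} g(u)\,\frac{|W_{b_T}-W_u|}{b_T-u}\,du < \infty \ \text{a.s.}, \qquad \int_0^{b_T} g(u)(b_T-u)^{-1/2}\,du < \infty \ \text{a.s.,}
\end{align*}
where $g$ is the image of $|f|$ under the time change. The hard part will be the a.s.\ equivalence of these two integrals, which is the content of Jeulin's lemma (Jeulin, 1980): a naive Fubini argument based on $\mE|W_L-W_u|=\sqrt{2(L-u)/\pi}$ fails in the non-integrable regime that is precisely the case of interest, so a genuine pathwise comparison controlling the exceptional set on which $|W_L-W_u|/\sqrt{L-u}$ is atypically small is required. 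Once Jeulin's lemma is in hand, the remaining pieces assemble as above.
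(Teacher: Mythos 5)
The paper itself does not prove this theorem: it is introduced only as the proposition from which the last claim of part (i) of Gal'chuk's main theorem follows, with the explicit remark that the proof is omitted and the reader is referred to Gal'chuk (1993). There is therefore no in-paper argument to compare against, and your proposal has to be judged on its own terms. On those terms it is essentially the standard proof of this type of statement (it is the process-with-independent-increments analogue of the classical Brownian result of Jeulin and of Jacod--Protter's Theorem on L\'evy processes), and the outline is sound: the decomposition $Y=N+C$ with $N=\int f\,d\wt{X}^c$ and $C$ the candidate drift, the observation that (b) is exactly finiteness of the total variation of $C$ on $[0,\t]$, the zero-quadratic-variation argument for (a)$\Rightarrow$(b), and the reduction of (b)$\Leftrightarrow$(c) to Jeulin's lemma are all the right moves. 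Two points deserve more care than the sketch gives them. First, the identity $Y-N=C$ on $[0,\t)$ is not bare ``associativity'': the left-hand side mixes an $\mF$-stochastic integral with a $\mG$-stochastic integral, and the equality has to be justified by the same Stricker-plus-truncation device used in the proof of Gal'chuk's main theorem (bounded $f$ first, then $f^n=f\1_{\{|f|\le n\}}$ and passage to the limit in probability). Second, for (b)$\Leftrightarrow$(c) no Dambis--Dubins--Schwarz extension is actually needed -- $X^c$ is a continuous Gaussian martingale with independent increments, so $b$ is deterministic and $X^c_T-X^c_s\sim N(0,b_T-b_s)$ directly -- but you do need the version of Jeulin's lemma that tolerates a \emph{random} measure $|f_s|(b_T-b_s)^{-1/2}\,db_s$ which is not independent of $X^c$; the elementary i.i.d.-plus-Fubini form of the lemma is not sufficient here, as you correctly anticipate. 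With those two refinements the plan closes.
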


\bigskip

\subsection{Martingale Representation Theorems}

We begin by considering the preservation of the predictable representation property (PRP) in progressively enlarged Levy filtrations. This subsection is based on the work (Di Tella, Engelbert, 2020) that contains some of the most general results available at this point. \par
As usual, we work on a complete stochastic basis $(\W,\cF,\mF,\mP)$, where $\mF$ satisfies the usual conditions. $\cP(\mF)$ represents the $\s$-algebra of $\mF$-predictable subsets of $\mR_+\times\W$. Let $\t$ be a random time and define $\mG \coloneqq \mF \lor \mH$, where $\mH$ is the filtration generated by the default process $\1_{\lb\t,\infty)}$. We will assume throughout that $\t$ satisfies the following two key assumptions: \par

\bigskip

$(\cA) \ \t$ avoids $\mF$-stopping times, i.e. for every $\mF$-stopping time $\eta$, $\mP[\t=\eta<\infty] = 0$; \par
$(\cH)$ $\mF$ is immersed in $\mG$, i.e. every $\mF$-martingale is a $\mG$-martingale. \par

\bigskip

Let $L$ be a Levy process, i.e. an $\mF$-adapted stochastically continuous process with independent and stationary increments. Define $\mF^L$ to be the filtration generated by $L$ that satisfies the usual conditions. Denote by $\m$ the jump measure of $L$, i.e. a homogeneous Poisson random measure with respect to $\mF^L$. The predictable compensator of $\m$ is deterministic and denoted by $\l_+\otimes\n$, where $\l_+$ is the Lebesgue measure on $\mR_+$ and $\n$ is the Levy measure of $L$ such that $\n(\{0\})=0$ and $x\mapsto x^2\land1$ is integrable. The compensated Poisson random measure is denoted by $\ov{\m} \coloneqq \m - \l_+\otimes\n$. $W^{\s}$ denotes the Gaussian part of $L$, i.e. an $\mF^L$-Brownian motion with $\mE (W^{\s}_t)^2 = \s^2 t$, $\s\geq 0$. $\b\in\mR$ denotes the drift parameter of $L$. Thus, the triple $(\b,\s^2,\n)$ denotes the $\mF$-characteristics of $L.$ For every $t\geq0$, the characteristic function of $L_t$ is given by
\begin{align*}
    \mE \exp(iuL_t) = \exp(t\psi(u)),
\end{align*}
where 
\begin{align*}
    \psi(u) \coloneqq i\b u - \frac{1}{2}u^2\s^2 + \int_{\mR} (e^{iux} - 1 - iux\1_{\{|x|\leq 1\}})\n(dx), \ u\in\mR.
\end{align*}
\par
We introduce more notation, specific to this particular discussion. Denote by $\cG^2_{\loc}(\m)$ the linear space of $\cB(\mR_+)\otimes\cP(\mF)$-measureable mappings $G$ such that the increasing process $\sum_{s\leq\cdot}G^2(s,\w,\D L_s(\w))\1_{\{\D L_s(\w)\neq 0\}} $ is locally integrable. Then, for $G\in\cG^2_{\loc}(\m)$, the stochastic integral $\int_{\mR_+}\int_{\mR}G(s,x)\ov{\m}(ds,dx)$ is defined to be the unique purely discontinuous martingale $Z\in\cH^2_{0,\loc}(\mF)$ such that 
\begin{align*}
    \D Z_t(\w) = G(t,\w,\D L_t(\w))\1_{\{\D L_t(\w)\neq 0\}}, \ t\geq 0.
\end{align*}
\par
Note that for any $f\in L^2(\n), t\geq 0,$ the deterministic function $G_f(s,x)\coloneqq \1_{[0,t]}(s)f(x)$ is in $\cG^2_{\loc}(\m)$ since
\begin{align*}
    \mE \sum_{s\leq t}G_f^2(s,\w,\D L_s(\w))\1_{\{\D L_s(\w)\neq 0\}} = \mE \int_{\mR_+}\int_{\mR}G_f^2(s,x)\m(ds,dx) = t\int_{\mR}f^2(x)\n (dx)<\infty
\end{align*}
\par
Finally, we can define the process $X^f\in\cH^2_{0,\loc}(\mF)$ via 
\begin{align*}
    X_t^f \coloneqq \int_{[0,t]\times\mR}f(x)\ov{\m}(ds,dx), \ t\geq 0.
\end{align*}
\par
We now provide a number of theorems and lemmas necessary to state the main result of the subsection. The proofs can be found in the cited paper.  \par

\begin{thm} Let $(L,\mF)$ be a Levy process with the $\mF$-characteristics $(\b,\s^2,\n)$. For any $f\in L^2$, the following claims hold: \par
(i) $(X^f,\mF)$ is a Levy process and a true martingale; \par
(ii) for every $t\geq 0$, the identity $\mE (X^f_t)^2 = t\int_{\mR}f^2 d\n<\infty$ holds. Hence, for every deterministic $T\in\mR_+$, the stopped process $(X^f_{t\land T})_{t\geq 0}$ belongs to $\cH^2_0(\mF).$ \par
(iii) $\langle X^f, X^g \rangle_t = t\int_{\mR}fgd\n$ for every $f,g\in L^2(\n), t\geq 0.$ \par
(iv) $X^f$ and $X^g$ are orthogonal martingales if and only if $f,g\in L^2(\n)$ are orthogonal functions.
\end{thm}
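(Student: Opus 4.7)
The plan is to reduce each of the four claims to standard properties of stochastic integrals of deterministic functions against the compensated Poisson random measure $\ov{\m}$ of the Levy process $L$. Since $f\in L^2(\n)$ and the integrand is non-random, the object $X^f_t = \int_{[0,t]\times\mR}f(x)\ov{\m}(ds,dx)$ falls within the classical theory of Levy-Ito integrals, so the task is to organize these properties in the order that makes the proofs shortest.

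For (i), I would use that for any $0\leq s<t$ the increment $X^f_t - X^f_s = \int_{(s,t]\times\mR}f(x)\ov{\m}(du,dx)$ is measurable with respect to $\s(\m\vert_{(s,t]\times\mR})$. Because $\m$ is a homogeneous Poisson random measure associated with a Levy process, its restrictions to disjoint time strips are independent, and its restriction to $(s,t]\times\mR$ has the same distribution as the restriction to $(0,t-s]\times\mR$. This immediately gives independent and stationary increments of $X^f$, and cadlag paths are inherited from the construction of the integral. Once claim (ii) is established, the martingale property follows from $\mE X^f_t = 0$ combined with the independent-increments property.

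For (ii), I would prove the Ito isometry first for a simple $f$ of the form $\sum_{i=1}^N c_i\1_{A_i}$ with pairwise disjoint $A_i\in\cB(\mR)$ bounded away from $0$ and of finite $\n$-mass. For such $f$ one has $X^f_t = \sum_{i=1}^N c_i\bigl(\m([0,t]\times A_i) - t\n(A_i)\bigr)$, and the independence of the centered Poisson variables $\m([0,t]\times A_i) - t\n(A_i)$, each of variance $t\n(A_i)$, yields $\mE(X^f_t)^2 = t\sum_i c_i^2\n(A_i) = t\int f^2 d\n$ by direct computation. The general statement follows by an $L^2(\n)$-approximation argument, since the class of such simple $f$ is dense in $L^2(\n)$ and the stochastic integral map is an isometry. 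The inclusion $(X^f_{t\land T})_{t\geq 0}\in\cH^2_0(\mF)$ then follows from Doob's maximal inequality applied to the martingale modification.

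For (iii), the polarization identity $\langle X^f, X^g\rangle = \frac{1}{4}\bigl(\langle X^{f+g}\rangle - \langle X^{f-g}\rangle\bigr)$ reduces the problem to identifying $\langle X^f\rangle_t = t\int_{\mR}f^2\,d\n$. To establish the latter, I would verify that $(X^f_t)^2 - t\int f^2 d\n$ is an $\mF$-martingale by expanding $(X^f_t)^2 - (X^f_s)^2 = (X^f_t-X^f_s)^2 + 2X^f_s(X^f_t-X^f_s)$, taking conditional expectations given $\cF_s$, and using independence and stationarity of increments together with (ii). Claim (iv) is then immediate: since $X^f$ and $X^g$ are strongly orthogonal martingales if and only if $\langle X^f,X^g\rangle\equiv 0$, by (iii) this is equivalent to $\int_{\mR}fg\,d\n = 0$. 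The main technical obstacle lies in the approximation step of (ii): one must verify that the limit obtained via the isometric extension from simple integrands coincides with $X^f$ as defined by the general construction of stochastic integrals against $\ov{\m}$, and that simple functions of the required form are dense in $L^2(\n)$. Once (ii) is secured, the remaining claims reduce to polarization and direct consequences of the isometry.
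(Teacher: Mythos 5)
Your proposal is correct, but note that the paper does not actually prove this theorem: it is quoted from (Di Tella, Engelbert, 2020) with the proofs explicitly deferred to that reference, so there is no in-paper argument to compare against. Your route --- Ito isometry for simple integrands plus $L^2(\n)$-density for (ii), independence and homogeneity of the restrictions of $\m$ to disjoint time strips for (i), the martingale property of $(X^f_t)^2 - t\int f^2\,d\n$ followed by polarization for (iii), and the characterization of strong orthogonality via $\langle X^f,X^g\rangle\equiv 0$ for (iv) --- is the standard one and is sound. Two small points deserve explicit care if you write this out in full. First, for $(X^f,\mF)$ to be a Levy process \emph{with respect to} $\mF$ you need $X^f_t-X^f_s$ independent of $\cF_s$, not merely of the past of $\m$; this is inherited from the hypothesis that $L$ itself has $\mF$-independent increments, since $\m\vert_{(s,t]\times\mR}$ is a measurable functional of the post-$s$ increments of $L$. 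Second, the paper defines $\int G\,d\ov{\m}$ as the unique purely discontinuous local martingale in $\cH^2_{0,\mathrm{loc}}(\mF)$ with prescribed jumps, so your isometric-extension construction must be identified with that object; this is exactly the "technical obstacle" you flag, and it is resolved by checking that the $L^2$-limit of the simple-integrand approximations is purely discontinuous with jumps $f(\D L_t)\1_{\{\D L_t\neq 0\}}$ and invoking the uniqueness in the paper's definition.
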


\par

For $\cJ \subseteq L^2(\n)$, we define 
\begin{align*}
    \mathscr{X}_{\cJ}\coloneqq \{W^{\s}\}\cup\{X^f, f\in\cJ\}.
\end{align*}
\par
We also recall the notation for permissible integrands given by 
\begin{align*}
    L^2(\mF,X) = \left\{K : K \ \text{is} \ \mF\text{-predictable and} \quad \mE [K^2\bullet\langle X,X \rangle_{\infty}]<\infty\right\}.
\end{align*}
\par

\begin{thm} Let $\cJ \coloneqq \{f_n, n\geq 1\}\subseteq L^2(\n)$ be an orthonormal basis. Then the family $\mathscr{X}_{\cJ}$ possesses the PRP with respect to $\mF^L$, i.e. every $X\in\cH^2(\mF^L)$ can be written as 
\begin{align*}
    &X_t = X_0 + Z\cdot W^{\s}_t + \sum_{n=1}^{\infty}V^n\bullet X^{f_n}_t, \\
    Z\in L^2(\mF^L, &W^{\s}), V^n\in L^2(\mF^L, X^{f_n}), \ n\geq 1,  \ t\geq 0.
\end{align*}
\end{thm}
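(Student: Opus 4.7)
The plan is to combine the classical Ito--Kunita representation for Levy processes with a pointwise orthonormal expansion in $L^2(\n)$. The starting point is that every $X\in\cH^2(\mF^L)$ admits the unique decomposition
\[
X_t = X_0 + Z\bullet W^{\s}_t + \int_{[0,t]\times\mR}G(s,x)\,\ov{\m}(ds,dx),
\]
with $Z$ predictable, $\mE\int_0^{\infty}Z_s^2 ds <\infty$, and $G$ a $\cP(\mF^L)\otimes\cB(\mR)$-measurable map satisfying $\mE\int_0^{\infty}\!\int_{\mR} G^2(s,x)\,ds\,\n(dx)<\infty$. The task reduces to representing the purely discontinuous piece as $\sum_n V^n\bullet X^{f_n}$.

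For each fixed $(s,\w)$ the function $G(s,\w,\cdot)$ lies in $L^2(\n)$, so Parseval's identity in the basis $\{f_n\}$ gives $G(s,\w,x)=\sum_n V^n(s,\w)f_n(x)$ in $L^2(\n)$, where the coefficient process is $V^n(s,\w)\coloneqq \int_{\mR}G(s,\w,y)f_n(y)\n(dy)$. First I would verify that each $V^n$ is $\mF^L$-predictable (a Fubini/monotone-class argument on $\cP(\mF^L)\otimes\cB(\mR)$), and that Parseval yields the pointwise identity $\sum_n V^n(s,\w)^2 = \int_{\mR}G^2(s,\w,y)\n(dy)$.

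Next, using item (iii) of the preceding theorem, $\langle X^{f_n},X^{f_n}\rangle_t = t\,\lVert f_n\rVert_{L^2(\n)}^2 = t$, so $V^n\in L^2(\mF^L,X^{f_n})$ whenever $\mE\int_0^{\infty}(V^n_s)^2 ds<\infty$; summing and using Parseval gives
\[
\sum_{n=1}^{\infty}\mE\bigl[(V^n)^2\bullet\langle X^{f_n}\rangle_{\infty}\bigr]=\mE\int_0^{\infty}\!\int_{\mR}G^2(s,x)\,ds\,\n(dx)<\infty.
\]
By item (iv) the martingales $X^{f_n}$ are pairwise strongly orthogonal and also orthogonal to $W^{\s}$, hence the partial sums $Y^N\coloneqq \sum_{n=1}^N V^n\bullet X^{f_n}$ form a Cauchy sequence in $\cH^2(\mF^L)$ with limit $Y\in\cH^2(\mF^L)$.

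Finally, to identify $Y$ with the Poisson integral, I would set $G_N(s,x)\coloneqq \sum_{n=1}^N V^n(s)f_n(x)$, observe that $G_N\to G$ in $L^2(\l_+\!\otimes\n\otimes\mP)$ by Parseval, and conclude that $\int G_N\,d\ov{\m}\to \int G\,d\ov{\m}$ in $\cH^2(\mF^L)$ by the Ito isometry for Poisson random measures; on the other hand $\int G_N\,d\ov{\m}=Y^N$ by linearity and the very definition of $X^{f_n}$. The hard part is really the measurability and interchangeability issue: justifying the predictability of the $V^n$'s and, more delicately, commuting the infinite sum with the stochastic integral. Everything else is a clean application of the orthogonality furnished by the previous theorem together with Parseval.
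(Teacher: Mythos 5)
Your argument is correct and is essentially the proof of the cited source (Di Tella, Engelbert, 2020), whose proof the paper itself omits: one starts from the classical It\^o--L\'evy martingale representation with respect to $(W^{\s},\ov{\m})$, expands the jump integrand in the orthonormal basis of $L^2(\n)$, and uses Parseval, the stochastic-integral isometry and the orthogonality relations from the preceding theorem to pass to the limit in $\cH^2(\mF^L)$. The only delicate points are the predictability of the coefficients $V^n$ and the identity $\int V^n(s)f_n(x)\,\ov{\m}(ds,dx)=V^n\bullet X^{f_n}$, both of which you have correctly identified and which follow from standard Fubini and associativity arguments for integrals with respect to compensated random measures.
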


\begin{defn} Let $l^2$ denote the Hilbert space of sequences $v = (v^n)_{n\geq 1}$ for which the norm $\|v\|^2_{l^2}\coloneqq \sum_{n=1}^{\infty}(v^n)^2$ is finite. Denote by $M^2(\mF,l^2)$ the space of $l^2$-valued $\mF$-predictable processes $V$ such that $\int_0^{\infty}\|V_s\|^2_{l^2}ds$ is integrable. 
\end{defn}

\par
Note that $\sum_{n=1}^{\infty}V^n\bullet X^{f_n}\in\cH^2_0(\mF)$ if and only if $V = (V^n)_{n\geq 1}\in M^2(\mF,l^2).$ \par
As usual, we let $\L^{\mG}$ be the $\mG$-predictable compensator of $H = \1_{\lb\t,\infty)}$ given by
\begin{align*}
    \L^{\mG} = \int_0^{t\land\t}\frac{1}{A_{s-}}dH^p_s,
\end{align*}
where $A \coloneqq \prescript{o}{}(1 - H) = (\mP(\t>t|\cF_t), t\geq 0)$ is the Azema supermartingale and, as can be shown, under the assumptions for $\t$, $H^p = H^o = (1 - A)$. Recall also the $\mG$-local martingale $M$ given by 
\begin{align*}
    M_t \coloneqq H_t - \L_t^{\mG}, \quad t\geq 0.
\end{align*}
\par 
Note that this implies that $M\in\cH^2_0(\mG)$ and $\langle M, M \rangle = \L^{\mG}.$ \par

\begin{thm} Assume that the random time $\t$ satisfies the conditions $(\cA)$ and $(\cH)$ above. Let $(L,\mF^L)$ be a Levy process with $\mF^L$-characteristics $(\b,\s^2,\n)$. Let $\cJ\subseteq L^2(\n)$ be an orthonormal basis. Then, \par
(i) $(L,\mG)$ is a Levy process with $\mG$-characteristics $(\b,\s^2,\n)$; \par
(ii) $\mathscr{X}\coloneqq \mathscr{X}_{\cJ}\cup\{M\}$ is a family of $\mG$-martingales such that $\mE X^2_t<\infty, X\in\mathscr{X}, t\geq 0$; \par
(iii)For every $X\in\mathscr{X}_{\cJ},$ the identity $[X,M]=0$ holds (up to an evanescent set). In particular, $\mathscr{X}$ consists of pairwise orthogonal $\mG$-martingales. 
\end{thm}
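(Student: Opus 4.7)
The plan is to treat the three assertions in order, using the immersion hypothesis $(\cH)$ for the first two parts and the avoidance condition $(\cA)$ together with quasi-left-continuity of $L$ for the third.

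For part (i), I would invoke the equivalent reformulation of $(\cH)$: for every bounded $\cF_{\infty}$-measurable $\xi$, $\mE[\xi\mid\cG_s]=\mE[\xi\mid\cF_s]$ (equivalently, $\cF_{\infty}$ and $\cG_s$ are conditionally independent given $\cF_s$). Applying this to $\xi=e^{iu(L_t-L_s)}$ yields
\begin{align*}
    \mE[e^{iu(L_t-L_s)}\mid\cG_s]=\mE[e^{iu(L_t-L_s)}\mid\cF_s]=e^{(t-s)\psi(u)},
\end{align*}
which is deterministic. Hence $L_t-L_s$ is independent of $\cG_s$ under the Levy law encoded by $(\b,\s^2,\n)$. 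Since the cadlag sample-path property is filtration-free, $(L,\mG)$ is a Levy process with the claimed $\mG$-characteristics.

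For part (ii), $W^{\s}$ and each $X^f$ for $f\in\cJ$ are $\mF^L$-martingales by the earlier theorem and hence remain $\mG$-martingales under $(\cH)$; their $L^2$-norms are $\mE[(W^{\s}_t)^2]=\s^2 t$ and $\mE[(X^f_t)^2]=t\int f^2 d\n$, both finite. The process $M$ is a $\mG$-martingale by construction, with $\langle M,M\rangle=\L^{\mG}$ and $\mE[\L^{\mG}_t]\leq\mE[H_t]\leq 1$; consequently $M_t\in L^2$ for each $t$, which is the asserted integrability.

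For part (iii), I would split the orthogonality check along the continuous and purely discontinuous members of $\mathscr{X}_{\cJ}$. The Brownian martingale $W^{\s}$ is continuous, so $[W^{\s},M]=\sum \D W^{\s}\D M=0$ trivially. For $X^f$, note that under $(\cA)$ the $\mF$-dual predictable projection $H^p$ is continuous (a standard consequence of $\t$ avoiding all $\mF$-stopping times, applied in particular to predictable ones, so that $\D H^p_{\sigma}=\mP[\t=\sigma\mid\cF_{\sigma-}]=0$ for every predictable $\sigma$), hence $\L^{\mG}$ is continuous and $\D M_s=\D H_s=\1_{\{s=\t\}}$. Therefore
\begin{align*}
    [X^f,M]_t=\sum_{s\leq t}\D X^f_s\,\D M_s=f(\D L_{\t})\1_{\{\t\leq t\}}.
\end{align*}
The crucial step is that $\{s:\D L_s\neq 0\}$ can be exhausted by a sequence $(T_n)$ of $\mF^L$-stopping times (e.g.\ the successive hitting times of $\{|x|>1/n\}$ by the jump sizes of $L$); by $(\cA)$, $\mP[\t=T_n]=0$ for each $n$, whence $\mP[\D L_{\t}\neq 0,\,\t<\infty]=0$ and $[X^f,M]=0$ up to evanescence. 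Combined with the pairwise orthogonality inside $\mathscr{X}_{\cJ}$ supplied by the earlier representation theorem, this establishes that $\mathscr{X}$ is a pairwise orthogonal family.

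The step I expect to be the main obstacle is part (iii): one must convert $(\cA)$ into the statement that $\t$ almost surely does not coincide with any jump time of $L$, which requires exhibiting an exhausting sequence of $\mF^L$-stopping times for the jump set of $L$ and then passing the avoidance property from individual stopping times to the countable union. The remaining steps are essentially bookkeeping once the immersion-based conditional-independence formulation is in hand for part (i).
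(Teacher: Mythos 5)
Your proposal is correct, and since the paper itself states this theorem without proof (deferring to Di Tella and Engelbert, 2020), your argument is in line with the standard one in that source: immersion gives the conditional-independence/characteristic-function computation for (i) and the transfer of the $\mF^L$-martingales to $\mG$ for (ii), while (iii) rests on $(\cA)$ forcing the compensator $\L^{\mG}$ to be continuous and on exhausting the jump times of $L$ by countably many $\mF^L$-stopping times so that $\mP[\D L_{\t}\neq 0,\ \t<\infty]=0$. You have correctly identified the one genuinely delicate point (passing the avoidance property to the jump set of $L$), and your handling of it is sound.
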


\par
\begin{lem} Let $(\z^k)_{k\in\mN}$ be a sequence converging in $L^2(\W,\cG_{\infty},\mP)$ to $\z.$ If 
\begin{align*}
    \z^k = \mE \z^k + Z^k\bullet W^{\s}_{\infty} + \sum_{n=1}^{\infty}V^{n,k}\bullet X^{f_n}_{\infty} + U^k\bullet M_{\infty}, \\
    Z^k\in L^2(\mG,W^{\s}), V^k\in M^2(\mG,l^2), U^k\in L^2(\mG,M), k\in \mN,
\end{align*}
then there exist $Z\in L^2(\mG,W^{\s}), V\in M^2(\mG,l^2), U\in L^2(\mG,M)$ such that 
\begin{align*}
    \z = \mE \z + Z\bullet W^{\s}_{\infty} + \sum_{n=1}^{\infty}V^n\bullet X^{f_n}_{\infty} + U\bullet M_{\infty}.
\end{align*}
\end{lem}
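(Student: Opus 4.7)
The plan is to recognize the claim as a closedness statement for the image of a natural linear map into $L^2(\W,\cG_{\infty},\mP)$ and then to exploit the orthogonality asserted in the preceding theorem to upgrade that map to an isometry of Hilbert spaces. First I would introduce the direct-sum Hilbert space $\mathcal{K} \coloneqq \mR \oplus L^2(\mG,W^{\s}) \oplus M^2(\mG,l^2) \oplus L^2(\mG,M)$, endowed with the inner product coming from the individual Hilbert structures, together with the linear map $\Phi \colon \mathcal{K} \to L^2(\W,\cG_{\infty},\mP)$ defined by $\Phi(c,Z,V,U) \coloneqq c + Z\bullet W^{\s}_{\infty} + \sum_{n\geq 1}V^n\bullet X^{f_n}_{\infty} + U\bullet M_{\infty}$. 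The hypothesis of the lemma states exactly that each $\z^k$ lies in the image $\Phi(\mathcal{K})$ with preimage $(\mE \z^k, Z^k, V^k, U^k)$.

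The key technical step is to show that $\Phi$ is an isometry. Note that $c = \mE \Phi(c,Z,V,U)$ is orthogonal in $L^2$ to every zero-mean stochastic integral. The pairwise orthogonality of $\mathscr{X}_{\cJ}\cup\{M\}$ established in part (iii) of the preceding theorem, combined with the identity $\langle X^{f_n}, X^{f_m}\rangle_t = t\int_{\mR}f_nf_md\n = 0$ for $n\neq m$ coming from the earlier theorem on $X^f$, guarantees that all cross terms vanish. It\^o isometry applied component by component then gives
\begin{equation*}
\lVert\Phi(c,Z,V,U)\rVert^2_{L^2} = c^2 + \s^2\,\mE\int_0^{\infty}Z^2_s\,ds + \mE\int_0^{\infty}\lVert V_s\rVert^2_{l^2}\,ds + \mE\int_0^{\infty}U^2_s\,d\langle M,M\rangle_s,
\end{equation*}
which is exactly $\lVert(c,Z,V,U)\rVert^2_{\mathcal{K}}$.

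Once isometry is in hand, the remainder is a soft Hilbert-space argument. Any isometry between Hilbert spaces has closed range, so $\Phi(\mathcal{K})$ is a closed subspace of $L^2(\W,\cG_{\infty},\mP)$; the $L^2$-convergent sequence $(\z^k)$ is therefore Cauchy in $\Phi(\mathcal{K})$, and its preimages $(\mE \z^k, Z^k, V^k, U^k)$ are Cauchy in $\mathcal{K}$. Completeness of each direct summand yields a limit $(c, Z, V, U) \in \mathcal{K}$, and continuity of $\Phi$ gives $\z = \Phi(c, Z, V, U)$; taking expectations confirms $c = \mE \z$.

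The step I expect to be the main obstacle is the careful treatment of the infinite sum $\sum_{n\geq 1}V^n\bullet X^{f_n}$. A componentwise approach would require showing separately that each $(V^{n,k})_k$ converges in $L^2(\mG, X^{f_n})$ and that the resulting series of stochastic integrals still converges in $\cH^2_0(\mG)$, which is awkward and risks losing the orthogonality structure across $n$. The cleaner route, which I would adopt, is to bundle $V^k = (V^{n,k})_{n\geq 1}$ as a single $l^2$-valued process in $M^2(\mG,l^2)$, so that the partial sums $\sum_{n=1}^{N}V^{n,k}\bullet X^{f_n}$ converge in $\cH^2_0(\mG)$ with squared norm exactly $\mE\int_0^{\infty}\lVert V^k_s\rVert^2_{l^2}\,ds$. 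The limit $V\in M^2(\mG,l^2)$ obtained from Cauchy-ness in $\mathcal{K}$ then automatically delivers a well-defined series $\sum_{n\geq 1}V^n\bullet X^{f_n}\in\cH^2_0(\mG)$ in the required representation, so the bundling collapses the infinite-sum difficulty into a single isometry bookkeeping.
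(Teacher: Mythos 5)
The paper itself states this lemma without proof, deferring to Di Tella and Engelbert (2020); your argument is correct and is essentially the argument used there. Identifying the representation as an isometry from the direct-sum Hilbert space $\mR \oplus L^2(\mG,W^{\s}) \oplus M^2(\mG,l^2) \oplus L^2(\mG,M)$ onto a necessarily closed subspace of $L^2(\W,\cG_{\infty},\mP)$ --- with the cross terms killed by the pairwise orthogonality of $\mathscr{X}_{\cJ}\cup\{M\}$ and the $l^2$-bundling of the $V^{n}$ handling the infinite sum --- is exactly the standard route, and your bookkeeping of the norms is accurate.
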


\par 
We are now ready to state the main result on the preservation of the PRP property in the progressively enlarged Levy filtration $\mG.$ \par

\begin{thm} Let $(L,\mF^L)$ be a Levy process with $\mF^L$-characteristics $(\b,\s^2,\n)$. Denote by $\cJ = \{f_n, n\geq 1\}$ an orthonormal basis of $L^2(\n)$ and by $\mathscr{X}_{\cJ}\subseteq\cH^2_{0,\loc}(\mF^L)$ the family of martingales associated with $\cJ$. Assume that the random time $\t$ satisfies the assumptions $(\cA)$ and $(\cH)$ and let $\mG$ be the progressive enlargement of $\mF^L$ by $\t$. Then, the orthogonal family of $\mG$-martingales $\mathscr{X} = \mathscr{X}_{\cJ}\cup\{M\}\subseteq\cH^2_{0,\loc}(\mG)$ possesses the PRP with respect to $\mG$, i.e. every $X\in\cH^2(\mG)$ can be represented as 
\begin{align*}
    X_t = X_0 + Z\bullet W^{\s}_t + \sum_{n=1}^{\infty}V^n\bullet X_t^{f_n} + U\bullet M_t, \quad t\geq 0,
\end{align*}
where $Z\in L^2(\mG,W^{\s}), V\in M^2(\mG,l^2), U\in L^2(\mG,M)$. Moreover, this representation is unique, i.e. if there exists another triplet $Z'\in L^2(\mG,W^{\s}), V'\in M^2(\mG,l^2), U'\in L^2(\mG,M)$ such that the above representation holds for $(Z',V',U')$ instead of $(Z,V,U)$, then we have 
\begin{align*}
    \|Z-Z'\|_{L^2(\mG,W^{\s})}=0, \ \|V-V'\|_{M^2(\mG,l^2)}=0, \ \|U-U'\|_{L^2(\mG,M)}=0.
\end{align*}
\end{thm}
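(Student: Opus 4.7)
The plan is to reduce existence to a dense subclass of $L^2(\cG_\infty)$ and invoke the closure lemma above. Uniqueness is immediate: by the pairwise orthogonality of $\mathscr{X}$ established in the preceding theorem, if two triples $(Z,V,U)$ and $(Z',V',U')$ yield the same $X\in\cH^2(\mG)$, then expanding the squared $\cH^2(\mG)$-norm of the difference in the orthogonal decomposition forces each of the three squared norms $\|Z-Z'\|^2$, $\|V-V'\|^2$, $\|U-U'\|^2$ to vanish. For existence, it suffices to represent $X_\infty$ for $X\in\cH^2(\mG)$, and by the closure lemma the set $\mathscr{K}\subseteq L^2(\cG_\infty)$ of random variables admitting the representation is closed. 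Since $\cG_\infty=\cF^L_\infty\lor\s(\t)$, a standard monotone class argument shows that $\zeta=F\cdot g(\t)$, with $F\in L^\infty(\cF^L_\infty)$ and $g:\mR_+\to\mR$ bounded Borel, is total in $L^2(\cG_\infty)$; it therefore suffices to show $\zeta\in\mathscr{K}$ for every such product.

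For such $\zeta$, write $\zeta=\zeta_1+\zeta_2$ with $\zeta_1\coloneqq F\cdot\mE[g(\t)\mid\cF^L_\infty]$ and $\zeta_2\coloneqq F\cdot(g(\t)-\mE[g(\t)\mid\cF^L_\infty])$, so that $\zeta_1\in L^2(\cF^L_\infty)$ and $\mE[\zeta_2\mid\cF^L_\infty]=0$. The PRP for $\mathscr{X}_\cJ$ in $\mF^L$ (already stated) yields
\begin{align*}
\zeta_1=\mE\zeta_1+Z\bullet W^\s_\infty+\sum_{n\geq 1}V^n\bullet X^{f_n}_\infty
\end{align*}
with $\mF^L$-predictable integrands. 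Immersion $(\cH)$ promotes this to a valid $\mG$-representation using the \emph{same} integrands, since $\mF^L$-predictable processes are $\mG$-predictable and $W^\s, X^{f_n}$ remain $\mG$-martingales with unchanged characteristics (part (i) of the preceding theorem).

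The main obstacle is representing the centered ``$\t$-part'' $\zeta_2$ as $U\bullet M_\infty$ for some $U\in L^2(\mG,M)$. The strategy is to compute the bounded cadlag $\mG$-martingale $N_t\coloneqq\mE[\zeta_2\mid\cG_t]$ via the Jeulin--Yor conditioning formula for progressive enlargement: on $\{t<\t\}$,
\begin{align*}
N_t=\frac{1}{A_t}\mE[\zeta_2\mathbbm{1}_{\{\t>t\}}\mid\cF^L_t],
\end{align*}
while on $\{\t\leq t\}$ the process is frozen at its $\cG_\t$-measurable value $\zeta_2$. The avoidance hypothesis $(\cA)$ forces $A$ to be continuous and $\t$ to be $\mG$-totally inaccessible, so $M$ is purely discontinuous with a single jump of size $1$ at $\t$ and continuous compensator $\L^\mG_t=-\int_0^{t\land\t}dA_s/A_s$. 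I would define $U$ as the $\mG$-predictable process satisfying $U_\t=\D N_\t$ (concretely, $U_t=\Phi_t-N_{t-}$ for a suitable $\mF^L$-predictable $\Phi$ built from the Jeulin--Yor numerator above), and verify $N=U\bullet M$ by a direct computation: the $dA_s/A_s$-drift appearing in $dN_t$ on $\{t<\t\}$ must cancel $U\,d\L^\mG$ exactly, because $N$ is a $\mG$-martingale. Crucially, $N$ has no $W^\s$- or $X^{f_n}$-component: by the orthogonality $[M,Y]=0$ for $Y\in\mathscr{X}_\cJ$ (preceding theorem), the $\text{stable}(\mathscr{X}_\cJ)$-projection of $\zeta_2$ coincides with its $\mF^L_\infty$-conditional expectation, which vanishes by construction. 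Integrability $U\in L^2(\mG,M)$ then follows from $\zeta_2\in L^2$ and the It\^o isometry for $M$.

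Summing the representations of $\zeta_1$ and $\zeta_2$ yields $\zeta\in\mathscr{K}$; closure in $L^2$ extends this to all $\zeta\in L^2(\cG_\infty)$, and hence to all $X\in\cH^2(\mG)$ via $X_t=\mE[X_\infty\mid\cG_t]$. The hardest step is the explicit identification of $U$ and the verification $N=U\bullet M$, which rests in an essential way on both $(\cA)$ (for the continuity of $A$ and the structure of $M$) and $(\cH)$ (for reducing the $\mathscr{X}_\cJ$-component of $N$ to an $\mF^L$-projection).
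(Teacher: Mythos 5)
Your overall architecture --- uniqueness from the pairwise orthogonality of $\mathscr{X}$, existence by exhibiting a total subset of $L^2(\W,\cG_{\infty},\mP)$ inside the set of representable terminal values and then invoking the closure lemma --- is exactly the scaffolding the paper assembles (the closure lemma and the orthogonality theorem are stated for precisely this purpose; the paper itself gives no proof and defers to Di Tella and Engelbert, 2020). The uniqueness argument and the treatment of the $\cF^L_{\infty}$-measurable part $\zeta_1$ via immersion are correct.

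The gap is in the treatment of $\zeta_2=F\,(g(\t)-\mE[g(\t)\mid\cF^L_{\infty}])$. You claim that $N_t=\mE[\zeta_2\mid\cG_t]$ is frozen at $\t$ and lies in the stable subspace generated by $M$ alone; neither is true. First, $\zeta_2$ is not $\cG_{\t}$-measurable ($F$ and $\mE[g(\t)\mid\cF^L_{\infty}]$ are $\cF^L_{\infty}$-measurable, not $\cG_{\t}$-measurable), so $N$ keeps evolving after $\t$. Second, any integral $U\bullet M$ equals $-\int_0^tU_s\,d\L^{\mG}_s$ on $\{t<\t\}$ and is therefore a continuous finite-variation process there, whereas on $\{t<\t\}$ one has $N_t=A_t^{-1}\mE[\zeta_2\1_{\{\t>t\}}\mid\cF^L_t]$ and the numerator is a genuine $\mF^L$-semimartingale with a nonvanishing martingale part. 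Already for $F\equiv 1$: setting $m^g_t=\mE[\int_0^{\infty}g(s)\,d(1-A)_s\mid\cF^L_t]$, It\^o's formula yields
\begin{align*}
dN_t=\frac{1-A_t}{A_t}\,dm^g_t+(\cdots)\,dA_t \quad\text{on }\{t<\t\},
\end{align*}
and $m^g$ is a nonconstant $\mF^L$-martingale whenever $A$ is not deterministic, so $N$ does carry $W^{\s}$- and $X^{f_n}$-components. The justification you offer --- that $[M,Y]=0$ for $Y\in\mathscr{X}_{\cJ}$ forces the $\mathrm{stable}(\mathscr{X}_{\cJ})$-projection of $\zeta_2$ to equal its $\cF^L_{\infty}$-conditional expectation --- is circular: that stable subspace is built from $\mG$-predictable integrands, its terminal values are not $\cF^L_{\infty}$-measurable, and orthogonality to it does not follow from $\mE[\zeta_2\mid\cF^L_{\infty}]=0$ (indeed it would already presuppose the direct-sum decomposition $\cH^2_0(\mG)=\mathrm{stable}(\mathscr{X}_{\cJ})\oplus\mathrm{stable}(M)$ you are trying to prove). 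A workable route is to represent the two factors separately --- $\mE[F\mid\cG_{\cdot}]$ via immersion and the $\mF^L$-PRP, and $\mE[g(\t)\mid\cG_{\cdot}]$ via the \emph{full} family $\mathscr{X}$, since it is not an integral against $M$ alone --- and then control the product by integration by parts together with the orthogonality relations; the conditional-centering decomposition you chose does not split $\zeta$ along $\mathrm{stable}(\mathscr{X}_{\cJ})\oplus\mathrm{stable}(M)$.
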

\par

In fact, the proofs for the above theorem work for any filtration $\mF$ satisfying the usual conditions and any countable family $\mathscr{X} = \{X^n, n\geq 1\}\subseteq\cH^2(\mF)$ of mutually orthogonal $\mF$-martingales possessing the PRP in $\mF$. We state it as the most general result in this context. \par

\begin{thm} Let $\mF$ be a filtration satisfying the usual conditions. Suppose $\mathscr{X} = \{X^n, n\geq 1\}\subseteq\cH^2(\mF)$ is an arbitrary family of mutually orthogonal $\mF$-martingales possessing the PRP with respect to $\mF$. Assume that the random time $\t$ satisfies the assumptions $(\cA)$ and $(\cH)$ and let $\mG$ be the progressive enlargement of $\mF$ by $\t$. Then, \par
(i) $\mathscr{X}$ is an orthogonal family of square-integrable $\mG$-martingales; \par
(ii) for every $n\geq 1$, the identity $[X^n,M] = 0 $ holds. Hence $X^n, M \in \cH^2(\mF)$ are orthogonal; \par
(iii) every $X\in\cH^2(\mG)$ can be represented as 
\begin{align*}
    X_t = X_0 + \sum_{n=1}^{\infty}V^n\bullet X_t^n + U\bullet M_t, \quad t\geq 0,
\end{align*}
where $V^n\in L^2(\mG,X^n), n\geq 1, U\in L^2(\mG,M)$. Furthermore, $V^n$ and $U$ are $\langle X^n,X^n \rangle\otimes\mP$-a.e. and $\langle M,M \rangle\otimes\mP$-a.e. unique on $\mR_+\times\W$ respectively.
\end{thm}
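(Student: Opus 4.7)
The plan is to prove the three parts in order. Parts (i) and (ii) will fall out of the hypotheses $(\cH)$ and $(\cA)$ after short computations, while (iii) will be an adaptation of the proof strategy used for the preceding Levy-case theorem: close the representation space via the same closedness lemma stated above (whose proof is structure-independent) and then exhibit a total family in $L^2(\cG_\infty)$ whose associated $\mG$-martingales admit an explicit representation in terms of $\mathscr{X}\cup\{M\}$. The three non-trivial facts I will use are independent of any particular structure of $\mF$: immersion transfers $\mF$-martingality to $\mG$ and preserves both the $L^2$-norm and angle brackets of $\mF$-martingales; under $(\cA)$ the $\mF$-dual predictable projection $A^p$ of $H$ is continuous, so $\L^{\mG}$ is continuous and $\D M$ is supported on $\lb\t\rb$; and the PRP of $\mathscr{X}$ in $\mF$ combined with $[X^n,M]=0$ lets one decouple the $\mF$-part and the $\t$-part of any element of $\cH^2(\mG)$.

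\textbf{Parts (i) and (ii).} Immersion gives that each $X^n$ remains a martingale in $\mG$ with the $L^2$-norm unchanged, and $\langle X^n,X^m\rangle^{\mG}=\langle X^n,X^m\rangle^{\mF}=0$ for $n\neq m$, preserving the orthogonality of $\mathscr{X}$ in $\mG$. For (ii), since $\t$ avoids every $\mF$-stopping time, $A^p$ is continuous, hence so is $\L^{\mG}$, and $\D M$ vanishes off $\lb\t\rb$. The jumps of $X^n$ are exhausted by a countable family of $\mF$-stopping times, all of which $\t$ avoids, so $\D X^n_{\t}=0$ a.s., and therefore $[X^n,M]_t=\D X^n_{\t}\1_{\{t\geq\t\}}=0$.

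\textbf{Part (iii).} Let $\cK\subseteq\cH^2(\mG)$ denote the set of martingales admitting the claimed representation. Closedness of $\cK$ follows by the closedness lemma used in the Levy setting, whose proof relies only on Doob's $L^2$-inequality together with the orthogonality relations $[X^n,X^m]=\d_{nm}\langle X^n,X^n\rangle$ and $[X^n,M]=0$, and hence transfers verbatim. To prove $\cK=\cH^2(\mG)$ it suffices to produce a total subset of $L^2(\cG_\infty)$ whose $\mG$-martingales lie in $\cK$. A convenient choice is $\cD\coloneqq\{\xi g(\t):\xi\in L^\infty(\cF_\infty),\ g \text{ bounded Borel on }\ov{\mR}_+\}$, which is total in $L^2(\cG_\infty)$ since $\cG_\infty=\cF_\infty\lor\s(\t)$ (monotone-class argument). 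For a fixed product in $\cD$, define the bounded $\mG$-martingales $N_t\coloneqq\mE[\xi|\cF_t]$ and $Q_t\coloneqq\mE[g(\t)|\cG_t]$. By $(\cH)$, $N$ remains a martingale in $\mG$ and by the PRP of $\mathscr{X}$ in $\mF$ admits an expansion $N_t=N_0+\sum_n V^n\bullet X^n_t$. The process $Q$ is given by $Q_t=\1_{\{\t\leq t\}}g(\t)+\1_{\{\t>t\}}\mE[g(\t)\1_{\{\t>t\}}|\cF_t]/Z_t$; applying Itô to the right-hand side and using continuity of $Z=1-A^p$, the continuous drift terms cancel and only a jump contribution at $\t$ survives, giving $Q_t=Q_0+\wh{U}\bullet M_t$ for a bounded $\mG$-predictable $\wh{U}$. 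Since $[N,Q]=0$ by (ii), Itô's product rule yields
\begin{align*}
N_tQ_t=N_0Q_0+\sum_n(Q_{-}V^n)\bullet X^n_t+(N_{-}\wh{U})\bullet M_t,
\end{align*}
placing $NQ=\mE[\xi g(\t)|\cG_{\cdot}]$ in $\cK$ with integrands manifestly in $M^2(\mG,l^2)$ and $L^2(\mG,M)$ by the boundedness of $N$ and $Q$.

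\textbf{Main obstacle and uniqueness.} The principal technical hurdle is the explicit representation $Q=Q_0+\wh{U}\bullet M$: in the Levy case this was delivered for free by jump-measure calculus applied to the compensator of $\m$, but here it must be derived directly from the Azema supermartingale under $(\cH)$ and $(\cA)$, and the continuity of $A^p$ is what makes the jump contribution isolate cleanly at $\t$. Once this representation is in hand, uniqueness of the triple $(X_0,V,U)$ is routine: equating two representations and taking covariations successively with each $X^m$ and with $M$, together with mutual orthogonality, forces $V^m={V'}^m$ and $U=U'$ in their respective $L^2$-spaces, and $X_0=X'_0$ is immediate from $\cG_0=\cF_0$ (a consequence of $(\cA)$) together with the vanishing initial values of the stochastic integrals.
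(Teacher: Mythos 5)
Your parts (i) and (ii) are correct and match the standard arguments: immersion transfers the (local) martingale property of $X^n$ and of $X^nX^m$ to $\mG$ and, by uniqueness of the predictable compensator, preserves the angle brackets; assumption $(\cA)$ makes $H^p$ continuous, so the only jump of $M$ sits on $\lb\t\rb$, and since the jumps of each $X^n$ are exhausted by countably many $\mF$-stopping times which $\t$ avoids, $[X^n,M]=0$. Note that the paper itself gives no written proof of this theorem; it asserts that the proof of the preceding L\'evy-case result transfers verbatim and defers to Di Tella--Engelbert, so the comparison below is with the structure of that argument.

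There is, however, a genuine gap in your part (iii), at the claim that $Q_t=\mE[g(\t)\vert\cG_t]$ equals $Q_0+\wh{U}\bullet M_t$. This is false under $(\cA)$ and $(\cH)$ alone. Under immersion, $F_t=\mP(\t\le t\vert\cF_t)=\mP(\t\le t\vert\cF_\infty)$ is adapted and increasing but in general \emph{random} (e.g.\ a Cox time with random intensity), and the correct decomposition --- quoted in this very paper in the Kusuoka-type theorem --- is $Q_t=m^g_0+(Z^{-1}\bullet m^g)_{t\land\t}+((g-J)\bullet M)_{t\land\t}$, where $m^g_t=\mE[\int_0^\infty g(u)\,dF_u\vert\cF_t]$ is a genuinely non-constant $\mF$-martingale. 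Concretely, for a Cox time over a Brownian filtration one has $[Q,W]\neq0$ on $\lb0,\t\lb$, whereas any $\wh{U}\bullet M$ has vanishing covariation with $W$ because $M$ is of finite variation with continuous compensator; so your It\^o computation cannot make the $dm^g$ term cancel. The defect propagates: once $Q$ acquires components along the $X^n$ (by representing $m^g$ through the PRP of $\mathscr{X}$), the identity $[N,Q]=0$ fails --- orthogonality only says $[X^n,X^m]$ is a martingale, not the zero process --- so $NQ$ is no longer the martingale $\mE[\xi g(\t)\vert\cG_\cdot]$ and the product formula does not land in your space $\cK$. To repair the argument you must either choose a total family adapted to the orthogonal structure (for instance products $\xi\cdot\bigl(c+(K\bullet M)_\infty\bigr)$ with $\xi$ the terminal value of a stochastic integral against $\mathscr{X}$, proving totality of such products in $L^2(\cG_\infty)$ directly), or work with the orthogonal complement of the stable subspace generated by $\mathscr{X}\cup\{M\}$ in $\cH^2(\mG)$ and show it is trivial; simply invoking $\cG_\infty=\cF_\infty\lor\s(\t)$ and multiplying the two conditional-expectation martingales does not suffice. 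The closedness lemma and the uniqueness argument via successive covariations, as well as $\cG_0=\cF_0$ from $\mP(\t=0)=0$, are fine.
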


\bigskip

We finish this subsection with a discussion of a very interesting result on the enlargement of a filtration generated by a point process obtained in (Di Tella, Jeanblanc, 2020). We do not reproduce the notation that was introduced above in various settings and is obvious from the context. \par
Let $(\W,\cF, \mF, \mP)$ be a complete stochastic basis where $\mF$ satisfies the usual conditions. $\cO(\mF)$ and $\cP(\mF)$ represent the $\s$-algebra of $\mF$-optional and $\mF$-predictable sets of $\W\times\mR_+$ respectively. $\sA^+ = \sA^+(\mF)$ is the space of $\mF$-adapted integrable increasing processes, i.e. the space of increasing processes $X$ such that $\mE X_{\infty}<\infty.$ As usual, $\sA^+_{\loc} = \sA^+_{\loc}(\mF)$ is the localized version of $\sA^+.$ Recall that if $X\in\sA^+_{\loc}$, then there exists a unique $\mF$-predictable process $X^{p,\mF}\in\sA^+_{\loc}$ such that $X - X^{p,\mF}\in\cH^1_{\loc}(\mF)$. $X^{p,\mF}$ is called the $\mF$-compensator or the $\mF$-dual predictable projection of $X$, and we define $\ov{X}^{\mF}\coloneqq X - X^{p,\mF}.$ A point process $X$ with respect to $\mF$ is an $\mN$-valued and $\mF$-adapted increasing process such that the associated jump process satisfies $\D X \in \{0,1\}.$ Since a point process is locally bounded, it follows that $X\in\sA^+_{\loc}.$ $\m$ will denote a nonnegative random measure on $\mR_+\times E$, where $E$ is either $\mR^d$ or some Borel subset of it. We assume $\m(\w,\{0\}\times E) \equiv 0.$ Define $\wt{\W}\coloneqq \W\times\mR_+\times E, \wt{\cO}(\mF)\coloneqq \cO(\mF)\otimes\cB(E)$ and $\wt{\cP}(\mF)\coloneqq \cP(\mF)\otimes \cB(E).$ An $\wt{\cO}(\mF)$-measureable (resp. $\wt{\cP}(\mF)$-measureable) mapping $W$ from $\wt{\W}$ to $\mR$ will be called n $\mF$-optional (resp $\mF$-predictable) function. For an $\mF$-optional function $W$ we define
\begin{align*}
    W*\m(\w)_t\coloneqq \begin{cases}
    \int_{[0,t]\times E}W(\w,t,x)\m(\w,dt,dx), \ \text{if} \ \int_{[0,t]\times E}|W(\w,t,x)|\m(\w,dt,dx)<\infty \\
    +\infty, \quad \text{else}.
    \end{cases}
\end{align*}
\par
$\m$ is called an $\mF$-optional (resp. $\mF$-predictable) random measure if $W*\m$ is $\mF$-optional (resp. $\mF$-predictable), for every $\mF$-optional (resp. $\mF$-predictable) function $W.$ For an $\mR^d$-valued $\mF$
-semimartingale $X$, $\m^X$ will denote its jump measure given by
\begin{align*}
    \m^X(\w,dt,dx) = \sum_{s>0}\1_{\{\D X_s(\w)\neq 0\}}\d_{(s,\D X_s(\w))}(dt,dx),
\end{align*}
where $\d_a$ denotes the Dirac point mass at $a$. We recall that $\m^X$ is an integer-valued random measure with respect to $\mF$ and, in particular, an $\mF$-optional random measure. Moreover, if $\m^X(\w;[0,t]\times \mR^d)<\infty$, for every $\w\in\W$ and $t\in\mR_+$, then $\m^X$ is an $\mR^d$-valued \textit{marked point process} with respect to $\mF$. $(B^X, C^X,\n^X)$ will denote the $\mF$-predictable characteristics of $X$. Note that $\n^X$ is a predictable random measure that is characterized by two properties: for any $\mF$-predictable mapping $W$ such that $|W|*\m^X\in\sA^+_{\loc}$, it holds that $|W|*\n^X\in\sA^+_{\loc}$, and $W*\m^X - W*\n^X\in\cH^1_{\loc}(\mF).$  The proofs and discussions of these notions (and much more) can be found in (Jacod, Shiryaev, 2002, Ch.2). \par
Let $X$ be a point process and $\mX = (\sX_t)_{t\geq 0 }$ the filtration generated by it. We denote by $\sR$ a $\s$-field and call it the \textit{initial $\s$-field}. The \textit{initially enlarged} filtration $\mF$ is defined via $\cF_t\coloneqq \sR \lor \sX_t$. Note that $X$ remains a point process with respect to $\mF$, however, in general, the predictable compensator $X^{p,\mX}$ does not coincide with $X^{p,\mF}$. The following lemma is well-known ((i) Jacod, 1979; (ii) Jacod, Shiryaev, 2002). \par

\bigskip

\begin{lem} In the above setting it holds that \par
(i) $\mF$ is right-continuous; \par
(ii) for every $Y\in\cH^1_{\loc}(\mF)$ there exists $K\in L^1_{\loc}(\ov{X}^{\mF})$ such that $Y$ can be represented as
\begin{align*}
    Y = Y_0 + K\bullet\ov{X}^{\mF},
\end{align*}
i.e. the $\mF$-local martingale $\ov{X}^{\mF}\coloneqq X - X^{p,\mF}$ possesses the PRP with respect to $\mF.$
\end{lem}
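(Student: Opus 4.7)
The plan is to treat (i) by a pathwise argument that exploits the piecewise-constant structure of a point process, and to treat (ii) by a monotone-class reduction combined with an explicit representation on a dense family of building-block terminal variables.

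For (i), I would enumerate the jump times of $X$ as $0 = T_0 < T_1 < T_2 < \cdots$, with $T_n = \infty$ on the event that $X$ has fewer than $n$ jumps. Between consecutive jump times the trajectory of $X$ is constant, so on $\{T_n \leq t < T_{n+1}\}$ the atom of $\sX_t$ through a sample point $\w$ coincides with $\{\w' : T_i(\w')=T_i(\w) \text{ for } i\leq n,\ T_{n+1}(\w')>t\}$. Comparing atoms yields $\sX_{t+}=\sX_t$; since $\sR$ is added as a fixed $\s$-field that does not depend on $t$, the same atom-wise reasoning gives $\cF_{t+}=\sR\lor\sX_{t+}=\sR\lor\sX_t=\cF_t$. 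The technical $\s$-algebraic bookkeeping is carried out in (Jacod, 1979, Ch.~III).

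For (ii), by localization it suffices to represent bounded $\mF$-martingales $Y$, and any such $Y$ has the form $Y_t = \mE[F\mid\cF_t]$ with $F \in L^\infty(\cF_\infty)$ and $\cF_\infty = \sR\lor\sX_\infty$. A monotone-class argument shows that products $F=\xi\prod_{i=1}^n f_i(T_i)$, with $\xi\in L^\infty(\sR)$ and each $f_i$ bounded Borel, are total in $L^1(\cF_\infty)$. For such $F$, the martingale $Y$ is piecewise deterministic between jumps, and one can write down an explicit $\mF$-predictable integrand $K$ for which $Y-Y_0 = K\bullet\ov{X}^{\mF}$; $K$ is expressed in terms of the regular conditional distributions of the next jump time $T_{X_t+1}$ given $\cF_t$, which are themselves encoded in the $\mF$-compensator $X^{p,\mF}$. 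The collection of $F$'s admitting such a representation is a closed linear subspace of $L^1(\cF_\infty)$ that gives rise to a stable subspace of $\cH^1_{\loc}(\mF)$, so by totality of the building blocks it is all of $L^1(\cF_\infty)$. Localization then extends the representation from bounded $Y$ to all $Y\in\cH^1_{\loc}(\mF)$, and uniqueness of the predictable compensator identifies $K$ up to an $\ov{X}^{\mF}$-null set.

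The main obstacle is the explicit construction of $K$ for the building-block martingales: because the initial $\s$-field $\sR$ may be correlated in arbitrary ways with the future of $X$, one must disintegrate the joint law of $(X,\sR)$ via regular conditional distributions $\mP(T_{n+1}\in\cdot\mid \sR, T_1,\ldots,T_n)$, and it is precisely this disintegration that produces $X^{p,\mF}$. Once the disintegration is in hand, the construction of $K$ on each interval $\lb T_n, T_{n+1}\rb$ is an algebraic computation, and the passage to $\cH^1_{\loc}(\mF)$ is soft functional-analytic machinery (density, stability of stable subspaces, and localization).
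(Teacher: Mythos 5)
The paper gives no proof of this lemma: it records the statement as well known and refers to (Jacod, 1979) for (i) and (Jacod, Shiryaev, 2002) for (ii). Your sketch follows the classical route of those references, so the overall strategy is sound, but two steps are stated in a way that hides or skips the only genuinely nontrivial points. In (i), the phrase ``atom of $\sX_t$'' is not accurate --- $\sX_t$ is generated by the continuous random variables $T_1\land t, T_2\land t,\dots$ and is not atomic --- and, more importantly, the equality $\cF_{t+}=\bigcap_{s>t}(\sR\lor\sX_s)=\sR\lor\sX_{t+}$ is precisely the hard part: intersections and joins of $\s$-fields do not commute in general, so right-continuity of $\mX$ does \emph{not} transfer to $\sR\lor\mX$ by ``the same reasoning.'' What actually makes the argument work is the structure theorem for point-process filtrations: $Z$ is $\cF_t$-measurable iff on each event $\{T_n\leq t<T_{n+1}\}$ it coincides with an $\sR\otimes\cB(\mR_+^n)$-measurable function of $(T_1,\dots,T_n)$. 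Letting $s\downarrow t$ in this description yields right-continuity of the \emph{enlarged} filtration directly. You outsource this to Jacod, which is defensible in a sketch, but you should not present it as a formal consequence of right-continuity of $\mX$ plus the fixedness of $\sR$.

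In (ii) the gap is your claim that the set of terminal variables $F$ admitting a representation is a closed linear subspace of $L^1(\cF_\infty)$. $L^1$-convergence of terminal values does not imply $\cH^1$-convergence of the associated martingales, and the space of stochastic integrals $K\bullet\ov{X}^{\mF}$ is closed as a stable subspace of $\cH^1$, not as a subspace of $L^1$ of terminal values; as written, the density-plus-closedness step does not close. The standard repairs are either to carry out the explicit computation for the building blocks in a class where closedness is available (e.g.\ martingales of integrable variation, or $\cH^2$, then localize), or to bypass density altogether via the Jacod--Yor extremality criterion. The remaining ingredient --- the explicit integrand $K$ on $\rb T_n,T_{n+1}\rb$, read off from the conditional law of $T_{n+1}$ given $\sR$ and $T_1,\dots,T_n$, which is exactly what $X^{p,\mF}$ encodes --- is correctly identified as the crux but not carried out. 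With those two repairs your argument becomes the standard proof that the paper is implicitly citing.
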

\par

\bigskip

Now consider another point process $H$ and the filtration $\mH$ generated by it. We denote by $\mG \coloneqq \mF \lor \mH$ the \textit{progressive enlargement} of $\mF$ by $\mH$.  \par
Note that the default process $H \coloneqq \1_{\lb\t,\infty)}$ we have encountered on numerous occasions previously is a particular example of the point process $H$ considered here. We know that $\ov{X}^{\mF}$ has the PRP in $\mF$ and $\ov{H}^{\mH}$ has the PRP in $\mH$. The general result obtained in (Di Tella, Jeanblanc, 2020) concerns the preservation of the PRP in $\mG$. We now provide a number of theorems and lemmas that allow to state (and prove) their main theorem and the most general result of this subsection on point processes. \par

\begin{thm} Let $X$ and $H$ be two point processes with respect to $\mF$ and $\mH$, respectively. Then, the processes $X - [X,H]$, $H - [X,H]$ and $[X,H]$ are point processes with respect to $\mG$. Furthermore, they have pairwise no common jumps.
\end{thm}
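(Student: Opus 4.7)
The plan is to use the fact that for processes of finite variation the quadratic covariation reduces to the sum of products of jumps, and then read off the jump structure of the three candidate processes to verify the point-process axioms.

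First I would observe that both $X$ and $H$ are pure-jump increasing processes (hence of finite variation), so for every $t\geq 0$
\[
[X,H]_t = \sum_{0<s\leq t}\D X_s\D H_s.
\]
Since $\D X_s,\D H_s \in\{0,1\}$, the product $\D X_s\D H_s$ lies in $\{0,1\}$ and equals $1$ exactly on the (at most countable) set of common jump times of $X$ and $H$. This already gives that $[X,H]$ is an $\mN$-valued, cadlag, increasing process with $\D[X,H]\in\{0,1\}$, i.e.\ it has the pathwise structure of a point process. Because $X$ is $\mF$-adapted and $H$ is $\mH$-adapted, and $\mG=\mF\lor\mH$, the process $[X,H]$ is $\mG$-adapted, so it is a point process with respect to $\mG$.

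Next I would compute the jumps of $X-[X,H]$ and $H-[X,H]$. At any $s>0$,
\[
\D(X-[X,H])_s = \D X_s(1-\D H_s)\in\{0,1\},\qquad \D(H-[X,H])_s = \D H_s(1-\D X_s)\in\{0,1\}.
\]
Both differences are $\mG$-adapted (being algebraic combinations of $\mG$-adapted processes), start at $0$, are cadlag, and change only at the jump times of $X$ or $H$, where the increments are non-negative integers bounded by $1$; hence they are $\mN$-valued increasing processes with jumps in $\{0,1\}$, i.e.\ point processes with respect to $\mG$. The three jump sets
\[
\{\D X=1,\D H=0\},\qquad \{\D X=0,\D H=1\},\qquad \{\D X=1,\D H=1\}
\]
supporting the jumps of $X-[X,H]$, $H-[X,H]$ and $[X,H]$ respectively are pairwise disjoint, which yields the stated pairwise absence of common jumps.

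I do not expect any genuinely hard step; the argument is essentially bookkeeping about $\{0,1\}$-valued jump increments. The only minor subtlety worth mentioning is the justification that $[X,H]$ really is given by $\sum_{s\leq\cdot}\D X_s\D H_s$: this follows from the standard identity $[X,H]=\sum_{s\leq\cdot}\D X_s\D H_s$ valid whenever both $X$ and $H$ are processes of finite variation with no continuous martingale part, which is immediate here because point processes are pure-jump increasing processes. Once that is in place, the rest of the proof is the elementary verification sketched above.
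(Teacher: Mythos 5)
Your argument is correct. The paper itself states this result without proof (it is quoted from Di Tella and Jeanblanc, 2020, with the reader referred to the original article), so there is no in-text proof to compare against; your reduction of $[X,H]$ to $\sum_{0<s\leq t}\D X_s\D H_s$ for finite-variation pure-jump processes, followed by the $\{0,1\}$-valued jump bookkeeping and the observation that the three jump sets $\{\D X=1,\D H=0\}$, $\{\D X=0,\D H=1\}$, $\{\D X=1,\D H=1\}$ are pairwise disjoint, is exactly the natural and complete argument for the definition of point process used here ($\mN$-valued, adapted, increasing, jumps in $\{0,1\}$).
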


\par

\begin{thm} Consider the $\mR^2$-valued $\mG$-semimartingale $\wt{X} = (X,H)^T.$ Then, the following statements hold \par
(i) the jump measure $\m^{\wt{X}}$ of $\wt{X}$ on $\mR_+\times E$, where $E\coloneqq \{(1,0);(0,1);(1,1)\}$, is given by 
\begin{align*}
    \m^{\wt{X}}(dt,dx_1,dx_2) = &d(X_t-[X,H]_t)\d_{(1,0)}(dx_1,dx_2) + \\ &d(H_t-[X,H]_t)\d_{(0,1)}(dx_1,dx_2) + d[X,H]_t\d_{(1,1)}(dx_1,dx_2).
\end{align*}
Furthermore, $\m^{\wt{X}}$ is an $\mR^2$-valued marked point process with respect to $\mG$. \par
(ii) the $\mG$-compensator $\n^{\wt{X}}$ of $\m^{\wt{X}}$ is given by
\begin{align*}
    \n^{\wt{X}}(dt,dx_1,dx_2) = &d(X_t-[X,H]_t)^{p,\mG}\d_{(1,0)}(dx_1,dx_2) + \\ &d(H_t-[X,H]_t)^{p,\mG}\d_{(0,1)}(dx_1,dx_2) + d[X,H]^{p,\mG}_t\d_{(1,1)}(dx_1,dx_2).
\end{align*}
\end{thm}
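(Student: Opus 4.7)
My plan is to deduce both assertions from the preceding theorem, which identifies $X-[X,H]$, $H-[X,H]$, and $[X,H]$ as three $\mathbb{G}$-point processes whose graphs are pairwise disjoint. The whole structure then collapses onto a simple three-point decomposition, since a jump of $\widetilde{X}=(X,H)^\top$ at time $s$ must satisfy $\Delta X_s,\Delta H_s\in\{0,1\}$, forcing $\Delta\widetilde{X}_s\in E=\{(1,0),(0,1),(1,1)\}$ whenever it is nonzero.

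For part (i), I would split the sum defining $\mu^{\widetilde{X}}$ according to the three possible nonzero jump values. Using the identifications
\begin{align*}
\{\Delta(X-[X,H])_s=1\}&=\{\Delta X_s=1,\Delta H_s=0\},\\
\{\Delta(H-[X,H])_s=1\}&=\{\Delta X_s=0,\Delta H_s=1\},\\
\{\Delta[X,H]_s=1\}&=\{\Delta X_s=1,\Delta H_s=1\},
\end{align*}
which hold because $[X,H]=\sum_{s\leq\cdot}\Delta X_s\,\Delta H_s$ isolates the common jumps, the claimed disintegration of $\mu^{\widetilde{X}}$ over $E$ falls out at once. To conclude that $\mu^{\widetilde{X}}$ is a marked point process with respect to $\mathbb{G}$, I would invoke the fact that each of the three integrators is itself a $\mathbb{G}$-point process (hence $\mathbb{G}$-adapted and pathwise locally bounded). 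This guarantees simultaneously the $\mathbb{G}$-optionality of $\mu^{\widetilde{X}}$ and the pathwise finiteness $\mu^{\widetilde{X}}(\omega;[0,t]\times E)<\infty$.

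For part (ii), I would verify directly that the proposed $\nu$ is the $\mathbb{G}$-predictable compensator. The measure $\nu$ is manifestly $\mathbb{G}$-predictable, being a weighted sum of Dirac masses on $E$ with $\mathbb{G}$-predictable Stieltjes coefficients $d(X-[X,H])^{p,\mathbb{G}}$, $d(H-[X,H])^{p,\mathbb{G}}$, $d[X,H]^{p,\mathbb{G}}$. Any bounded $\mathbb{G}$-predictable function $W$ on $\widetilde{\Omega}$ decomposes as
\begin{equation*}
W(\omega,t,x_1,x_2)=W_1(\omega,t)\mathbf{1}_{\{(1,0)\}}(x_1,x_2)+W_2(\omega,t)\mathbf{1}_{\{(0,1)\}}(x_1,x_2)+W_3(\omega,t)\mathbf{1}_{\{(1,1)\}}(x_1,x_2)
\end{equation*}
for bounded $\mathbb{G}$-predictable $W_1,W_2,W_3$. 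Substituting the formula from part (i) yields
\begin{align*}
W*\mu^{\widetilde{X}}-W*\nu&=W_1\bullet\bigl((X-[X,H])-(X-[X,H])^{p,\mathbb{G}}\bigr)\\
&\quad+W_2\bullet\bigl((H-[X,H])-(H-[X,H])^{p,\mathbb{G}}\bigr)\\
&\quad+W_3\bullet\bigl([X,H]-[X,H]^{p,\mathbb{G}}\bigr),
\end{align*}
which is a sum of $\mathbb{G}$-local martingales by the very definition of the dual predictable projection applied to each of the three $\mathbb{G}$-point processes. Uniqueness of the predictable compensator of $\mu^{\widetilde{X}}$ then forces $\nu=\nu^{\widetilde{X}}$.

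The only genuine technical point requiring attention is that each of $X-[X,H]$, $H-[X,H]$, $[X,H]$ must lie in $\mathscr{A}^+_{\mathrm{loc}}(\mathbb{G})$ in order for its $\mathbb{G}$-dual predictable projection to exist. This, however, is immediate from the preceding theorem: every $\mathbb{G}$-point process is locally bounded and therefore locally integrable under $\mathbb{G}$. Beyond this observation and the finiteness of the mark space $E$, which removes the measure-theoretic subtleties normally associated with general marked point processes, the argument is pure bookkeeping on three disjoint jump supports.
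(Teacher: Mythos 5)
Your argument is correct: the identification of the three disjoint jump types via $\Delta[X,H]_s=\Delta X_s\,\Delta H_s$ gives part (i) directly, and the mark-by-mark decomposition of a predictable function plus uniqueness of the predictable compensator gives part (ii), with local integrability supplied by the local boundedness of the three $\mathbb{G}$-point processes from the preceding theorem. The paper itself states this result without proof, deferring to Di Tella and Jeanblanc (2020), and your bookkeeping argument over the finite mark space $E$ is exactly the natural route that source takes, so there is nothing to flag.
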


\par

\bigskip

Denote by $\wt{\mG}$ the smallest right-continuous filtration such that $\m^{\wt{X}}$ is optional and by $\mG^*$ its initial enlargement given by $\cG^*_t\coloneqq \sR \lor \wt{\cG}_t$. \par

\begin{lem} 
$\mG^*$ coincides with $\mG$. 
\end{lem}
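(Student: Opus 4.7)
The plan is to establish the two inclusions $\mathbb{G}^*\subseteq\mathbb{G}$ and $\mathbb{G}\subseteq\mathbb{G}^*$ at the level of the $\sigma$-fields $\mathcal{G}^*_t$ and $\mathcal{G}_t$, exploiting the fact that, because both $X$ and $H$ are point processes, the two-dimensional process $\widetilde{X}=(X,H)^T$ is completely determined by (and completely determines) its jump measure $\mu^{\widetilde{X}}$.

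For the inclusion $\mathbb{G}^*\subseteq\mathbb{G}$, I would argue as follows. By construction $\mathscr{R}\subseteq\mathcal{F}_0\subseteq\mathcal{G}_0\subseteq\mathcal{G}_t$ for every $t\geq 0$, so it suffices to show $\widetilde{\mathbb{G}}\subseteq\mathbb{G}$. Since $X$ is $\mathbb{F}$-adapted and $H$ is $\mathbb{H}$-adapted, the process $\widetilde{X}=(X,H)^T$ is $\mathbb{G}$-adapted, and (as $\mathbb{G}$ is right-continuous) the associated integer-valued jump measure $\mu^{\widetilde{X}}$ is $\mathbb{G}$-optional. Because $\widetilde{\mathbb{G}}$ is, by definition, the smallest right-continuous filtration for which $\mu^{\widetilde{X}}$ is optional, we conclude $\widetilde{\mathcal{G}}_t\subseteq\mathcal{G}_t$ for all $t\geq 0$, hence $\mathcal{G}^*_t=\mathscr{R}\vee\widetilde{\mathcal{G}}_t\subseteq\mathcal{G}_t$.

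For the reverse inclusion $\mathbb{G}\subseteq\mathbb{G}^*$, I would first recover $X$ and $H$ from $\mu^{\widetilde{X}}$. From the explicit form of $\mu^{\widetilde{X}}$ given in the preceding theorem, for every $t\geq 0$,
\begin{align*}
X_t \;=\; \int_{[0,t]\times E}\!x_1\,\mu^{\widetilde{X}}(ds,dx_1,dx_2), \qquad
H_t \;=\; \int_{[0,t]\times E}\!x_2\,\mu^{\widetilde{X}}(ds,dx_1,dx_2).
\end{align*}
Hence both $X$ and $H$ are $\widetilde{\mathbb{G}}$-adapted, and the raw filtrations satisfy $\mathcal{X}_t\vee\mathcal{H}_t\subseteq\widetilde{\mathcal{G}}_t$. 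Combining with the definitions $\mathcal{F}_t=\mathscr{R}\vee\mathcal{X}_t$ and $\mathcal{G}_t=\mathcal{F}_t\vee\mathcal{H}_t$, this yields $\mathcal{G}_t\subseteq\mathscr{R}\vee\widetilde{\mathcal{G}}_t=\mathcal{G}^*_t$, completing the proof.

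The main obstacle is bookkeeping around right-continuity and completion by $\mathbb{P}$-null sets: the raw filtrations $\mathbb{X}$, $\mathbb{H}$, and $\mathbb{F}\vee\mathbb{H}$ are not automatically right-continuous, and likewise the initial enlargement $(\mathscr{R}\vee\widetilde{\mathcal{G}}_t)_{t\geq 0}$ need not be right-continuous a priori. I would handle this by interpreting all of $\mathbb{G}$, $\widetilde{\mathbb{G}}$, and $\mathbb{G}^*$ consistently as right-continuous (and $\mathbb{P}$-completed) hulls, invoke the earlier lemma asserting the right-continuity of the initially enlarged filtration $\mathbb{F}=\mathscr{R}\vee\mathbb{X}$, and note that both inclusions above are stable under passing to the right-continuous hull. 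Up to this routine but slightly delicate measure-theoretic verification, the two displayed identities for $X$ and $H$ in terms of $\mu^{\widetilde{X}}$ are the substance of the argument.
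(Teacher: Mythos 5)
Your two recovery identities $X_t=\int_{[0,t]\times E}x_1\,\mu^{\widetilde X}(ds,dx_1,dx_2)$ and $H_t=\int_{[0,t]\times E}x_2\,\mu^{\widetilde X}(ds,dx_1,dx_2)$ (together with the converse observation that $\mu^{\widetilde X}$ is a path functional of $(X,H)$) are exactly the substance of the argument in Di Tella--Jeanblanc, to which the paper defers for the proof; the overall two-inclusion strategy is the right one.

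There is, however, one point where your argument as written is either circular or proves a weaker statement. For the inclusion $\mathbb{G}^*\subseteq\mathbb{G}$ you invoke the right-continuity of $\mathbb{G}$ to conclude that $\mu^{\widetilde X}$ is $\mathbb{G}$-optional; but $\mathbb{G}$ is defined as the raw join $\mathcal{G}_t=\mathcal{F}_t\vee\mathcal{H}_t$, and its right-continuity is not known a priori --- in the source it is precisely a \emph{consequence} of the identity $\mathbb{G}=\mathbb{G}^*$, since $\mathbb{G}^*$ is right-continuous by the earlier Jacod lemma on initial enlargements of (marked) point-process filtrations. Your proposed fix of replacing everything by right-continuous hulls proves only that the hull of $\mathcal{F}_t\vee\mathcal{H}_t$ equals $\mathcal{G}^*_t$, which is strictly weaker than the lemma. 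The repair is short: argue the first inclusion at the raw level, noting that the generators $\mu^{\widetilde X}([0,s]\times\{e\})$, $s\leq t$, $e\in E$, are the random variables $X_s-[X,H]_s$, $H_s-[X,H]_s$ and $[X,H]_s=\sum_{u\leq s}\Delta X_u\Delta H_u$, each of which is $\mathcal{X}_t\vee\mathcal{H}_t$-measurable; since the natural filtration of a marked point process is already right-continuous (the same Jacod result you cite for $\mathbb{F}$), this gives $\widetilde{\mathcal{G}}_t\subseteq\mathcal{X}_t\vee\mathcal{H}_t$ and hence $\mathcal{G}^*_t\subseteq\mathcal{G}_t$ without ever assuming right-continuity of $\mathbb{G}$. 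With that adjustment (and the harmless convention $X_0=H_0=0$, without which your integrals only recover $X_t-X_0$ and $H_t-H_0$), your proof is complete and coincides with the intended one.
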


\par

\bigskip

We differentiate two kinds of representations of martingales: one is with respect to a compensated random measure, and the other with respect to the following three locally bounded $\mG$-local martingales 
\begin{align*}
    Z^1\coloneqq \ov{X - [X,H]}^{\mG}, \ Z^2\coloneqq \ov{H - [X,H]}^{\mG}, \ Z^3\coloneqq \ov{[X,H]}^{\mG},
\end{align*}
where $\ov{Z}^{\mG}\coloneqq Z - Z^{p,\mG}.$ \par
Finally, we are in a position to state the main result of this subsetion. For a proof and detailed discussion we refer to the original paper.\par

\begin{thm} Let $Y\in\cH^1_{\loc}(\mG).$ Then, the following statements hold \par
(i) there exists a $\mG$-predictable function $W$ such that $|W|*\m^{\wt{X}}\in\sA^+_{\loc}(\mG)$ and $Y$ can be represented as
\begin{align*}
    Y = Y_0 + W*\m^{\wt{X}} - W*\n^{\wt{X}}.
\end{align*}
\par
(ii) $Y$ also has the following representation
\begin{align*}
    Y = Y_0 + K^1\bullet Z^1 + K^2\bullet Z^2 + K^3\bullet Z^3,
\end{align*}
where $K^i\in L^1_{\loc}(Z^i,\mG), i=1,2,3.$ 
\end{thm}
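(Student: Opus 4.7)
The plan is to exploit the preceding lemma, which identifies $\mG$ with the initial enlargement $\mG^{*}=\sR\lor\wt{\mG}$ of the filtration $\wt{\mG}$ generated by the marked point process $\m^{\wt{X}}$, whose mark space $E=\{(1,0),(0,1),(1,1)\}$ is finite. With this identification, the theorem becomes a statement about the preservation of the PRP of a compensated integer-valued random measure under initial enlargement by a $\s$-field, and one can proceed by directly mimicking the structure of part (ii) of the first lemma of this subsection, which handled the single point process case.

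For part (i), the argument has two steps. First, in the natural filtration $\wt{\mG}$ the compensated random measure $\m^{\wt{X}}-\n^{\wt{X}}$ possesses the PRP: every element of $\cH^{1}_{\loc}(\wt{\mG})$ admits a representation $N_{0}+W*\m^{\wt{X}}-W*\n^{\wt{X}}$ for some $\wt{\mG}$-predictable function $W$ with $|W|*\m^{\wt{X}}\in\sA^{+}_{\loc}(\wt{\mG})$; this is Jacod's classical representation theorem for integer-valued random measures. Second, I would transfer this representation to $\mG=\sR\lor\wt{\mG}$ by a monotone-class argument on product-form random variables $\1_{A}\xi$ with $A\in\sR$ and $\xi\in L^{1}_{\loc}(\wt{\cG}_{\infty})$, closing under the $\cH^{1}$-norm. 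This is precisely the reasoning underlying part (ii) of the first lemma of the subsection, and it applies with minimal modification once one observes that the $\mG$-predictable $\s$-field is generated by sets of the form $A\times B$ with $A\in\sR$ and $B\in\cP(\wt{\mG})$.

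For part (ii), the finiteness of the mark space makes the passage from the measure-theoretic statement to the process-level statement essentially algebraic. Any $\mG$-predictable function $W$ on $\wt{\W}$ is determined by its three restrictions $K^{i}(\w,t)\coloneqq W(\w,t,e_{i})$ to the atoms $e_{1}=(1,0), e_{2}=(0,1), e_{3}=(1,1)$, and each $K^{i}$ is $\mG$-predictable. Substituting the explicit disintegrations of $\m^{\wt{X}}$ and $\n^{\wt{X}}$ from the earlier theorem yields
\[
W*\m^{\wt{X}}-W*\n^{\wt{X}}=\sum_{i=1}^{3}K^{i}\bullet\bigl(N^{i}-(N^{i})^{p,\mG}\bigr)=\sum_{i=1}^{3}K^{i}\bullet Z^{i},
\]
where $N^{1}=X-[X,H]$, $N^{2}=H-[X,H]$, $N^{3}=[X,H]$. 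The integrability condition $|W|*\m^{\wt{X}}\in\sA^{+}_{\loc}(\mG)$ decouples into the three conditions $K^{i}\in L^{1}_{\loc}(Z^{i},\mG)$, completing the derivation of (ii) from (i).

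The main obstacle is the initial enlargement step in part (i): no Jacod-type density or independence hypothesis is imposed between $\sR$ and $\wt{\cG}_{\infty}$, so the usual change-of-measure tricks are unavailable. What rescues the argument is the combinatorial nature of $\wt{\mG}$, which is generated by a countable sequence of jump times and marks in the finite set $E$; this makes it possible to reduce any $\mG$-local martingale to a countable family of $\wt{\mG}$-local martingales indexed by atoms of $\sR$, each of which enjoys the Jacod representation, and then reassemble the pieces into a single $\mG$-predictable integrand $W$ via a standard closure argument in $\cH^{1}_{\loc}(\mG)$.
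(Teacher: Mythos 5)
The paper itself does not prove this theorem (it defers to Di Tella and Jeanblanc, 2020), but your strategy is exactly the one used there: identify $\mG$ with the initial enlargement $\sR\lor\wt{\mG}$ via the preceding lemma, invoke the representation theorem for the compensated jump measure of a marked point process, and use the finiteness of the mark space $E=\{(1,0),(0,1),(1,1)\}$ to split the integrand $W$ into the three restrictions $K^{i}$ and pass from (i) to (ii); your decoupling of the integrability condition over the three atoms of $E$ is also correct. One caution on part (i): the classical theorem you need (Jacod, 1979; Jacod--Shiryaev, Thm.\ III.4.37) is already stated for filtrations of the form $\sR\lor\cF^{\m}_t$ with an \emph{arbitrary} initial $\s$-field, so no separate transfer step is required -- and the mechanism you sketch for such a transfer, reducing to $\wt{\mG}$-local martingales ``indexed by atoms of $\sR$'', would not literally work when $\sR$ is non-atomic. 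The correct justification, which is packaged inside the cited theorem and is what makes point-process filtrations immune to the absence of any density or independence hypothesis on $\sR$, is the explicit description of the conditional laws of the successive jump times and marks given $\cG_{T_n}$ between consecutive jumps; replacing your atomic reduction by a citation of that theorem closes the only soft spot in the argument.
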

\par

\bigskip

As the authors point out, this result generalizes that of (Calzolari, Torti, 2016) that we have covered before since in the latter they make the very strong assumption that $\ov{X}^{\mF}$ and $\ov{H}^{\mH}$ remain $\mG$-local martingales and be orthogonal in $\mG$. Moreover, $\cF_0$ is not assumed to be trivial for the \textcolor{black}{extension} of the PRP to $\mG$ to hold. The theorem above also generalizes a number of other related results in the literature that we have not covered. The interested reader can conduct these comparisons on their own by relating the current results to those in (Di Tella, 2020), (Aksamit, Jeanblanc, Rutkowski, 2019) and (Xue, 1993). \par
We finish the discussion by stating a corollary to the main theorem related to the \textit{stable subspaces} generated by the collection of local martingales $\sZ\coloneqq\{Z^1, Z^2, Z^3\}$ in $\cH_0^p(\mG)$, for $p\geq 1$, and denoted by $\sL^p_{\mG}(\sZ)$.\par

\begin{cor} Let $p\geq 1$ and $\sZ\coloneqq\{Z^1, Z^2, Z^3\}$. Then \par
(i) the identity $\sL^p_{\mG}(\sZ)=\cH_0^p(\mG)$ holds; \par
(ii) if furthermore $p=2$ and the local martingales $Z^1,Z^2,Z^3\in\cH^2_{0,\loc}(\mG)$ are pairwise orthogonal, then every $Y\in\cH^2(\mG)$ can be represented as
\begin{align*}
      Y = Y_0 + K^1\bullet Z^1 + K^2\bullet Z^2 + K^3\bullet Z^3,
\end{align*}
where $K^i\in L^2(Z^i,\mG), i=1,2,3,$ which is an orthogonal decomposition of $Y$ in $(\cH^2(\mG), \|\cdot\|_2).$
\end{cor}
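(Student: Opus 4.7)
The plan is to treat the two parts separately, with part (i) being a direct $\cH^p$-closure extension of the main representation theorem (stated for $\cH^1_{\loc}(\mG)$) and part (ii) being an application of the earlier general result on stable subspaces generated by strongly orthogonal martingales.

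For part (i), the inclusion $\sL^p_{\mG}(\sZ)\subseteq\cH^p_0(\mG)$ is immediate from the definition of the stable subspace. For the reverse inclusion, I would start with an arbitrary $Y\in\cH^p_0(\mG)$, observe that $Y\in\cH^1_{\loc}(\mG)$ for any $p\geq 1$, and invoke the main theorem of Di Tella--Jeanblanc to obtain a representation $Y = K^1\bullet Z^1 + K^2\bullet Z^2 + K^3\bullet Z^3$ with $K^i\in L^1_{\loc}(Z^i,\mG)$. To promote this representation into the stable subspace $\sL^p_{\mG}(\sZ)$, I would choose a localizing sequence $(\t_n)$ of $\mG$-stopping times such that, for each $i$, the stopped integrand $K^i\1_{\lb 0,\t_n\rb}$ lies in the appropriate $L^p(Z^i,\mG)$-space (this is possible because the $Z^i$ are locally bounded and $Y^{\t_n}\in\cH^p_0(\mG)$ is dominated by $Y^*_\infty\in L^p$). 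Each stopped stochastic integral then belongs to $\sL^p_{\mG}(\sZ)$ by construction, and $Y^{\t_n}\to Y$ in $\cH^p$ via dominated convergence applied to $\mE[(Y-Y^{\t_n})^{*p}_\infty]$. Closedness of the stable subspace in $\cH^p_0(\mG)$ then gives $Y\in\sL^p_{\mG}(\sZ)$.

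For part (ii), under the additional hypothesis that $Z^1,Z^2,Z^3$ are pairwise strongly orthogonal (i.e. $[Z^i,Z^j]=0$ for $i\neq j$), I would invoke directly the classical theorem on stable subspaces of finitely many strongly orthogonal $\cH^p_0$-martingales stated earlier in the paper, which identifies $\mathrm{stable}(Z^1,Z^2,Z^3)$ with the space of sums $\sum_{i=1}^3 K^i\bullet Z^i$, $K^i\in\cL^2(Z^i)$. Combined with part (i), this yields the explicit representation for every $Y\in\cH^2_0(\mG)$, and for general $Y\in\cH^2(\mG)$ one simply splits $Y=Y_0+(Y-Y_0)$. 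The orthogonality of the decomposition in $(\cH^2(\mG),\|\cdot\|_2)$ follows from the isometry of the stochastic integral together with
\[
\mE\bigl[(K^i\bullet Z^i)_\infty(K^j\bullet Z^j)_\infty\bigr] = \mE\bigl[(K^iK^j)\bullet[Z^i,Z^j]_\infty\bigr]=0,\quad i\neq j,
\]
by strong orthogonality.

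The main technical obstacle is the localization step in (i): one must verify carefully that a localizing sequence $(\t_n)$ can be chosen so that simultaneously each $K^i\1_{\lb 0,\t_n\rb}\in L^p(Z^i,\mG)$ and $Y^{\t_n}\to Y$ in $\cH^p$-norm; this requires exploiting both the local boundedness of the $Z^i$ (which bounds the jumps of the integrals) and the uniform integrability of $Y^*_\infty$ in $L^p$ to invoke dominated convergence. Everything else is a straightforward application of earlier machinery.
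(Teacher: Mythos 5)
The paper states this corollary without proof, deferring to the original Di Tella--Jeanblanc (2020) paper, so there is no internal argument to compare against; your overall strategy --- deduce (i) from the $\cH^1_{\loc}(\mG)$ representation theorem plus closedness of stable subspaces, and (ii) from (i) together with the classical orthogonal-decomposition theorem for stable subspaces --- is the natural route and essentially the one the source follows. Part (ii) is sound: under pairwise orthogonality one has $[Y,Y]=\sum_i ((K^i)^2\bullet[Z^i,Z^i])$, so each summand is dominated by $[Y,Y]$ and $K^i\in L^2(Z^i,\mG)$ follows from $Y\in\cH^2(\mG)$ via Burkholder--Davis--Gundy, and your $L^2$-orthogonality computation is correct. (Note only that the theorem you quote from Section 4 is stated for generators in $\cH^p_0$, whereas here $Z^1,Z^2,Z^3$ are only in $\cH^2_{0,\loc}(\mG)$, so you need its routine extension to locally square-integrable generators.)

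The genuine gap is in part (i) for $p>1$, precisely at the step you flag as the main obstacle. The justification offered for choosing $(\t_n)$ with $K^i\1_{\lb 0,\t_n\rb}\in L^p(Z^i,\mG)$ does not work: local boundedness of $Z^i$ says nothing about the size of $K^i$, and the domination of $Y^{\t_n}$ by $Y^*_\infty\in L^p$ controls only the sum $\sum_i K^i\bullet Z^i$, not the individual summands. Each $K^i\bullet Z^i$ is a priori only in $\cH^1_{0,\loc}(\mG)$, and a local martingale need not be locally in $\cH^p$ for $p>1$ (a single compensated jump whose law is in $L^1\setminus L^p$ already fails); since the $Z^i$ are not assumed orthogonal in (i), the brackets $[Z^i,Z^j]$, $i\neq j$, are in general nonzero on predictable times, cancellation between summands can occur, and the individual terms are genuinely not controlled by $Y$. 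Two standard repairs: (a) pass to the vector stochastic integral $H\bullet Z$ with $H=(K^1,K^2,K^3)$ and $Z=(Z^1,Z^2,Z^3)$, for which $[H\bullet Z,H\bullet Z]=[Y,Y]$ and $\mE([Y,Y]_{\infty}^{p/2})<\infty$ is exactly the membership $H\in L^p(Z)$, while $\sL^p_{\mG}(\sZ)$ coincides with the set of such vector integrals lying in $\cH^p_0(\mG)$; or (b) run your localization argument for $p=1$ only (where it does work, since every local martingale is locally in $\cH^1$), conclude $\sL^1_{\mG}(\sZ)=\cH^1_0(\mG)$, and invoke the standard Jacod--Yor equivalence that $\sL^p_{\mG}(\sZ)=\cH^p_0(\mG)$ holds for one $p\in[1,\infty)$ if and only if it holds for all $p$, if and only if every bounded $\mG$-martingale null at $0$ and orthogonal to $\sZ$ vanishes --- the latter being immediate from the main representation theorem.
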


\par

\section{Conclusion}

We have provided a detailed account of the current state of the literature on filtration enlargements in the context of its applications in mathematical finance. As is evident from the decomposition formulas, enlargement of filtration sits at the intersection of the fundamental results of stochastic analysis such as Ito's theorem, the Doob-Meyer decomposition, change of measure techniques and (Markov) bridges. Our exposition of the connections of the theory of enlargement to the fundamental theorems of mathematical finance shows under what conditions enlargement results in the creation of arbitrage opportunities and when the no-arbitrage property is preserved. Likewise, we have studied the conditions that ensure the preservation of the martingale representation property in the market. Notably, our survey has revealed that the focus in the literature over the last decades has been on enlargements and \textit{filtration reduction} is a much less researched phenomenon, particularly in the context of finance. We plan to bridge that gap in a sequence of follow-up papers.

\end{document}